\newcommand{\mmse}{\mathrm{mmse}}
\newcommand{\E}{\mathbb{E}}
\newcommand{\Xstar}{\mathsf{X}^\star}
\newcommand{\D}{\mathsf{D}}
\newcommand{\Ls}{\mathsf{L}}
\newcommand{\Ps}{\mathsf{P}}
\newcommand{\Es}{\mathsf{E}}
\newcommand{\Fs}{\mathsf{F}}
\newcommand{\df}{\mathfrak{d}}
\newcommand{\Dbar}{{\mathsf{D}}}
\newcommand{\DDbar}{{D}^\top {D}}
\newcommand{\DD}{{D}^\top {D}}
\newcommand{\R}{\mathbb{R}}
\newcommand{\As}{\mathsf{A}}
\newcommand{\Bs}{\mathsf{B}}
\newcommand{\Gs}{\mathsf{G}}
\newcommand{\Qs}{\mathsf{Q}}
\newcommand{\Rs}{\mathsf{R}}
\newcommand{\Hs}{\mathsf{H}}
\newcommand{\Us}{\mathsf{U}}
\newcommand{\Vs}{\mathsf{V}}
\newcommand{\Xs}{\mathsf{X}}
\newcommand{\Ys}{\mathsf{Y}}
\newcommand{\Zs}{\mathsf{Z}}
\newcommand{\Ss}{\mathsf{S}}
\newcommand{\Fc}{\mathcal{F}}
\newcommand{\Hc}{\mathcal{H}}
\newcommand{\Ac}{\mathcal{A}}
\newcommand{\Bc}{\mathcal{B}}
\newcommand{\Kc}{\mathcal{K}}
\newcommand{\Oc}{\mathcal{O}}
\newcommand{\Ae}{\upsilon^A}
\newcommand{\Be}{\upsilon^B}
\newcommand{\Xe}{\mathfrak{X}^e}
\newcommand{\fe}{\mathfrak{f}^e}
\newcommand{\fet}{\mathfrak{f}^e}
\newcommand{\tmr}{\zeta}
\newcommand{\Cst}{\mathcal{V}}
\newcommand{\Cstt}{\mathcal{V}}
\newcommand{\x}{\tilde{\sigma}}
\newcommand{\s}{\sigma}
\newcommand{\z}{\tilde{\tau}}
\newcommand{\ta}{\tau}
\newcommand{\pib}{\mathfrak{C}}
\newcommand{\pil}{\mathfrak{C}}
\newcommand{\Dcp}{\mathcal{D}_+}
\newcommand{\Tmr}{\mathfrak{Z}}
\newcommand{\st}{\beta^\star}
\newcommand{\V}{\mathbb{V}}
\newcommand{\Z}{\mathcal{Z}}
\newcommand{\Aft}{\mathcal{A}}
\newcommand{\e}{\mathfrak{e}}
\newcommand{\pfrak}{\mathfrak{p}}
\newcommand{\hf}{\mathfrak{h}}
\newcommand{\cpit}{c_{\pi}}
\newcommand{\Ht}{\mathcal{H}}
\newcommand{\Ft}{\mathcal{F}}
\newcommand{\toW}{\stackrel{W}{\to}}
\newcommand{\normop}[1]{\norm{#1}_{\mathrm{op}}}
\newcommand{\Vpi}{\rho_*}
\newcommand{\Tone}{\mathrm{I}}
\newcommand{\Ttwo}{\mathrm{II}}
\newcommand{\Tthree}{\mathrm{III}}
\newcommand{\Tfour}{\mathrm{IV}}
\newcommand{\Tfive}{\mathrm{V}}
\newcommand{\Tsix}{\mathrm{VI}}
\newcommand{\Fpe}{\chi^A}
\newcommand{\Fpnu}{\chi^B}
\newcommand{\Fpom}{\chi^C}
\newcommand{\Fpet}{\chi^A}
\newcommand{\Fpnut}{\chi^{B}}
\newcommand{\Fpomt}{\chi^{C}}
\newcommand{\Fpnuta}{\chi^{B}_1}
\newcommand{\Fpomta}{\chi^{C}_1}
\newcommand{\Fpnutb}{\chi^{B}_2}
\newcommand{\Fpomtb}{\chi^{C}_2}
\renewcommand{\Tr}{\operatorname{Tr}}
\newcommand{\cE}{\mathcal{E}}
\newcommand{\Pfrak}{\mathfrak{P}}
\newcommand{\Qfrak}{\mathfrak{Q}}
\newcommand{\Rfrak}{\mathfrak{R}}
\newcommand{\cU}{\mathcal{U}}
\newcommand{\rc}{\mathrm{rescaled}}
\newtheorem{Theorem}{Theorem}[section]
\newtheorem{Proposition}[Theorem]{Proposition}
\newtheorem{Lemma}[Theorem]{Lemma}
\newtheorem{Corollary}[Theorem]{Corollary}
\theoremstyle{definition}
\newtheorem{Assumption}[Theorem]{Assumption}
\newtheorem{Definition}[Theorem]{Definition}
\newtheorem{Remark}[Theorem]{Remark}
\newtheorem*{Remark*}{Remark}
\Crefname{Theorem}{Theorem}{Theorem}
\Crefname{Proposition}{Proposition}{Proposition}
\Crefname{Lemma}{Lemma}{Lemma}
\Crefname{Corollary}{Corollary}{Corollary}
\Crefname{Assumption}{Assumption}{Assumption}
\Crefname{Remark}{Remark}{Remark}
\Crefname{Notation}{Notation}{Notation}
\Crefname{Definition}{Definition}{Definition}
\def\O{\mathbb{O}}
\def\SO{\mathbb{SO}}
\def\cG{\mathcal{G}}
\def\1{\mathbf{1}}
\DeclareMathOperator{\Haar}{Haar}
\DeclareMathOperator{\diag}{diag}
\title{Random linear estimation with rotationally-invariant designs: Asymptotics
 at high temperature}
\author[2]{Yufan Li\thanks{yufan\_li@g.harvard.edu}}
\author[1]{Zhou Fan\thanks{zhou.fan@yale.edu}}
\author[2]{Subhabrata Sen\thanks{subhabratasen@fas.harvard.edu}}
\author[1]{Yihong Wu\thanks{yihong.wu@yale.edu}}
\affil[1]{Department of Statistics and Data Science, Yale University}
\affil[2]{Department of Statistics, Harvard University}
\date{}
\begin{document}
\maketitle
\begin{abstract}
We study estimation in the linear model $y=A\st+\epsilon$, in a Bayesian
setting where $\st$ has an entrywise i.i.d.\ prior and the design $A$ is
rotationally-invariant in law. In the large system limit as dimension and
sample size increase proportionally, a set of related conjectures have been
postulated for the asymptotic mutual information, Bayes-optimal
mean squared error, and TAP mean-field equations that characterize the Bayes
posterior mean of $\st$. In this work, we prove these conjectures for a
general class of signal priors and for arbitrary
rotationally-invariant designs $A$, under a ``high-temperature'' condition that
restricts the range of eigenvalues of $A^\top A$. Our proof uses a conditional
second-moment method argument, where we condition on the iterates of a
version of the Vector AMP algorithm for solving the TAP mean-field equations.
\end{abstract}

\section{Introduction}

Consider observations $y=A\st+\epsilon \in \R^m$ from a linear model
with Gaussian noise, in a Bayesian setting where the entries of $\st \in \R^n$
are drawn i.i.d.\ from a ``signal prior''. Fundamental questions of
interest in applications spanning CDMA communication systems
\cite{tanaka2002statistical} to sparse signal recovery \cite{baron2009bayesian}
to statistical genetics \cite{guan2011bayesian} pertain to the properties of
the Bayes posterior law and posterior mean estimate for $\st$.

In the asymptotic limit as $m,n \to \infty$ and $A$ constitutes an i.i.d.\
measurement design, a rich and insightful body of literature has obtained
precise ``single-letter'' characterizations of the asymptotic mutual
information, minimum mean squared error (MMSE), and low-dimensional marginals
of the Bayes posterior law. Based initially on work of Tanaka
\cite{tanaka2002statistical} and Guo and Verd\'u \cite{guo2005randomly} using
the non-rigorous replica method of statistical physics, these characterizations
have since been proven rigorously in increasingly general contexts
\cite{montanari2006analysis,barbier2016mutual,reeves2016replica,barbier2019optimal,barbier2019adaptive,barbier2020mutual,qiu2022tap},
and are closely connected to Approximate Message Passing (AMP) algorithms
and mean-field variational approaches for Bayesian inference.

The focus of our work, and the subject of significant recent attention, is on
extensions of these results beyond i.i.d.\ designs. We study here the model
where $A$ is right-rotationally invariant in law, and where analogous
single-letter characterizations are expected to depend on $A$ only via
the spectral distribution of $A^\top A$. In this model, conjectures for
the asymptotic mutual information were derived via the replica method
for binary and Bernoulli-Gaussian signal priors by Takeda, Uda, and
Kabashima \cite{takeda2006analysis} and Tulino et
al.\ \cite{tulino2013support}, and a form for general priors was stated by
Barbier et al.\ in \cite{barbier2018mutual}. A number of iterative
Bayesian inference algorithms including Adaptive TAP
\cite{opper2001adaptive,opper2001tractable}, Expectation-Consistency
\cite{opper2005expectation}, Vector/Orthogonal AMP
\cite{ma2017orthogonal,takeuchi2017rigorous,rangan2019vector}, and long-memory
forms of AMP \cite{takeuchi2021bayes,liu2022memory} have been proposed for this
model, whose algorithmic fixed-points coincide with the
replica predictions. In \cite{maillard2019high}, the forms of the TAP mean-field
equations that characterize the posterior mean were derived for this and related
models using a Plefka expansion approach, and it was argued that the
approximations underlying these AdaTAP, EC, and VAMP/OAMP algorithms are all
equivalent to the vanishing of certain diagrammatic terms in the Plefka
expansion.

In this work, we provide a rigorous proof of the expressions for asymptotic
mutual information and MMSE and of the validity of the TAP mean-field equations
(in an $L^2$ sense) that are predicted by this replica theory, for general
rotationally-invariant designs $A$ under a restriction for the range of
eigenvalues of $A^\top A$. The centered matrix $A^\top A-d_*I$ for a
constant $d_*>0$ plays the role of a
rotationally-invariant couplings matrix in analogous models of
mean-field spin glasses \cite{marinari1994replica,parisi1995mean}, and our
restriction on the eigenvalue range is analogous to an assumption of high
temperature in such spin glass models.
Our results
are complementary to those of \cite{barbier2018mutual} that established the
asymptotic mutual information for specific designs of the form $A=BW$ where $W$
has i.i.d.\ Gaussian entries, without a high-temperature constraint.
The method of \cite{barbier2018mutual} used an adaptive interpolation method
\cite{guerra2003broken,korada2010tight,barbier2019adaptive} specific to this
factorized form, which is different from our approach.

Our proof here instead builds upon recent analyses of orthogonally-invariant
spin glasses \cite{fan2021replica,fan2022tap} via a
conditional second-moment argument \cite{bolthausen2018morita,ding2019capacity}.
We analyze the first and second moments of restrictions of the log-partition
function conditioned on iterates of a version of Vector AMP, to establish the
asymptotic mutual information and Bayes-optimal MMSE. At a technical level,
this extends analyses of \cite{fan2021replica} to encompass a general
class of prior distributions with possibly unbounded support, and to address
additional complexities of the Hamiltonian in the linear model. We
remark that in this Bayesian model, an upper bound for the asymptotic mutual
information at high temperature may alternatively be derived by combining an
analysis of the prediction mean squared error achieved by VAMP with an
``area argument'' \cite{montanari2006analysis,barbier2016mutual} that integrates
the I-MMSE relation \cite{guo2005mutual} from zero SNR to a small positive SNR.
The matching 
lower bound would still require a conditional first moment calculation in our
approach, and we take the opportunity to develop both the conditional first and
second moment analyses so as to provide an independent proof that does not rely
on the area argument. Related ideas of analyzing the conditional second moment
of a truncated log-partition function for a different application to the Ising
perceptron model with unbounded log-activation were developed recently in
\cite{bolthausen2022gardner}.

Interest in the linear model with rotationally-invariant measurement designs
has been partially motivated by the belief that
asymptotic predictions derived for such designs
may hold universally across designs whose right singular vectors are
sufficiently ``generic''. Universality statements of this form have been shown
recently for AMP and other first-order iterative algorithms in
\cite{dudeja2022universality,wang2022universality,dudeja2022spectral}. These results suggest that the asymptotic mutual information and Bayes optimal MMSE are also potentially universal---we leave this as an open question for future work.


\subsection{Model and assumptions}

Let $\st \in \mathbb{R}^{n}$ be a signal vector with coordinates
$(\st_i)_{i=1}^n \overset{iid}{\sim} \pi$ distributed
according to a known prior distribution $\pi$. We observe $m$ noisy
measurements
\begin{equation}\label{eq:linearmodel}
y=A \st+\epsilon \in \R^m
\end{equation}
where $(\epsilon_j)_{j=1}^{m} \overset{iid}{\sim} N(0,1)$ is Gaussian noise
and $A \in \mathbb{R}^{m \times n}$ is the measurement matrix.

Our main results describe the asymptotic mutual information, Bayes-optimal
mean squared error, and Bayes posterior-mean estimator for $\st$, when
$n,m \to \infty$ and $A$ is right-rotationally invariant in law.
We denote the (normalized) mutual information between $\st$ and
$y$ conditioned on $A$ as
\[i_{n}:=\frac{1}{n} I\left(\st ; y \mid A\right)
=\frac{1}{n}\mathbb{E}\left(\log \frac{p\left(y \mid \st, A\right)}{p(y \mid A)
} \;\bigg|\; A\right)\]
We write expectation with respect to the posterior distribution for $\st$
given $(y,A)$ as $\expval{\cdot}$, i.e.
\begin{equation}\label{eq:anglepost}
    \expval{f(\sigma)}:=\frac{\int f(\sigma) \exp (-\frac{1}{2}\|y-A
\s \|^{2}) \prod_{i=1}^{n} d\pi(\s_{i})}{\int \exp
(-\frac{1}{2}\|y-A \s \|^{2}) \prod_{i=1}^{n} d\pi(\s_{i})}
\end{equation}
where we will use $\s$ as the variable for a sample from this
posterior. In particular, $\expval{\s}$ is the posterior mean of
$\beta^*$. We denote its normalized mean squared error conditioned on $A$ as
\begin{equation}
    \mmse_n:=\frac{1}{n} \E [\norm{\st-\expval{\sigma}}^2\mid A].
\end{equation}

We fix a random variable $\D \geq 0$
representing the limit singular value distribution of $A$, and denote throughout
\[d_*:=\E[\D^2], \qquad
d_-:=\min(x:x \in \operatorname{supp}(\D^2)),
\qquad d_+:=\max(x:x \in \operatorname{supp}(\D^2))\]
where $\operatorname{supp}(\D^2) \subseteq [0,\infty)$ is the support of $\D^2$.
\begin{Assumption}[Singular value distribution]\label{AssumpD}
$\D^2$ has strictly positive mean and variance and compact support.
\end{Assumption}
\begin{Assumption}[Measurement matrix]\label{AssumpA}
Let $A=Q^{\top} D O$ be the singular value decomposition,
where $Q \in \mathbb{R}^{m \times m}$ and $O \in
\mathbb{R}^{n \times n}$ are orthogonal and $D \in \mathbb{R}^{m \times n}$ is
diagonal. Then $Q,D$ are deterministic, $O,\st,\epsilon$ are mutually
independent, and $O \sim \Haar(\SO(n))$ is uniformly distributed on the special
orthogonal group. As $n,m \to \infty$,
\begin{equation}\label{eq:Assump2D}
D^\top 1_{m\times 1} \stackrel{W}{\rightarrow} \D,
\qquad \min \left(\diag(\DD)\right) \to d_-,
\qquad \max \left(\diag(\DD)\right) \to d_+.
\end{equation}
\end{Assumption}

Here, $D^\top 1_{m \times 1} \stackrel{W}{\rightarrow} \D$ denotes
Wasserstein-$\pfrak$ convergence of the empirical distribution of coordinates
of $D^\top 1_{m \times 1} \in \R^n$ to $\D$
for all orders $\pfrak \geq 1$, and we review properties of this convergence
in Appendix \ref{appendix:Wasserstein}.

We may assume without loss of generality that $\pi$ has mean 0, by subtracting
from $y$ a multiple of $A1_{n \times 1}$.
Our results are then proven under the following additional 
assumptions for $\pi$ and ``high-temperature'' condition for $\D$.

\begin{Assumption}[Prior distribution]\label{AssumpPrior}
Let $\Xstar \sim \pi$. Then $\pi$ is a non-Gaussian distribution with
\[\E[\Xstar]=0, \qquad \Vpi:=\E[{\Xstar}^2]>0.\]
There is a constant $\pib>0$ for which, for any $s>0$,
\begin{equation}\label{eq:subGaussian}
\Vpi \leq \pib, \qquad \mathbb{P}[|\Xstar|>s] \leq 2e^{-s^2/(2\pib)}.
\end{equation}
Furthermore, for any $k \in \{1,2\}$, symmetric $\Gamma \in \R^{k \times k}$
satisfying $\Gamma \prec (4\pib)^{-1}I$, and $z \in \R^k$, denote
\[d\mu(x)=\frac{e^{x^\top \Gamma x+x^\top z} \prod_{i=1}^k d\pi(x_i)}
{\int e^{x^\top \Gamma x+x^\top z} \prod_{i=1}^k d\pi(x_i)},
\quad \langle f(x) \rangle_\mu=\int f(x)d\mu(x),
\quad \V_\mu[f(x)]=\langle f(x)^2 \rangle_\mu-\langle f(x) \rangle_\mu^2.\]
Then the distribution $\mu$ satisfies, for any unit vector $v \in \R^k$
and a constant $C>0$ depending only on $\pib$,
\begin{equation}\label{eq:poincare}
\V_\mu[v^\top x] \leq C, \qquad \V_\mu[(v^\top x)^2] \leq
C[1+\langle (v^\top x)^2\rangle_\mu].
\end{equation}
\end{Assumption}

\begin{Assumption}[High temperature]\label{AssumpHighTemp}
We have $\operatorname{supp}(\D^2) \subseteq [d_*-\e,d_*+\e]$
for some $\e>0$.
\end{Assumption}

The condition (\ref{eq:poincare}) of Assumption \ref{AssumpPrior} may
be understood as a Poincar\'e-type inequality for $\mu$,
and holds (for example) when
$\pi$ has a bounded support contained in $[-\sqrt{\pib},\sqrt{\pib}]$ or
a log-concave density $e^{-g(x)}$ where $g''(x) \geq 1/\pib$,
c.f.\ \Cref{prop:priorconditions}.
We will require the value $\e$ in Assumption \ref{AssumpHighTemp}
to be sufficiently small, depending only on the constant
$\pib$ in Assumption \ref{AssumpPrior}. Note that in a model $y=A\st+\sigma
\epsilon$ with general noise variance $\sigma^2>0$, such a high temperature
assumption encompasses the setting of sufficiently large $\sigma^2$ for any
fixed singular value distribution $\D$ and prior $\pi$, upon rescaling
$y$, $A$, and $\D$ all by $1/\sigma$.

\begin{Remark}
For Gaussian $\pi$, $i_n$, $\mmse_n$, and the posterior mean
$\langle \sigma \rangle$ all have explicit formulas, and our main results in fact hold at any
temperature which follow more
directly from techniques of asymptotic random matrix theory (see e.g.\
\cite[Theorem 2]{tulino2013support} and \cite[Theorem 1]{reeves2017additivity}).
We restrict henceforth to the more difficult setting of non-Gaussian $\pi$,
and this restriction is used to avoid a rank degeneracy in our subsequent
conditioning arguments.
\end{Remark}

\begin{Remark}\label{remark:On}
Our results extend directly to the more commonly studied
setting where $O \sim \Haar(\O(n))$ is
uniform over the full orthogonal group, and also where $Q,D$ are random and
independent of $O,\st,\epsilon$ such that (\ref{eq:Assump2D}) holds almost
surely as $n,m \to \infty$.
We discuss this further in Appendix \ref{appendix:On}.
\end{Remark}

\subsection{Scalar channel and fixed point equation}

The asymptotic characterization of the model (\ref{eq:linearmodel}) is
described by a ``single-letter'' scalar channel
\begin{equation}\label{eq:scalarBayes}
\Ys=\Xstar+\Zs/\sqrt{\gamma}
\end{equation}
with signal $\Xstar \sim \pi$, independent Gaussian noise $\Zs \sim N(0,1)$, and
noise variance $\gamma^{-1}>0$. We denote the Bayes posterior-mean denoiser
in this model as
\begin{equation}\label{eq:deff}
f(y,\gamma)=\E[\Xstar \mid \Ys=y]
\end{equation}
and the signal-observation mutual information and Bayes-optimal
mean squared error as
\begin{equation}\label{eq:scalarimmse}
i(\gamma)=I(\Xstar;\Ys), \qquad \mmse(\gamma)=\E[\V[\Xstar \mid \Ys]].
\end{equation}

Under Assumption \ref{AssumpD}, let $G:(-d_-,\infty) \to (0,\infty)$
and $R:(0,G(-d_-)) \to (-\infty,0)$ be the Cauchy- and
R-transforms of the law of $-\D^2$, defined by
\begin{equation}\label{eq:CauchyR}
G(z)=\E\left[\frac{1}{z+\D^2}\right], \qquad R(z)=G^{-1}(z)-\frac{1}{z},
\end{equation}
where $G^{-1}(\cdot)$ is the functional inverse of $G(\cdot)$, and
we set $G(-d_-)=\lim_{z \to -d_-} G(z)$. Lemma \ref{lem:cauchy} shows
that these functions are well-defined and reviews several additional
properties under the high-temperature condition of
Assumption \ref{AssumpHighTemp}.
The noise variance $\gamma^{-1}$ that relates this scalar channel to the
model (\ref{eq:linearmodel}) is a solution of the fixed-point equations
\begin{equation}\label{eq:fix}
\eta^{-1}=\mmse(\gamma), \qquad \gamma=-{R}(\eta^{-1}).
\end{equation}
The second equation can be written equivalently as $\eta^{-1}=G(\eta-\gamma)$ by the definition of the R-transform. 

The following ensures that this fixed-point system has a unique
solution when $\e$ in Assumption \ref{AssumpHighTemp} is sufficiently small;
see Appendix \ref{append:FPE} for its proof. 
\begin{Proposition}\label{prop:uniquefix}
Under Assumptions \ref{AssumpD}, \ref{AssumpPrior}, and
\ref{AssumpHighTemp}, there exists a constant $\e_0=\e_0(\pib)>0$ such that if
$\e<\e_0$, then (\ref{eq:fix}) has a unique solution
$\left(\eta_*^{-1},\gamma_{*}\right)$ in the domain $(0,G(-d_-)) \times \R_+$.
Furthermore $\eta_*^{-1}\leq \Vpi$ and $\eta_*-\gamma_*\ge \Vpi^{-1}>0$
where $\Vpi$ is the prior variance in Assumption \ref{AssumpPrior}.
\end{Proposition}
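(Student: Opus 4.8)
We sketch the proof. The plan is to reduce the system (\ref{eq:fix}) to a scalar fixed-point equation $\gamma=\Psi(\gamma)$ and to show that $\Psi$ is a strict contraction once $\e$ is small enough. Because $\mmse$ is non-increasing on $[0,\infty)$ with $\mmse(0)=\Vpi$, every solution $(\eta^{-1},\gamma)$ of (\ref{eq:fix}) has $\eta^{-1}=\mmse(\gamma)\in(0,\Vpi]$. The high-temperature assumption makes $G(-d_-)$ large: since $\D^2-d_-\le d_+-d_-\le 2\e$, we have $G(-d_-)=\lim_{\delta\to 0^+}\E[(\delta+\D^2-d_-)^{-1}]\ge(2\e)^{-1}$, so for $\e<\e_0(\pib)$ the interval $(0,\Vpi]$ (recall $\Vpi\le\pib$) lies in $(0,G(-d_-))$, the domain of $R$; the remaining spectral properties we use are recorded in \Cref{lem:cauchy}. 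Thus $\Psi(\gamma):=-R(\mmse(\gamma))$ is well defined on $[0,\infty)$, and since $R<0$ throughout its domain, $\Psi(\gamma)>0$, so any fixed point of $\Psi$ automatically lies in $\R_+$. Solutions of (\ref{eq:fix}) in $(0,G(-d_-))\times\R_+$ therefore correspond bijectively to fixed points of $\Psi$ in $[0,\infty)$, via $(\eta^{-1},\gamma)=(\mmse(\gamma),\gamma)$.

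Next I would bound $\Psi'$. It is $C^1$ with $\Psi'(\gamma)=-R'(\mmse(\gamma))\,\mmse'(\gamma)$, and both factors are small on the relevant range. For $R'$: one has the identity $R'(u)=\V[W]/((\E W)^2\,\E[W^2])$ with $W=(G^{-1}(u)+\D^2)^{-1}$, and for $u\in(0,\Vpi]$ the random quantity $G^{-1}(u)+\D^2$ ranges over an interval of length at most $2\e$ that (using the high-temperature bound and $u\le\Vpi\le\pib$) is bounded below by a constant depending only on $\pib$; Popoviciu's inequality $\V[W]\le\tfrac14(\sup W-\inf W)^2$ then yields $0<R'(u)\le C\e^2$, as also recorded in \Cref{lem:cauchy}. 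For $\mmse'$: the conditional law of $\Xstar$ given $\Ys=y$ in the scalar channel (\ref{eq:scalarBayes}) is $d\mu\propto e^{-\frac{\gamma}{2}x^2+\gamma y x}\,d\pi(x)$, which is the $k=1$ instance of the measure in \Cref{AssumpPrior} with $\Gamma=-\gamma/2\le 0<(4\pib)^{-1}$ and $z=\gamma y$; hence (\ref{eq:poincare}) gives $\V[\Xstar\mid\Ys=y]\le C(\pib)$ uniformly in $y$ and $\gamma$, and by the standard identity $\mmse'(\gamma)=-\E[\V[\Xstar\mid\Ys]^2]$ we obtain $|\mmse'|\le L(\pib)$. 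Consequently $|\Psi'|\le C\e^2 L(\pib)<1$ whenever $\e<\e_0(\pib)$. Since $\Psi$ maps the complete space $[0,\infty)$ into itself, Banach's fixed-point theorem yields a unique fixed point $\gamma_*$, and hence the unique solution $(\eta_*^{-1},\gamma_*)=(\mmse(\gamma_*),\gamma_*)$ of (\ref{eq:fix}).

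The two stated bounds are then immediate: $\eta_*^{-1}=\mmse(\gamma_*)\le\mmse(0)=\Vpi$, and the linear-estimator upper bound $\mmse(\gamma)\le\Vpi/(1+\gamma\Vpi)$ for the channel (\ref{eq:scalarBayes}) gives $\eta_*=\mmse(\gamma_*)^{-1}\ge\Vpi^{-1}+\gamma_*$, i.e.\ $\eta_*-\gamma_*\ge\Vpi^{-1}>0$. I expect the main obstacle to be the uniform-in-$(\gamma,y)$ control of the scalar posterior variance underlying the bound $|\mmse'|\le L(\pib)$: this is precisely where the Poincar\'e-type hypothesis (\ref{eq:poincare}) of \Cref{AssumpPrior} is essential, and it is what allows both the contraction constant and the threshold $\e_0$ to depend only on $\pib$, not on $\pi$ or on the spectral distribution $\D$. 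The remaining ingredients are either standard facts about the scalar Gaussian channel or the spectral estimates collected in \Cref{lem:cauchy}.
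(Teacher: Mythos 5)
Your proof is correct and uses essentially the same contraction argument as the paper, applied to the dual fixed-point map $\gamma\mapsto -R(\mmse(\gamma))$ rather than the paper's map $\eta^{-1}\mapsto \mmse(-R(\eta^{-1}))$ on $[0,\Vpi]$; in both cases the derivative is the product of $\mmse'$ (controlled to $O_\pib(1)$ via the Poincar\'e hypothesis \eqref{eq:poincare} together with $\mmse'(\gamma)=-\E[\V[\Xstar\mid\Ys]^2]$) and $R'$ (controlled to $O(\e^2)$ by high temperature), and the two stated bounds are obtained exactly as in \Cref{prop:deltabound}. The only genuinely different technical ingredient is your bound on $R'$: the identity $R'(u)=\V[W]/((\E W)^2\E[W^2])$ with $W=(G^{-1}(u)+\D^2)^{-1}$ combined with Popoviciu's inequality is more elementary and self-contained than the paper's route through \Cref{lem:cauchy}(c), which bounds the free cumulants $\kappa_k$ of $-\D^2$ via the non-crossing partition lattice and sums the series $R(z)=\sum_k \kappa_k z^{k-1}$; both deliver the same $O(\e^2)$ estimate with constants depending only on $\pib$.
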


\subsection{Main results}

Let us denote $(\eta_*^{-1},\gamma_*)$ as the unique fixed point of
(\ref{eq:fix}). Our first result describes the asymptotic mutual information
in the model (\ref{eq:linearmodel}).
We define the replica symmetric potential following \cite{barbier2018mutual},
\begin{equation}\label{eq:iRS}
i_{\mathrm{RS}}\left(\eta^{-1}, \gamma\right)=i(\gamma)
-\frac{1}{2} \int_{0}^{\eta^{-1}} {R}(z) d z-\frac{\gamma}{2 \eta}
\end{equation}
where $i(\gamma)$ is the above mutual information in the scalar channel.
\begin{Theorem}[Mutual information]\label{thm:maintheorem}
	Under Assumptions \ref{AssumpD}--\ref{AssumpHighTemp}, there exists a constant $\e_0=\e_0(\pib)>0$ such that if
$\e<\e_0$, then almost surely
	\begin{equation}\label{eq:singleletter}
		\lim_{n,m \rightarrow \infty} i_{n}=i_{\mathrm{RS}}\left(\eta_{*}^{-1}, \gamma_{*}\right)
	\end{equation}
\end{Theorem}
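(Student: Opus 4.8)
The plan is to pass from the mutual information to a free energy and compute that free energy by a conditional second-moment argument anchored to Vector AMP. Writing $Z_n=\int \exp(-\tfrac12\|y-A\sigma\|^2)\prod_{i=1}^n d\pi(\sigma_i)$ for the partition function in \eqref{eq:anglepost} and using $y-A\st=\epsilon$, one has $i_n=-\tfrac{m}{2n}-\tfrac1n\E[\log Z_n\mid A]$, so \eqref{eq:singleletter} is equivalent to $\tfrac1n\log Z_n\to-\tfrac{m}{2n}-i_{\mathrm{RS}}(\eta_*^{-1},\gamma_*)$ almost surely, the quantity $\tfrac{m}{2n}$ being deterministic. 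Almost-sure convergence reduces to convergence in mean: after truncating the prior to a ball of radius of order $\sqrt{\pib\log n}$ via the sub-Gaussian tail \eqref{eq:subGaussian}, the map $O\mapsto\tfrac1n\log Z_n$ is Lipschitz on $\SO(n)$, so concentration of measure on $\SO(n)$ together with Borel--Cantelli leaves only $\lim_n\tfrac1n\E\log Z_n$ to identify. To that end I would run a version of Vector AMP tailored to the posterior \eqref{eq:anglepost} --- the algorithm whose fixed points solve the TAP equations for this model --- for a fixed number $T$ of iterations. Its state evolution is a finite-dimensional recursion that, under \Cref{AssumpHighTemp} with $\e$ small, is contractive, so as $n\to\infty$ then $T\to\infty$ its scalar order parameters (overlaps of the iterates with $\st$, their self-overlaps, the associated residual variances) converge to the fixed point $(\eta_*^{-1},\gamma_*)$ of \eqref{eq:fix} supplied by \Cref{prop:uniquefix}; in particular the terminal iterate $\hat\sigma^T$ satisfies $\tfrac1n\langle\st,\hat\sigma^T\rangle,\ \tfrac1n\|\hat\sigma^T\|^2\to\Vpi-\eta_*^{-1}$, with analogous limits for $\tfrac1n\langle y,A\hat\sigma^T\rangle$ and $\tfrac1n\|A\hat\sigma^T\|^2$. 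Let $\mathcal{F}_T$ be the $\sigma$-algebra generated by the VAMP iterates together with the vectors $Ov$ and $O^\top u$ the algorithm forms; conditionally on $\mathcal{F}_T$ the law of $O$ is Haar on an explicit coset of $\SO(n)$, and the non-Gaussianity of $\pi$ is used here to avoid a rank degeneracy in this conditioning.

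Fix $\delta>0$ and let $Z_n^{\delta}$ be the restriction of $Z_n$ to the band $B_\delta$ of configurations $\sigma$ whose overlap $\tfrac1n\langle\sigma,\hat\sigma^T\rangle$ and norm $\tfrac1n\|\sigma\|^2$ lie within $\delta$ of the limiting values above. After integrating out the Gaussian $\epsilon$, the conditional first moment $\E[Z_n^{\delta}\mid\mathcal{F}_T]$ reduces to a spherical integral over the conditional Haar law of $O$, which I would evaluate by a Laplace analysis; its exponential rate is controlled by the Cauchy- and R-transforms $G,R$ of \eqref{eq:CauchyR}, and after optimizing over the band radius and the conjugate (Lagrange) parameters it equals $-\tfrac{m}{2n}-i_{\mathrm{RS}}(\eta_*^{-1},\gamma_*)$ in the iterated limit $n\to\infty$, $T\to\infty$, $\delta\to0$. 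The same computation over the complement $\{\sigma\notin B_\delta\}$ --- using strict concavity of the rate function near the replica-symmetric point, a consequence of \Cref{AssumpHighTemp}, together with coercivity at large $\|\sigma\|$ coming from the sub-Gaussian prior --- shows that $\tfrac1n\log\E[Z_n\mathbf{1}\{\sigma\notin B_\delta\}\mid\mathcal{F}_T]\le -\tfrac{m}{2n}-i_{\mathrm{RS}}(\eta_*^{-1},\gamma_*)-c(\delta)$ for some $c(\delta)>0$. By Markov's inequality this makes $Z_n=(1+o(1))Z_n^{\delta}$ and $\tfrac1n\log Z_n^{\delta}\le\tfrac1n\log\E[Z_n^{\delta}\mid\mathcal{F}_T]+o(1)$ with high conditional probability, the conditional first-moment input yielding $\liminf_n i_n\ge i_{\mathrm{RS}}(\eta_*^{-1},\gamma_*)$.

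For the matching bound I would compute the conditional second moment $\E[(Z_n^{\delta})^2\mid\mathcal{F}_T]$ with two independent replicas $\sigma^1,\sigma^2$ drawn from the band. Under \Cref{AssumpHighTemp} the quadratic form controlling the two-replica rate function is strictly concave in the cross-overlap $\tfrac1n\langle\sigma^1,\sigma^2\rangle$, which is therefore pinned at the unique replica-symmetric value consistent with the Nishimori identity, so $\E[(Z_n^{\delta})^2\mid\mathcal{F}_T]\le(1+o(1))\,\E[Z_n^{\delta}\mid\mathcal{F}_T]^2$; the Poincar\'e-type bound \eqref{eq:poincare} of \Cref{AssumpPrior} is what controls the transverse fluctuations and keeps these integrals finite for a prior of possibly unbounded support. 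Paley--Zygmund then forces $Z_n^{\delta}\ge c_\delta\,\E[Z_n^{\delta}\mid\mathcal{F}_T]$ with conditional probability bounded away from $0$, which combined with the concentration of $\tfrac1n\log Z_n$ from the first paragraph upgrades to $\tfrac1n\log Z_n\ge\tfrac1n\log Z_n^{\delta}\ge-\tfrac{m}{2n}-i_{\mathrm{RS}}(\eta_*^{-1},\gamma_*)-o(1)$ with high probability, i.e.\ $\limsup_n i_n\le i_{\mathrm{RS}}(\eta_*^{-1},\gamma_*)$. Combining this with the previous paragraph, and letting $n\to\infty$, $T\to\infty$, $\delta\to0$ in that order using the state-evolution convergence of the VAMP order parameters and the uniqueness in \Cref{prop:uniquefix}, yields \eqref{eq:singleletter}.

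I expect the crux to be the conditional second-moment estimate: showing the cross-overlap of two replicas concentrates at a single value, equivalently that the two-replica rate function has no competing maximizer. This is precisely what can fail outside the high-temperature regime, and it is where \Cref{AssumpHighTemp} with $\e$ sufficiently small, depending only on $\pib$, enters essentially. A secondary technical burden, relative to bounded-prior analyses, is carrying the sub-Gaussian and Poincar\'e controls of \Cref{AssumpPrior} through the first- and second-moment integrals and the concentration-of-measure step so as to handle priors of unbounded support.
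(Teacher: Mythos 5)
Your proposal follows the same route as the paper: express $i_n$ in terms of the log-partition function, truncate the prior to control unbounded support, use Lipschitz concentration of $O\mapsto n^{-1}\log Z_n$ on $\SO(n)$ to reduce to the mean, and compute that mean by a conditional first- and second-moment argument anchored to (a stationary, planted-initialization) Vector AMP, with a large-deviations/Laplace evaluation of the conditional Haar integral and Paley--Zygmund closing the lower bound. The only cosmetic difference is the choice of restriction set (you parametrize the band by overlap with $\hat\sigma^T$ and norm, the paper by $n^{-1}\|\sigma-\st\|^2$), which does not change the argument in substance.
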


\begin{Remark}
Without the high-temperature condition of Assumption \ref{AssumpHighTemp},
the general conjecture \cite{barbier2018mutual} is
that $\lim_{n,m \rightarrow \infty} i_{n}=\inf i_{\mathrm{RS}}(\eta^{-1},
\gamma)$ where the infimum ranges over all $(\eta^{-1}, \gamma)$ that solves
\eqref{eq:fix}.
\end{Remark}

Next, we characterize the limiting minimum mean squared error.

\begin{Theorem}[MMSE]\label{thm:MMSE}
Under Assumptions \ref{AssumpD}--\ref{AssumpHighTemp},
there exists a constant $\e_0=\e_0(\pib)>0$ such that if
$\e<\e_0$, then almost surely
	\begin{equation}\label{eq:mmse}
		\lim_{n,m \rightarrow \infty} \mmse_n=\eta_*^{-1}
	\end{equation}
\end{Theorem}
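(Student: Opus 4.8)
The plan is to derive the limiting MMSE from the mutual information result of \Cref{thm:maintheorem} together with the I-MMSE relation of Guo, Shamai, and Verdú \cite{guo2005mutual}, applied along a one-parameter family of noise levels. Concretely, I would introduce a signal-to-noise parameter $t>0$ and consider the augmented observation model $y_t = \sqrt{t}\,A\st + \epsilon$ (equivalently, rescaling $A$ and $\D$ by $\sqrt{t}$), writing $i_n(t)$ for the corresponding normalized conditional mutual information and $\mmse_n(t)$ for the corresponding normalized MMSE. The finite-$n$ I-MMSE identity gives, after accounting for the $A^\top A$ scaling, a relation of the form $\frac{d}{dt} i_n(t) = \tfrac{1}{2}\cdot\frac{1}{n}\E[\Tr(A^\top A \cdot \Cov(\st \mid y_t, A))]$; in particular at $t=1$ this is $\tfrac{1}{2}$ times a spectrally-weighted version of $\mmse_n$. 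Since \Cref{thm:maintheorem} applies to the rescaled design for every $t$ in a neighborhood of $1$ (the high-temperature Assumption \ref{AssumpHighTemp} is preserved, possibly after shrinking $\e_0$, because rescaling $\D^2$ by $t$ keeps the support in a comparably short interval), we get pointwise convergence $i_n(t) \to i_{\mathrm{RS}}(\eta_*(t)^{-1}, \gamma_*(t))$, where $(\eta_*(t),\gamma_*(t))$ is the unique fixed point of \eqref{eq:fix} for the rescaled spectral law.

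Next I would differentiate the limiting expression in $t$. Using the chain rule together with the stationarity of $i_{\mathrm{RS}}$ at its fixed point — i.e.\ the partial derivatives of $i_{\mathrm{RS}}(\eta^{-1},\gamma)$ in $\eta^{-1}$ and $\gamma$ vanish at $(\eta_*^{-1},\gamma_*)$, which is exactly the content of the fixed-point equations \eqref{eq:fix} — the only surviving term comes from the explicit $t$-dependence entering through $R(\cdot)$ and through $i(\gamma)$ via the rescaling. This computation should yield $\frac{d}{dt}\,i_{\mathrm{RS}}(\eta_*(t)^{-1},\gamma_*(t)) = \tfrac{1}{2}\,\gamma_*(t)\,\eta_*(t)^{-1}$ (or the analogous clean closed form), and combined with the I-MMSE limit this identifies the limit of the spectrally-weighted MMSE. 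To convert this into the statement $\mmse_n \to \eta_*^{-1}$ for the unweighted MMSE, I would use that under the high-temperature assumption the eigenvalues of $A^\top A$ all lie in $[d_*-\e, d_*+\e]$, so the weighted and unweighted errors differ by at most $O(\e)$ times the error itself; more cleanly, one repeats the I-MMSE/area argument for the scalar channel characterization and matches term-by-term, or integrates the derivative identity from $t=0$ (where both sides vanish) and uses monotonicity of $t\mapsto \mmse_n(t)$ together with the already-known limit of $i_n(t)$ to pin down $\lim \mmse_n(1)$ via a standard convexity/Griffiths-type argument (pointwise convergence of convex functions implies convergence of derivatives at points of differentiability of the limit).

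The main obstacle I anticipate is the interchange of limits and derivatives: I-MMSE gives $\frac{d}{dt}i_n(t)$ in terms of $\mmse_n(t)$, but I need $\lim_n \frac{d}{dt} i_n(t) = \frac{d}{dt}\lim_n i_n(t)$, which does not follow from pointwise convergence alone. The clean way around this is to exploit concavity: $t \mapsto i_n(t)$ is concave (a standard consequence of the I-MMSE relation, since $\mmse_n(t)$ is nonincreasing in $t$), and a pointwise limit of concave functions that is differentiable at $t=1$ forces convergence of the derivatives there. One must also confirm the uniform integrability / uniform boundedness needed to justify that $\mmse_n(t)$ stays controlled — this is where Assumption \ref{AssumpPrior} (sub-Gaussian prior, giving $\mmse_n(t) \le \Vpi$ uniformly) does the work — and handle the passage from the spectrally-weighted error back to $\mmse_n$ itself. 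A secondary technical point is checking that \Cref{thm:maintheorem}, and in particular the uniqueness of the fixed point from \Cref{prop:uniquefix}, continues to hold uniformly for $t$ in a full neighborhood of $1$ with a single choice of $\e_0$; this requires only that the rescaled spectral law still satisfies Assumptions \ref{AssumpD} and \ref{AssumpHighTemp}, and that $\e_0(\pib)$ can be chosen uniformly, which follows by a compactness argument over the (one-dimensional, compact) range of $t$.
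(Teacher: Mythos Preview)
Your route via I-MMSE is genuinely different from the paper's, but it has a real gap at the step where you pass from the differentiated mutual information back to $\mmse_n$. Along the family $y_t=\sqrt{t}\,A\st+\epsilon$, the vector I-MMSE identity gives
\[
\frac{d}{dt}\,i_n(t)
=\frac{1}{2n}\,\E\!\left[\big\|A(\st-\langle\sigma\rangle_t)\big\|^2\;\Big|\;A\right],
\]
i.e.\ the \emph{prediction} mean-squared error, not $\tfrac12\mmse_n$. None of your three proposed conversions closes this: bounding the eigenvalues of $A^\top A$ by $d_*\pm\e$ yields only $\lim\mmse_n=\eta_*^{-1}+O(\e)$, not the exact value for each fixed $\e>0$; integrating from $t=0$ integrates the weighted error, not $\mmse_n(t)$; and the concavity/Griffiths interchange identifies $\lim_n i_n'(1)$ with the $t$-derivative of the replica formula, which pins down the limiting prediction MSE but says nothing about the unweighted one. (Your concavity argument has the same slip: $i_n'(t)$ is the weighted error, so its monotonicity is not literally ``$\mmse_n(t)$ nonincreasing''---though it does hold by a data-processing argument.) A salvageable variant is to differentiate instead in an additive side channel $y'=\sqrt{s}\,\st+\epsilon'$: the stacked design with blocks $A$ and $\sqrt{s}\,I_n$ is still right-rotationally invariant with spectrum $\D^2+s$, so \Cref{thm:maintheorem} applies for each small $s\geq 0$, and now $\partial_s|_{s=0}$ of the joint mutual information is exactly $\tfrac12\mmse_n$. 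But this is not what you proposed, and it still requires verifying right-differentiability of the resulting replica formula in $s$ at $0$ with derivative $\tfrac12\eta_*^{-1}$.

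For comparison, the paper avoids I-MMSE entirely. It first records the Nishimori identity $\mmse_n=\tfrac{1}{2n}\E[\langle\|\sigma-\st\|^2\rangle\mid A]$ (\Cref{lemma:harli}), then uses the strict free-energy gap \eqref{eq:analysisf2} of \Cref{lemma:analysisf}: restricting the partition function to $\{\,|n^{-1}\|\sigma-\st\|^2-2\eta_*^{-1}|>\varsigma\,\}$ drops its exponential rate below $\Psi_{\mathrm{RS}}$ by $c_0\e^{1/2}\varsigma^2$. Together with concentration (\Cref{lemma:concentration}) and \Cref{cor:PsiRS}, this forces the posterior probability of that set to be exponentially small, whence $n^{-1}\langle\|\sigma-\st\|^2\rangle\to 2\eta_*^{-1}$ and dominated convergence finishes. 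The crucial input is not \Cref{thm:maintheorem} itself but the quantitative variational estimate \eqref{eq:analysisf2}, which your strategy never touches.
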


Finally, we show that the posterior mean $\expval{\sigma}$ for $\st$ in the
model (\ref{eq:linearmodel}) approximately
satisfies a system of mean-field equations predicted by the Plefka expansion
\cite[Eqs.\ (128-129)]{maillard2019high}. These equations are an analogue of the
Thouless-Anderson-Palmer (TAP) equations for the Sherrington-Kirkpatrick
model \cite{thouless1977solution}, and of their generalization to
orthogonally-invariant spin glass models in
\cite{parisi1995mean,opper2001adaptive}.

\begin{Theorem}[TAP equations]\label{thm:TAP}
Under Assumptions \ref{AssumpD}--\ref{AssumpHighTemp}, there exists a constant
$\e_0=\e_0(\pib)>0$ such that if $\e<\e_0$, then almost surely
    \begin{equation}\label{eq:TAPeq}
        \lim_{n,m \rightarrow \infty} \frac{1}{n}
\mathbb{E}\left[\Big\|\langle\sigma\rangle-f\left(-\gamma_{*}^{-1}\left(A^{\top}
A\langle\sigma\rangle-\gamma_{*}\langle\sigma\rangle-A^{\top}
y\right),\;\gamma_*\right)\Big\|^{2} \;\bigg|\; A\right]=0
    \end{equation}
where $f(\cdot,\gamma)$ is the posterior-mean denoiser
from \eqref{eq:deff} applied entrywise to its first argument.
\end{Theorem}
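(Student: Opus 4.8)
The plan is to derive the TAP equations as a consequence of the conditional second-moment analysis used for Theorems \ref{thm:maintheorem} and \ref{thm:MMSE}, rather than proving them in isolation. The key structural fact is that the VAMP iteration used in the conditioning argument is designed so that its fixed point is exactly the pair $(\eta_*^{-1},\gamma_*)$, and its iterates $(\hat\sigma_t)$ converge (in a suitable normalized $L^2$ sense, via a state-evolution recursion) to the posterior mean $\langle\sigma\rangle$. Concretely, I would first recall the VAMP iteration in this Bayesian linear model: one alternates between a "denoising" step that applies $f(\cdot,\gamma_t)$ entrywise to a pseudo-observation $r_t = \langle\sigma\rangle$-like quantity, and a "linear" step that applies the resolvent of $A^\top A$; the Onsager correction is precisely what makes the effective noise in $r_t$ asymptotically Gaussian with variance $\gamma_t^{-1}$. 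The first step is therefore to state the state evolution for this VAMP and invoke (from the earlier sections) that its scalar parameters converge to the fixed point $(\eta_*^{-1},\gamma_*)$ of \eqref{eq:fix}, so that along the iteration the effective SNR stabilizes at $\gamma_*$.

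The second step is to show $\tfrac1n\E[\|\hat\sigma_t-\langle\sigma\rangle\|^2\mid A]\to 0$ as $t\to\infty$ (after $n,m\to\infty$). This is where the second-moment method enters: the conditional second moment computation shows that the log-partition function, restricted to a neighborhood of the VAMP iterate $\hat\sigma_t$, captures asymptotically all of the posterior mass, i.e.\ the posterior concentrates around $\hat\sigma_t$ for large $t$; equivalently the "replica overlap" between two independent posterior samples matches the self-overlap predicted by state evolution, which forces $\langle\sigma\rangle$ to be close to $\hat\sigma_t$. I would extract this as a lemma: the normalized squared distance between the posterior mean and the $t$-th VAMP iterate tends to a quantity $\delta_t$ with $\delta_t\to 0$ as $t\to\infty$, using the already-established MMSE formula $\mmse_n\to\eta_*^{-1}$ together with the Nishimori identity $\tfrac1n\E\|\langle\sigma\rangle\|^2 \to \Vpi-\eta_*^{-1}$ to pin the norm of $\langle\sigma\rangle$, and the state-evolution value of $\tfrac1n\|\hat\sigma_t\|^2$.

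The third step is purely algebraic: the VAMP fixed-point relations say that if $\hat\sigma_\infty$ were an exact fixed point then $\hat\sigma_\infty = f(r_\infty,\gamma_*)$ with $r_\infty = -\gamma_*^{-1}(A^\top A\,\hat\sigma_\infty - \gamma_*\hat\sigma_\infty - A^\top y)$ — this is exactly the argument of $f$ appearing in \eqref{eq:TAPeq}, rewritten using $A^\top y = A^\top A\st + A^\top\epsilon$ and the linear step of VAMP. So I would substitute $\langle\sigma\rangle$ for $\hat\sigma_t$ in this identity, controlling the error by (i) the Lipschitz property of $f(\cdot,\gamma_*)$ in its first argument (standard for Bayes denoisers, bounded by $1$), (ii) the operator-norm bound $\|A^\top A\|_{\mathrm{op}}\le d_+ = d_*+O(\e)$ from Assumption \ref{AssumpHighTemp}, which makes the map $\sigma\mapsto -\gamma_*^{-1}(A^\top A\sigma-\gamma_*\sigma-A^\top y)$ Lipschitz with a constant controlled uniformly in $n$, and (iii) the distance bound $\delta_t$ from step two plus the VAMP-to-fixed-point convergence rate. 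Letting $n,m\to\infty$ and then $t\to\infty$ kills all error terms and yields \eqref{eq:TAPeq}.

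The main obstacle I anticipate is step two — rigorously converting the second-moment/overlap concentration into the statement that the \emph{posterior mean} (not just a typical sample, and in a \emph{per-coordinate $L^2$} sense, not merely in overlap) is $L^2$-close to the VAMP iterate, uniformly enough to survive the $n\to\infty$ then $t\to\infty$ order of limits. This requires knowing not only that overlaps concentrate but that the fluctuations of $\langle\sigma\rangle$ transverse to $\hat\sigma_t$ are negligible, which in turn leans on the Poincaré-type inequality \eqref{eq:poincare} of Assumption \ref{AssumpPrior} to control the variance of posterior linear functionals, and on the high-temperature smallness of $\e$ to ensure the second moment does not blow up. A secondary technical point is handling the unbounded support of $\pi$: the subgaussian tail in \eqref{eq:subGaussian} and the denoiser variance bounds in \eqref{eq:poincare} must be used to justify that all the $L^2$ norms appearing (of $\langle\sigma\rangle$, of $\hat\sigma_t$, of $A^\top y$) are uniformly integrable, so that convergence in the state-evolution sense upgrades to convergence of the expectations in \eqref{eq:TAPeq}.
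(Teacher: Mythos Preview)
Your three-step skeleton (VAMP state evolution converges to the fixed point; VAMP iterates are close to $\langle\sigma\rangle$; algebraic substitution using Lipschitz continuity) matches the paper's architecture. The algebraic step 3 is essentially what the paper does in equations \eqref{fadsff1}--\eqref{eq:almostTAP}, though the paper works with the VAMP recursion identities at finite $t$ rather than an idealized fixed point $\hat\sigma_\infty$.

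Where your proposal diverges is step 2, and here you are overcomplicating. You frame the closeness $\tfrac1n\E[\|\hat\sigma_t-\langle\sigma\rangle\|^2\mid A]\to 0$ as requiring a delicate upgrade from overlap/second-moment concentration to per-coordinate $L^2$ control of the posterior mean, and you correctly flag this as the main obstacle. The paper sidesteps this entirely with a one-line Pythagorean identity: since $\langle\sigma\rangle=\E[\st\mid y,A]$ and $\hat\beta_j^t$ is $(y,A)$-measurable,
\[
\E\big[\|\st-\hat\beta_j^t\|^2\mid A\big]=\E\big[\|\st-\langle\sigma\rangle\|^2\mid A\big]+\E\big[\|\hat\beta_j^t-\langle\sigma\rangle\|^2\mid A\big].
\]
The left side converges to $\eta_*^{-1}$ by the VAMP state evolution (Proposition \ref{prop:compmse}), the first term on the right converges to $\eta_*^{-1}$ by Theorem \ref{thm:MMSE}, and so the last term vanishes. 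No overlap analysis, no Poincar\'e inequality, no second-moment computation is needed here beyond what already went into Theorem \ref{thm:MMSE}.

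For this argument it is essential that $\hat\beta_j^t$ be a genuine estimator, i.e.\ $(y,A)$-measurable. The paper therefore runs VAMP from the \emph{non-informative} initialization $r_2^1=0$, $\gamma_{2,1}=\rho_*^{-1}$ in this section, not the stationary initialization \eqref{eq:VAMPstatinit} (which depends on $\st$) used in the conditioning for the moment bounds. You did not make this distinction, and with the stationary VAMP the Pythagorean identity would fail. The paper then needs the separate Proposition \ref{prop:compmse} to show that the state evolution from this non-informative start also converges to $(\eta_*,\gamma_*)$; your proposal implicitly assumes this but does not isolate it.
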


\begin{Remark}\label{remark:scalinginv}
	For the following two special subclasses of priors satisfying Assumption \ref{AssumpPrior}: (i) $\pi$ has a bounded support contained in $[-\sqrt{\pib},\sqrt{\pib}]$ or (ii) $\pi$ admits a log-concave density $e^{-g(x)}$ with $g''(x)\ge 1/\pib$, an explicit choice of $\e_0(\pib)$ in Theorems \ref{thm:maintheorem}, \ref{thm:MMSE} and \ref{thm:TAP} is $\e_0(\pib)=\frac{\mathfrak{a}}{\pib}$ for some absolute constant $\mathfrak{a}>0$. See \Cref{append:EHTC} for a justification. 
\end{Remark}

\subsection*{Notation}
Denote by 
$\|\cdot\|$ the $\ell_{2}$-norm for vectors and $\ell_{2} \rightarrow \ell_{2}$ operator norm for matrices.
For scalars $x_1,\ldots,x_k \in \R$, we write
$(x_{1}, \ldots, x_{k}) \in \mathbb{R}^k$ to denote a column vector
with these entries. For vectors $x_1,\ldots,x_k \in \R^n$, we write
$(x_1,\ldots,x_k) \in \R^{n\times k}$ as the matrix containing these columns. $1_{m\times n}\in \R^{m\times n}$ denotes the all-1's matrix,
$I_{n \times n}$ denotes the identity matrix, $\succ$ and $\succeq$ denote the
positive-definite ordering for matrices,
$\E$ and $\V$ denote the expectation and variance of a random variable,
and $\R_+=(0,\infty)$ is the positive real line.

Throughout, we treat $\pib$ in Assumption \ref{AssumpPrior} as constant, and we
write $x=O(y)$ to mean $|x| \leq Cy$ for a constant $C>0$ depending only on
$\pib$. (In particular, this constant does not depend on $d_*$ or on
the small parameter
$\e$ in Assumption \ref{AssumpHighTemp}, and we will explicitly track the
dependence of various quantities on $d_*,\e$.)

\section{Vector AMP}\label{section:VAMP}

We first review a version of the VAMP algorithm proposed by
\cite{rangan2019vector}. Let $r_2^1 \in \R^n$ be an initialization vector
such that there exist random variables $(\Rs_2^1,\Xstar)$ for which,
almost surely as $n,m \to \infty$,
\begin{equation}\label{eq:VAMPinit}
(r_2^1,\st) \toW (\Rs_2^1,\Xstar),
\qquad \gamma_{2,1}^{-1}:=\E[(\Rs_2^1-\Xstar)^2]>0.
\end{equation}
Define from $\gamma_{2,1}$ a sequence of state evolution parameters
for $t=1,2,3,\ldots$
\begin{equation}\label{eq:SEparams}
\eta_{2,t}=1/G(\gamma_{2,t}), \quad \gamma_{1,t}=\eta_{2,t}-\gamma_{2,t},
\quad \eta_{1,t+1}=1/\mmse(\gamma_{1,t}), \quad
\gamma_{2,t+1} =\eta_{1,t+1}-\gamma_{1,t}
\end{equation}
where $\mmse(\cdot)$ is the scalar channel MMSE from (\ref{eq:scalarimmse})
and $G(\cdot)$ is the Cauchy-transform of $-\D^2$ from (\ref{eq:CauchyR}).
Now consider the sequence of iterates in $\R^n$, for $t=1,2,3,\ldots$
\begin{subequations}\label{eq:iteupdate}
    \begin{align}
	r_1^t&=\frac{1}{\gamma_{1,t}}\left[\eta_{2,t}\left(A^{\top}
A+\gamma_{2,t}I\right)^{-1}\left(A^{\top}
y+\gamma_{2,t}r_2^t\right)-\gamma_{2,t}r_2^t\right]\label{eq:itupdate1}\\
	r_2^{t+1}&=\frac{1}{\gamma_{2,t+1}}\Big[\eta_{1,t+1}
f\left(r_1^t,\gamma_{1,t}\right)-\gamma_{1,t}r_1^t\Big]\label{eq:itupdate2}
\end{align}
\end{subequations}
where $f(\cdot,\gamma)$ is the posterior-mean denoiser from
(\ref{eq:deff}) applied entrywise. This coincides with the VAMP algorithm in
\cite{rangan2019vector}, specialized to the setting with matched MMSE denoiser,
and replacing empirically estimated versions of the parameters
$\gamma_{1,t},\eta_{1,t},\gamma_{2,t},\eta_{2,t}$ with their large system limits
as defined by (\ref{eq:SEparams}).
The following statement is implied by \cite[Theorems 1 and 2]{rangan2019vector};
we check the conditions needed for these results in Appendix \ref{appendix:VAMP}.

\begin{Theorem}[\cite{rangan2019vector}]\label{thm:RanganSE}
Suppose Assumptions \ref{AssumpD}--\ref{AssumpPrior} hold, and $r_2^1$
is independent of $(A,\epsilon)$ and satisfies (\ref{eq:VAMPinit}).
Then each value $\eta_{2,t},\gamma_{1,t},\eta_{1,t+1},\gamma_{2,t+1}$
for $t \geq 1$ is well-defined by (\ref{eq:SEparams}) and strictly positive.
For any 2-pseudo-Lipschitz test function $g:\R^2 \to \R$ and each fixed $t \geq
1$, almost surely
\begin{equation}\label{eq:RanganSE}
\lim_{n,m \to \infty} \frac{1}{n}\sum_{i=1}^n g((r_1^t)_i,\st_i)
=\E[g(\mathsf{R}_1^t,\Xstar)]
\end{equation}
where $\mathsf{R}_1^t=\Xstar+\Zs/\sqrt{\gamma_{1,t}}$ and $\Zs \sim N(0,1)$ is
independent of $\Xstar$. Furthermore, set
\[\hat{\beta}_2^t=\left(A^{\top}A+\gamma_{2,t}I\right)^{-1}\left(A^{\top}
y+\gamma_{2,t}r_2^t\right), \qquad
\hat{\beta}_1^{t+1}=f(r_1^t,\gamma_{1,t}).\]
Then for each fixed $t \geq 1$, almost surely
\begin{equation}\label{eq:RanganMSE}
\lim_{n,m \to \infty} \frac{1}{n}\|\hat{\beta}_2^t-\st\|^2=\eta_{2,t}^{-1},
\qquad \lim_{n,m \to \infty}
\frac{1}{n}\|\hat{\beta}_1^{t+1}-\st\|^2=\eta_{1,t+1}^{-1}.
\end{equation}
\end{Theorem}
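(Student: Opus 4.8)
The plan is to obtain this as a specialization of the general VAMP state-evolution theorem of \cite{rangan2019vector} (their Theorems~1 and~2): the iteration (\ref{eq:iteupdate}) is exactly their algorithm run with the matched MMSE denoiser $f(\cdot,\gamma)$ on one side, the exact LMMSE denoiser $\hat\beta_2^t=(A^\top A+\gamma_{2,t}I)^{-1}(A^\top y+\gamma_{2,t}r_2^t)$ on the other, and with the scalar state-evolution constants of (\ref{eq:SEparams}) plugged in for the quantities their adaptive version estimates on the fly. So the content of the appendix proof is to verify the four ingredients their argument requires: (i) $A$ is right-rotationally invariant with a converging squared-singular-value law and controlled extreme eigenvalues; (ii) both denoisers are Lipschitz in their first argument, with divergences that concentrate, at each of the finitely many parameter values appearing up to time $t$; (iii) the recursion (\ref{eq:SEparams}) is well-defined and strictly positive for all $t$; and (iv) the initialization $r_2^1$ is independent of $(A,\epsilon)$ and converges empirically in the required Wasserstein sense with a nondegenerate limit.

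Ingredient (i) is read off from Assumption~\ref{AssumpA}: the decomposition $A=Q^\top D O$ with $O\sim\Haar(\SO(n))$ gives right-rotational invariance, and (\ref{eq:Assump2D}) supplies $D^\top 1\toW\D$ together with the convergence of $\min\diag(\DD)$ and $\max\diag(\DD)$ to $d_-,d_+$, with the $\SO(n)$ versus $\O(n)$ point deferred to \Cref{remark:On}. For (ii), the crucial observation is that $f(y,\gamma)=\E[\Xstar\mid\Ys=y]$ is a posterior mean in the Gaussian channel $\Ys=\Xstar+\Zs/\sqrt\gamma$, so $\partial_y f(y,\gamma)=\gamma\,\V[\Xstar\mid\Ys=y]$ and the conditional law is precisely the tilted measure $\mu$ of Assumption~\ref{AssumpPrior} with $k=1$, $\Gamma=-\gamma/2\prec(4\pib)^{-1}I$, $z=\gamma y$; hence (\ref{eq:poincare}) gives $\V[\Xstar\mid\Ys=y]\le C$, so $f(\cdot,\gamma)$ is $C\gamma$-Lipschitz with divergence bounded (and, by the same pseudo-Lipschitz argument, converging) uniformly in $y$, which is enough since each $\gamma_{1,t}$ is a fixed finite constant. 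The LMMSE side is affine in $(y,r_2^t)$ with Jacobian $\eta_{2,t}(A^\top A+\gamma_{2,t}I)^{-1}$, whose operator norm and normalized trace are controlled by the eigenvalue bounds of (i).

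Ingredient (iii) I would prove by induction on $t$ from the basic properties of $G,R$ in \Cref{lem:cauchy}: given $\gamma_{2,t}>0$ one has $\gamma_{2,t}>-d_-$, so $\eta_{2,t}=1/G(\gamma_{2,t})>0$; since $G$ is strictly decreasing, $1/\eta_{2,t}=G(\gamma_{2,t})$ lies in $(0,G(-d_-))$, the domain of $R$, and $\gamma_{1,t}=\eta_{2,t}-\gamma_{2,t}=-R(1/\eta_{2,t})>0$ because $R<0$ there; then $\mmse(\gamma_{1,t})$ is strictly below the linear MMSE $\Vpi/(\gamma_{1,t}\Vpi+1)<1/\gamma_{1,t}$, so $\eta_{1,t+1}=1/\mmse(\gamma_{1,t})>\gamma_{1,t}$ and $\gamma_{2,t+1}=\eta_{1,t+1}-\gamma_{1,t}>0$, closing the induction. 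Ingredient (iv) restates (\ref{eq:VAMPinit}), with the sub-Gaussian tail of Assumption~\ref{AssumpPrior} furnishing the finite moments of $\Xstar$ that the Wasserstein / $2$-pseudo-Lipschitz bookkeeping needs. Granting Rangan's conclusion, (\ref{eq:RanganSE}) follows with $\Rs_1^t=\Xstar+\Zs/\sqrt{\gamma_{1,t}}$ by construction; for (\ref{eq:RanganMSE}), apply (\ref{eq:RanganSE}) to the $2$-pseudo-Lipschitz function $g(r,x)=(f(r,\gamma_{1,t})-x)^2$ and identify $\E[(f(\Rs_1^t,\gamma_{1,t})-\Xstar)^2]=\mmse(\gamma_{1,t})=\eta_{1,t+1}^{-1}$ for $\hat\beta_1^{t+1}$, while the MSE of $\hat\beta_2^t$ is tracked directly by the LMMSE block of the state evolution as $\eta_{2,t}^{-1}$ (equivalently, read $\hat\beta_2^t=(\gamma_{1,t}r_1^t+\gamma_{2,t}r_2^t)/\eta_{2,t}$ off (\ref{eq:itupdate1}) and use the joint limit of $(r_1^t,r_2^t,\st)$).

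The main obstacle I anticipate is making this reduction airtight with a possibly unbounded prior: Rangan's theorem is typically invoked with uniformly (pseudo-)Lipschitz denoisers and bounded-moment limiting laws, so the care lies in (a) checking that the sub-Gaussian tail of Assumption~\ref{AssumpPrior} supplies exactly the moment control their conditioning argument uses, (b) confirming that the bounded-conditional-variance bound above indeed makes $f(\cdot,\gamma_{1,t})$ genuinely Lipschitz and its divergence concentrate at each fixed $t$, and (c) pinning down that the matched, deterministic-parameter instantiation of VAMP---rather than the adaptive version with empirically estimated $\gamma_{\cdot,t},\eta_{\cdot,t}$---is the correct object to which \cite[Theorems 1--2]{rangan2019vector} apply, so that no separate concentration of the estimated parameters is required.
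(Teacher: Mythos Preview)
Your proposal is correct and follows essentially the same route as the paper: both invoke \cite[Theorems~1--2]{rangan2019vector} after verifying right-rotational invariance from Assumption~\ref{AssumpA}, the Lipschitz property of $f(\cdot,\gamma)$ via the posterior-variance bound (\ref{eq:poincare}), positivity of the state-evolution parameters, and the Wasserstein convergence of the initialization. The only cosmetic differences are that the paper cites \cite[Section~IV.F]{rangan2019vector} for $\gamma\,\mmse(\gamma)\in(0,1)$ where you give the direct linear-MMSE comparison, and the paper simply points to \cite[Eq.~(56c)]{rangan2019vector} for (\ref{eq:RanganMSE}) rather than re-deriving it from (\ref{eq:RanganSE}); your anticipated obstacle~(c) about deterministic versus empirically estimated parameters is handled in the paper exactly as you suggest, by noting that Rangan's proof adapts directly.
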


Our proofs will use an extended state evolution for a version of this
algorithm in a reparametrized form. Letting $(\eta_*^{-1},\gamma_*)$ be a
fixed point of (\ref{eq:fix}), it is computationally convenient to specialize
to a ``stationary'' initialization of this algorithm given by
\begin{equation}\label{eq:VAMPstatinit}
r_1^0=\st+p^0, \qquad \gamma_{1,0}=\gamma_*
\end{equation}
where $(p_i^0)_{i=1}^n \overset{iid}{\sim} N(0,\gamma_*^{-1})$ is
independent of all other randomness in the model. (The quantities
$r_2^1,\gamma_{2,1}$ in (\ref{eq:VAMPinit}) are then defined from this
initialization by (\ref{eq:SEparams}) and (\ref{eq:iteupdate}).)
In the following results, we reparametrize the algorithm
initialized by (\ref{eq:VAMPstatinit}) and
describe its state evolution; proofs are
deferred to Appendix \ref{appendix:VAMP}.

Set
\[\Lambda=\frac{\eta_*-\gamma_*}{\gamma_*}
\Big[\eta_*(D^\top D+(\eta_*-\gamma_*)I)^{-1}-I\Big] \in \R^{n \times n},\]
\begin{equation}\label{eq:Lambdadef}
\xi=Q\epsilon, \qquad
e_b=\frac{\eta_*}{\gamma_*}(D^\top D+(\eta_*-\gamma_*)I)^{-1} D^\top \xi,
\qquad e=O^\top e_b,
\end{equation}
and define from (\ref{eq:iteupdate}) the new variables in $\R^n$
\[x^t=r_2^t-\st, \qquad y^t=r_1^t-e-\st.\]
From the posterior-mean function $f(\cdot,\gamma)$ in (\ref{eq:deff}),
define $F:\R^2 \to \R$ as
\begin{equation}\label{eq:Fdef}
F(p,\beta)=\frac{\eta_*}{\eta_*-\gamma_*}
f(p+\beta,\gamma_*)-\frac{\gamma_*}{\eta_*-\gamma_*}p
-\frac{\eta_*}{\eta_*-\gamma_*}\beta.
\end{equation}

\begin{Proposition}\label{prop:VAMPequiv}
If $(\eta_*^{-1},\gamma_*)$ is a fixed point of
(\ref{eq:fix}) and $\gamma_{1,0}=\gamma_*$, then
$\eta_{1,t}=\eta_{2,t}=\eta_*$, $\gamma_{1,t}=\gamma_*$, and
$\gamma_{2,t}=\eta_*-\gamma_*$ for all $t \geq 1$. Furthermore,
the VAMP algorithm (\ref{eq:iteupdate}) initialized with
(\ref{eq:VAMPstatinit}) is equivalent to the initialization
$x^1=F(p^0,\st)$ and the iterations, for $t=1,2,3,\ldots$
\begin{equation}\label{eq:xsy}
s^t=Ox^t, \qquad y^t=O^\top \Lambda s^t, \qquad x^{t+1}=F(y^t+e,\st).
\end{equation}
\end{Proposition}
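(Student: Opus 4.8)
The plan is to verify both assertions of the proposition by direct substitution, treating the claim as a deterministic identity between the two iterations once the state-evolution parameters have been pinned down by the fixed-point relations. First I would prove by induction on $t$ that $\eta_{1,t}=\eta_{2,t}=\eta_*$, $\gamma_{1,t}=\gamma_*$, and $\gamma_{2,t}=\eta_*-\gamma_*$ for all $t\ge 1$. For the base case, the $t=0$ instances of \eqref{eq:SEparams} together with $\gamma_{1,0}=\gamma_*$ and the first fixed-point equation $\eta_*^{-1}=\mmse(\gamma_*)$ of \eqref{eq:fix} give $\eta_{1,1}=1/\mmse(\gamma_*)=\eta_*$ and hence $\gamma_{2,1}=\eta_{1,1}-\gamma_{1,0}=\eta_*-\gamma_*$. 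For the inductive step, assuming $\gamma_{2,t}=\eta_*-\gamma_*$, the equivalent form $\eta_*^{-1}=G(\eta_*-\gamma_*)$ of the second fixed-point equation gives $\eta_{2,t}=1/G(\eta_*-\gamma_*)=\eta_*$, whence $\gamma_{1,t}=\eta_{2,t}-\gamma_{2,t}=\gamma_*$, then $\eta_{1,t+1}=1/\mmse(\gamma_*)=\eta_*$, and finally $\gamma_{2,t+1}=\eta_{1,t+1}-\gamma_{1,t}=\eta_*-\gamma_*$, which closes the induction. Here $\eta_*-\gamma_*\ge\Vpi^{-1}>0$ by Proposition~\ref{prop:uniquefix}, so the function $F$ of \eqref{eq:Fdef} and all subsequent divisions by $\eta_*-\gamma_*$ are well-defined.

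Next I would establish the $y$-update. Plugging the stabilized parameters into \eqref{eq:itupdate1} and using $A=Q^\top D O$, $y=A\st+\epsilon$, and $\xi=Q\epsilon$, I write $A^\top A+(\eta_*-\gamma_*)I=O^\top\big(D^\top D+(\eta_*-\gamma_*)I\big)O$ and, with $x^t=r_2^t-\st$, $A^\top y+(\eta_*-\gamma_*)r_2^t=O^\top\big(D^\top D+(\eta_*-\gamma_*)I\big)O\st+O^\top D^\top\xi+(\eta_*-\gamma_*)x^t$. Multiplying by $\eta_*\big(A^\top A+(\eta_*-\gamma_*)I\big)^{-1}=\eta_* O^\top\big(D^\top D+(\eta_*-\gamma_*)I\big)^{-1}O$, the first term collapses to $\eta_*\st$ (the inverse cancels the matrix), the second equals $\gamma_* e$ by the definitions of $e_b$ and $e$ in \eqref{eq:Lambdadef}, and the third equals $\eta_*(\eta_*-\gamma_*)O^\top\big(D^\top D+(\eta_*-\gamma_*)I\big)^{-1}O x^t$. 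Subtracting $(\eta_*-\gamma_*)r_2^t$, dividing by $\gamma_*=\gamma_{1,t}$, and using that $(\eta_*-\gamma_*)\big[\eta_*(D^\top D+(\eta_*-\gamma_*)I)^{-1}-I\big]=\gamma_*\Lambda$ from \eqref{eq:Lambdadef}, one obtains $r_1^t=\st+e+O^\top\Lambda O x^t$, hence $y^t=r_1^t-e-\st=O^\top\Lambda O x^t=O^\top\Lambda s^t$ with $s^t=O x^t$, as in \eqref{eq:xsy}.

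Finally, for the $x$-update I would substitute the stabilized parameters into \eqref{eq:itupdate2} and write $r_1^t=(y^t+e)+\st$ from the previous step; expanding $x^{t+1}=r_2^{t+1}-\st$ and collecting the $\st$-terms via $-\tfrac{\gamma_*}{\eta_*-\gamma_*}\st-\st=-\tfrac{\eta_*}{\eta_*-\gamma_*}\st$ yields exactly $F(y^t+e,\st)$ by \eqref{eq:Fdef}. The identical manipulation applied to the initialization $r_1^0=p^0+\st$ (the $t=0$ case of \eqref{eq:itupdate2}) gives $x^1=r_2^1-\st=F(p^0,\st)$. I do not expect any genuine obstacle here: the whole argument is a deterministic algebraic verification. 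The only points requiring care are tracking the orthogonal conjugations $O^\top(\cdot)O$ so that $e$ and $\Lambda$ are recognized cleanly in the expanded expressions, and bookkeeping the $\st$-dependent affine shifts so that the function $F$ emerges exactly; strict positivity of $\eta_*-\gamma_*$ from Proposition~\ref{prop:uniquefix} is what justifies every division.
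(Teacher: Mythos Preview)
Your proposal is correct and follows essentially the same approach as the paper: an inductive verification of the stationary state-evolution parameters from the two fixed-point equations, followed by a direct algebraic rewriting of \eqref{eq:itupdate1} and \eqref{eq:itupdate2} using $A=Q^\top DO$, $\xi=Q\epsilon$, and the definitions of $\Lambda$, $e_b$, $e$. The only cosmetic difference is that the paper first expresses $r_1^t$ as $O^\top\Lambda O r_2^t + O^\top\big[\tfrac{\eta_*}{\gamma_*}(D^\top D+(\eta_*-\gamma_*)I)^{-1}D^\top D\big]O\st + e$ and then applies the identity $(D^\top D+(\eta_*-\gamma_*)I)^{-1}D^\top D=(\gamma_*/\eta_*)(I-\Lambda)$, whereas you absorb the $\st$-term into the inverse at the outset; both routes land on $r_1^t=\st+e+O^\top\Lambda O x^t$.
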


We define the important scalar parameters
\begin{equation}\label{eq:scalarparams}
\delta_*=\frac{1}{\eta_*-\gamma_*},
\quad \kappa_*=\left(\frac{\eta_*-\gamma_*}{\eta_*}\right)^2
\left[\E\frac{\eta_*^2}{(\D^2+\eta_*-\gamma_*)^2}-1\right],
\quad \sigma_*^2=\delta_*\kappa_*,
\quad b_*=\frac{1}{\gamma_*}-\frac{\kappa_*}{\eta_*-\gamma_*}.
\end{equation}
Let us collect
\begin{equation}\label{eq:Hdef}
H=\Big(\st,D^\top 1_{m \times 1}, D^\top \xi,
\diag(\Lambda),e_b,e,p^0\Big) \in \R^{n \times 7}.
\end{equation}
The following results describe the limit empirical distribution of
these quantities constituting $H$, as well as the 
state evolution of the iterates $x^t,s^t,y^t$ defined by (\ref{eq:xsy}).

\begin{Proposition}\label{prop:AMPparamconverge}
Suppose Assumptions \ref{AssumpD}--\ref{AssumpPrior} hold.
Define random variables
\[\Xi\sim N(0,1), \qquad \Xstar \sim \pi, \qquad
\mathsf{P}_0\sim N(0,\gamma_*^{-1}), \qquad \Es \sim N(0,b_*)
\]
independent of each other and of $\D$, and set
	$$
\mathsf{L}=\frac{\eta_{*}-\gamma_{*}}{\gamma_{*}}\left(\frac{\eta_*}{\Dbar^{2}+\eta_{*}-\gamma_{*}}-1\right),\qquad
\mathsf{E}_{b}=\frac{\eta_{*}}{\gamma_{*}} \frac{\Dbar
\Xi}{\Dbar^{2}+\eta_{*}-\gamma_{*}}, \qquad
\Hs=(\Xstar,\D,\D\Xi,\Ls,\Es_b,\Es,\Ps_0).$$
Then $\kappa_*=\E \Ls^2$ and $b_*=\E \Es_b^2$. Furthermore,
$H \stackrel{W}{\rightarrow} \Hs$ almost surely as $n,m\to \infty$.
\end{Proposition}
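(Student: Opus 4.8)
The plan is to prove the two assertions separately. The scalar identities $\kappa_*=\E\Ls^2$ and $b_*=\E\Es_b^2$ are elementary: beyond the definitions \eqref{eq:scalarparams}, the facts $\E\Xi^2=1$ and $\Xi\perp\D$, and $\gamma_*+(\eta_*-\gamma_*)=\eta_*$, the only ingredient is the second fixed-point equation from \eqref{eq:fix}, which in the form $\eta_*^{-1}=G(\eta_*-\gamma_*)$ reads $\E[(\D^2+\eta_*-\gamma_*)^{-1}]=\eta_*^{-1}$; expanding the squares of $\Ls$ and $\Es_b$ and substituting this relation (and hence $\E[\D^2(\D^2+\eta_*-\gamma_*)^{-2}]=\eta_*^{-1}-(\eta_*-\gamma_*)\E[(\D^2+\eta_*-\gamma_*)^{-2}]$) gives both identities after rearrangement. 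I also record from \Cref{prop:uniquefix} the facts $\gamma_*>0$ and $\eta_*-\gamma_*\ge\Vpi^{-1}>0$, which keep all relevant denominators bounded away from $0$. For the convergence $H\toW\Hs$, the idea is to split the seven columns of $H$ in \eqref{eq:Hdef} into three groups according to how they depend on the model randomness.

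The first group consists of the ``input'' columns: $\st$ has i.i.d.\ entries $\sim\pi$; $p^0$ has i.i.d.\ entries $\sim N(0,\gamma_*^{-1})$; $D^\top 1_{m\times1}$ is deterministic with empirical law $\toW\D$ by \eqref{eq:Assump2D}; and $\xi=Q\epsilon$ has i.i.d.\ $N(0,1)$ entries (as $Q$ is orthogonal and deterministic and $\epsilon\sim N(0,I_m)$) and is independent of $D$. The first step is to show that the rows of $(\st,D^\top 1_{m\times1},\xi,p^0)$ have empirical distribution converging, almost surely and in Wasserstein-$\pfrak$ for every $\pfrak\ge1$, to the product law of $(\Xstar,\D,\Xi,\Ps_0)$; this is a Wasserstein strong law for the joint empirical measure of the deterministic sequence $D^\top 1_{m\times1}$ and the independent i.i.d.\ sequences $\st,\xi,p^0$ (Appendix~\ref{appendix:Wasserstein}), where the sub-Gaussian bound \eqref{eq:subGaussian}, the compactness of $\operatorname{supp}(\D^2)$ (\Cref{AssumpD}), and Gaussian moment bounds upgrade weak convergence to convergence in all Wasserstein orders. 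The second group, $D^\top\xi$, $\diag(\Lambda)$ and $e_b$, is obtained from $(D^\top 1_{m\times1},\xi)$ by applying, entrywise, the fixed maps $(d,x)\mapsto dx$, $d\mapsto\frac{\eta_*-\gamma_*}{\gamma_*}\big(\frac{\eta_*}{d^2+\eta_*-\gamma_*}-1\big)$ and $(d,x)\mapsto\frac{\eta_*}{\gamma_*}\cdot\frac{dx}{d^2+\eta_*-\gamma_*}$ (see \eqref{eq:Lambdadef}); since $|d|\le\sqrt{d_+}<\infty$ on $\operatorname{supp}(\D)$ and $d^2+\eta_*-\gamma_*\ge\eta_*-\gamma_*>0$, these maps are continuous with at most linear growth in $x$ on the relevant domain, so Wasserstein-$\pfrak$ convergence is preserved, and the rows of $(\st,D^\top 1_{m\times1},D^\top\xi,\diag(\Lambda),e_b,p^0)$ converge a.s.\ to the law of $(\Xstar,\D,\D\Xi,\Ls,\Es_b,\Ps_0)$.

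The third group is the single rotated column $e=O^\top e_b$. Since $D,Q$ are deterministic and $O,\st,\epsilon$ are mutually independent (\Cref{AssumpA}), $e_b$ is a function of $\epsilon$ alone and hence independent of $O\sim\Haar(\SO(n))$, and $\|e_b\|^2/n\to\E\Es_b^2=b_*$ almost surely by the previous step. Conditioning on $(\st,\epsilon,p^0)$ freezes the first six columns of $H$ into a deterministic array whose rows' empirical distribution still converges (in every Wasserstein order, for a.e.\ realization) to the law of $(\Xstar,\D,\D\Xi,\Ls,\Es_b,\Ps_0)$, while $e=O^\top e_b$ becomes a Haar rotation of the fixed vector $e_b$ with $\|e_b\|^2/n\to b_*$. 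The key input is the standard rotational-invariance fact used in the analyses of orthogonally-invariant spin glasses \cite{fan2021replica,fan2022tap} (see also Appendix~\ref{appendix:Wasserstein}): if $W^{(n)}\in\R^{n\times6}$ has rows with empirical law converging to $\mu$ and $v^{(n)}\in\R^n$ satisfies $\|v^{(n)}\|^2/n\to b_*$, then the rows of $(W^{(n)},O^\top v^{(n)})$ have empirical law converging, almost surely over $O$ and in every Wasserstein order, to $\mu\otimes N(0,b_*)$, the Gaussianity of the last coordinate and its independence from the rest reflecting the invariance of the Haar measure and the almost-sure statement following from exponential concentration of Lipschitz functionals on $\SO(n)$ together with Borel--Cantelli. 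Applying this with $v^{(n)}=e_b$ and removing the conditioning by Fubini (the conditional conclusion holding on the almost-sure event identified in the second group) yields $H\toW\Hs$ almost surely.

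I expect the last step to be the main obstacle: making precise --- \emph{jointly} with the six other, not rotationally invariant, columns, one of which ($e_b$) is exactly the vector being rotated --- that $O^\top e_b$ behaves like an independent $N(0,b_*)$ coordinate, and obtaining this almost surely and in every Wasserstein order rather than only weakly. This needs both the concentration estimates on $\SO(n)$ and uniform control of $\pfrak$-th moments, which is where the sub-Gaussianity of $\pi$ in \eqref{eq:subGaussian}, the Gaussianity of $\epsilon$ and $p^0$, and the compactness of $\operatorname{supp}(\D^2)$ enter; the remaining manipulations are routine within the Wasserstein calculus of Appendix~\ref{appendix:Wasserstein}.
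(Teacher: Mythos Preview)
Your approach is essentially the same as the paper's: the scalar identities are derived exactly as in Lemma~\ref{lemma:dABC} from the fixed-point relation $\eta_*^{-1}=G(\eta_*-\gamma_*)$, and the Wasserstein convergence is assembled from Propositions~\ref{prop:iidW}, \ref{prop:contW}, and \ref{prop:orthoW} in the same order (i.i.d.\ inputs, entrywise continuous maps, Haar rotation for $e$).

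There is one dimensional slip worth flagging. You write the ``first group'' as the rows of $(\st,D^\top 1_{m\times1},\xi,p^0)$ and then form $D^\top\xi$ by the entrywise map $(d,x)\mapsto dx$ applied to $(D^\top 1_{m\times1},\xi)$; but $\xi=Q\epsilon\in\R^m$ while the other columns live in $\R^n$, so this is not literally a row-wise operation when $m\neq n$. The paper handles this by observing that, since $D$ is diagonal, $D^\top\xi\in\R^n$ equals the entrywise product of $D^\top 1_{m\times1}\in\R^n$ with a vector $\bar\xi\sim N(0,I_n)$ (take $\bar\xi_i=\xi_i$ for $i\le\min(m,n)$ and pad with fresh independent standard Gaussians when $m<n$). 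Once you replace $\xi$ by $\bar\xi$ in your first group, your argument goes through verbatim.
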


\begin{Theorem}\label{thm:ampSE}
Suppose Assumptions \ref{AssumpD}--\ref{AssumpPrior} hold.
Let $\Hs=(\Xstar,\D,\D\Xi,\Ls,\Es_b,\Es,\Ps_0)$ be as defined in
\Cref{prop:AMPparamconverge}. Set $\Xs_1=F(\Ps_0,\Xstar)$ for the function
$F(\cdot)$ from (\ref{eq:Fdef}), set
$\Delta_1=\E[\Xs_1^2] \in \R^{1 \times 1}$, and define iteratively
$\Ss_t,\Ys_t,\Xs_{t+1},\Delta_{t+1}$ for $t=1,2,3,\ldots$ such that
\[(\Ss_1,\ldots,\Ss_t) \sim N(0,\Delta_t),
\qquad (\Ys_1,\ldots,\Ys_t) \sim N(0,\kappa_* \Delta_t)\]
are Gaussian vectors independent of each other and of $\Hs$, and
\[\Xs_{t+1}=F(\Ys_t+\Es,\Xstar), \qquad
\Delta_{t+1}=\mathbb{E}\left[\left(\mathsf{X}_{1}, \ldots,
\mathsf{X}_{t+1}\right)\left(\mathsf{X}_{1}, \ldots,
\mathsf{X}_{t+1}\right)^{\top}\right] \in \R^{(t+1) \times (t+1)}.\]
Then for each $t \geq 1$, $\Delta_t \succ 0$ strictly,
$\delta_*=\E \Xs_t^2$, and $\sigma_*^2=\E \Ys_t^2$.

Furthermore, let $X_{t}=\left(x^{1}, \ldots, x^{t}\right) \in
\mathbb{R}^{n \times t}$, $S_{t}=\left(s^{1}, \ldots, s^{t}\right) \in
\mathbb{R}^{n \times t}$, and $Y_{t}=\left(y^{1}, \ldots, y^{t}\right) \in
\mathbb{R}^{n \times t}$ collect the iterates of \eqref{eq:xsy}, starting from
the initialization $x^1=F(p^0,\st)$. Then for any fixed $t \geq 1$,
almost surely as $n,m \rightarrow \infty$,
	$$
	\left(H, X_t, S_{t}, Y_t\right)
\stackrel{W}{\rightarrow}\left(\mathsf{H},
\mathsf{X}_{1}, \ldots, \mathsf{X}_t, 
\mathsf{S}_{1}, \ldots, \mathsf{S}_{t}, \mathsf{Y}_{1}, \ldots,
\mathsf{Y}_{t}\right).
	$$ 
	\end{Theorem}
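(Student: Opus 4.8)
The plan is to deduce this from the master state-evolution theorem for Vector AMP (Theorem \ref{thm:RanganSE}, i.e.\ \cite[Theorems 1--2]{rangan2019vector}) together with the algebraic reparametrization of Proposition \ref{prop:VAMPequiv} and the deterministic-equivalent computations of Proposition \ref{prop:AMPparamconverge}. The first step is to verify the two purely ``scalar'' assertions: that $\Delta_t \succ 0$ strictly for each $t$, and that $\delta_* = \E \Xs_t^2$ and $\sigma_*^2 = \E \Ys_t^2$ for all $t$. For the covariance identities, I would check by induction that the recursion $\Xs_{t+1} = F(\Ys_t + \Es, \Xstar)$ with $(\Ys_1,\dots,\Ys_t) \sim N(0, \kappa_*\Delta_t)$ is a genuine fixed point of the state evolution: using the definition \eqref{eq:Fdef} of $F$, the fixed-point equations \eqref{eq:fix}, and Proposition \ref{prop:VAMPequiv} (which already establishes $\eta_{1,t}=\eta_{2,t}=\eta_*$, $\gamma_{1,t}=\gamma_*$, $\gamma_{2,t}=\eta_*-\gamma_*$ so that the state evolution parameters are stationary), one computes $\E[F(\Ps_0,\Xstar)^2] = \delta_*$ directly from $\eta_*^{-1} = \mmse(\gamma_*)$ and the I-MMSE-type identity $\E[(f(\Ys,\gamma) - \Xstar)^2] = \mmse(\gamma)$, and then the same computation propagates the identity $\E\Xs_{t+1}^2 = \delta_*$ given $\E[(\Ys_t+\Es)-\text{(signal part)}]$ has the right variance; the variance of $\Ys_t = O^\top \Lambda s^t$ in the limit is $\sigma_*^2 = \delta_* \kappa_*$ because $\kappa_* = \E\Ls^2 = \E[\diag(\Lambda)^2\text{-limit}]$ by Proposition \ref{prop:AMPparamconverge} and $s^t = Ox^t$ has limiting second moment $\delta_*$. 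Strict positivity $\Delta_t \succ 0$ is where the non-Gaussianity of $\pi$ enters (cf.\ the Remark after Assumption \ref{AssumpHighTemp}): $F(p,\beta)$ is a genuinely nonlinear function of its Gaussian argument when $\pi$ is non-Gaussian, so successive iterates $\Xs_1, \Xs_2, \dots$ are not almost-surely linearly dependent; I would argue this by showing that if $\Delta_t$ were singular then some nontrivial linear combination $\sum c_j \Xs_j$ vanishes a.s., and then use that the maps $\Ys \mapsto F(\Ys+\Es,\Xstar)$ for distinct iterates involve independent fresh Gaussian components, forcing all $c_j=0$ — a standard argument in the AMP literature (e.g.\ \cite{fan2021replica}).

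The second and main step is the empirical convergence $(H, X_t, S_t, Y_t) \toW (\Hs, \Xs_1,\dots,\Xs_t,\Ss_1,\dots,\Ss_t,\Ys_1,\dots,\Ys_t)$. By Proposition \ref{prop:VAMPequiv}, the iterates $x^t, s^t, y^t$ of \eqref{eq:xsy} are an invertible linear-plus-entrywise-nonlinear recoding of the original VAMP iterates $r_1^t, r_2^t, \hat\beta_2^t, \hat\beta_1^t$ (the recoding involves only the quantities $\st, e, e_b, \Lambda$ and the fixed constants $\eta_*,\gamma_*$), so a Wasserstein-convergence statement for the original VAMP iterates plus the auxiliary vectors in $H$ transfers to one for $(X_t,S_t,Y_t)$ via continuity of the recoding under Wasserstein distance — this uses that $F$ and its inverse are pseudo-Lipschitz, which follows from Lipschitzness of $f(\cdot,\gamma_*)$ (a standard regularity fact for the Bayes denoiser under the sub-Gaussian prior of Assumption \ref{AssumpPrior}). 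So it suffices to establish joint empirical convergence of $(H, r_1^1,\dots,r_1^t, r_2^1,\dots,r_2^t, \text{linear images under }A)$. Here Theorem \ref{thm:RanganSE} gives convergence of the individual scalar iterates $r_1^t$, but I need the \emph{joint} law across all $t$ and jointly with $\diag(\Lambda), e, e_b$; this is exactly the content of the multivariate state-evolution half of \cite[Theorem 1]{rangan2019vector}, whose hypotheses were checked in Appendix \ref{appendix:VAMP}. The auxiliary vectors $e = O^\top e_b$, $e_b = \frac{\eta_*}{\gamma_*}(D^\top D + (\eta_*-\gamma_*)I)^{-1}D^\top\xi$ and $\diag(\Lambda)$ are handled by Proposition \ref{prop:AMPparamconverge}, and their joint convergence with the VAMP iterates follows because the VAMP SE machinery already tracks the Haar rotation $O$ acting on the noise direction $\xi = Q\epsilon$ — indeed $e$ is, up to the fixed spectral function $\frac{\eta_*}{\gamma_*}(x+\eta_*-\gamma_*)^{-1}$, precisely the ``noise transfer'' vector that appears in one step of the VAMP linear update, so it is measurable with respect to the same $\sigma$-algebra the SE analysis conditions on.

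The main obstacle I anticipate is \emph{organizing the joint convergence} rather than any single estimate: Theorem \ref{thm:RanganSE} as quoted is a one-iterate-at-a-time statement for a pseudo-Lipschitz test function on $\R^2$, whereas I need convergence of the full trajectory $(H, X_t, S_t, Y_t)$ to a jointly-Gaussian-plus-$\Hs$ limit, with the precise block covariance $\kappa_*\Delta_t$ for the $\Ys$-block and $\Delta_t$ for the $\Ss$-block, and with the correct cross-correlations between iterates at different times and the fixed vectors $e, e_b, \diag\Lambda, \st$. This requires invoking the general multivariate form of the VAMP state evolution and carefully matching its output covariances to the $\Delta_t$ recursion defined here — in particular checking that the Haar-rotation structure produces exactly the ``fresh independent Gaussian $\Ys_t$ given $\Delta_t$'' description, which is where the orthogonal (rather than i.i.d.-Gaussian) design makes the bookkeeping heavier. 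A secondary technical point is propagating Wasserstein-$\pfrak$ (all orders) rather than Wasserstein-$2$ convergence through the nonlinear recoding $F$; this needs the sub-Gaussian tail bound in Assumption \ref{AssumpPrior} together with the uniform-integrability properties of Wasserstein convergence reviewed in Appendix \ref{appendix:Wasserstein}, but is routine given those inputs.
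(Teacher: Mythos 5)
Your proposal takes a genuinely different route from the paper, but it has a real gap that the proposed reduction does not fill.

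The paper proves Theorem~\ref{thm:ampSE} by a self-contained induction on $t$ that does \emph{not} pass through the \cite{rangan2019vector} state evolution at all. The engine of the proof is the explicit conditional-law representation of the Haar rotation (in the paper's Appendix A.2, invoking \cite[Lemma B.2]{fan2022tap}): conditional on $\cG_t$, the law of $O$ restricted to the constraints $(e_b,S,\Lambda S)=O(e,X,Y)$ is a deterministic projection plus $\Pi_{U^\perp}\tilde O\,\Pi_{V^\perp}^\top$ with $\tilde O$ a fresh Haar rotation of the orthogonal complement. This single formula delivers, in one stroke, (i) the joint Gaussianity and precise covariance $\Delta_t$, $\kappa_*\Delta_t$ of the $\Ss$- and $\Ys$-blocks, (ii) the \emph{independence} of $(\Ss_1,\ldots,\Ss_t)$ from the whole of $\Hs$, including $\D$, $\Ls$, and $\Es_b$, and (iii) the ``fresh independent Gaussian $\Us_t$'' statement \eqref{eq:extraSE} that is then used both to propagate the induction and to prove $\Delta_t\succ 0$ via the inverse function theorem argument.

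Your strategy is instead to deduce the joint convergence from Theorem~\ref{thm:RanganSE} (i.e.\ \cite[Theorems 1--2]{rangan2019vector}) and then transfer to $(X_t,S_t,Y_t)$ by continuity of the recoding. Two things go wrong. First, the recoding $s^t=Ox^t$, $y^t=O^\top\Lambda s^t$ is \emph{not} an entrywise or row-wise map of the VAMP iterates: it involves the full random rotation $O$, so the joint empirical law of $(x^t,s^t)$ with $H$ is not a Wasserstein-continuous image of the empirical law of $(r_1^t,r_2^t)$ with $H$. Characterizing that joint law is exactly what the conditioning formula above provides, and no amount of pseudo-Lipschitz continuity closes this hole. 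Second, and more basically, the target limit involves the tuple $\Hs=(\Xstar,\D,\D\Xi,\Ls,\Es_b,\Es,\Ps_0)$, and the key structural output of Theorem~\ref{thm:ampSE} (needed for Corollary~\ref{cor:ampSEcor2} and the conditional moment calculations) is that the Gaussian block $(\Ss_1,\ldots,\Ss_t)$ is independent of $\D$ and $\Ls$. The state evolution of \cite{rangan2019vector} tracks pseudo-Lipschitz statistics of the iterates and of the signal, but it does not state joint empirical convergence with the spectral data $\D,\Ls,\Es_b$, nor does it assert independence of the limiting iterates from the spectral distribution. Those claims are the non-trivial content here, and they would have to be re-proved by essentially the same conditioning argument the paper uses, so the proposed reduction to \cite{rangan2019vector} does not actually save work.

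Your scalar computations ($\E\Xs_t^2=\delta_*$, $\E\Ys_t^2=\sigma_*^2$ via $\sigma_*^2+b_*=\gamma_*^{-1}$ and the stationarity of the state evolution) and the rough sketch of $\Delta_t\succ 0$ via non-degeneracy of $F$ are in the right spirit and close to what the paper does; the missing piece is precisely the mechanism that produces, at each step, a fresh Gaussian increment independent of $\Hs$ and the previous iterates, and its independence from $\D$.
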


\begin{Corollary}\label{cor:ampSEcor1}
In the setting of Theorem \ref{thm:ampSE}, for any fixed $t \geq 1$, almost
surely
\[\lim_{n,m \to \infty} n^{-1}(e,X_t,Y_t)^\top (e,X_t,Y_t)
=\begin{pmatrix} b_* & 0 & 0 \\ 0 & \Delta_t & 0 \\
0 & 0 & \kappa_*\Delta_t \end{pmatrix}\]
\end{Corollary}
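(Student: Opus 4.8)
The plan is to read \Cref{cor:ampSEcor1} off from the joint Wasserstein convergence already established in \Cref{thm:ampSE}. Observe first that $e$ is the sixth column of the matrix $H$ in \eqref{eq:Hdef}, with scalar limit $\Es$, so $(e,X_t,Y_t)$ is a sub-collection of the columns of $(H,X_t,S_t,Y_t)$, whose coordinates converge jointly in $W_\pfrak$ (all $\pfrak\ge1$) to $(\Es,\Xs_1,\dots,\Xs_t,\Ys_1,\dots,\Ys_t)$. Since the product map $(a,b)\mapsto ab$ is pseudo-Lipschitz of order $2$, this convergence implies that every entry of the Gram matrix converges almost surely: for columns $c^k,c^l\in\{e,x^1,\dots,x^t,y^1,\dots,y^t\}$ with respective limits $\mathsf C^k,\mathsf C^l$,
\[
\lim_{n,m\to\infty}\frac1n(c^k)^\top c^l=\lim_{n,m\to\infty}\frac1n\sum_{i=1}^n c^k_i c^l_i=\E[\mathsf C^k\mathsf C^l].
\]
It therefore remains only to evaluate these second moments and check that they assemble into the stated block matrix; the Wasserstein facts used here are those recalled in Appendix~\ref{appendix:Wasserstein}.

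The diagonal blocks are immediate from \Cref{prop:AMPparamconverge} and \Cref{thm:ampSE}: $\E[\Es^2]=b_*$, $\E[\Xs_i\Xs_j]=(\Delta_t)_{ij}$, and $\E[\Ys_i\Ys_j]=\kappa_*(\Delta_t)_{ij}$, giving the blocks $b_*$, $\Delta_t$, and $\kappa_*\Delta_t$. The $e$–$Y_t$ block also vanishes, because $(\Ys_1,\dots,\Ys_t)$ is by construction in \Cref{thm:ampSE} independent of $\Hs$, hence of $\Es$, while $\E[\Es]=\E[\Ys_j]=0$.

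The one nontrivial input is the vanishing of the $e$–$X_t$ and $X_t$–$Y_t$ blocks, since for $j\ge2$ we have $\Xs_j=F(\Ys_{j-1}+\Es,\Xstar)$ and so these correlations are not a matter of independence. I would handle them by Gaussian integration by parts: conditioning on $(\Xstar,\Es)$ — which is independent of the centered Gaussian vector $(\Ys_1,\dots,\Ys_t)\sim N(0,\kappa_*\Delta_t)$ — Stein's lemma gives $\E[\Ys_i\Xs_j]=\kappa_*(\Delta_t)_{i,j-1}\,\E[\partial_1F(\Ys_{j-1}+\Es,\Xstar)]$, and conditioning instead on $(\Xstar,\Ys_{j-1})$ gives $\E[\Es\Xs_j]=b_*\,\E[\partial_1F(\Ys_{j-1}+\Es,\Xstar)]$. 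Both vanish once one shows
\[
\E\big[\partial_1F(\Ys_{j-1}+\Es,\Xstar)\big]=0 .
\]
From \eqref{eq:Fdef}, $\partial_1F(p,\beta)=\frac{\eta_*}{\eta_*-\gamma_*}f'(p+\beta,\gamma_*)-\frac{\gamma_*}{\eta_*-\gamma_*}$, with $f'$ the first-argument derivative of the scalar denoiser \eqref{eq:deff}. An algebraic check on \eqref{eq:scalarparams} gives $\E[(\Ys_{j-1}+\Es)^2]=\sigma_*^2+b_*=\gamma_*^{-1}$, so $\Ys_{j-1}+\Es+\Xstar$ has exactly the law of the scalar-channel observation $\Ys$ at SNR $\gamma_*$; Tweedie's formula $f'(y,\gamma_*)=\gamma_*\,\V[\Xstar\mid\Ys=y]$ then yields $\E[f'(\Ys_{j-1}+\Es+\Xstar,\gamma_*)]=\gamma_*\,\mmse(\gamma_*)=\gamma_*\eta_*^{-1}$ by the fixed-point relation $\mmse(\gamma_*)=\eta_*^{-1}$ in \eqref{eq:fix}, whence $\E[\partial_1F]=\frac{\gamma_*}{\eta_*-\gamma_*}-\frac{\gamma_*}{\eta_*-\gamma_*}=0$. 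For $j=1$, $\Xs_1=F(\Ps_0,\Xstar)$ is independent of $\Es$ and of the $\Ys_i$, all of which are centered, so those correlations vanish as well. Assembling the blocks gives \Cref{cor:ampSEcor1}.

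The only real content is the identity $\E[\partial_1F]=0$ at the fixed point together with its two Stein-lemma consequences; the remainder is bookkeeping within the $W_\pfrak$ framework of \Cref{thm:ampSE}. In the writeup I would additionally check the mild integrability required to justify the integration-by-parts steps, but $f'(\cdot,\gamma_*)\ge0$ with $\E f'(\cdot,\gamma_*)\le\gamma_*\Vpi<\infty$ under the sub-Gaussian prior of \Cref{AssumpPrior}, so this presents no difficulty.
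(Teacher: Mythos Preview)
Your proposal is correct and follows essentially the same approach as the paper: both read off the Gram matrix from the Wasserstein convergence in \Cref{thm:ampSE}, then use Stein's lemma together with the identity $\E[F'(\Ps,\Xstar)]=0$ to kill the $e$--$X_t$ and $X_t$--$Y_t$ cross terms. The only difference is that the paper cites this identity directly from \Cref{prop:denoiserprop} (equation \eqref{eq:denoiserpp2}) rather than re-deriving it via Tweedie's formula and the fixed-point relation, so you could shorten your writeup by invoking that proposition.
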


Theorem \ref{thm:ampSE} implies
that the joint limit $(\Ss_1,\ldots,\Ss_t)$ for the
iterates $S_t=(s^1,\ldots,s^t)$ is independent of $\D$.
We highlight here the following implication, which is an analogue of
\cite[Proposition 2.4]{fan2021replica}.

\begin{Corollary}\label{cor:ampSEcor2}
In the setting of Theorem \ref{thm:ampSE},
fix any $t \geq 1$, let $f: \mathbb{R} \rightarrow \mathbb{R}$ be any
function which is continuous and bounded in a neighborhood of
$\operatorname{supp}(\D^2)$, and define $f(\DDbar) \in \R^{n \times n}$ by the
functional calculus. Then almost surely
	$$
	\lim_{n,m \to \infty} n^{-1} S_{t}^{\top} f(\DDbar ) S_{t}=\Delta_{t} \cdot \E f(\Dbar^2).
	$$
\end{Corollary}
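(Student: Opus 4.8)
The plan is to reduce the claim to a statement about a quadratic form of the iterates $S_t$ against an arbitrary polynomial in $\DDbar$, and then handle general continuous $f$ by polynomial approximation on a neighborhood of $\operatorname{supp}(\D^2)$. First, I would observe that by \Cref{thm:ampSE}, the empirical distribution of the rows of $(H, S_t)$ — hence in particular of $(D^\top \mathbf{1}_{m\times 1}, s^1, \ldots, s^t)$, and since $\diag(\DDbar)$ is a deterministic function of $D^\top\mathbf{1}_{m\times 1}$ of the entries of $D$, also of $(\operatorname{diag}(\DDbar), s^1,\ldots,s^t)$ — converges in Wasserstein sense to the law of $(\D^2, \Ss_1,\ldots,\Ss_t)$ with $(\Ss_1,\ldots,\Ss_t)\sim N(0,\Delta_t)$ independent of $\D$. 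The key point, highlighted in the text before the statement, is this independence: the Gaussian limit of the $s$-iterates does not depend on $\D$.

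The main technical issue is that $\DDbar$ is not diagonal, so $S_t^\top f(\DDbar) S_t$ is not literally a sum over coordinates of a fixed test function; the matrix $f(\DDbar)$ acts by the functional calculus. To get around this, I would recall from \Cref{prop:VAMPequiv} and \eqref{eq:xsy} that $s^t = Ox^t$ and $y^t = O^\top\Lambda s^t$, so the iterates $s^t$ live on the ``$O$-side'', where $\DDbar = O^\top D^\top D\, O$ and hence $f(\DDbar) = O^\top f(D^\top D)\, O$ with $f(D^\top D)$ genuinely diagonal. Writing $x^t = O^\top s^t$, we have
\[
n^{-1} S_t^\top f(\DDbar) S_t = n^{-1} X_t^\top f(D^\top D) X_t,
\]
and now $f(D^\top D)$ is diagonal with entries $f(\operatorname{diag}(D^\top D)_i) \to f(\D^2)$ in the sense of \eqref{eq:Assump2D}, while the rows of $(D^\top\mathbf{1}_{m\times 1}, X_t)$ converge jointly (by \Cref{thm:ampSE}, since $x^1=F(p^0,\st)$ and the $x$-iterates appear in $X_t$) to $(\D, \Xs_1,\ldots,\Xs_{t})$ with the $\Xs$'s\emph{ not} independent of $\D$ — but that is fine, because what matters is that $n^{-1}\sum_i f(\operatorname{diag}(D^\top D)_i)\, (x^s)_i (x^{s'})_i \to \E[f(\D^2)\Xs_s \Xs_{s'}]$. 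Wait — this would give $\E[f(\D^2)\Xs_s\Xs_{s'}]$, not $\E f(\D^2)\cdot (\Delta_t)_{ss'}$.

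So the correct route is to stay on the $s$-side and use the independence there. I would instead argue as follows: for a polynomial $f(x) = \sum_{k} c_k x^k$, expand $S_t^\top f(\DDbar) S_t = \sum_k c_k\, S_t^\top (\DDbar)^k S_t$, and for each monomial use the joint convergence of $(\operatorname{diag}(\Lambda), s^1,\ldots,s^t,\ldots)$ together with the algebraic relations in \eqref{eq:xsy} to evaluate $n^{-1} S_t^\top (\DDbar)^k S_t$. Concretely, $(\DDbar) s = O^\top (D^\top D) O s$, and the map $x \mapsto (D^\top D) x$ on the $O$-side corresponds, after applying $O^\top$, to multiplication by $\operatorname{diag}(D^\top D)$; moreover $\Lambda$ is itself a scalar function of $\DDbar$, so powers of $\DDbar$ can be rewritten (on a neighborhood of $\operatorname{supp}(\D^2)$, using \Cref{AssumpHighTemp} to bound the spectrum) in terms of $I$ and $\Lambda$ via the explicit formula for $\Lambda$ in \Cref{prop:VAMPequiv}. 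Then each quadratic form $n^{-1} S_t^\top g(\Lambda) S_t$ for polynomial $g$ is, by \Cref{thm:ampSE} applied to the joint vector $(\operatorname{diag}(\Lambda), S_t)$ whose limit has $(\Ss_1,\ldots,\Ss_t) \sim N(0,\Delta_t)$ independent of $\Ls$ (hence of $\D$), equal to $\E[g(\Ls)]\cdot(\Delta_t)_{ss'}$ entrywise $= \E[g(\Ls)]\cdot \Delta_t$, and since $g(\Ls)$ translates back to $g$ composed with the explicit rational function of $\D^2$, summing up recovers $\Delta_t \cdot \E f(\D^2)$. Here the genuinely useful fact is exactly the independence of $(\Ss_1,\ldots,\Ss_t)$ from $\D$: it lets the $\E[f(\D^2)\cdots]$ factor split off.

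Finally I would promote from polynomials to arbitrary $f$ continuous and bounded near $\operatorname{supp}(\D^2)$. By \Cref{AssumpHighTemp} (or \Cref{AssumpD}), $\operatorname{supp}(\D^2)$ is compact, and by \eqref{eq:Assump2D} all eigenvalues of $\DDbar$ lie in a fixed compact neighborhood of it for $n,m$ large; on that neighborhood, by Weierstrass approximation, choose a polynomial $q$ with $\sup|f - q| < \varepsilon$. Then
\[
\Big| n^{-1} S_t^\top f(\DDbar) S_t - n^{-1} S_t^\top q(\DDbar) S_t \Big|
\le \varepsilon\, n^{-1}\|S_t\|_F^2 = \varepsilon\, n^{-1}\operatorname{Tr}(S_t^\top S_t),
\]
which is bounded, since $n^{-1} S_t^\top S_t \to \Delta_t$ by \Cref{thm:ampSE} (so $n^{-1}\|S_t\|_F^2$ converges). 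Combining the polynomial case with this uniform bound and $|\E f(\D^2) - \E q(\D^2)| \le \varepsilon$, and letting $\varepsilon \to 0$, yields the claim. The only real obstacle is the bookkeeping in the second paragraph — correctly translating powers of the non-diagonal matrix $\DDbar$ into the $O$-side coordinates where the state-evolution limit applies, and making sure the independence structure is used on the right side; once $S_t = Ox$ and $\Lambda$-as-function-of-$\DDbar$ are exploited, the rest is routine.
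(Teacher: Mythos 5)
Your proof ultimately lands on a valid strategy, but the first paragraph contains a genuine misreading of the paper's notation, and the route you eventually take is much more complicated than the one-line argument the setup is designed for.

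The notation $\DDbar$ in this paper denotes $D^\top D$ (see the macro definitions), the \emph{diagonal} $n\times n$ matrix of squared singular values of $A=Q^\top D O$ — not $A^\top A = O^\top D^\top D\,O$. So there is no rotation to undo: $f(\DDbar)$ is just the diagonal matrix with entries $f(\diag(D^\top D)_i)$, and the $(r,s)$-entry of $n^{-1}S_t^\top f(\DDbar)S_t$ is literally the coordinate average $n^{-1}\sum_i f(\diag(D^\top D)_i)\,s^r_i s^s_i$. Since $\diag(D^\top D)$ is the elementwise square of $D^\top 1_{m\times 1}$, which is a component of $H$, \Cref{thm:ampSE} gives the joint empirical Wasserstein convergence $(\diag(D^\top D),S_t)\toW(\D^2,\Ss_1,\ldots,\Ss_t)$ with $(\Ss_1,\ldots,\Ss_t)\sim N(0,\Delta_t)$ \emph{independent} of $\D$. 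After extending $f$ to a globally continuous bounded function (permissible because the eigenvalues of $D^\top D$ eventually all lie in the given neighborhood of $\operatorname{supp}(\D^2)$ by \eqref{eq:Assump2D}), the Wasserstein convergence directly yields $\E[f(\D^2)\Ss_r\Ss_s]=\E[f(\D^2)]\,(\Delta_t)_{rs}$ as the limit, and independence is exactly what factors the expectation. This is the paper's one-sentence proof, unpacked.

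The algebraic identity you write in your second paragraph, $n^{-1}S_t^\top f(\DDbar)S_t = n^{-1}X_t^\top f(D^\top D)X_t$, is actually false under either interpretation of $\DDbar$: with $\DDbar=D^\top D$ the two sides are plainly different (since $S_t=OX_t$), and even under your (incorrect) reading $\DDbar=O^\top D^\top D\,O$ one would get $X_t^\top f(A^\top A)X_t = S_t^\top f(D^\top D)S_t$, i.e.\ the rotation goes the other way. You are right to catch that working with $X_t$ gives $\E[f(\D^2)\Xs_r\Xs_s]$, which does \emph{not} factor because $\Xs_r$ depends on $\D$; the independence holds precisely on the $S$-side, which is the whole point the text before the statement emphasizes. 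Your subsequent rescue via polynomials in $\Lambda$ (using that $\diag(\Lambda)$ is a component of $H$, and that $\Lambda$ is an invertible rational function of $D^\top D$) plus Weierstrass approximation would work and the $n^{-1}\|S_t\|_F^2$ control you invoke is correct, but it is an over-engineered workaround for a nonexistent obstacle once $\DDbar$ is read correctly. I'd suggest replacing the detour with the direct argument above.
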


Noting that each matrix $\Delta_t$ is the upper-left submatrix of
$\Delta_{t+1}$, let us denote the entries of these matrices as
$\Delta_t=(\delta_{rs})_{r,s=1}^t$. Theorem \ref{thm:ampSE} ensures that
$\delta_{tt}=\delta_*$ for all $t \geq 1$.
The following result then guarantees that for sufficiently small $\e$ in
Assumption \ref{AssumpHighTemp}, the state evolution of this
stationary VAMP algorithm is convergent in the sense
\begin{equation*}
	\begin{aligned}
		\lim _{\min (s, t) \rightarrow \infty}\left(\lim_{n,m \rightarrow \infty} \frac{1}{n}\left\|x^{t}-x^{s}\right\|^{2}\right) &=\lim _{\min (s, t) \rightarrow \infty}\left(\delta_{s s}+\delta_{t t}-2 \delta_{s t}\right)=0 \\
		\lim _{\min (s, t) \rightarrow \infty}\left(\lim_{n,m \rightarrow \infty} \frac{1}{n}\left\|y^{t}-y^{s}\right\|^{2}\right) &=\lim _{\min (s, t) \rightarrow \infty} \kappa_{*}\left(\delta_{s s}+\delta_{t t}-2 \delta_{s t}\right)=0.
	\end{aligned}
\end{equation*}

\begin{Proposition}\label{prop:convsmallbeta}
Under Assumptions \ref{AssumpD}--\ref{AssumpHighTemp},
there exists some constant
$\e_0=\e_0(\pib)>0$ such that if $\e<\e_0$, then
$\lim_{\min (s, t) \rightarrow \infty} \delta_{s t}=\delta_*$.
\end{Proposition}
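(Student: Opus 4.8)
The plan is to isolate a scalar recursion for the off-diagonal entries $\delta_{st}$ and show it is a strict contraction toward $\delta_*$ once $\e$ is small. By \Cref{thm:ampSE}, $\delta_{tt}=\delta_*$ for all $t\ge1$, the Gaussian block $(\Ys_1,\dots,\Ys_t)\sim N(0,\kappa_*\Delta_t)$ is independent of $\Hs$, in particular of $(\Xstar,\Es)$, and $\Xs_{u+1}=F(\Ys_u+\Es,\Xstar)$ with $F$ as in \eqref{eq:Fdef}. Since $\delta_{ss}=\delta_{tt}=\delta_*$, the pair $(\Ys_s,\Ys_t)$ is a centered bivariate Gaussian whose covariance depends on $(s,t)$ only through $\delta_{st}$, so $\delta_{s+1,t+1}=\E[\Xs_{s+1}\Xs_{t+1}]=\Psi(\delta_{st})$ where
\[
\Psi(\delta):=\E\big[F(W,\Xstar)\,F(W',\Xstar)\big],\qquad (W,W')\sim N\!\left(0,\begin{pmatrix}\gamma_*^{-1}&\kappa_*\delta+b_*\\ \kappa_*\delta+b_*&\gamma_*^{-1}\end{pmatrix}\right)\ \text{independent of}\ \Xstar,
\]
using $\delta_*\kappa_*+b_*=\gamma_*^{-1}$ from \eqref{eq:scalarparams}. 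The map $\Psi$ is defined on $[-\delta_*,\delta_*]$ and sends it into itself, since $F(W,\Xstar)$ and $F(W',\Xstar)$ each have variance $\delta_*=\E[\Xs_u^2]$ so Cauchy--Schwarz applies, and $\Psi(\delta_*)=\delta_*$ because at $\delta=\delta_*$ the covariance of $(\Ys_s,\Ys_t)$ is rank one, forcing $W=W'$ a.s.

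The core estimate is $\sup_{|\delta|\le\delta_*}|\Psi'(\delta)|\le C'(\pib)\,\e^2$. Differentiating the bivariate Gaussian expectation in its off-diagonal covariance entry $c=\kappa_*\delta+b_*$, the identity $\tfrac{d}{dc}\E[\phi(W,W')]=\E[\partial_1\partial_2\phi(W,W')]$ gives $\Psi'(\delta)=\kappa_*\,\E[\partial_1F(W,\Xstar)\,\partial_1F(W',\Xstar)]$, where $\partial_1$ denotes the derivative in the first argument; by Cauchy--Schwarz (the marginals of $W$ and $W'$ agree), $\sup_{|\delta|\le\delta_*}|\Psi'(\delta)|\le\kappa_*\,\E[(\partial_1F(W,\Xstar))^2]$ with $W\sim N(0,\gamma_*^{-1})$. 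Write $\nu:=\eta_*-\gamma_*$. Using the score identity $\partial_1 f(y,\gamma_*)=\gamma_*\,\V[\Xstar\mid\Ys=y]$ and recognizing the posterior of $\Xstar$ given $\Ys=y$ as the measure $\mu$ of \Cref{AssumpPrior} with $k=1$, $\Gamma=-\gamma_*/2\prec(4\pib)^{-1}I$, and $z=\gamma_* y$, the Poincar\'e-type bound \eqref{eq:poincare} yields $0\le\partial_1f(y,\gamma_*)\le C\gamma_*$ for $C=C(\pib)$; with \eqref{eq:Fdef} this gives the deterministic bound $\E[(\partial_1F(W,\Xstar))^2]\le 2\gamma_*^2(\eta_*^2C^2+1)/\nu^2$. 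On the other hand, \eqref{eq:scalarparams} together with $\eta_*^{-1}=G(\nu)$ (the fixed-point equation \eqref{eq:fix}) gives $\kappa_*=\V[\nu/(\nu+\D^2)]$, so \Cref{AssumpHighTemp} forces $\kappa_*\le\nu^2\e^2/(\nu+d_*-\e)^4$; moreover $\gamma_*=\eta_*-\nu\le d_*+\e$ and $\eta_*=1/G(\nu)\le\nu+d_*+\e$, while \Cref{prop:uniquefix} gives $\nu\ge\Vpi^{-1}\ge\pib^{-1}$. Multiplying the two bounds, the factor $\gamma_*^2\eta_*^2$ (at most a constant times $(\nu+d_*)^4$) is absorbed by the denominator $(\nu+d_*-\e)^4$ as soon as $\e\le\nu/4$, and using $\nu+d_*\ge\nu\ge\pib^{-1}$ for the remaining term one obtains $\sup_{|\delta|\le\delta_*}|\Psi'(\delta)|\le C'(\pib)\,\e^2$.

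Finally, choose $\e_0=\e_0(\pib)>0$ small enough that $\e_0\le\tfrac14\pib^{-1}$ and $C'(\pib)\e_0^2\le\tfrac12$. For $\e<\e_0$, $\Psi$ is $\tfrac12$-Lipschitz on $[-\delta_*,\delta_*]$ with fixed point $\delta_*$; unrolling $\delta_{s,t}=\Psi(\delta_{s-1,t-1})$ for $s\le t$ down to the diagonal-shifted base case gives $\delta_{s,t}=\Psi^{\,s-1}(\delta_{1,\,t-s+1})$, whence $|\delta_{s,t}-\delta_*|\le2^{-(s-1)}|\delta_{1,t-s+1}-\delta_*|\le2^{2-s}\delta_*$, and by symmetry $|\delta_{s,t}-\delta_*|\le2^{2-t}\delta_*$ when $t\le s$. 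Since $\delta_*=\nu^{-1}\le\pib$ is bounded, $|\delta_{s,t}-\delta_*|\le2^{2-\min(s,t)}\pib\to0$, which is the claim. The main obstacle is the estimate of the second paragraph: one must recognize that $\kappa_*=\V[\nu/(\nu+\D^2)]$ is $O(\e^2)$ with a constant depending only on $\pib$, and that this decay exactly offsets the a priori $d_*$-dependent growth of $\E[(\partial_1F)^2]$, so that $\e_0$ can be chosen independent of $d_*$; the uniform control of $\partial_1 f$ via the Poincar\'e-type inequality of \Cref{AssumpPrior} is what makes this work for priors with unbounded support.
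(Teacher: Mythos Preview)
Your overall strategy---reducing to a scalar recursion $\delta_{s+1,t+1}=\Psi(\delta_{st})$ with $\Psi(\delta_*)=\delta_*$ and showing it is a contraction---is the same as the paper's. There is one computational slip: the identity $\kappa_*=\V[\nu/(\nu+\D^2)]$ is off by a factor $(\eta_*/\gamma_*)^2$. Using $\eta_*^{-1}=G(\nu)=\E[1/(\nu+\D^2)]$, the definition \eqref{eq:scalarparams} gives
\[
\kappa_*=\frac{\nu^2}{\gamma_*^2}\Big(\eta_*^2\,\E\Big[\tfrac{1}{(\nu+\D^2)^2}\Big]-1\Big)
=\frac{\eta_*^2}{\gamma_*^2}\,\V\!\Big[\tfrac{\nu}{\nu+\D^2}\Big].
\]
This does not break your argument: after multiplying by your bound $\E[(\partial_1F)^2]\le 2\gamma_*^2(\eta_*^2C^2+1)/\nu^2$, the $\gamma_*^2$ cancels and the numerator carries $\eta_*^4$ instead of $\gamma_*^2\eta_*^2$; since $\eta_*\le\nu+d_*+\e$, this is still absorbed by $(\nu+d_*-\e)^4$ exactly as you describe, and the final bound $\sup_\delta|\Psi'(\delta)|\le C'(\pib)\e^2$ survives.

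The one substantive difference from the paper is in how the contraction is obtained. You bound $\sup_\delta|\Psi'(\delta)|\le\kappa_*\,\E[(\partial_1F)^2]=O(\e^2)$ directly via Cauchy--Schwarz and the high-temperature smallness of $\kappa_*$. The paper instead first shows that the recursion map is nonnegative, increasing, and convex on $[0,\delta_*]$, and then proves $\Psi'(\delta_*)\le\gamma_*^2/\eta_*^2<1$ by identifying $\kappa_*\,\E[(\partial_1F)^2]$ with a Schur complement of the Hessian of the replica-symmetric potential $i_{\mathrm{RS}}$ at the fixed point. That route does not use the smallness of $\e$ at all for this step: it yields the more general \Cref{lemma:ampcondition}, namely that the stationary VAMP state evolution converges at \emph{any} fixed point which is a local minimizer of $\gamma\mapsto\sup_{\eta^{-1}}i_{\mathrm{RS}}(\eta^{-1},\gamma)$; the high-temperature assumption is then only used to verify this minimizer condition. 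Your argument is more elementary and self-contained for the high-temperature regime, and avoids the convexity/monotonicity analysis entirely, at the cost of not producing this more general statement.
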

\noindent We verify in Appendix \ref{appendix:VAMPconvergence} that,
outside the high-temperature regime of Assumption \ref{AssumpHighTemp}, this
statement of Proposition \ref{prop:convsmallbeta} continues to hold for the
stationary initialization of VAMP defined by any fixed point
$(\eta_*^{-1},\gamma_*)$ to (\ref{eq:fix}) that is a local minimizer of the
replica-symmetric potential (\ref{eq:iRS}).

\section{Analysis of the restricted partition function}

For any subset $\cU \subseteq [0,\infty)$, define a restricted partition
function for the model (\ref{eq:linearmodel}) as 
\begin{equation}\label{eq:defpartitionfunc}
\begin{aligned}
\Z(\cU) &= \int \mathbb{I}\bigg(\frac{1}{n}\|\sigma-\st\|^2 \in \cU\bigg)
\cdot \exp \bigg(-\frac{\left\|y- A \s \right\|^{2}}{2}\bigg)
\prod_{i=1}^{n} d\pi\left(\s_{i}\right).
	\end{aligned}
\end{equation}
We will ultimately analyze the unrestricted partition function
$\Z=\Z([0,\infty))$, although it is technically convenient to first analyze
$\Z(\cU)$ for bounded subsets $\cU$.

For any $t \geq 1$,
define the sigma-field (in the probability space of $O$, $\st$, and
$\epsilon$)
\begin{equation}\label{eq:defGt}
	\mathcal{G}_{t}=\mathcal{G}\left(H, x^{1}, s^{1}, y^{1}, \ldots, x^{t}, s^{t}, y^{t}\right)
\end{equation}
where $H$ consists of the quantities in (\ref{eq:Hdef}),
and $x^t,s^t,y^t$ are
the VAMP iterates of (\ref{eq:xsy}). In this section,
we provide asymptotic variational characterizations of the conditional first
and second moments $\E[\Z(\cU) \mid \cG_t]$ and $\E[\Z(\cU)^2 \mid \cG_t]$.
Together with a concentration inequality for $\log \Z(\cU)$
and a second-moment argument, these establish an unconditional
first-order limit for $\log \Z(\cU)$. Proofs of
these results are provided in Appendices \ref{appendix:variational},
\ref{appendix:analysis}, and \ref{appendix:concentration}.

For $a,b \in \R$ and $M>0$, define
\begin{equation}\label{eq:cpi}
c_{\pi}(a,b)=\int \exp \left(a x^{2}+b x\right) d\pi(x) \in (0,\infty],
\qquad c_\pi^M(a,b)=\int_{-M}^M \exp \left(a x^{2}+b x\right) d\pi(x) \in
(0,\infty).
\end{equation}
When $\pi$ has unbounded support, $c_\pi(a,b)$ may be infinite for large
positive values of $a$, and we discuss its behavior in Lemma \ref{lem:abar}.
Under Assumption \ref{AssumpHighTemp},
recall the unique fixed point $(\eta_*^{-1},\gamma_*)$ of (\ref{eq:fix})
and the prior variance $\Vpi$ from Assumption \ref{AssumpPrior},
and define the replica-symmetric free energy
	\begin{equation}\label{eq:RSdef}
		\Psi_{\mathrm{RS}}=-\frac{1}{2}-\frac{\gamma_*\rho_*}{2}
+\frac{\gamma_{*}}{2\eta_{*}}+\frac{1}{2}
\int_{0}^{\eta_{*}^{-1}} {R}(z) d z+\mathbb{E} \log
c_{\pi}\left(-\frac{\gamma_{*}}{2}, \gamma_{*} \Xstar+\sqrt{\gamma_{*}}
\Zs\right)
	\end{equation}
where the expectation is over independent variables
$\Xstar \sim \pi$ and $\Zs \sim N(0,1)$. In addition to $d_*,\Vpi$
and the variance parameters
$\delta_*,\kappa_*,\sigma_*^2,b_*$ from (\ref{eq:scalarparams}),
we introduce the auxiliary scalar parameters
\[a_*=\left(\eta_{*}-\gamma_{*}\right)\left(1-\frac{d_*}{\gamma_{*}}\right), \qquad
c_*=-\left(\eta_{*}-\gamma_{*}\right)
\kappa_{*}+\left(\frac{\eta_{*}-\gamma_{*}}{\gamma_{*}}\right)^{2}\left(d_*-\gamma_{*}\right),
\qquad e_*=1+\kappa_*\]
\begin{equation}\label{eq:auxparams}
\alpha_{*}^{A}=\frac{\eta_{*}-\gamma_{*}}{\gamma_{*}}
\frac{1}{\sqrt{\kappa_{*}}},\qquad
\alpha_{*}^{B}=\left(\alpha_{*}^{A}\right)^{2}\left(\gamma_{*}-d_*\right),
\qquad \pi_{*}=\E \Xstar F\left(\frac{\Zs}{\sqrt{\gamma_*}},\,\Xstar\right)
\end{equation}

\subsection{Conditional first moment}

Fix an iteration $t \geq 1$ for VAMP, and define tuples of primal and
dual variables
\[\Pfrak=(u,r,v,w), \qquad \Qfrak=(\tmr,U,R,V,W,\chi^A,\chi^B,\chi^C)\]
where $u>0$ and $\tmr>-d_-$ and $r,U,R,\chi^A,\chi^B,\chi^C \in \R$ and
$v,w,V,W\in \R^t$. Define
\begin{equation}\label{eq:defCst}
\Ac(\Pfrak)=u-r^2-\|v\|^2-\|w\|^2, \qquad
\Cst=\big\{\Pfrak:\Ac(\Pfrak)>0 \big\}.
\end{equation}
Define also the functions
\begin{equation}\label{eq:defFH}
\begin{aligned}
&\Hc(\tmr,\Ac)=\tmr \Ac-\E[\log (\tmr+\D^2)]-(1+\log \Ac), \qquad
\Bc(v,w)=\|v-\alpha_*^A w\|^2\\
&\lambda(x)=\frac{\eta_*-\gamma_*}{\gamma_*}\left(
\frac{\eta_{*}}{x+\eta_{*}-\gamma_{*}}-1\right),\qquad
\theta(x)=x-\frac{\alpha_{*}^{B}
\eta_{*}}{x+\eta_{*}-\gamma_{*}}+\alpha_{*}^{B}-d_*.
\end{aligned}
\end{equation}
Let $\Delta_t$ and the random variables
$(\mathsf{E}, \Xstar,\Xs_1,...,\Xs_t, \Ys_1,...,\Ys_t)$ be as
described in \Cref{prop:AMPparamconverge} and \Cref{thm:ampSE}, and define
{\small
\begin{equation}\label{eq:defPhi1t}
\begin{aligned}
\Phi_{1,t}(\Pfrak,\Qfrak)
&=-\frac{1}{2}+\E \log c_{\pi}\left(U, -2U\,\Xstar+\frac{R\,\mathsf{E}}{\sqrt{b_*}}+V^{\top}
\Delta_t^{-1/2}\left(\Xs_{1}, \ldots, \Xs_{t}\right)+\frac{ W^{\top}
\Delta_t^{-1/2}\left(\Ys_{1}, \ldots, \Ys_{t}\right)}{\sqrt{\kappa_{*}}}\right)\\
&\quad-(u-\rho_*)U-r\,R-(v+\pi_*\Delta_t^{-1/2}1_{t \times 1})^\top V-w^\top W\\
&\quad-\frac{1}{2}
\qty(\frac{e_*r^2}{b_*}-\frac{2r}{\sqrt{b_*}}+
\Tr \left[\begin{pmatrix} d_* & a_* \\ a_* &
c_* \end{pmatrix}\left(v,\frac{w}{\sqrt{\kappa_*}}\right)^\top
\left(v,\frac{w}{\sqrt{\kappa_*}}\right) \right])
+\frac{1}{2} \Hc(\tmr,\Ac(\Pfrak))\\
&\quad+\frac{1}{2}\mathbb{E}\left[\frac{\mathsf{E}_{b}^{2}}{\tmr+\D^2}\left(\left[\frac{\gamma_{*}}{\eta_{*}}-\frac{r}{\sqrt{b_*}}\right] \D^2-\Fpe\right)^{2}\right]
+\frac{1}{2}\mathbb{E}\left[\frac{1}{\tmr+\D^2}\left(\theta(\D^2)-\lambda(\D^2)
\Fpnu -\Fpom\right)^{2}\right]\Bc(v,w)
\end{aligned}
\end{equation}}
Let $\Phi_{1,t}^M(\Pfrak,\Qfrak)$ have the same definition with $c_\pi$ replaced
by $c_\pi^M$. Finally, define
\begin{equation}\label{eq:defPsi1t}
\Psi_{1,t}(\Pfrak)=\inf_{\Qfrak:\tmr>-d_-} \Phi_{1,t}(\Pfrak,\Qfrak),
\qquad \Psi_{1,t}^M(\Pfrak)=\inf_{\Qfrak:\tmr>-d_-} \Phi_{1,t}^M(\Pfrak,\Qfrak)
\end{equation}
where these may take extended real values in $[-\infty,\infty)$.

\begin{Lemma}\label{lemma:firstmt}
Fix any $K>0$. Under Assumptions \ref{AssumpD}--\ref{AssumpHighTemp},
there exists a constant $\e_0=\e_0(\pib,K)>0$ such that if $\e<\e_0$, then
for any fixed $t\ge 1$ and non-empty open subset $\cU \subseteq (0,K)$,
almost surely
	\begin{equation*}
\liminf_{n,m \rightarrow \infty} \frac{1}{n} \log
\mathbb{E}\left[\Z(\cU) \mid \mathcal{G}_{t}\right] \geq
\sup_{\Pfrak \in \Cst:\,u \in \cU} \sup_{M>0} \Psi_{1,t}^M(\Pfrak),
	\qquad \limsup_{n,m \rightarrow \infty} \frac{1}{n} \log
\mathbb{E}\left[\Z(\overline{\cU}) \mid \mathcal{G}_{t}\right]\leq
\sup_{\Pfrak \in \Cst:\,u \in \cU} \Psi_{1,t}(\Pfrak)
	\end{equation*}
where $\overline{\cU}$ is the closure of $\cU$.
\end{Lemma}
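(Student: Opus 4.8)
The plan is to compute the conditional expectation $\E[\Z(\cU)\mid \cG_t]$ by explicitly integrating out the Haar-distributed orthogonal matrix $O$, which is the only randomness remaining once we condition on $\cG_t$. First I would expand the Hamiltonian $\tfrac12\|y-A\sigma\|^2$ in the integrand of \eqref{eq:defpartitionfunc} in terms of the VAMP variables: writing $\sigma=\st+(\sigma-\st)$ and using $A=Q^\top D O$, $\xi=Q\epsilon$, and the identities defining $e,e_b,\Lambda$ from \eqref{eq:Lambdadef}, one rewrites the exponent so that $O$ enters only through the vector $O(\sigma-\st)$ and its inner products with the iterates $x^1,\dots,x^t$, with $D^\top 1$, and with $\st$. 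The key structural point (exactly as in \cite{fan2021replica,fan2022tap}) is that on the event defining $\cG_t$, the law of $O$ conditioned on $\cG_t$ is that of a Haar matrix conditioned to map the known vectors $s^1,\dots,s^t$ (equivalently $x^1,\dots,x^t$) to their known images $O x^j=s^j$; so integrating over $O$ reduces, via the orthogonal group's Weingarten/HCIZ-type spherical integral asymptotics, to a variational formula over the overlaps of $\sigma-\st$ with $\st$, with $e$, and with the span of $X_t$ and $Y_t$. The parameters $u=\tfrac1n\|\sigma-\st\|^2$, $r=\tfrac1n e^\top(\sigma-\st)$, $v=\tfrac1n X_t^\top(\sigma-\st)$ (suitably normalized by $\Delta_t^{-1/2}$), $w=\tfrac1n Y_t^\top(\sigma-\st)$ are precisely the primal variables in $\Pfrak$, and the dual variables in $\Qfrak$ arise as Lagrange multipliers enforcing these overlap constraints, with $\zeta$ the multiplier dual to the spherical integral over the $O$-average (hence the appearance of the Cauchy transform of $-\D^2$ through the term $\E[\log(\zeta+\D^2)]$ in $\Hc$).

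The execution then splits into the two bounds. For the upper bound on $\limsup \tfrac1n\log\E[\Z(\overline\cU)\mid\cG_t]$: after the change of variables I would apply a Laplace/saddle-point argument. Grid the constraint values $(u,r,v,w)$ into small boxes; on each box the conditional first moment is at most $\exp(n\,\Phi_{1,t}(\Pfrak,\Qfrak)+o(n))$ for the optimal choice of dual $\Qfrak$ (Gaussian integration over the orthogonal complement directions and the $O$-spherical integral both producing the stated quadratic-plus-log-$c_\pi$ form); taking the infimum over $\Qfrak$ with $\zeta>-d_-$ gives $\Psi_{1,t}(\Pfrak)$, and a union bound over the $\mathrm{poly}(n)$ boxes costs nothing on the exponential scale. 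Here one must use Proposition \ref{prop:AMPparamconverge}, Theorem \ref{thm:ampSE} and Corollary \ref{cor:ampSEcor2} to identify the limiting empirical covariances (the matrices $\begin{pmatrix}d_*&a_*\\ a_*&c_*\end{pmatrix}$, $b_*$, $\kappa_*\Delta_t$) that appear as coefficients, and Lemma \ref{lem:cauchy} to justify the spherical-integral asymptotics on the high-temperature spectral window $[d_*-\e,d_*+\e]$. For the lower bound on $\liminf\tfrac1n\log\E[\Z(\cU)\mid\cG_t]$: restrict the $\sigma$-integral to a small neighborhood of a near-optimal overlap profile and to $|\sigma_i|\le M$, which replaces $c_\pi$ by $c_\pi^M$ and makes all integrals finite and continuous in the parameters; a matching Laplace lower bound (using that $\cU$ is open so the constraint region has positive volume, together with the Gaussian concentration of the relevant overlaps under the tilted measure) yields $\sup_{M>0}\Psi_{1,t}^M(\Pfrak)$. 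The truncation at level $M$ is what lets us handle priors $\pi$ of unbounded (sub-Gaussian) support without $c_\pi(a,b)=\infty$ obstructing the lower bound.

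The main obstacle I expect is controlling the $O$-average precisely enough — i.e.\ the asymptotics of the constrained spherical integral $\int_{O:\,Ox^j=s^j}\exp(n\langle O(\sigma-\st),\,\cdot\rangle_{\text{quadratic in }D})\,dO$ — uniformly over the relevant range of overlap parameters and of the centered spectrum of $D^\top D$. This is where Assumption \ref{AssumpHighTemp} is essential: the quadratic form in the exponent (through $\lambda(\D^2)$, $\theta(\D^2)$, $\Lambda$) must have operator norm small enough that the Gaussian/spherical integral is in its "replica-symmetric", single-well regime, so that the HCIZ-type rate function is given by the clean Legendre-dual expression with the single multiplier $\zeta$ rather than a more complicated full-RSB formula; quantitatively this forces $\e<\e_0(\pib,K)$. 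A secondary technical nuisance is that $\Z(\overline\cU)$ (closed set, for the upper bound) versus $\Z(\cU)$ (open set, for the lower bound) must be handled so that the two variational problems $\sup_{u\in\cU}\Psi_{1,t}$ and $\sup_{u\in\cU}\sup_M\Psi_{1,t}^M$ match in the limit $K\to\infty$, $M\to\infty$; this is where one invokes Lemma \ref{lem:abar} on the behavior of $c_\pi$ near the boundary of its domain of finiteness, plus continuity of $\Psi_{1,t}$ in $\Pfrak$ on $\Cst$.
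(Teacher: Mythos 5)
Your proposal matches the paper's proof in all essential respects: conditioning on $\cG_t$ fixes $O(e,X,Y)=(e_b,S,\Lambda S)$ so that the conditional law of $O$ is a Haar matrix restricted to these constraints; the Haar average is then approximated uniformly by a low-dimensional variational formula in the overlap parameters $(u,r,v,w)$ via the constrained spherical-integral asymptotics (the paper cites \cite[Proposition 2.7]{fan2021replica}, whose domain-of-validity condition $G(-d_-+\df)-\df>K$ is precisely where Assumption~\ref{AssumpHighTemp} enters); and the remaining integral over $\s\sim\pi^{\otimes n}$ is handled by a large-deviations argument in the overlap vector $\Pfrak(\x)$, with the truncation at level $M$ used to make the cumulant generating function finite for the lower bound. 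The only cosmetic difference is that the paper formalizes your ``grid into boxes and union bound'' Laplace step via the G\"artner--Ellis theorem together with Varadhan's lemma (Lemma~\ref{lemma:varadhan}), which is what lets it cleanly handle the possibly infinite moment generating function and the distinction between open and closed sets; your final remark about $\sup_M\Psi^M_{1,t}$ and $\Psi_{1,t}$ ``matching as $K,M\to\infty$'' is not part of this lemma but belongs to Lemma~\ref{lemma:analysisf}.
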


\begin{Lemma}\label{lemma:analysisf}
Fix any $K>0$. Under Assumptions \ref{AssumpD}--\ref{AssumpHighTemp},
there exists a constant $\e_0=\e_0(\pib,K)>0$ such that
if $\e<\e_0$ and $\cU \subseteq (0,K)$ is any fixed open subset containing
$2\eta_*^{-1}$, then
\begin{equation}\label{eq:analysisf1}
\liminf_{t \rightarrow \infty} \sup_{\Pfrak \in \Cst:\,u \in \cU} \sup_{M>0}
\Psi_{1,t}^M(\Pfrak) \geq \Psi_{\mathrm{RS}}, \qquad
\limsup_{t \rightarrow \infty} \sup_{\Pfrak \in \Cst:\,u \in \cU} 
\Psi_{1,t}(\Pfrak) \leq \Psi_{\mathrm{RS}}.
\end{equation}
Furthermore, there exists a universal
constant $c_0>0$ such that for any $\varsigma>0$,
\begin{equation}\label{eq:analysisf2}
\limsup_{t \rightarrow \infty}
\mathop{\sup_{\Pfrak \in \Cst:\,u \in \cU}}_{|u-2\eta_*^{-1}|>\varsigma}
\Psi_{1,t}(\Pfrak)<\Psi_{\mathrm{RS}}-c_0\e^{1/2}\varsigma^2.
\end{equation}
\end{Lemma}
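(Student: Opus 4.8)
The plan is to pass to the $t\to\infty$ limit of the variational formulas $\Psi_{1,t}$, $\Psi_{1,t}^M$, reduce to an explicit finite-dimensional saddle-point problem, and evaluate it; the high-temperature hypothesis enters through Proposition~\ref{prop:convsmallbeta}, which forces the $\R^t$-valued part of the optimization to collapse. \textbf{Step 1 (collapse of the $\R^t$ structure).} By Proposition~\ref{prop:convsmallbeta}, for $\e<\e_0(\pib)$ one has $\delta_{st}\to\delta_{tt}=\delta_*$ as $\min(s,t)\to\infty$, i.e.\ $\Delta_t\to\delta_*\,1_{t\times t}$ in the sense $\delta_{ss}+\delta_{tt}-2\delta_{st}\to 0$; equivalently $\Xs_1,\dots,\Xs_t$ and $\Ys_1,\dots,\Ys_t$ each converge in $L^2$ to common limits $\Xs_\infty,\Ys_\infty$ with $\E\Xs_\infty^2=\delta_*$ and $\E\Ys_\infty^2=\kappa_*\delta_*$. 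Hence the whitened, identity-covariance vectors $g^X_t:=\Delta_t^{-1/2}(\Xs_1,\dots,\Xs_t)$ and $g^Y_t:=\kappa_*^{-1/2}\Delta_t^{-1/2}(\Ys_1,\dots,\Ys_t)$ converge to scalar multiples of the single direction $n_t:=\Delta_t^{-1/2}1_{t\times 1}$ (with $\|n_t\|^2=1^\top\Delta_t^{-1}1\to\delta_*^{-1}$), so in \eqref{eq:defPhi1t} the argument of $c_\pi$ and every linear-in-dual term depend, up to $o_t(1)$ uniformly on bounded sets, only on the scalar projections of $V,W$ (and of the primal $v,w$) onto $n_t$. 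Since $\Ac(\Pfrak)>0$ keeps $\|v\|^2,\|w\|^2<u<K$ bounded, and the components of $V,W$ orthogonal to $n_t$ enter only the $c_\pi$-term and a linear pairing with $v^\perp,w^\perp$, optimizing over them contributes an explicit correction that vanishes as $t\to\infty$. Making this uniform over $\Pfrak\in\Cst$ with $u\in\cU\subseteq(0,K)$ requires a priori control of the near-optimal $\Qfrak$: using coercivity of $\Phi_{1,t}(\Pfrak,\cdot)$ in $\Qfrak$ together with Lemmas~\ref{lem:cauchy} and \ref{lem:abar}, the near-optimal $\tmr$ stays bounded away from $-d_-$ and the near-optimal $U$ below $(4\pib)^{-1}$, which keeps $c_\pi,c_\pi^M$ finite and lets the $M\to\infty$ truncation pass through the inner $\inf$ by monotone convergence, using the sub-Gaussian tail of Assumption~\ref{AssumpPrior}. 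The conclusion is $\sup_{\Pfrak\in\Cst:u\in\cU}\Psi_{1,t}(\Pfrak)\to\Psi_\infty$ and $\sup_M\Psi_{1,t}^M\to\Psi_\infty$ for an explicit finite-dimensional min--max functional $\Psi_\infty$ with primal $(u,r,\bar v,\bar w)$ and dual $(\tmr,U,R,\bar V,\bar W,\chi^A,\chi^B,\chi^C)$.

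\textbf{Step 2 (identification $\Psi_\infty=\Psi_{\mathrm{RS}}$).} I would write out the stationarity equations of the limiting $\Phi_\infty$ and verify they are solved by the replica-symmetric ansatz $\tmr=\eta_*-\gamma_*$, $U=-\gamma_*/2$, with the other variables tuned so the effective field inside $c_\pi$ equals in law $\gamma_*\Xstar+\sqrt{\gamma_*}\Zs$, whence the $c_\pi$-term becomes exactly $\E\log c_\pi(-\gamma_*/2,\gamma_*\Xstar+\sqrt{\gamma_*}\Zs)$. Verifying this is the saddle and that its value is \eqref{eq:RSdef} uses the fixed-point identities $\eta_*^{-1}=\mmse(\gamma_*)$, $\gamma_*=-R(\eta_*^{-1})$ (equivalently $\eta_*^{-1}=G(\eta_*-\gamma_*)$) from \eqref{eq:fix}, the parameter identities $\kappa_*=\E\Ls^2$, $b_*=\E\Es_b^2$, $\sigma_*^2=\delta_*\kappa_*$, $\pi_*=\E\Xstar F(\Zs/\sqrt{\gamma_*},\Xstar)$ from \eqref{eq:scalarparams}--\eqref{eq:auxparams} and Proposition~\ref{prop:AMPparamconverge}, and the scalar-channel Nishimori relations $\E[\V[\Xstar\mid\Ys]]=\mmse(\gamma_*)$, $\eta_*\mmse(\gamma_*)=1$; the term $\tfrac12\inf_{\tmr}\Hc(\tmr,\Ac)$ evaluates to $\tfrac12\int_0^{\Ac}R$ because $\tfrac{d}{d\Ac}[\tfrac12\inf_{\tmr}\Hc]=\tfrac12R(\Ac)$ by the definition of the $R$-transform, matching the corresponding term of $\Psi_{\mathrm{RS}}$. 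One also checks the optimal primal has $u=2\eta_*^{-1}$, which reflects the Nishimori/posterior-variance identity $\E_{\mathrm{post}}\tfrac1n\|\sigma-\st\|^2\to\mmse+\tfrac1n\E[\operatorname{tr}\operatorname{Cov}(\sigma\mid y)]\to\eta_*^{-1}+\eta_*^{-1}$. Together with Step~1 this gives \eqref{eq:analysisf1}.

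\textbf{Step 3 (strict gap).} For the partially optimized profile $u\mapsto\Psi_\infty(u):=\sup_{r,\bar v,\bar w}\inf_{\Qfrak}\Phi_\infty$, Step~2 gives a critical point at $u=2\eta_*^{-1}$; I would prove strict concavity there with a quantitative lower bound $-\Psi_\infty''\gtrsim\e^{1/2}$ on a neighborhood, plus a global coercivity/monotonicity bound on $(0,K)$ ruling out a competing distant maximizer. This yields $\Psi_\infty(u)<\Psi_{\mathrm{RS}}-c_0\e^{1/2}\varsigma^2$ for $|u-2\eta_*^{-1}|>\varsigma$, and passing back to finite $t$ gives \eqref{eq:analysisf2}. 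The curvature is supplied by the $\Hc$-term (through $\tfrac12R'(\Ac)$ and the strict convexity of $-\log\Ac$) and the response of the Gaussian-averaged $\log c_\pi$; the $\e$-dependence enters because as $\e\to0$ one has $\kappa_*\to0$ and $\alpha_*^A=\tfrac{\eta_*-\gamma_*}{\gamma_*}\kappa_*^{-1/2}\to\infty$, so several coefficients in \eqref{eq:defPhi1t} degenerate, and the exponent $\tfrac12$ is a conservative bound, sufficient for the later second-moment argument rather than a sharp rate.

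\textbf{Main obstacle.} The hard part will be Step~3, carried out uniformly over the dual optimization and the primal $r,\bar v,\bar w$: showing the limiting $u$-profile has curvature bounded below by $c_0\e^{1/2}$ — not merely that it is strictly concave — and that no distant $u\in(0,K)$ competes. This leans on the high-temperature structure: by Lemma~\ref{lem:cauchy}, Assumption~\ref{AssumpHighTemp} makes $G,R$ nearly affine near $\operatorname{supp}(\D^2)$, so the $\Hc$-term and the $\E[\tfrac1{\tmr+\D^2}(\cdots)^2]$ terms are $O(\e)$-close to their $\D^2\equiv d_*$ values — errors that must be dominated against the $\Omega(\e^{1/2})$ curvature — while Proposition~\ref{prop:uniquefix}'s bounds $\eta_*^{-1}\le\Vpi\le\pib$ and $\eta_*-\gamma_*\ge\Vpi^{-1}$ confine all parameters to a compact range depending only on $\pib$, which is what makes $\e_0=\e_0(\pib,K)$ possible. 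A secondary nuisance, flagged in Step~1, is making the $t\to\infty$ collapse and the $M\to\infty$ truncation uniform over $\Pfrak$, which is why the a priori bounds on the near-optimal dual variables must be established first.
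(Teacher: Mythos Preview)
Your approach differs substantially from the paper's and contains a genuine gap in Step~1.

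\textbf{The gap in Step 1.} Your ``collapse'' claim is incorrect as stated. You write that the whitened vectors $g^X_t=\Delta_t^{-1/2}(\Xs_1,\ldots,\Xs_t)$ converge to scalar multiples of the direction $n_t=\Delta_t^{-1/2}1_{t\times 1}$. But $g^X_t$ has identity covariance by construction, so it cannot concentrate in any single direction; the near-degeneracy of $\Delta_t$ (one large eigenvalue along $1_t$, $t-1$ small ones) means $\Delta_t^{-1/2}$ \emph{amplifies} the small fluctuations of $(\Xs_1,\ldots,\Xs_t)$ in the orthogonal directions to unit scale. Consequently the components of $V,W$ orthogonal to $n_t$ do not decouple from the $c_\pi$ term, and your claim that ``optimizing over them contributes an explicit correction that vanishes as $t\to\infty$'' is unjustified. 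The uniform a priori control you invoke does not repair this: you would need to show the optimizer in $V,W$ aligns with a single direction, which is essentially the content of the lemma.

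\textbf{What the paper does instead.} The paper never forms a $t\to\infty$ limit of the variational problem. It works at finite $t$ throughout:
\begin{itemize}
\item[(i)] It identifies an explicit candidate $(\Pfrak_*,\Qfrak_*)$ with $u_*=2\eta_*^{-1}$, $\tmr_*=\eta_*-\gamma_*$, $U_*=-\gamma_*/2$, $V_*=0$, $W_*=\gamma_*\kappa_*^{1/2}\Delta_t^{1/2}e_t$, and verifies $\Phi_{1,t}(\Pfrak_*,\Qfrak_*)=\Psi_{\mathrm{RS}}$ exactly, with $\partial_\iota\Phi_{1,t}(\Pfrak_*,\Qfrak_*)=0$ for all $\iota\neq V$ and $\|\partial_V\Phi_{1,t}(\Pfrak_*,\Qfrak_*)\|=o_t(1)$ (this last fact is the only place Proposition~\ref{prop:convsmallbeta} enters).
\item[(ii)] For the upper bound, it specializes the dual as $\Qfrak(\Pfrak)=\Qfrak_*+\hf(\Pfrak-\Pfrak_*)$ (in the $U,R,V,W$ components, with explicit choices for $\tmr,\chi^A,\chi^B,\chi^C$), defines $\bar\Psi_{1,t}(\Pfrak)=\Phi_{1,t}(\Pfrak,\Qfrak(\Pfrak))\geq\Psi_{1,t}(\Pfrak)$, and computes its Hessian term by term: the pairing term gives exactly $-2\hf$, the $c_\pi$ term gives $O(\hf^2)$ via Assumption~\ref{AssumpPrior}, and all remaining terms are $O(\e)$ via Proposition~\ref{prop:betalimit}. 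Setting $\hf=\e^{1/2}$ balances these to $\nabla^2\bar\Psi_{1,t}\prec -\e^{1/2}I$ on the full primal space (not just in $u$), uniformly in $t$.
\item[(iii)] Taylor expansion from $\Pfrak_*$ using (i)--(ii) gives both \eqref{eq:analysisf1} and \eqref{eq:analysisf2} directly.
\end{itemize}
The $\e^{1/2}$ you correctly flagged as conservative is thus not a bound on the intrinsic curvature of the limiting saddle; it is an artifact of the tuning parameter $\hf=\e^{1/2}$ in the dual specialization, chosen precisely to dominate the $O(\hf^2)+O(\e)$ error terms. Your Step~3 does not identify this mechanism, and your proposed route---concavity only in $u$ after optimizing out $r,\bar v,\bar w$---would be harder even if Step~1 were repaired.
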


\subsection{Conditional second moment}

Again fixing an iteration $t \geq 1$ for VAMP,
define tuples of primal and dual variables
\[\Pfrak=(u,r,v,w,p),
\qquad \Qfrak=(\Tmr,U,R,V,W,P,\Fpet,\Fpnut,\Fpomt)\]
where $u \in \R_+^2$, $r,U,R \in \R^2$, 
$v,w,V,W \in \R^{t \times 2}$, $p,P \in \R$, 
$\Tmr \in \R^{2 \times 2}$ is symmetric, and $\Fpet,\Fpnut,\Fpomt \in \R^2$.
We write $v=(v_1,v_2)$ where $v_1,v_2 \in \R^t$ are its columns,
and similarly for $w,V,W$. Set the domain for $\Tmr$ as
\begin{equation}\label{eq:defDcp}
    \Dcp=\qty{\Tmr\in \R^{2\times 2}: \Tmr=\Tmr^\top,\;\Tmr\succ -d_-\cdot
I_{2\times 2}},
\end{equation}
and denote the eigen-decomposition of $\Tmr$ by
\begin{equation}\label{eq:Tmreigend}
    \Tmr=\mqty(
y_{1} & y_{2}
)\mqty(
\tmr_{1} & 0 \\
0 & \tmr_{2}
)\mqty(
y_{1}^{\top} \\
y_{2}^{\top}
)
\end{equation} 
where $y_1,y_2 \in \R^2$ are unit-norm eigenvectors of $\Tmr$ and
$\tmr_1,\tmr_2$ are the corresponding eigenvalues. (The expressions below do not
depend on the signs of $y_1,y_2$ or the choice of $y_1,y_2$
when $\tmr_1=\tmr_2$.) Set
\begin{equation}\label{eq:defAft}
    \Aft(\Pfrak)=\begin{pmatrix} u_1 & p \\ p & u_2 \end{pmatrix}
-rr^\top-v^\top v-w^\top w \in \R^{2 \times 2}, \qquad
\Cstt=\qty{\Pfrak:\Aft(\Pfrak)\succ 0}.
\end{equation}
Let $\theta(x),\lambda(x)$ be as in (\ref{eq:defFH}), and define
for $a,b \in \R^2$ and $c \in \R$
\begin{align}
    \cpit\left(a, b, c\right)&=\int \exp \left(a_{1}
x^{2}_1+b_{1} x_1+a_{2} x^{2}_2+b_{2} x_2+c x_1 x_2\right) d\pi(x_1)
d\pi(x_2)\label{eq:defcpit}\\
    \Ht\big(\Tmr,\Aft\big)&=\Tr(\Tmr \Aft)-\E\Big[\log \det (\Tmr+\D^2\cdot I_{2
\times 2})\Big]-\qty(2+\log \det\Aft)\nonumber\\
    \Bc(v,w)&=(v-\alpha_*^A w)^\top (v-\alpha_*^A w)\nonumber
\end{align}
Here $c_\pi(p,a,b),\Hc(\Tmr,\Ac) \in \R$ and $\Bc(v,w) \in \R^{2 \times 2}$.
Finally, define from the above
{\footnotesize
\begin{equation}\label{eq:defPhi2t}
\begin{aligned}
&\Phi_{2,t}(\Pfrak,\Qfrak)\\
&=-1+\E \log \cpit\bigg(U,\;-2U\,\Xstar+\frac{R\,\mathsf{E}}{\sqrt{b_{*}}}+V^\top
\Delta_t^{-1/2}\left(\Xs_{1}, \ldots, \Xs_{t}\right)+\frac{W^\top \Delta_t^{-1/2}\left(\Ys_{1}, \ldots, \Ys_{t}\right)}{\sqrt{\kappa_{*}}}
-P\,\Xstar 1_{2 \times 1},\;P\bigg)\\
&\quad -(u-\rho_*1_{2 \times 1})^\top U-r^\top R-(v_1+\pi_*\Delta_t^{-1/2}1_{t
\times 1})^\top V_1-(v_2+\pi_*\Delta_t^{-1/2}1_{t \times 1})^\top V_2
-w_1^\top W_1-w_2^\top W_2-(p-\rho_*)P\\
&\quad -\frac{1}{2} \qty(\frac{e_*\|r\|^2}{b_{*}}-\frac{2r^\top 1_{2 \times
1}}{\sqrt{b_{*}}}+\Tr \begin{pmatrix} d_* & a_* \\ a_* & c_* \end{pmatrix}
\left[
\left(v_1,\frac{w_1}{\sqrt{\kappa_*}}\right)^\top
\left(v_1,\frac{w_1}{\sqrt{\kappa_*}}\right)
+\left(v_2,\frac{w_2}{\sqrt{\kappa_*}}\right)^\top
\left(v_2,\frac{w_2}{\sqrt{\kappa_*}}\right)\right])\\
&\quad+\frac{1}{2}\Ht\big(\Tmr, \Aft(\Pfrak)\big)
+\frac{1}{2} \E \qty [\qty(\left[\frac{\gamma_*}{\eta_*}1_{2 \times
1}-\frac{r}{\sqrt{b_*}}\right]\D^2-\Fpet)^\top
\Big(\mathsf{E}_b^2(\Tmr+\D^2\cdot I_{2 \times
2})^{-1}\Big)\qty(\left[\frac{\gamma_*}{\eta_*}1_{2 \times
1}-\frac{r}{\sqrt{b_*}}\right]\D^2-\Fpet)]\\
&\quad+\frac{1}{2}
\mathbb{E}\left[\frac{\left(\theta(\D^2)-\lambda(\D^2)
\Fpnuta -\Fpomta\right)^{2}}{\tmr_1+\D^2}\right]y_1^\top \Bc(v,w)y_1
+\frac{1}{2}
\mathbb{E}\left[\frac{\left(\theta(\D^2)-\lambda(\D^2)
\Fpnutb -\Fpomtb\right)^{2}}{\tmr_2+\D^2}\right]y_2^\top \Bc(v,w)y_2
\end{aligned}
\end{equation}}
and denote
\begin{equation}\label{eq:fpsff}
    \Psi_{2,t}(\Pfrak)=\inf_{\Qfrak:\Tmr \in \Dcp} \Phi_{2,t}(\Pfrak,\Qfrak).
\end{equation}

The following results are analogous to the upper bounds of Lemmas
\ref{lemma:firstmt} and \ref{lemma:analysisf}. (Lower bounds may
also be shown, but we omit these statements as we require only the
upper bounds in the subsequent proofs.)

\begin{Lemma}\label{lemma:secondmt}
Fix any $K>0$. Under Assumptions \ref{AssumpD}--\ref{AssumpHighTemp},
there exists a constant $\e_0=\e_0(\pib,K)>0$ such that if $\e<\e_0$, then
for any fixed $t\ge 1$ and non-empty open subset $\cU \subseteq (0,K)$,
almost surely
\begin{equation}
\lim_{n,m \rightarrow \infty} \frac{1}{n} \log \E\left[\Z(\overline{\cU})^{2} \mid
\cG_{t}\right] \leq \sup_{\Pfrak \in \Cstt:\,u_1,u_2 \in \cU} \Psi_{2, t}(\Pfrak)
\end{equation}
where $\overline{\cU}$ is the closure of $\cU$.
\end{Lemma}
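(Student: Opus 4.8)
The plan is to mirror the proof of the upper bound in \Cref{lemma:firstmt}, carried out for two replicas $\sigma^1,\sigma^2$ and coupled through a symmetric $2\times2$ dual matrix $\Tmr$ in place of the scalar $\tmr$. First I would expand $\E[\Z(\overline\cU)^2\mid\cG_t]$ as the integral over $(\sigma^1,\sigma^2)\in(\R^n)^2$, against $\prod_{a,i}d\pi(\sigma^a_i)$ and the indicators $\mathbb{I}(n^{-1}\|\sigma^a-\st\|^2\in\overline\cU)$, of $\E[\exp(-\tfrac12\sum_{a=1,2}\|y-A\sigma^a\|^2)\mid\cG_t]$. Writing $A=Q^\top DO$ and $\xi=Q\epsilon$, this exponent is $-\tfrac12\sum_a[(h^a)^\top\DD h^a-2\xi^\top Dh^a+\|\xi\|^2]$ with $h^a=O(\sigma^a-\st)$. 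Conditioning on $\cG_t$ fixes the action of $O$ on $\mathcal V_{\mathrm{dom}}=\operatorname{span}(x^1,\dots,x^t,y^1,\dots,y^t,e)$ (sending $x^j\mapsto s^j$, $y^j\mapsto\Lambda s^j=\lambda(\DD)s^j$, $e\mapsto e_b$, via $s^j=Ox^j$, $Oy^j=\Lambda s^j$, $Oe=e_b$), and leaves $O$ restricted to $\mathcal V_{\mathrm{dom}}^\perp$ Haar-distributed over the orthogonal maps onto the corresponding complement $\mathcal V_{\mathrm{ran}}^\perp$; by \Cref{thm:ampSE} and the non-degeneracy afforded by the non-Gaussian prior these iterates are a.s.\ linearly independent, so this subspace has fixed dimension. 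Decomposing $\sigma^a-\st$ into its $\mathcal V_{\mathrm{dom}}$-projection (whose image under $O$ is a known combination of $s^j,\Lambda s^j,e_b$ with coefficients the overlaps $n^{-1}\langle\sigma^a,x^j\rangle,\,n^{-1}\langle\sigma^a,y^j\rangle,\,n^{-1}\langle\sigma^a,e\rangle$) and an orthogonal part $g^a$, the pair $(g^1,g^2)$ is conditionally a uniformly random pair of vectors in $\mathcal V_{\mathrm{ran}}^\perp$ with Gram matrix $n\,\Aft(\Pfrak)+o(n)$, for $\Aft(\Pfrak)$ as in \eqref{eq:defAft} with $u_a=n^{-1}\|\sigma^a-\st\|^2$ and $p=n^{-1}\langle\sigma^1-\st,\sigma^2-\st\rangle$.

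Next I would carry out the rank-two conditional moment computation, which is the technical heart. For a deterministic $\DD$-weighting with limiting spectrum $\D^2$, the Haar average over $O\mid\cG_t$ of the exponential of a quadratic form in $(Og^1,Og^2)$ weighted by $\DD$, together with the affine couplings to the fixed images $Oy^j=\lambda(\DD)s^j$ and to $e_b,\xi$, is evaluated up to $e^{o(n)}$ by a Laplace/HCIZ-type argument: introducing a symmetric matrix $\Tmr$ dual to the Gram matrix of $(Og^1,Og^2)$ produces the term $\tfrac12\Ht(\Tmr,\Aft)=\tfrac12[\Tr(\Tmr\Aft)-\E\log\det(\Tmr+\D^2 I)-2-\log\det\Aft]$, with the $\inf_{\Tmr\in\Dcp}$ coming from optimizing the bound and the constraint $\Tmr\succ-d_-I$ needed for $\log\det(\Tmr+\D^2 I)$ to be finite against the support of $\D^2$. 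Diagonalizing $\Tmr=\tmr_1 y_1 y_1^\top+\tmr_2 y_2 y_2^\top$ decouples the remaining two-body pieces into two scalar contributions of exactly the form appearing in \Cref{lemma:firstmt}, weighted by $y_k^\top\Bc(v,w)y_k$; the noise- and $\Lambda$-direction cross-terms are treated as in the first-moment proof, using \Cref{cor:ampSEcor2} for the spectral overlaps $n^{-1}(s^j)^\top f(\DD)s^k\to\delta_{jk}\E f(\D^2)$ and \Cref{prop:AMPparamconverge} for $e_b,\xi$, and introducing the dual vectors $\chi^A,\chi^B,\chi^C\in\R^2$; these yield the $\mathsf E_b^2(\Tmr+\D^2 I)^{-1}$-weighted term and the $\theta(\D^2)-\lambda(\D^2)\chi^B-\chi^C$ terms. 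What remains after extracting these collective pieces is a per-coordinate integral over $(\sigma^1_i,\sigma^2_i)$, which converges by $H\toW\Hs$ and the joint convergence of the VAMP iterates in \Cref{thm:ampSE} to the two-replica integral $\cpit(U,b,P)$ with the affine field $b$ of \eqref{eq:defPhi2t}, $U$ dual to $u$, $P$ dual to $p$, and $R,V,W$ dual to the overlaps $r,v,w$.

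To assemble the bound in the correct order of $\sup$ and $\inf$, I would discretize the order-parameter set $\{\Pfrak\in\Cstt:u_1,u_2\in\cU\}$ — which is compact, since $u_1,u_2\le K$ forces all overlaps and self-overlaps to be bounded by Cauchy--Schwarz and the sub-Gaussianity in \Cref{AssumpPrior} — into polynomially many small cells; for each cell I would apply an exponential-tilting (Markov) inequality with a cell-specific dual $\Qfrak$, using the computation above, to bound the contribution of $(\sigma^1,\sigma^2)$ with order parameters in that cell by $\exp(n\,\Phi_{2,t}(\Pfrak,\Qfrak)+o(n))$, then optimize over $\Qfrak$ within the cell and sum over the cells. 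This yields $\limsup_{n,m}\tfrac1n\log\E[\Z(\overline\cU)^2\mid\cG_t]\le\sup_{\Pfrak\in\Cstt:\,u_1,u_2\in\cU}\inf_{\Qfrak:\Tmr\in\Dcp}\Phi_{2,t}(\Pfrak,\Qfrak)=\sup_{\Pfrak\in\Cstt:\,u_1,u_2\in\cU}\Psi_{2,t}(\Pfrak)$. The constant $\e_0(\pib,K)$ is dictated by needing the fixed point $(\eta_*^{-1},\gamma_*)$ and the parameters of \eqref{eq:scalarparams}--\eqref{eq:auxparams} to be well defined (\Cref{prop:uniquefix}), and by needing, for $u_1,u_2\le K$, the existence of $\Tmr\in\Dcp$ keeping the effective quadratic coefficient of $\cpit$ strictly below $(4\pib)^{-1}I$ so that the Poincaré-type control \eqref{eq:poincare} applies and the variational formula is finite.

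The step I expect to be the main obstacle is precisely the rank-two conditional moment computation of the second paragraph: controlling, to within $e^{o(n)}$, the Haar average of the exponential of a $\DD$-weighted quadratic form in the joint pair $(Og^1,Og^2)$ simultaneously with all the affine couplings, and showing that the optimizing $\Tmr$ keeps every $\log\det$ and $\cpit$ term finite under the high-temperature assumption. Relative to \Cref{lemma:firstmt}, essentially every scalar dual variable becomes matrix- or vector-valued, the relevant spherical/Stiefel concentration must be re-established for a pair of vectors with prescribed joint Gram matrix, and the eigen-decomposition of $\Tmr$ must be used to reduce the problem to two coupled scalar copies of the first-moment analysis — none of which is automatic from that analysis.
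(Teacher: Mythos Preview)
Your proposal is structurally correct and closely parallels the paper's proof. The paper also expands the squared partition function over two replicas, uses the conditional law of $O$ given $\cG_t$ (via \cite[Lemma B.2]{fan2022tap}), defines the same order parameters $\Pfrak=(u,r,v,w,p)$, and carries out the rank-two Haar computation through the eigen-decomposition of $\Tmr$ and block inversion of $(\Tmr\oplus\Pi^\top\DD\Pi)^{-1}$, arriving at exactly the $\Ht$, $\Ft^e$, and $\Ft$ pieces you describe. The ``main obstacle'' you flag is handled by citing \cite[Proposition 2.8]{fan2021replica}, which is precisely the rank-two HCIZ-type approximation needed here; the reduction to two scalar problems via the eigenvectors $y_1,y_2$ of $\Tmr$ and the dual variables $\chi^A,\chi^B,\chi^C\in\R^2$ proceeds just as you anticipate.

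The one genuine methodological difference is in the large-deviations step. You propose to discretize the order-parameter domain and apply an exponential-tilting Markov inequality cell by cell. The paper instead computes the cumulant generating function $\lambda_n(\Rfrak)$ of $\Pfrak(\x,\z)$ under $\pi^{\otimes 2n}$, shows it converges to a limit $\lambda(\Rfrak)$ involving $\log c_\pi(U,\cdot,P)$, and then applies the G\"artner--Ellis upper bound together with Varadhan's lemma (Lemma \ref{lemma:varadhan}(b)). This requires some care because $\lambda(\Rfrak)$ may be $+\infty$ when $(U,P)$ lies outside the convex set $\Oc$ of Lemma \ref{lem:abar2}; the paper handles this boundary issue via the Gale--Klee--Rockafellar theorem to show the Fenchel--Legendre duals of $\lambda$ and $\limsup\lambda_n$ coincide. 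Your discretization approach would also work and is arguably more elementary, but the paper's route avoids explicit discretization and the attendant continuity estimates for $\Phi_{2,t}$ across cells. Relatedly, your remark about needing $\Tmr$ to keep ``the effective quadratic coefficient of $c_\pi$ below $(4\pib)^{-1}I$'' conflates the role of $\Tmr$ (dual to the Gram matrix in the $\Ht$ term) with that of $(U,P)$ (dual to $(u,p)$ in the $c_\pi$ term); in the Varadhan framework no such restriction on the dual is needed beyond $\lambda$ being finite near the origin.
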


\begin{Lemma}\label{lemma:analysisff}
Fix any $K>0$. Under Assumptions \ref{AssumpD}--\ref{AssumpHighTemp},
there exists a constant $\e_0=\e_0(\pib,K)>0$ such that if $\e<\e_0$
and $\cU \subseteq (0,K)$ is any fixed open set containing $2\eta_*^{-1}$, then
$$
\limsup_{t \rightarrow \infty} \sup_{\Pfrak \in \Cstt:\,u_1,u_2 \in \cU} \Psi_{2,
t}(\Pfrak) \leq 2\Psi_{\mathrm{RS}}.
$$
\end{Lemma}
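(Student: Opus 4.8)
The plan is to deduce \Cref{lemma:analysisff} from the single-replica upper bound of \Cref{lemma:analysisf} by a decoupling argument. We will show that there is $\e_0=\e_0(\pib,K)>0$, independent of $t$, such that if $\e<\e_0$ then
\[
\Psi_{2,t}(\Pfrak)\;\le\;\Psi_{1,t}\big(\Pfrak^{(1)}\big)+\Psi_{1,t}\big(\Pfrak^{(2)}\big)
\qquad\text{for all }\Pfrak\in\Cstt\text{ with }u_1,u_2\in\cU,
\]
where $\Pfrak^{(j)}=(u_j,r_j,v_j,w_j)$ are the two ``marginal'' tuples of $\Pfrak=(u,r,v,w,p)$. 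Since $\Aft(\Pfrak)\succ0$ forces each diagonal entry $\Aft_{jj}=u_j-r_j^2-\|v_j\|^2-\|w_j\|^2=\Ac(\Pfrak^{(j)})$ to be positive, each $\Pfrak^{(j)}$ lies in $\Cst$ with $u_j\in\cU$, so the supremum over $\Pfrak$ obeys $\sup_{\Pfrak\in\Cstt:\,u_1,u_2\in\cU}\Psi_{2,t}(\Pfrak)\le 2\sup_{\Pfrak\in\Cst:\,u\in\cU}\Psi_{1,t}(\Pfrak)$; passing to $\limsup_{t\to\infty}$ and invoking the upper bound in \eqref{eq:analysisf1} then yields $\limsup_t\sup\Psi_{2,t}\le 2\Psi_{\mathrm{RS}}$. (The assumption that $\cU\subseteq(0,K)$ is open and contains $2\eta_*^{-1}$ is precisely what \Cref{lemma:analysisf} requires.)

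To prove the displayed inequality, fix $\Pfrak$ and choose, for $j=1,2$, a near-minimizer $\Qfrak_j=(\tmr_j,U_j,R_j,V_j,W_j,\chi^A_j,\chi^B_j,\chi^C_j)$ of $\Qfrak\mapsto\Phi_{1,t}(\Pfrak^{(j)},\Qfrak)$. Coercivity of $\Phi_{1,t}(\Pfrak^{(j)},\cdot)$ — strict convexity and boundary blow-up of $\tmr\mapsto\tfrac12[\tmr\Ac-\E\log(\tmr+\D^2)]$ on $\{\tmr>-d_-\}$, together with convexity and controlled growth of the $\E\log c_\pi$ term via the sub-Gaussian and Poincar\'e conditions of \Cref{AssumpPrior} — furnishes a priori bounds on $\Qfrak_j$ uniform in $t$ and in $\Pfrak^{(j)}$ with $u_j\in\cU\subseteq(0,K)$. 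We now assemble a trial dual for the two-replica problem by stacking componentwise, $U=(U_1,U_2)$, $R=(R_1,R_2)$, $V=(V_1,V_2)\in\R^{t\times2}$, $W=(W_1,W_2)$, and choosing $\Tmr$, $P$, $\chi^A,\chi^B,\chi^C$ below. The key structural point is that the \emph{same} auxiliary variables $(\Es,\Xstar,\Xs_1,\dots,\Xs_t,\Ys_1,\dots,\Ys_t)$ feed both replica blocks: with $P=0$ the kernel $\cpit$ factors and $\E\log\cpit$ splits into two copies of $\E\log c_\pi$, the linear pairings split coordinatewise, and — with $\Tmr$ taken diagonal — each $2\times2$ quadratic-form term (the trace-form term, the $\Es_b^2(\Tmr+\D^2\cdot I)^{-1}$ bilinear form, and the $\Bc(v,w)$ terms written in the eigenbasis of $\Tmr$) reduces exactly to the two single-replica expressions. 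With this ansatz $\Phi_{2,t}(\Pfrak,\Qfrak)$ equals $\Phi_{1,t}(\Pfrak^{(1)},\Qfrak_1)+\Phi_{1,t}(\Pfrak^{(2)},\Qfrak_2)+\tfrac12\mathcal R$, where the residual $\mathcal R$ collects only the entropy mismatch $\Ht(\Tmr,\Aft(\Pfrak))-\Hc(\tmr_1,\Aft_{11})-\Hc(\tmr_2,\Aft_{22})$ forced by the off-diagonal $\Aft_{12}=p-r_1r_2-v_1^\top v_2-w_1^\top w_2$, plus the terms produced by re-activating $P\ne0$ (including its $-P\,\Xstar 1_{2\times1}$ shift of the argument of $\cpit$) and readjusting $\chi^A,\chi^B,\chi^C$ and the off-diagonal of $\Tmr$.

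The crux, and the place where \Cref{AssumpHighTemp} is indispensable in a way parallel to the strict-concavity estimate \eqref{eq:analysisf2}, is to show that $P$, $\Tmr$, $\chi^A,\chi^B,\chi^C$ can be chosen so that $\mathcal R\le0$, uniformly over $\Pfrak$. In the limit $\e\to0$ this is exact: as $\D^2\to d_*$ one has $\gamma_*\to d_*$, the minimizing $\Tmr\approx\Aft(\Pfrak)^{-1}-d_*I$ makes $\inf_{\Tmr}\Ht(\Tmr,\Aft)=-d_*\Tr(\Aft)$ additive over the blocks of $\Aft$ (and feasible, since $\|\Aft\|<2K$ forces $\Tmr\succ\big(\tfrac{1}{2K}-d_*\big)I\succ-d_-I$ once $\e$ is small), while simultaneously the parameters $\kappa_*$, $\theta(\cdot)$, $\lambda(\cdot)$ that control the coupling between the two replicas all tend to $0$; one can then choose $P$, $\Tmr$, $\chi^A,\chi^B,\chi^C$ making $\mathcal R=0$, the remaining $P$-coupling in $\E\log\cpit$ being offset by the linear term $-(p-\rho_*)P$. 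For $\e>0$ these are $O(\e)$-perturbations: the corrections to the entropy term coming from the spread of $\operatorname{supp}(\D^2)$ about $d_*$ are second order, and the $P$-quadratic and higher part of the coupling is governed by posterior-covariance quantities of the form $\V_\mu[\cdot]$ and $\langle(v^\top x)^2\rangle_\mu$ controlled by the Poincar\'e inequality \eqref{eq:poincare}, with coefficients built from $\kappa_*$, $\sigma_*^2=\delta_*\kappa_*$, $b_*-\gamma_*^{-1}=-\kappa_*/(\eta_*-\gamma_*)$, $d_+-d_-$ and the vanishing of $\theta(\cdot),\lambda(\cdot)$ — all $O(\e)$ — and hence dominated by the strictly positive, $\e$-independent convexity of the matrix-entropy map $(\Tmr,\Aft)\mapsto-\E\log\det(\Tmr+\D^2\cdot I)-\log\det\Aft$. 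Choosing $\e_0(\pib,K)$ small enough therefore forces $\mathcal R\le0$; the a priori bounds on the $\Qfrak_j$ make this uniform in $\Pfrak$ with $u_1,u_2\in\cU\subseteq(0,K)$ and in $t$, completing the subadditivity and the lemma. Finally, as with Lemmas \ref{lemma:firstmt}--\ref{lemma:analysisf}, the $M$-truncated kernel $c_\pi^M$ is needed only for matching lower bounds; since \Cref{lemma:analysisff} asserts only an upper bound with $\Psi_{2,t}$ already built from the untruncated $c_\pi$, no truncation is needed here.
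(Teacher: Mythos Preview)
Your decoupling strategy has a genuine gap. The core claim is that for a suitable trial $\Qfrak$ one can make the residual $\mathcal R\le 0$, but the entropy piece of $\mathcal R$ is always \emph{nonnegative}, and nothing in your argument shows the remaining dual freedom can compensate it. Concretely: with $\Tmr$ optimized, $\inf_{\Tmr}\Ht(\Tmr,\Aft)=\Tr f(\Aft)$ and $\inf_{\tmr}\Hc(\tmr,\Ac)=f(\Ac)$ for $f(\alpha)=\int_0^\alpha R(z)\,dz$ (cf.\ \cite[Prop.~2.9]{fan2021replica}). Since $R'>0$ on the relevant domain (Lemma~\ref{lem:cauchy}(b)), $f$ is strictly convex; the eigenvalues $(\alpha_+,\alpha_-)$ of $\Aft$ majorize its diagonal $(\Ac_1,\Ac_2)$, so $\Tr f(\Aft)\ge f(\Ac_1)+f(\Ac_2)$ with strict inequality whenever $\Aft_{12}\ne 0$. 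The ``convexity of the matrix-entropy map'' you invoke thus pushes $\mathcal R$ the wrong way. Your proposed fix---tuning $P$---does not close the gap: $\Phi_{2,t}$ is convex in $P$, so moving $P$ away from its minimizer only \emph{increases} $\Phi_{2,t}$; and there is no reason the optimal $P$ should be nonzero exactly when $\Aft_{12}\ne 0$ (e.g.\ one can have $p=p_*$ but $r_1r_2+v_1^\top v_2+w_1^\top w_2\ne p_*$, giving $\Aft_{12}\ne 0$ with no linear $P$-gain available). The uniform pointwise subadditivity $\Psi_{2,t}(\Pfrak)\le \Psi_{1,t}(\Pfrak^{(1)})+\Psi_{1,t}(\Pfrak^{(2)})$ therefore remains unproved.

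The paper does not attempt any pointwise reduction to the first moment. Instead it reruns the concavity argument of Lemma~\ref{lemma:analysisf} directly in the two-replica variables: one fixes an explicit ansatz $\Qfrak(\Pfrak)$ (with $P(p)=\hf(p-p_*)$ and $\Tmr(\Pfrak)=G^{-1}(\Aft(\Pfrak))$ spectrally), verifies via Lemma~\ref{lemma:plugstatt} that the resulting $\bar\Psi_{2,t}(\Pfrak):=\Phi_{2,t}(\Pfrak,\Qfrak(\Pfrak))$ has value $2\Psi_{\mathrm{RS}}$ and $o_t(1)$ gradient at the doubled stationary point $\Pfrak_*$, and then shows $\nabla^2\bar\Psi_{2,t}\prec -\e^{1/2}I$ uniformly over $\{\Pfrak\in\Cstt:u_1,u_2\in(0,K)\}$ by Hessian estimates parallel to (\ref{Hess1})--(\ref{Hess6}). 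The coupling in $p$ is controlled not by subadditivity but by the strict concavity contributed by the $-\hf(p-p_*)^2$ term in $\Ttwo$ (with $\hf=\e^{1/2}$), which dominates the $O(\e)$ curvature from all other terms.
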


\subsection{Limiting free energy}

Combining the preceding results with the following exponential concentration
inequality for $n^{-1} \log \Z(\cU)$, we deduce as a corollary an
unconditional first-order limit for the restricted free energy.

\begin{Lemma}\label{lemma:concentration}
Fix any $K,L>0$ and subset $\cU \subseteq [0,K]$.
Let $\cE$ denote the event where
\[\int\mathbb{I}\left(\frac{1}{n}\|\sigma-\st\|^2 \in \cU\right)
\prod_{i=1}^n d\pi(\sigma_i)>0 \qquad \text{and}
\qquad \|D^\top \xi\|^2 \leq Ln.\]
(Note that $\cE$ depends on the random quantities $(O,\st,\epsilon)$ only via $(\st,D^\top \xi)$ and is hence $\cG_t$-measurable for any $t \geq 1$.)
Under Assumptions \ref{AssumpD} and \ref{AssumpA}, there exists a constant
$C(K,L,d_+)>0$ such that for any $t \geq 1$, any $\delta>0$, and all
sufficiently large $n$,
\begin{equation}\label{eq:conditionalconcentration}
\mathbb{P}\left(\left|\frac{1}{n}\log
\Z(\cU)-\mathbb{E}\left[\frac{1}{n}\log \Z(\cU) \;\bigg|\;
\mathcal{G}_{t}\right]\right| \geq \delta\;\bigg|\; \cG_t \right)
\mathbb{I}\{\cE\} \leq 2\exp\left(\frac{-\delta^2n}{C(K,L,d_+)}\right).
\end{equation}
\end{Lemma}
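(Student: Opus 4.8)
\emph{Overview and reduction.} The plan is to isolate the part of the Gaussian noise that is not measurable with respect to $\cG_t$, reduce the rest to a bounded-Lipschitz function of the Haar matrix $O$, and then combine a $\chi^2$ tail bound for the noise with concentration of measure on the orthogonal group for $O$, conditionally on $\cG_t$. Concretely: since $A=Q^\top DO$ and $y=A\st+\epsilon$, applying the isometry $Q$ gives $\|y-A\sigma\|^2=\|DO(\st-\sigma)+\xi\|^2$ with $\xi=Q\epsilon\sim N(0,I_m)$. Because $D$ is diagonal, $DO(\st-\sigma)$ vanishes on exactly those coordinates of $\xi$ that are not recorded by $D^\top\xi$; writing $\xi_\bullet$ for the sub-vector of $\xi$ on that ``free'' block and $\xi_\circ$ for $\xi$ with those coordinates zeroed out, we get
\[
\tfrac1n\log\Z(\cU)=-\tfrac1{2n}\|\xi_\bullet\|^2+\tfrac1n\,g(O),\qquad
g(O):=\log\int \mathbb{I}\!\Big(\tfrac1n\|\sigma-\st\|^2\in\cU\Big)\,
e^{-\frac12\|DO(\st-\sigma)+\xi_\circ\|^2}\prod_{i=1}^n d\pi(\sigma_i).
\]
Here $\xi_\circ$ is a function of $D^\top\xi$ and so $\cG_t$-measurable, whence conditionally on $\cG_t$ the function $g$ depends on $O$ alone (and is finite for all $O\in\SO(n)$ on $\cE$, since $\cE$ forces the constraint region to have positive base measure); meanwhile $\xi_\bullet$ is a standard Gaussian vector, of dimension at most $m$ (comparable to $n$ by proportional scaling), independent of $(\st,D^\top\xi,O,p^0)$ and hence of $\cG_t$. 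It suffices to prove conditional concentration of each of the two summands and take a union bound.

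\emph{Noise term.} Conditionally on $\cG_t$, $\|\xi_\bullet\|^2$ is a $\chi^2$ variable with at most $m$ degrees of freedom, so the standard Bernstein / Laurent--Massart tail bound gives $\mathbb{P}\big(|\tfrac1{2n}\|\xi_\bullet\|^2-\tfrac1{2n}\E[\|\xi_\bullet\|^2\mid\cG_t]|\ge\delta/2\mid\cG_t\big)\le 2\exp(-cn\delta^2/C)$ with a constant $C$ depending on $d_+$ and the aspect ratio.

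\emph{Orthogonal term.} Since $\tfrac1n\log\Z(\cU)$ and $\tfrac1n g$ differ only by the $O$-independent term $-\tfrac1{2n}\|\xi_\bullet\|^2$, it is enough to bound the Lipschitz constant of $O\mapsto\tfrac1n g(O)$ on $\cE$; let $\phi_O(\sigma)$ be the exponent in the integrand of $g(O)$, which equals $-\tfrac12\|y-A\sigma\|^2=-\tfrac12\|DO(\st-\sigma)+\xi\|^2$ up to a $\sigma$-independent constant. Comparing integrals, $|g(O)-g(O')|\le\sup_{\sigma:\,\frac1n\|\sigma-\st\|^2\in\cU}|\phi_O(\sigma)-\phi_{O'}(\sigma)|$, and expanding the square,
\[
\phi_O(\sigma)-\phi_{O'}(\sigma)=-\tfrac12\big(D(O-O')(\st-\sigma)\big)^\top\big(D(O+O')(\st-\sigma)+2\xi\big).
\]
On the support of the indicator $\|\st-\sigma\|\le\sqrt{Kn}$; by Assumption~\ref{AssumpA} $\|D\|_{\mathrm{op}}\le C(d_+)$; and $\big(D(O-O')(\st-\sigma)\big)^\top\xi=(\st-\sigma)^\top(O-O')^\top(D^\top\xi)$ with $\|D^\top\xi\|\le\sqrt{Ln}$ on $\cE$. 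Combining, $|\phi_O(\sigma)-\phi_{O'}(\sigma)|\le C(K,L,d_+)\,n\,\|O-O'\|_{\mathrm{op}}\le C(K,L,d_+)\,n\,\|O-O'\|_F$, so $O\mapsto\tfrac1n g(O)$ is $L_0$-Lipschitz in Frobenius norm with $L_0=C(K,L,d_+)$. Next, conditionally on $\cG_t$ the law of $O$ is Haar measure on a coset of a subgroup isomorphic to $\SO(n-k_t)$ with $k_t\le 2t+1$: by the conditioning technique standard in the orthogonally-invariant AMP analysis (the device underlying Theorems~\ref{thm:RanganSE} and~\ref{thm:ampSE}), knowing $\cG_t$ amounts to knowing $(\st,D^\top\xi,p^0)$ together with the action of $O$ on the fixed subspace spanned by $x^1,\dots,x^t,y^1,\dots,y^t,e$. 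Applying the sub-Gaussian concentration of Lipschitz functions on $\SO(N)$ (a consequence of its Ricci-curvature lower bound, giving rate $e^{-cNs^2/\ell^2}$ for an $\ell$-Lipschitz function) with $N=n-k_t\ge n/2$ then yields
\[
\mathbb{P}\Big(\big|\tfrac1n g(O)-\E[\tfrac1n g(O)\mid\cG_t]\big|\ge\delta/2\;\Big|\;\cG_t\Big)\,\mathbb{I}\{\cE\}\le 2\exp\!\Big(-\tfrac{c\,n\,\delta^2}{C(K,L,d_+)}\Big).
\]
A union bound of the two displayed estimates, with constants adjusted, gives \eqref{eq:conditionalconcentration} for all sufficiently large $n$.

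\emph{Main obstacle.} The delicate input is the description of the conditional law of $O$ given $\cG_t$: one must verify that conditioning on the VAMP iterates --- whose defining vectors themselves depend on $O$ --- nonetheless leaves $O$ Haar-distributed on a subgroup of bounded codimension, so that the dimension-free concentration of $\SO(n)$ is inherited with a comparable rate. This is exactly the type of conditioning lemma established for orthogonally-invariant AMP and can be imported here. The only other point requiring care is the uniformity of the Lipschitz constant over the relevant $\sigma$, which is precisely what the truncation $\cU\subseteq[0,K]$ and the event $\{\|D^\top\xi\|^2\le Ln\}$ in $\cE$ provide.
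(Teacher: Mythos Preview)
Your approach coincides with the paper's: both pass to the conditional law of $O$ given $\cG_t$---the paper makes this explicit as $O\overset{L}{=}(\text{$\cG_t$-measurable})+\Pi_{(e_b,S,\Lambda S)^\perp}\,\tilde{O}\,\Pi_{(e,X,Y)^\perp}^\top$ with $\tilde{O}\sim\Haar(\SO(n-2t-1))$ independent of $\cG_t$---and both bound the Frobenius Lipschitz constant of $\tfrac1n\log\Z(\cU)$ in $\tilde{O}$ (equivalently in $O$, since the map $\tilde{O}\mapsto O$ is an affine partial isometry) by $C(K,L,d_+)$ using $\|\x\|\le\sqrt{Kn}$ on the support of the indicator, $\|D^\top D\|_{\mathrm{op}}\to d_+$, and $\|D^\top\xi\|^2\le Ln$ on $\cE$, before invoking Gromov's concentration on $\SO$. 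The paper computes the gradient of $F(\tilde{O})$ directly rather than comparing $\phi_O$ and $\phi_{O'}$, but the resulting bound $2K\|D^\top D\|_{\mathrm{op}}+\sqrt{KL}$ is the same as yours.

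The one substantive difference is your explicit split of $\xi_\bullet$. The paper simply writes $F(\tilde{O})$ with $\|\xi\|^2$ inside and asserts that ``all other quantities defining $F(\tilde{O})$ are $\cG_t$-measurable''; this is only literally correct when $D^\top$ has trivial kernel on $\R^m$ (e.g.\ $m\le n$ with $d_->0$), and your chi-squared estimate fills that small gap. Two caveats: the chi-squared tail is sub-exponential rather than sub-Gaussian for large $\delta$, so your union bound yields $\exp(-c\,n\min(\delta^2,\delta))$ rather than the stated $\exp(-\delta^2 n/C)$ uniformly in $\delta$, and the constant picks up dependence on $m/n$ in addition to $(K,L,d_+)$. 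Neither matters for the downstream uses in Corollary~\ref{cor:PsiRS} and the proof of Theorem~\ref{thm:MMSE}, which apply the lemma only at fixed $\delta$.
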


\begin{Corollary}\label{cor:PsiRS}
Fix any $K>0$. Under Assumptions \ref{AssumpD}--\ref{AssumpHighTemp},
there exists a constant $\e_0=\e_0(\pib,K)>0$ such that if
$\e<\e_0$ and $2\eta_*^{-1} \in (0,K)$, then almost surely
\[\lim_{n,m \to \infty} \frac{1}{n}\log \Z([0,K])
=\Psi_{\mathrm{RS}}.\]
\end{Corollary}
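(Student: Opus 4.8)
The plan is to combine the conditional first- and second-moment estimates of Section 3 with the concentration inequality of Lemma \ref{lemma:concentration}, via a standard conditional second-moment (small subset / Morita-style) argument, and then pass to the limit $t \to \infty$ in the resulting variational formulas. First I would fix $\e<\e_0$ small enough for all the cited lemmas (with $K$ fixed) to apply simultaneously, and I would work with a small open set $\cU \subseteq (0,K)$ containing $2\eta_*^{-1}$, to be shrunk at the end. The key observation is that by Lemma \ref{lemma:firstmt} and the first part of Lemma \ref{lemma:analysisf}, for each fixed $t$ the conditional first moment satisfies $\limsup_{n,m} \tfrac1n \log \E[\Z(\overline{\cU}) \mid \cG_t] \le \sup_{\Pfrak \in \Cst,\,u \in \cU}\Psi_{1,t}(\Pfrak)$, and the $t \to \infty$ limit of the right side is $\le \Psi_{\mathrm{RS}}$; matching this, the lower-bound half of Lemma \ref{lemma:firstmt} together with the liminf half of Lemma \ref{lemma:analysisf} gives, for suitable large $t$, a matching lower bound $\liminf_{n,m}\tfrac1n\log\E[\Z(\cU)\mid\cG_t] \ge \Psi_{\mathrm{RS}}-o_t(1)$. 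So the conditional first moment pins down $\tfrac1n\log\E[\Z(\cU)\mid\cG_t] \to \Psi_{\mathrm{RS}}$ as $n,m\to\infty$ then $t \to\infty$.

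Next I would run the second-moment argument on the conditional law. By Jensen, $\E[\tfrac1n\log\Z(\cU)\mid\cG_t] \le \tfrac1n\log\E[\Z(\cU)\mid\cG_t]$, giving the upper bound $\limsup \le \Psi_{\mathrm{RS}}$ once we also control the event $\cE^c$ (where $\Z(\cU)$ might vanish or $\|D^\top\xi\|^2$ is atypically large): on $\cE$, Lemma \ref{lemma:concentration} upgrades the conditional-mean statement to $\tfrac1n\log\Z(\cU) \to \Psi_{\mathrm{RS}}$ in probability, and since $\PP(\cE^c)\to 0$ for $L$ large (using Assumption \ref{AssumpD} and standard concentration of $\|D^\top\xi\|^2 = \sum_j D_{jj}^2 \xi_j^2$, together with the fact that $\cU$ contains $2\eta_*^{-1}$ so the indicator integral is positive with high probability — this follows from the VAMP state evolution in Theorem \ref{thm:ampSE}, since $\tfrac1n\|\hat\beta_1^{t+1}-\st\|^2 \to \eta_*^{-1}$ and a local perturbation gives mass near $2\eta_*^{-1}$), the almost-sure convergence follows by a Borel–Cantelli argument after choosing $\delta = \delta_n \to 0$ slowly. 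For the matching lower bound, I would use Paley–Zygmund conditionally on $\cG_t$: $\PP(\Z(\cU) \ge \tfrac12\E[\Z(\cU)\mid\cG_t]\mid\cG_t) \ge \tfrac14 \E[\Z(\cU)\mid\cG_t]^2/\E[\Z(\cU)^2\mid\cG_t]$, and Lemmas \ref{lemma:secondmt} and \ref{lemma:analysisff} give $\limsup_{n,m}\tfrac1n\log\E[\Z(\overline\cU)^2\mid\cG_t] \le \sup_{\Pfrak\in\Cstt,\,u_1,u_2\in\cU}\Psi_{2,t}(\Pfrak) \le 2\Psi_{\mathrm{RS}} + o_t(1)$; combined with the conditional first-moment lower bound $\ge 2\Psi_{\mathrm{RS}} - o_t(1)$ for $\E[\Z(\cU)\mid\cG_t]^2$, the Paley–Zygmund ratio is $\ge e^{-o_t(1)n}$, so $\Z(\cU) \ge e^{(\Psi_{\mathrm{RS}}-o_t(1))n}$ with conditional probability bounded away from $0$ along a subsequence; feeding this into the concentration inequality of Lemma \ref{lemma:concentration} (which controls deviations of $\tfrac1n\log\Z(\cU)$ below its conditional mean as well) boosts it to $\tfrac1n\log\Z(\cU) \ge \Psi_{\mathrm{RS}}-o_t(1)$ almost surely, and letting $t\to\infty$ closes the gap.

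Finally, to go from $\Z(\cU)$ with $\cU$ a small neighborhood of $2\eta_*^{-1}$ to $\Z([0,K])$, I would use the estimate \eqref{eq:analysisf2} of Lemma \ref{lemma:analysisf}: for any $\varsigma>0$, the contribution to the first moment from configurations with $|u-2\eta_*^{-1}|>\varsigma$ has $\limsup_t$ strictly below $\Psi_{\mathrm{RS}}$, so a union bound over a finite cover of $[0,K]$ by small intervals (the restricted partition functions $\Z(\cU_j)$ are monotone and subadditive in $\cU$, and $\E[\Z(\cU)\mid\cG_t]$ is exactly additive) shows $\tfrac1n\log\Z([0,K]) = \max_j \tfrac1n\log\Z(\cU_j) + o(1)$, and only the block containing $2\eta_*^{-1}$ attains $\Psi_{\mathrm{RS}}$; the rest are strictly smaller and are dominated. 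The main obstacle I anticipate is not any single one of these steps in isolation — each is by now fairly standard — but the bookkeeping of the two limits $n,m\to\infty$ and $t\to\infty$ and the two small parameters $\e$ and $\varsigma$: one must ensure the conditional second-moment bound $\Psi_{2,t}\le 2\Psi_{\mathrm{RS}}+o_t(1)$ holds uniformly enough in $n$ that Paley–Zygmund plus Borel–Cantelli can be applied before sending $t\to\infty$, and that the $o_t(1)$ error from truncating the VAMP iteration does not interact badly with the concentration rate; handling the possibly-unbounded support of $\pi$ (hence the need for the truncated partition function $\Z(\cU)$ with $\cU\subseteq[0,K]$ bounded, and the truncated $c_\pi^M$ in the first-moment lower bound) is the technical wrinkle that makes this more delicate than the bounded-spin case of \cite{fan2021replica}.
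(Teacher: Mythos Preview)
Your proposal is essentially correct and follows the same conditional second-moment strategy as the paper: upper bound via Jensen and the first-moment estimate, lower bound via Paley--Zygmund conditionally on $\cG_t$ combined with the concentration of Lemma~\ref{lemma:concentration}, then send $t\to\infty$. Two points where the paper is more streamlined. First, the paper takes $\cU=(0,K)$ directly (so $\overline{\cU}=[0,K]$) from the outset, since Lemmas~\ref{lemma:firstmt}--\ref{lemma:analysisff} already apply to this choice once $2\eta_*^{-1}\in(0,K)$; your final step of restricting to a small neighborhood of $2\eta_*^{-1}$ and then extending via \eqref{eq:analysisf2} and a cover of $[0,K]$ is therefore unnecessary here (that localization is used instead in the proof of Theorem~\ref{thm:MMSE}). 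Second, for the first condition of $\cE$ the paper simply observes that if it failed then $\E[\Z(\overline{\cU})\mid\cG_t]=0$, contradicting the finite first-moment lower bound from Lemma~\ref{lemma:firstmt}; your VAMP-based argument is more roundabout. Finally, be careful with the Paley--Zygmund step: the conditional probability is not ``bounded away from zero'' but is $\ge e^{-3\delta n}$ for $\delta=o_t(1)$; the paper fixes $\delta<c_0/6$ first (with $c_0$ the concentration rate from Lemma~\ref{lemma:concentration}) and then chooses $t$ so the moment errors are below $\delta$, ensuring $e^{-3\delta n}>e^{-c_0 n/2}$ beats the concentration deviation and hence forces $\E[\tfrac1n\log\Z(\overline{\cU})\mid\cG_t]>\Psi_{\mathrm{RS}}-\delta-\delta'$.
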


\section{Proofs of the main results}\label{section:mainpf}

We use the preceding lemmas to prove
Theorems \ref{thm:maintheorem}, \ref{thm:MMSE}, and \ref{thm:TAP}.
For expositional clarity, we consider in this section the simpler
setting where $\pi$ has compact support contained in
$[-\sqrt{\pib},\sqrt{\pib}]$. We extend these proofs to the more general
condition of Assumption \ref{AssumpPrior} in Appendix
\ref{appendix:logconcavemain}.

\subsection{Mutual information}

\begin{proof}[Proof of \Cref{thm:maintheorem}, bounded support]
Letting $p(y \mid \st,A)$ and $p(y \mid A)$ be the conditional density
functions of $y \in \R^m$, direct calculation yields 
\begin{align*}
\mathbb{E}\left[\log \left((2\pi)^{n/2}p\left(y \mid \st, A\right)\right)
\;\Big|\; A\right]&=\E\left[-\frac{\|y-A\st\|^2}{2} \;\bigg|\;
A\right]=-\frac{n}{2}, \quad
\log \left((2\pi)^{n/2}p\left(y \mid A\right)\right)=\log \Z
\end{align*}
where $\Z=\Z([0,\infty))$ is the unrestricted partition function defined by
(\ref{eq:defpartitionfunc}). So the normalized mutual information in the model
(\ref{eq:linearmodel}) is
\[i_{n}=\frac{1}{n} I\left(\st ; y \mid A\right)=\frac{1}{n}
\mathbb{E}\left(\log \frac{p\left(y \mid \st, A\right)}{p(y \mid A)
} \;\bigg|\; A\right)=-\frac{1}{n}\E[\log \Z \mid A]-\frac{1}{2}.\]
Similarly, the mutual information $i(\gamma_*)=I(\Xstar;\Ys)$ in the scalar
channel \eqref{eq:scalarBayes} is
		\begin{align}
			i(\gamma_*)&=\mathbb{E}\left(\log \frac{p\left(\Ys
\mid \Xstar \right)}{p(\Ys)}\right) \nonumber\\
			&=-\frac{1}{2}-\mathbb{E} \log \int \exp
\left(-\frac{(\sqrt{\gamma_*}(\Xstar-x)+\Zs)^{2}}{2}\right) d\pi(x)\nonumber\\
			&=-\frac{1}{2}+\mathbb{E}
\frac{(Z+\sqrt{\gamma_{*}} \Xstar)^2}{2}-\mathbb{E} \log \int \exp
\left(-\frac{1}{2} \gamma_{*} x^{2}+\left(\gamma_{*} \Xstar+\sqrt{\gamma_{*}}
Z\right) x\right) d\pi(x)\nonumber\\
			&=\frac{1}{2}\gamma_{*}\rho_{*}-\mathbb{E} \log
c_{\pi}\left(-\frac{1}{2} \gamma_{*}, \gamma_{*} \Xstar+\sqrt{\gamma_{*}}
Z\right)\label{eq:Indef}
		\end{align}
where $c_\pi$ is defined in (\ref{eq:cpi}).

Suppose $\pi$ has bounded support contained in $[-\sqrt{\pib},\sqrt{\pib}]$.
Setting $K=4\pib$ and $\cU=(0,K)$, we have $n^{-1}\|\sigma-\st\|^2 \leq K$
with probability 1. Then $\Z(\overline{\cU})=\Z$, and also
$2\eta_*^{-1} \leq 2\Vpi<K$ where the first inequality is by Proposition
\ref{prop:uniquefix}. Thus Corollary \ref{cor:PsiRS} shows
$n^{-1}\log \Z \to \Psi_{\mathrm{RS}}$ almost surely. By Jensen's inequality,
\begin{equation}\label{eq:jensen}
0 \leq -\frac{1}{n}\log \Z \leq \frac{1}{n} \int
\frac{\|A(\beta^{\star}-\sigma)+\epsilon\|^{2}}{2}\prod_{i=1}^n
d\pi(\sigma_i)
\leq \left\|A^\top A\right\|_{\mathrm{op}}
\cdot \int \frac{\|\st-\sigma\|^2}{n} \prod_{i=1}^n d\pi(\sigma_i)
+\frac{1}{n}\|\epsilon\|^{2}.
\end{equation}
Then by the given assumption $\|A^\top A\|_{\mathrm{op}}=\|D^\top
D\|_{\mathrm{op}} \to d_+$, the bound $n^{-1}\|\st-\sigma\|^2 \leq K$,
uniform integrability of $\{\|\epsilon\|^2/n\}_{n \geq 1}$,
and the dominated convergence theorem, almost surely
$n^{-1} \E[\log \Z \mid A] \to \Psi_{\mathrm{RS}}$. Applying this,
(\ref{eq:Indef}), and the forms of $\Psi_{\mathrm{RS}}$ and $i_{\mathrm{RS}}$
in (\ref{eq:RSdef}) and (\ref{eq:iRS}), we obtain as desired
\[\lim_{n,m \to \infty} i_n=-\Psi_{\mathrm{RS}}-1/2=i_{\mathrm{RS}}(\eta_*^{-1},\gamma_*).\]
\end{proof}

\subsection{Bayes risk}\label{section:bayesriskproof}

\begin{Lemma}\label{lemma:harli}
Denoting by $\langle f(\sigma) \rangle$ the posterior expectation in
(\ref{eq:anglepost}),
    $$\operatorname{mmse}_{n}=\frac{1}{2 n}
\mathbb{E}\left[\left\langle\left\|\s-\st\right\|^{2}\right\rangle \;\Big|\;
A\right].$$
\end{Lemma}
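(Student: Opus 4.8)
The plan is to deduce the identity from two standard facts: the bias--variance decomposition of the posterior, and the Bayes-optimality (``Nishimori'') observation that, conditionally on $(y,A)$, the true signal $\st$ is itself distributed according to the posterior $\langle\,\cdot\,\rangle$ defined in \eqref{eq:anglepost}. Concretely, for a fixed realization of $(y,A)$, letting $\langle\,\cdot\,\rangle$ act only on the dummy variable $\sigma$ and using $\langle\sigma-\langle\sigma\rangle\rangle=0$, I would first expand
\[
\langle\|\sigma-\st\|^2\rangle=\langle\|\sigma-\langle\sigma\rangle\|^2\rangle+\|\langle\sigma\rangle-\st\|^2 .
\]
Taking $\mathbb{E}[\,\cdot\mid A]$ and invoking the definition $\mmse_n=\tfrac1n\,\mathbb{E}[\|\st-\langle\sigma\rangle\|^2\mid A]$ together with the fact $\langle\sigma\rangle=\mathbb{E}[\st\mid y,A]$ recorded just after \eqref{eq:anglepost}, this gives
\[
\mathbb{E}\!\left[\langle\|\sigma-\st\|^2\rangle\,\middle|\,A\right]
=\mathbb{E}\!\left[\langle\|\sigma-\langle\sigma\rangle\|^2\rangle\,\middle|\,A\right]+n\,\mmse_n .
\]
So it remains to show that the expected posterior variance equals the Bayes risk, i.e.\ $\mathbb{E}[\langle\|\sigma-\langle\sigma\rangle\|^2\rangle\mid A]=n\,\mmse_n$.

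For that remaining step I would condition further on $y$. By Bayes' rule the conditional density of $\st$ given $(y,A)$ is proportional to $p(y\mid\st,A)\prod_i d\pi(\st_i)=\exp(-\tfrac12\|y-A\st\|^2)\prod_i d\pi(\st_i)$, which is exactly the density defining $\langle\,\cdot\,\rangle$ in \eqref{eq:anglepost}; hence the conditional law of $\st$ given $(y,A)$ coincides with the posterior law of $\sigma$. Since moreover $\langle\sigma\rangle=\mathbb{E}[\st\mid y,A]$, both $\operatorname{Cov}(\st\mid y,A)$ and the posterior covariance of $\sigma$ are taken about the same mean, so
\[
\mathbb{E}\!\left[\|\st-\langle\sigma\rangle\|^2\,\middle|\,y,A\right]
=\operatorname{Tr}\operatorname{Cov}(\st\mid y,A)
=\operatorname{Tr}\operatorname{Cov}(\sigma\mid y,A)
=\langle\|\sigma-\langle\sigma\rangle\|^2\rangle .
\]
Taking $\mathbb{E}[\,\cdot\mid A]$ of both ends yields $\mathbb{E}[\langle\|\sigma-\langle\sigma\rangle\|^2\rangle\mid A]=n\,\mmse_n$, and combining with the display from the previous paragraph gives $\mathbb{E}[\langle\|\sigma-\st\|^2\rangle\mid A]=2n\,\mmse_n$, which is the claim.

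I do not expect any genuine obstacle here; this is the standard Bayes-optimality identity and every step is an elementary manipulation. The only point deserving a sentence of care is the interchange of the roles of $\st$ and $\sigma$ under $\mathbb{E}[\,\cdot\mid y,A]$, which is immediate from the explicit form of the posterior in \eqref{eq:anglepost} but should be stated so that the conditioning is unambiguous (the outer expectation is over $(\st,\epsilon,O)$, while $\langle\,\cdot\,\rangle$ integrates only over the auxiliary replica $\sigma$).
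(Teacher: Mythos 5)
Your proof is correct. It rests on exactly the same underlying fact as the paper's proof---that conditionally on $(y,A)$ the signal $\st$ has the same law as the posterior measure $\langle\cdot\rangle$ in \eqref{eq:anglepost}---but the algebra is organized differently. The paper introduces a second independent replica $\tau$, expands $\E[\|\st-\langle\sigma\rangle\|^2\mid A]$ into the inner products $\st^\top\st$, $\langle\sigma^\top\tau\rangle$, $\langle\sigma^\top\st\rangle$, $\langle\tau^\top\st\rangle$, cancels one pair via the Nishimori identity, and then applies Nishimori and replica exchangeability once more to obtain two symmetric expressions for $n\,\mmse_n$ whose sum is $\E[\langle\|\sigma-\st\|^2\rangle\mid A]$. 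You instead use the pointwise bias--variance decomposition $\langle\|\sigma-\st\|^2\rangle=\langle\|\sigma-\langle\sigma\rangle\|^2\rangle+\|\langle\sigma\rangle-\st\|^2$ and observe that the posterior variance $\langle\|\sigma-\langle\sigma\rangle\|^2\rangle=\operatorname{Tr}\operatorname{Cov}(\sigma\mid y,A)$ equals $\E[\|\st-\langle\sigma\rangle\|^2\mid y,A]$, because $\st$ and $\sigma$ are equal in law given $(y,A)$ and $\langle\sigma\rangle$ is their common conditional mean. Your route is a bit more transparent---it makes the factor of $2$ visible as ``variance plus squared bias, each equal to the Bayes risk''---whereas the paper's computation is phrased in the two-replica form of the Nishimori identity, which is the formulation more familiar from the spin-glass literature and is restated there for later use. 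Either way the substance is identical, and your careful remark about the conditioning (the outer expectation integrating $(\st,\epsilon,O)$ while $\langle\cdot\rangle$ integrates only the replica $\sigma$) is exactly the point one should flag.
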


\begin{proof}
Let $\sigma,\tau$ denote two replicas sampled independently from the posterior
distribution defining (\ref{eq:anglepost}). Conditional on $y$ and $A$, since
$\sigma,\tau,\st$ are independent and equal in law, we have
the Nishimori identity for any integrable function $f$
(see also \cite[Appendix A]{barbier2018mutual})
\[\E[\langle f(\sigma,\tau) \rangle \mid A]
=\E[\E[f(\sigma,\tau) \mid y,A] \mid A]
=\E[\E[f(\sigma,\st) \mid y,A] \mid A]
=\E[\langle f(\sigma,\st) \rangle \mid A].\]
Thus
\begin{equation}\label{eq:Nish}
\begin{aligned}
n \cdot \operatorname{mmse}_{n}&=\mathbb{E}[\|\st-\langle\s\rangle\|^{2} \mid A]
=\mathbb{E}[{\st}^{\top} \st+\langle\s^{\top} \ta\rangle-\langle\s^{\top} \st\rangle-\langle\ta^{\top} \st\rangle \mid A] \\
&\stackrel{(a)}{=}\mathbb{E}[\beta^{\star \top} \st-\langle\s^{\top} \st\rangle \mid A]
=\mathbb{E}[\langle(\s-\st)^{\top}(-\st)\rangle \mid A]\stackrel{(b)}{=}\mathbb{E}[\langle(\s-\st)^{\top}\s\rangle \mid A]
\end{aligned}
\end{equation}
where $(a)$ applies the Nishimori identity, and $(b)$ follows from the
exchangeability of the replicas and the Nishimori identity as follows: 
\[\mathbb{E}[\langle(\s-\st)^{\top}(-\st)\rangle \mid A]=\mathbb{E}[\langle(\s-\ta)^{\top}(-\ta)\rangle \mid A]
   =\mathbb{E}[\langle(\ta-\s)^{\top}(-\s)\rangle \mid A]
    =\mathbb{E}[\langle(\st-\s)^{\top}(-\s)\rangle \mid A].\]
Summing the last two expressions in \eqref{eq:Nish} gives
$2n \cdot \operatorname{mmse}_{n}=\mathbb{E}[\langle\|\s-\st\|^{2}\rangle
\mid A]$ as required.
\end{proof}

\begin{proof}[Proof of \Cref{thm:MMSE}, bounded support]
Suppose again that $\pi$ has support contained in $[-\sqrt{\pib},\sqrt{\pib}]$,
and set $K=4\pib$. Then again
$n^{-1}\|\sigma-\st\|^2 \leq K$ with probability 1, the
unrestricted partition function is $\Z=\Z([0,K])$, and $2\eta_*^{-1} \leq 2\rho_*<K$.
Fix any small constant $\varsigma>0$
and set $\cU=(0,K) \setminus (2\eta_*^{-1}-\varsigma,2\eta_*^{-1}+\varsigma)$.
Then
\[\expval{\mathbb{I}\qty(\qty|\frac{1}{n}\norm{\s-\st}^2-2\eta_*^{-1}|>\varsigma)}=\frac{\Z(\overline{\cU})}{\Z}.\]

Applying Lemmas \ref{lemma:firstmt}, \ref{lemma:analysisf}, and Jensen's
inequality, for a sufficiently large iteration $t \geq 1$,
almost surely for all large $n$,
$$
\frac{1}{n} \mathbb{E}\left[\log \Z(\overline{\cU}) \mid
\cG_{t}\right] \leq \frac{1}{n} \log \mathbb{E}\left[\Z(\overline{\cU}) \mid \cG_{t}\right]<\Psi_{\mathrm{RS}}-c_{0} \e^{1 / 2} \varsigma^{2}.
$$
Taking the expectation of (\ref{eq:conditionalconcentration}) from Lemma
\ref{lemma:concentration} yields the unconditional tail bound
\begin{equation}\label{eq:unconditionalconcentration}
\mathbb{P}\left(\cE \text{ holds and } \left|\frac{1}{n}\log
\Z(\overline{\cU})-\mathbb{E}\left[\frac{1}{n}\log \Z(\overline{\cU}) \;\bigg|\;
\mathcal{G}_{t}\right]\right| \geq \delta\right)
\leq 2\exp\left(\frac{-\delta^2n}{C(K,L,d_+)}\right).
\end{equation}
Applying $\|D^\top D\|_{\mathrm{op}} \to d_+$ as $n \to \infty$ and a standard
chi-squared tail bound, for a sufficiently large constant $L>0$,
the second condition $\|D^\top \xi\|^2 \leq Ln$ defining $\cE$ holds almost
surely for all large $n$. The first condition defining $\cE$ is equivalent to
$\Z(\overline{\cU}) \neq 0$. Hence by (\ref{eq:unconditionalconcentration})
applied with $\delta=(c_0/2)\e^{1/2}\varsigma^2$ and by
the Borel-Cantelli lemma, almost surely for all large $n$,
either $\Z(\overline{\cU})=0$ or
$$
\frac{1}{n} \log \mathcal{Z}(\overline{\cU})
<\frac{1}{n} \mathbb{E}\left[\log \Z(\overline{\cU}) \mid
\cG_{t}\right]+\frac{c_{0}}{2} \e^{\frac{1}{2}} \varsigma^{2}
<\Psi_{\mathrm{RS}}-\frac{c_{0}}{2} \e^{1 / 2} \varsigma^{2}.
$$
Then combining with $n^{-1}\log \Z \to \Psi_{\mathrm{RS}}$ by
Corollary \ref{cor:PsiRS}, almost surely for all large $n$,
\[\expval{\mathbb{I}\qty(\qty|\frac{1}{n}\norm{\s-\st}^2-2\eta_*^{-1}|>\varsigma
)}=\frac{\Z(\overline{\cU})}{\Z}
<\exp(-\frac{c_0}{2} \e^{1 / 2} \varsigma^{2} \cdot n).\]
Since $|n^{-1}\left\|\s-\st\right\|^{2}-2 \eta_{*}^{-1}| \leq 2K$,
this tail bound implies
\[\left|\frac{1}{n}\langle \|\sigma-\st\|^2 \rangle-2\eta_*^{-1} \right|
\leq \left\langle\left|\frac{1}{n}\left\|\s-\st\right\|^{2}-2
\eta_{*}^{-1}\right|\right\rangle
\leq \varsigma+2K \cdot\exp(-\frac{c_0}{2}\e^{1 / 2} \varsigma^{2} \cdot n)
<2\varsigma\]
almost surely for all large $n$. Since $\varsigma>0$ is arbitrary, this shows
$(2n)^{-1}\langle \|\sigma-\st\|^2 \rangle \to \eta_*^{-1}$
almost surely. Then by \Cref{lemma:harli} and dominated convergence theorem,
also $\mmse_n=(2n)^{-1} \E[\langle \|\sigma-\st\|^2 \rangle \mid A] \to \eta_*^{-1}$ almost surely.
\end{proof}

\subsection{The TAP equations}\label{section:TAPproof}

We note that the stationary initialization for VAMP in (\ref{eq:VAMPstatinit})
requires knowledge of $\st$, and hence the resulting iterates do not define
estimators of $\st$ given only $(y,A)$. Here, we consider VAMP from the
non-informative initialization
$r_2^1=0$ and $\gamma_{2,1}=\rho_*^{-1}$. We first use
Theorem \ref{thm:MMSE} already proven to show convergence of the VAMP state
evolution; a different argument for this convergence has also been given
recently in \cite{takeuchi2022convergence}.

\begin{Proposition}\label{prop:compmse}
Consider the VAMP algorithm (\ref{eq:iteupdate}) with initialization $r_2^1=0$
and $\gamma_{2,1}=\rho_*^{-1}$. Under Assumptions
\ref{AssumpD}--\ref{AssumpHighTemp}, there exists a constant
$\e_0=\e_0(\pib)>0$ such that if $\e<\e_0$, then
$(\eta_{1,t})_{t\ge 1}$, $(\gamma_{1,t})_{t\ge 1}$,
$(\eta_{2,t})_{t\ge 1}$, and $(\gamma_{2,t})_{t\ge 1}$ are monotone increasing
and converge to $\eta_*,\gamma_*,\eta_*,\eta_*-\gamma_*$
respectively. Consequently, for $\hat{\beta}_j^t$ as defined in
Theorem \ref{thm:RanganSE},
    \[\lim_{t \to \infty} \lim_{n,m \rightarrow \infty}
\frac{1}{n}\E[\|\hat{\beta}_j^t-\st\|^{2}\mid A]=\eta_*^{-1}\]
where the inner limit exists almost surely.
\end{Proposition}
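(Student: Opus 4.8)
The plan is to reduce \Cref{prop:compmse} to the monotone convergence of a single scalar fixed-point iteration and then invoke the uniqueness in \Cref{prop:uniquefix}. First, \Cref{thm:RanganSE} applies to this initialization: $r_2^1=0$ is independent of $(A,\epsilon)$ and satisfies \eqref{eq:VAMPinit} with $\gamma_{2,1}^{-1}=\E[\Xstar^2]=\Vpi>0$, so indeed $\gamma_{2,1}=\Vpi^{-1}$. Eliminating $\eta_{2,t},\gamma_{1,t},\eta_{1,t+1}$ from \eqref{eq:SEparams} exhibits the recursion as $\gamma_{2,t+1}=\psi_1(\psi_2(\gamma_{2,t}))$, where $\psi_2(z)=1/G(z)-z$, $\psi_1(\gamma)=1/\mmse(\gamma)-\gamma$, and $\gamma_{1,t}=\psi_2(\gamma_{2,t})$, $\eta_{2,t}=1/G(\gamma_{2,t})$, $\eta_{1,t+1}=1/\mmse(\gamma_{1,t})$. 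I would first show $\psi_1,\psi_2$ are non-decreasing. For $\psi_2$ this is $\psi_2'(z)=\E[(z+\D^2)^{-2}]/\E[(z+\D^2)^{-1}]^2-1\ge 0$ by Cauchy--Schwarz. For $\psi_1$ it follows from the classical scalar-channel identity $\tfrac{d}{d\gamma}\mmse(\gamma)=-\E\big[\V[\Xstar\mid\Ys]^2\big]$ for the channel \eqref{eq:scalarBayes}: this shows $\mmse$ is non-increasing (so $1/\mmse$ is non-decreasing and finite, $\pi$ being non-degenerate) and, by Jensen's inequality, $-\mmse'(\gamma)\ge\mmse(\gamma)^2$, i.e.\ $\tfrac{d}{d\gamma}(1/\mmse(\gamma))\ge 1$. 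Hence $\psi:=\psi_1\circ\psi_2$ is non-decreasing, and $\lim_{\gamma\downarrow 0}\psi_1(\gamma)=\Vpi^{-1}$ since $\mmse(\gamma)\to\Vpi$ as $\gamma\downarrow 0$.

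I would then show the four sequences are monotone increasing, bounded, and convergent. Since $\D^2$ has positive variance (\Cref{AssumpD}), $G(\Vpi^{-1})=\E[(\Vpi^{-1}+\D^2)^{-1}]<\Vpi$, so $\gamma_{1,1}=\psi_2(\Vpi^{-1})>0$; with $\psi_1$ non-decreasing this gives $\gamma_{2,2}=\psi_1(\gamma_{1,1})\ge\lim_{\gamma\downarrow 0}\psi_1(\gamma)=\Vpi^{-1}=\gamma_{2,1}$, and induction (using that $\psi$ is non-decreasing) yields that $(\gamma_{2,t})$ is non-decreasing; consequently $\gamma_{1,t}=\psi_2(\gamma_{2,t})$, $\eta_{2,t}=1/G(\gamma_{2,t})$, and $\eta_{1,t+1}=1/\mmse(\gamma_{1,t})$ are non-decreasing in $t$, being non-decreasing functions of $\gamma_{2,t}$. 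For boundedness I would invoke \Cref{thm:MMSE}: each $\hat\beta_j^t$ is a deterministic function of $(y,A)$ (induct on \eqref{eq:iteupdate} from $r_2^1=0$), so Bayes-optimality of the posterior mean gives $\tfrac1n\E[\|\hat\beta_j^t-\st\|^2\mid A]\ge\mmse_n$; letting $n,m\to\infty$ and combining \Cref{thm:RanganSE} with \Cref{thm:MMSE} (in the bounded-support setting, so the conditional expectations converge by bounded convergence) gives $\eta_{j,t}^{-1}\ge\eta_*^{-1}$, hence $\eta_{1,t}\le\eta_*$ and $\gamma_{2,t+1}=\eta_{1,t+1}-\gamma_{1,t}<\eta_*$. (Alternatively, boundedness is immediate without \Cref{thm:MMSE}, since $\gamma_{1,t}=\psi_2(\gamma_{2,t})\le\lim_{z\to\infty}\psi_2(z)=d_*$, whence $\gamma_{2,t+1}=\psi_1(\gamma_{1,t})\le 1/\mmse(d_*)<\infty$.) Thus $(\gamma_{2,t})$ converges to some $\bar\gamma_2\ge\Vpi^{-1}$; by continuity of $G$ on $(-d_-,\infty)$ and of $\mmse$ on $(0,\infty)$, the limits $(\bar\eta_1,\bar\gamma_1,\bar\eta_2,\bar\gamma_2)$ of $(\eta_{1,t},\gamma_{1,t},\eta_{2,t},\gamma_{2,t})$ exist, and passing to the limit in \eqref{eq:SEparams} gives $\bar\eta_1=\bar\gamma_1+\bar\gamma_2=\bar\eta_2=:\bar\eta$, with $(\bar\eta^{-1},\bar\gamma_1)$ a solution of \eqref{eq:fix} satisfying $\bar\eta^{-1}=G(\bar\gamma_2)\in(0,G(-d_-))$ and $\bar\gamma_1\ge\gamma_{1,1}>0$ --- i.e.\ in the domain of \Cref{prop:uniquefix}. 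Choosing $\e_0$ no larger than the thresholds in \Cref{prop:uniquefix} and \Cref{thm:MMSE} (both depending only on $\pib$), uniqueness forces $(\bar\eta^{-1},\bar\gamma_1)=(\eta_*^{-1},\gamma_*)$, so $\eta_{1,t},\eta_{2,t}\to\eta_*$, $\gamma_{1,t}\to\gamma_*$, $\gamma_{2,t}\to\eta_*-\gamma_*$. The displayed MSE limit then follows from \Cref{thm:RanganSE}: for each fixed $t$, $\tfrac1n\|\hat\beta_j^t-\st\|^2\to\eta_{j,t}^{-1}$ almost surely, and $\tfrac1n\E[\|\hat\beta_j^t-\st\|^2\mid A]$ converges to the same limit by bounded convergence (or, under the general \Cref{AssumpPrior}, by the uniform-integrability argument of the proof of \Cref{thm:MMSE}); sending $t\to\infty$ and using $\eta_{j,t}\to\eta_*$ completes the proof.

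The main obstacle is the monotonicity of $\psi_1$, equivalently $-\mmse'(\gamma)\ge\mmse(\gamma)^2$, which rests on the scalar-channel identity $\mmse'(\gamma)=-\E[\V[\Xstar\mid\Ys]^2]$ together with Jensen's inequality; one must additionally check differentiability of $\mmse(\cdot)$ under the general prior of \Cref{AssumpPrior}, which holds by differentiation under the expectation since the channel is Gaussian. The other ingredients are routine given the machinery already in place: verifying that the monotone limit lies in the domain where \Cref{prop:uniquefix} applies is the calculation above, and the passage from almost-sure to conditional-expectation convergence is as in the proof of \Cref{thm:MMSE}.
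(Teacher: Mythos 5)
Your proof is correct, and it takes a genuinely different route from the paper's argument, though both hinge on the uniqueness asserted in \Cref{prop:uniquefix}. The paper shows that $(\eta_{2,t}^{-1})_{t\ge 1}$ is decreasing by (i) establishing $\eta_{j,t}^{-1}\ge\eta_*^{-1}$ for all $t$ via \Cref{thm:MMSE} and the Bayes-optimality of $\langle\sigma\rangle$, and (ii) showing, using the contraction property $h'(\eta_*^{-1})<1$ of $h(x)=\mmse(-R(x))$ (inherited from the proof of \Cref{prop:uniquefix}), that $G(\mmse(-R(x))^{-1}+R(x))\le x$ for every $x\ge\eta_*^{-1}$; monotone decrease then follows step by step, without needing the map itself to be monotone. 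You instead show the state-evolution map $\psi_1\circ\psi_2$ on $\gamma_{2,t}$ is itself non-decreasing ($\psi_2'\ge 0$ by Cauchy--Schwarz, $\psi_1'\ge 0$ via $-\mmse'(\gamma)=\E[\V[\Xstar\mid\Ys]^2]\ge\mmse(\gamma)^2$ by Jensen), and the single comparison $\gamma_{2,2}\ge\gamma_{2,1}$ (a consequence of $\psi_1\ge\lim_{\gamma\downarrow 0}\psi_1=\Vpi^{-1}$, exactly the initialization) then propagates by the textbook monotone-map argument. Your alternative boundedness bound $\gamma_{1,t}\le\lim_{z\to\infty}\psi_2(z)=d_*$ is a nice self-contained way to get a uniform upper bound that sidesteps \Cref{thm:MMSE} entirely, whereas the paper's \Cref{thm:MMSE}-based bound doubles as the inequality that keeps the iterates on one side of the fixed point. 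Net effect: your monotonicity argument is more elementary and modular (monotonicity of $\psi_1,\psi_2$ holds at any temperature; only the uniqueness of the limit needs high temperature), while the paper's route reuses the contractivity already established for \Cref{prop:uniquefix} and is thereby shorter given that lemma. Both approaches correctly end by passing to the limit in \eqref{eq:SEparams}, identifying a solution of \eqref{eq:fix} in $(0,G(-d_-))\times\R_+$, invoking uniqueness, and then combining \Cref{thm:RanganSE} with the uniform-integrability/dominated-convergence step for the conditional-expectation claim.
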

\begin{proof}[Proof of Proposition \ref{prop:compmse}]
From \Cref{thm:RanganSE}, for both $j=1,2$ and each fixed $t$,
we have almost surely
\[\lim_{n,m \to \infty} \frac{1}{n}\|\hat{\beta}_j^t-\st\|^2=\eta_{j,t}^{-1}.\]
To apply the dominated convergence theorem, note that
under Assumption \ref{AssumpA}, the largest and smallest eigenvalues of
$A^\top A$ converge to $d_+,d_- \geq 0$. Then applying
$\gamma_{2,t}>0$ by \Cref{thm:RanganSE} and that $r \mapsto f(r,\gamma_{1,t})$
is Lipschitz by Proposition \ref{prop:denoiserlipschitz},
from the forms of the iterations (\ref{eq:iteupdate}), there are constants
$C_t>0$ (depending on the state evolution parameters (\ref{eq:SEparams})
and the value of $f(0,\gamma_{1,t})$) for which, for all large $n$,
\[\|r_1^t\| \leq C_t(\|y\|+\|r_2^t\|), \qquad \|r_2^{t+1}\|
\leq C_t(\sqrt{n}+\|r_1^t\|).\]
Iterating these bounds and applying the definitions of $\hat{\beta}_j^t$,
there are constants $C_t'>0$ for which, for both $j=1,2$ and all large $n$,
\[\|\hat{\beta}_j^t\|/\sqrt{n}
\leq C_t'\left(1+\left(\|y\|/\sqrt{n}\right)^t\right).\]
Then, applying $\|y\| \leq \|A\|\,\|\st\|+\|\epsilon\|$,
for some constants $C_t''>0$,
\begin{equation}\label{eq:AMPbound}
\frac{\|\hat{\beta}_j^t\|^4}{n^2}
\leq C_t''\left(1+\left(\frac{\|\st\|^2}{n}\right)^{2t}
+\left(\frac{\|\epsilon\|^2}{n}\right)^{2t}\right)
\leq C_t''\left(1+\frac{1}{n}\sum_{i=1}^n \Big({\st_i}^{4t}
+\epsilon_i^{4t}\Big)\right).
\end{equation}
For each fixed $t$,
this upper bound has finite expectation independent of $n$,
so $\{\|\hat{\beta}_j^t\|^2/n\}_{n \geq 1}$ is
bounded in $L^2$ and hence uniformly integrable. Then the dominated convergence
theorem implies
\[\lim_{n,m \to \infty} \frac{1}{n}\E[\|\hat{\beta}_j^t-\st\|^2 \mid A]
=\eta_{j,t}^{-1}.\]
Combining this with \Cref{thm:MMSE}, we must have $\eta_{j,t}^{-1} \geq
\eta_*^{-1}$ for every $t$, because each
$\hat{\beta}_j^t$ is a $(y,A)$-measurable estimator of $\st$ and hence
$n^{-1}\E[\|\hat{\beta}_j^t-\st\|^2 \mid A] \geq \mmse_n$.

It remains to show the monotonicity and convergence of $(\eta_{j,t})_{t \geq
1}$ and $(\gamma_{j,t})_{t \geq 1}$. Applying
the definition of the R-transform to write $\gamma_{1,t}=\eta_{2,t}-\gamma_{2,t}
=-R(\eta_{2,t})$, the iterations (\ref{eq:SEparams}) yield
\begin{equation}
\eta_{2,t+1}^{-1}=G\left(\mmse\left(-R\left(\eta_{2,t}^{-1}\right)\right)^{-1}+R\left(\eta_{2,t}^{-1}\right)\right).
\end{equation}
We claim that for any $x_1 \in [\eta_*^{-1},G(-d_-))$,
\begin{equation}\label{eq:x2def}
x_{2}:=G\left(\mmse\left(-R\left(x_{1}\right)\right)^{-1}+R\left(x_{1}\right)\right)
\leq x_{1}
\end{equation}
To see this, note that $R(x)$ is negative and increasing by Lemma \ref{lem:cauchy},
and $\mmse(\gamma)$ is decreasing over $\gamma>0$ (by the law of total
variance). Then $x \mapsto \mmse(-R(x))$ is increasing over $x \in (0,
G(-d_{-}))$. Proposition \ref{prop:uniquefix} implies that
this function has a unique fixed point $x=\eta_*^{-1} \in (0,G(-d_-))$.
Furthermore $\lim_{x \to 0} \mmse(-R(x))=\mmse(1)>0$ strictly, because the
prior distribution $\pi$ has strictly positive variance. Then $\mmse(-R(x))>x$
for $x<\eta_*^{-1}$ and $\mmse(-R(x))<x$ for $x>\eta_*^{-1}$, so in particular
\[\mmse\left(-R\left(x_{1}\right)\right) \leq x_{1}\]
when $x_1 \geq \eta_*^{-1}$.
Then $\mmse(-R(x_1))^{-1}+R(x_1) \geq x_1^{-1}+R(x_1)=G^{-1}(x_1)>-d_-$.
So $x_2$ is well-defined
in (\ref{eq:x2def}), and also $x_2 \leq G(G^{-1}(x_1))=x_1$
as desired because $G$ is decreasing.

Finally, since $\eta_{2,t}^{-1} \geq \eta_*^{-1}$ for all $t \geq 1$,
(\ref{eq:x2def}) implies that $(\eta_{2,t}^{-1})_{t \geq 1}$ is a
monotonically decreasing sequence, which must then converge to a fixed point of
$x \mapsto G(\mmse(-R(x))^{-1}+R(x))$. Such a fixed point satisfies
$G^{-1}(x)=\mmse(-R(x))^{-1}+R(x)$, i.e.\ $x=\mmse(-R(x))$, so it must be the
unique fixed point $\eta_*^{-1}$. Thus $(\eta_{2,t})_{t \geq 1}$ monotonically
increases to $\eta_*$. It is then straightforward to verify from their
definitions in (\ref{eq:SEparams}) that
$(\gamma_{2,t})_{t \geq 1}$, $(\gamma_{1,t+1})_{t \geq 1}$,
and $(\eta_{1,t+1})_{t \geq 1}$ also monotonically increase to
$\eta_*-\gamma_*$, $\gamma_*$, and $\eta_*$.
\end{proof}

\begin{proof}[Proof of \Cref{thm:TAP}, bounded support]
First note the following ``Pythagorean relation'': For $j=1,2$ and any $t \geq
1$,
\[\mathbb{E}[\|\st-\hat{\beta}_j^t\|^{2} \mid
A]=\mathbb{E}[\|\st-\langle\sigma\rangle\|^{2}
\mid A]+\mathbb{E}[\|\hat{\beta}_j^t-\langle\sigma\rangle\|^{2} \mid A]\]
because $\hat{\beta}_j^t$ is a $(y,A)$-measurable estimator of $\st$ and
$\langle \sigma \rangle=\E[\st \mid y,A]$.
Then by \Cref{thm:MMSE} and \Cref{prop:compmse}, for both $j=1,2$,
\begin{equation}\label{eq:conctapfkr}
\lim _{t \rightarrow \infty} \lim_{n,m \rightarrow \infty} \frac{1}{n}
\mathbb{E}[\|\hat{\beta}_j^t-\langle\sigma\rangle\|^{2} \mid A]=0
\end{equation}
where the inner limit exists almost surely.
It follows from this, uniform boundedness of $(\gamma_{2,t})_{t \geq 1}$
and $(\eta_{1,t})_{t \geq 1}$,
the convergence $\eta_{1,t}-\gamma_{2,t}=\gamma_{1,t-1}
\to \gamma_*$ from \Cref{prop:compmse}, and the triangle inequality that
\begin{equation}\label{fadsff1}
\lim _{t \rightarrow \infty} \lim_{n,m \rightarrow \infty} \frac{1}{n}
\mathbb{E}\left[\left\|\left(A^{\top}
A\langle\sigma\rangle-\gamma_{*}\langle\sigma\rangle-A^{\top}
y\right)-\left((A^{\top} A+\gamma_{2,t}I) \hat{\beta}_2^t
-\eta_{1,t} \hat{\beta}_1^t-A^{\top} y\right)\right\|^{2} \;\bigg|\; A\right]=0.
\end{equation}
From the definitions of $\hat{\beta}_j^t$ and (\ref{eq:itupdate2}),
$\gamma_{2,t}r_2^t=\left(A^{\top} A+\gamma_{2,t} I\right)
\hat{\beta}_2^t-A^{\top} y=\eta_{1,t}\hat{\beta}_1^t-\gamma_{1,t-1}
r_1^{t-1}$. Substituting this identity into (\ref{fadsff1}),
\[\lim_{t \rightarrow \infty} \lim_{n,m \rightarrow \infty}
\frac{1}{n}\mathbb{E}\left[\left\|\left(A^{\top}
A\langle\sigma\rangle-\gamma_*\langle\sigma\rangle-A^{\top} y\right)
+\gamma_{1,t-1}r_1^{t-1} \right\|^{2} \;\Big|\; A \right]=0.\]
Then dividing by $\gamma_{1,t-1}^2$, applying $\gamma_{1,t-1} \to \gamma_*$,
and applying that $r \mapsto f(r,\gamma_*)$ is Lipschitz by
Proposition~\ref{prop:denoiserlipschitz}
\begin{equation}\label{eq:almostTAP}
\lim_{t \rightarrow \infty} \lim_{n,m \rightarrow \infty}
\frac{1}{n}\mathbb{E}\left[\left\|f\left(-\gamma_*^{-1}\left(A^{\top}
A\langle\sigma\rangle-\gamma_*\langle\sigma\rangle-A^{\top}
y\right),\,\gamma_*\right)-f(r_1^{t-1},\gamma_*) \right\|^{2}
\;\Big|\; A \right]=0.
\end{equation}

Finally, applying again that $r \mapsto f(r,\gamma)$ is Lipschitz,
(\ref{eq:RanganSE}) implies that for each fixed $t$ we have almost surely
\[\lim_{n,m \to \infty} \frac{1}{n}
\left\|f(r_1^{t-1},\gamma_*)-f(r_1^{t-1},\gamma_{1,t-1})\right\|^2\\
=\E\Big[\big(f(\mathsf{R}_1^{t-1},\gamma_*)-f(\mathsf{R}_1^{t-1},\gamma_{1,t-1})\big)^2\Big]\]
Here, assuming that $\pi$ has bounded support, $f(r,\gamma)$ is bounded, so
the dominated convergence theorem yields
\begin{equation}\label{eq:TAPDCT}
\lim_{n,m \to \infty} \frac{1}{n}\E\left[\left\|f(r_1^{t-1},\gamma_*)
-f(r_1^{t-1},\gamma_{1,t-1}) \right\|^2
\;\bigg|\;A\right]=\E\Big[\big(f(\mathsf{R}_1^{t-1},\gamma_*)-f(\mathsf{R}_1^{t-1},\gamma_{1,t-1})\big)^2\Big].
\end{equation}
Representing $\mathsf{R}_1^{t-1}=\Xstar+\Zs/\sqrt{\gamma_{1,t-1}}$
and applying $\gamma_{1,t-1} \to \gamma_*$, an application of the dominated
convergence theorem shows that the right side converges to 0 as $t \to \infty$.
Then, recalling $f(r_1^{t-1},\gamma_{1,t-1})=\hat{\beta}_j^1$
and applying also the statement (\ref{eq:conctapfkr}) for $\hat{\beta}_j^1$
and the triangle inequality, this shows
\[\lim_{t \to \infty} \lim_{n,m \to \infty}
\frac{1}{n}\E\left[\left\|f(r_1^{t-1},\gamma_*)-\langle \sigma \rangle
\right\|^2\;\Big|\;A\right]=0.\]
Combining this with (\ref{eq:almostTAP}) concludes the proof.
\end{proof}

\appendix
\section{State evolution of VAMP}\label{appendix:VAMP}
We prove Theorems \ref{thm:RanganSE}, \ref{thm:ampSE} and
Propositions \ref{prop:VAMPequiv}, \ref{prop:AMPparamconverge} on Vector AMP.
These results do not require the high temperature condition of Assumption
\ref{AssumpHighTemp}, and the stationary initialization (\ref{eq:VAMPstatinit})
of VAMP and associated
scalar parameters may be defined with respect to any fixed point
$(\eta_*^{-1},\gamma_*) \in (0,G(-d_-)) \times \R_+$ of (\ref{eq:fix}).

\begin{proof}[Proof of Theorem \ref{thm:RanganSE}]
This follows from \cite[Theorems 1 and 2]{rangan2019vector}. The state
evolution (\ref{eq:SEparams}) corresponds
to the setting of matched MMSE denoising described in
\cite[Theorem 2]{rangan2019vector}, where
$\mathcal{E}_1(\gamma_{1,t})=\mmse(\gamma_{1,t})$
and $\mathcal{E}_2(\gamma_{2,t})=G(\gamma_{2,t})$ as shown in
\cite[Eq.\ (41)]{rangan2019vector}. 
(Our quantities $\eta_{j,t},\gamma_{j,t}$ defined by (\ref{eq:SEparams}) are
the asymptotic quantities $\overline{\eta}_{jt},\overline{\gamma}_{jt}$
in \cite{rangan2019vector}.) We note that under (\ref{eq:SEparams}),
$\gamma_{1,t}=\eta_{2,t}-G^{-1}(\eta_{2,t}^{-1})=-R(\eta_{2,t}^{-1})$, which is
positive for $\eta_{2,t}>0$ by Lemma \ref{lem:cauchy}(b). Furthermore,
setting $A(\gamma)=\gamma \cdot \mmse(\gamma)$, we have
$\gamma_{2,t+1}=(1-A(\gamma_{1,t}))/\mmse(\gamma_{1,t})$. The argument of
\cite[Section IV.F]{rangan2019vector} shows $A(\gamma) \in (0,1)$ for any
$\gamma>0$, hence $\gamma_{2,t+1}>0$ when $\gamma_{1,t}>0$. Thus the parameters
of (\ref{eq:SEparams}) are all positive and well-defined. 

The proof of \cite[Theorem 1]{rangan2019vector} is easily adapted to start from
an initialization $(r_2^1,\gamma_{2,1})$ instead of $(r_1^0,\gamma_{1,0})$, and
to use the deterministic state evolution parameters (\ref{eq:SEparams})
and the associated quantities $\alpha_{j,t}=\gamma_{j,t}/\eta_{j,t}$ instead of 
their empirical estimates. (We initialize with $(r_2^1,\gamma_{2,1})$
so that $\gamma_{2,1}^{-1}$ is correctly matched with the variance of
$r_2^1-\st$, without
requiring $r_1^0$ to have a limit $\mathsf{R}_1^0$ that is a Gaussian
perturbation of $\Xstar$.) The  
empirical convergence of  $D^\top 1_{m \times 1}$ and of $(r_2^1,\st)$ on Pseudo-Lipschitz test functions of order 2 (as required in
\cite{rangan2019vector}) are implied by our assumptions of empirical
Wasserstein convergence at all orders. The first
condition of \cite[Theorem 1]{rangan2019vector} follows from $A(\gamma)
\in (0,1)$ discussed above, the second condition from
the continuity of $\mmse(\cdot)$ and $G(\cdot)$, and the final uniform Lipschitz
condition for $f(\cdot,\gamma)$ from Assumption \ref{AssumpPrior}.
The convergence (\ref{eq:RanganSE}) is then shown in \cite[Theorem 1, Eq.\
(45)]{rangan2019vector}, and (\ref{eq:RanganMSE}) is shown in
\cite[Theorem 2, Eq.\ (56c)]{rangan2019vector}.
\end{proof}

\subsection{Identities for stationary VAMP}\label{append:AMP}

\begin{Proposition}\label{prop:deltabound}
Suppose Assumptions \ref{AssumpD} and \ref{AssumpPrior} hold, and let
$(\eta_*^{-1},\gamma_*) \in (0,G(-d_-)) \times \R_+$ be any fixed point of
(\ref{eq:fix}). Then $\eta_*^{-1}\leq \Vpi$ and
$\eta_*-\gamma_*\ge \Vpi^{-1}>0$.
\end{Proposition}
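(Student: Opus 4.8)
The plan is to use only the first of the two defining equations of the fixed point, namely $\eta_*^{-1}=\mmse(\gamma_*)$ (the $R$-transform equation, and indeed the entire Cauchy-transform machinery, plays no role here), together with an elementary linear-minimum-mean-squared-error bound on the scalar-channel MMSE function.

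The single substantive ingredient is the bound
\[
\mmse(\gamma)\;\le\;\frac{\Vpi}{1+\gamma\,\Vpi}\qquad\text{valid for all }\gamma\ge 0,
\]
which I would obtain in the scalar channel $\Ys=\Xstar+\Zs/\sqrt{\gamma}$ by comparing the Bayes-optimal estimator $\E[\Xstar\mid\Ys]$ against the best linear estimator of $\Xstar$ from $\Ys$: since $\E[\Xstar]=0$ and $\Zs$ is independent of $\Xstar$, one has $\E[\Xstar\Ys]=\E[\Xstar^2]=\Vpi$ and $\E[\Ys^2]=\Vpi+\gamma^{-1}$, so $\mmse(\gamma)=\E[\V[\Xstar\mid\Ys]]\le \E[\Xstar^2]-\E[\Xstar\Ys]^2/\E[\Ys^2]=\Vpi/(1+\gamma\Vpi)$. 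As the special case $\gamma=0$ (equivalently, the law of total variance) this also gives $\mmse(\gamma)\le\Vpi$.

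The first claimed bound is then immediate: $\eta_*^{-1}=\mmse(\gamma_*)\le\Vpi$. For the second, applying the displayed inequality at $\gamma=\gamma_*$ and substituting $\eta_*^{-1}=\mmse(\gamma_*)$ yields $\eta_*\ge (1+\gamma_*\Vpi)/\Vpi=\Vpi^{-1}+\gamma_*$, i.e.\ $\eta_*-\gamma_*\ge\Vpi^{-1}$; and $\Vpi=\E[{\Xstar}^2]>0$ by \Cref{AssumpPrior}, so $\Vpi^{-1}>0$. I do not anticipate any genuine obstacle: the LMMSE inequality above is the only routine computation, and everything else is a one-line rearrangement. (One could additionally note that $\gamma_*\in\R_+$ together with non-Gaussianity of $\pi$ makes both inequalities strict, but only the non-strict versions are asserted.)
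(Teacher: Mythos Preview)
Your proposal is correct and follows essentially the same approach as the paper: the first bound via the law of total variance, and the second via comparison of the Bayes estimator with the best linear estimator in the scalar channel, giving $\mmse(\gamma_*)\le\Vpi/(1+\gamma_*\Vpi)=1/(\Vpi^{-1}+\gamma_*)$ and then rearranging. The paper writes out the optimal linear coefficient $a=\Vpi/(\Vpi+\gamma_*^{-1})$ explicitly, while you use the equivalent covariance form $\Vpi-\E[\Xstar\Ys]^2/\E[\Ys^2]$, but the argument is identical.
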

\begin{proof}
Let $(\Xstar,\Ys)$ be as defined in the scalar channel (\ref{eq:scalarBayes}).
The law of total variance implies $\eta_*^{-1}=\mmse(\gamma_*)
\leq \V(\Xstar)=\Vpi$. The second claim
$\eta_*-\gamma_*\ge \rho_*^{-1}$ follows from comparing the
MMSE with the error of the linear estimator $a\Ys$ with
$a=\frac{\Vpi}{\Vpi+\gamma_{\star}^{-1}}$:
\begin{equation}\label{eq:linearcompare}
\eta_{*}^{-1}=\mathbb{E}\left(\Xstar-\mathbb{E}\left[\Xstar \mid \Ys\right]\right)^{2}
		\leq
\mathbb{E}\left(\Xstar-a\left(\Xstar+\frac{1}{\sqrt{\gamma_{*}}} \Zs\right)\right)^{2}
		=(1-a)^{2}\Vpi+\frac{a^{2}}{\gamma_{*}}=\frac{1}{\Vpi^{-1}+\gamma_{*}}.
\end{equation}
Rearranging yields $\eta_*-\gamma_*\ge \rho_*^{-1}$.
\end{proof}

\begin{proof}[Proof of Proposition \ref{prop:VAMPequiv}]
Writing (\ref{eq:fix}) as $\eta_*^{-1}=\mmse(\gamma_*)$ and
$\eta_*^{-1}=G(\eta_*-\gamma_*)$, it is clear that the initialization
$\gamma_{1,0}=\gamma_*$ yields $\eta_{1,t}=\eta_{2,t}=\eta_*$,
$\gamma_{1,t}=\gamma_*$, and $\gamma_{2,t}=\eta_*-\gamma_*$ for all $t$.

Then substituting $y=A\st+\epsilon$ and $A=Q^\top DO$,
the update rule \eqref{eq:itupdate1} for $r_1^t$ can be rearranged as
\[r_1^t=O^{\top} \Lambda O r_2^t
+O^\top \bigg[\frac{\eta_*}{\gamma_*} (D^\top D+(\eta_*-\gamma_*)I)^{-1}
D^\top D\bigg]O\st+e.\]
Applying the identity $(D^{\top} D+(\eta_*-\gamma_*)I)^{-1} D^{\top} D
=I-(\eta_*-\gamma_*)(D^\top D+(\eta_*-\gamma_*)I)^{-1}
=(\gamma_*/\eta_*)(I-\Lambda)$ we obtain
\[y^t:=r_1^t-e-\st=O^\top \Lambda O(x^t+\st)
+O^\top(I-\Lambda)O\st-\st=O^\top \Lambda Ox^t\]
which may be written as $s^t=Ox^t$ and $y^t=O^\top \Lambda s^t$.
Setting $r_1^t=p^t+\st$, we have from \eqref{eq:itupdate2}
\[x^{t+1}:=r_2^{t+1}-\st=\frac{\eta_*}{\eta_*-\gamma_*}
f(p^t+\st,\gamma_{1,t})-\frac{\gamma_*}{\eta_*-\gamma_*}(p^t+\st)-\st
=F(p^t,\st).\]
For $t=0$, this gives the initialization $x^1=F(p^0,\st)$, and for
$t \geq 1$, we have $p^t=y^t+e$ so this gives the update for $x^{t+1}$.
\end{proof}

\begin{Proposition}\label{prop:denoiserprop}
Suppose Assumptions \ref{AssumpD} and \ref{AssumpPrior} hold, and let
$(\eta_*^{-1},\gamma_*)$ be any fixed point of (\ref{eq:fix}).
Let $\Ys$ denote an observation from the scalar channel (\ref{eq:scalarBayes})
with variance $\gamma_*^{-1}$. Then the functions
$y \mapsto f(y,\gamma_*)$ and $(p,\beta) \mapsto F(p,\beta)$ are continuously
differentiable and Lipschitz. We have
	\begin{equation}\label{eq:denoiserpp1}
f'(y,\gamma_*):=\frac{\partial}{\partial y} f(y,\gamma_*)=\gamma_*\,
\V(\Xstar \mid \Ys=y), \quad
		F^{\prime}(p, \beta):=\frac{\partial}{\partial p} F(p,
\beta)=\frac{\eta_{*}}{\eta_{*}-\gamma_{*}}
f^{\prime}(p+\beta,\gamma_*)-\frac{\gamma_{*}}{\eta_{*}-\gamma_{*}},
	\end{equation}
and these are non-constant in $y$ and $p$.
For $\mathsf{P} \sim N(0, \gamma_*^{-1})$ and $\Xstar \sim
\pi$ independent, we also have
	\begin{equation}\label{eq:denoiserpp2}
		\mathbb{E} F\left(\mathsf{P}, \Xstar\right)=0, \quad
		\mathbb{E}F^{\prime}\left(\mathsf{P}, \Xstar\right)=0,  \quad
		\mathbb{E}\left(F\left(\mathsf{P},
\Xstar\right)^{2}\right)=\delta_*
	\end{equation}
\end{Proposition}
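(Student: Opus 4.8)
The plan is to write $f(y,\gamma_*)$ as a ratio of Gaussian-tilted integrals against $\pi$ and differentiate under the integral sign. With the probability measure $d\mu_y(x)\propto e^{-\gamma_*(x-y)^2/2}\,d\pi(x)$ --- which is exactly the posterior law of $\Xstar$ given $\Ys=y$ --- we have $f(y,\gamma_*)=\langle x\rangle_{\mu_y}$. The $y$-derivative of the integrand $x\,e^{-\gamma_*(x-y)^2/2}$ is dominated by $B|x|$ with $B=\sup_{u\in\R}|u|e^{-\gamma_* u^2/2}<\infty$, and $\int B|x|\,d\pi(x)<\infty$ by the sub-Gaussian tail in \Cref{AssumpPrior}, so differentiation under the integral is justified globally and yields $f'(y,\gamma_*)=\gamma_*\big(\langle x^2\rangle_{\mu_y}-\langle x\rangle_{\mu_y}^2\big)=\gamma_*\,\V(\Xstar\mid\Ys=y)$; continuity of $f'$ in $y$ follows from the same dominated-convergence argument applied to the numerator and denominator. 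For the uniform Lipschitz bound I would invoke the Poincar\'e-type inequality \eqref{eq:poincare} with $k=1$, $\Gamma=-\gamma_*/2\prec(4\pib)^{-1}I$, and $z=\gamma_* y$: the tilted measure appearing there is precisely $\mu_y$, so $0\le f'(y,\gamma_*)=\gamma_*\V_{\mu_y}[x]\le\gamma_* C$ uniformly in $y$, whence $f$ is $C^1$ and Lipschitz. Since $\eta_*-\gamma_*\ge\rho_*^{-1}>0$ by \Cref{prop:deltabound}, $F$ inherits continuous differentiability and the Lipschitz property from its definition \eqref{eq:Fdef}, and the stated identity for $F'$ is immediate on differentiating \eqref{eq:Fdef} in $p$.

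For the non-constancy, suppose toward a contradiction that $f'(\cdot,\gamma_*)$ is constant, so $f(\cdot,\gamma_*)$ is affine. Let $p_\Ys$ denote the density of $\Ys=\Xstar+\Zs/\sqrt{\gamma_*}$, which is $C^\infty$ and everywhere positive (a convolution of $\pi$ with a Gaussian); a direct computation gives the score identity $(\log p_\Ys)'(y)=-\gamma_*\big(y-f(y,\gamma_*)\big)$. If $f(\cdot,\gamma_*)$ is affine this forces $\log p_\Ys$ to be quadratic with negative leading coefficient (the alternative contradicts integrability of a density), i.e.\ $p_\Ys$ is a nondegenerate Gaussian density. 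Comparing characteristic functions in $\Ys=\Xstar+\Zs/\sqrt{\gamma_*}$ --- an elementary instance of Cram\'er's decomposition theorem --- then forces $\Xstar$ to be Gaussian or a.s.\ constant, contradicting that $\pi$ is non-Gaussian with $\rho_*>0$ in \Cref{AssumpPrior}. Hence $f'(\cdot,\gamma_*)$ is non-constant; and since $\eta_*/(\eta_*-\gamma_*)\neq 0$, the identity for $F'$ shows $p\mapsto F'(p,\beta)$ is non-constant as well.

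The moment identities are then routine Nishimori-type computations. Since $(\mathsf{P}+\Xstar,\Xstar)\eqL(\Ys,\Xstar)$, we get $\E[f(\mathsf{P}+\Xstar,\gamma_*)]=\E[\E[\Xstar\mid\Ys]]=\E\Xstar=0$ and $\E\mathsf{P}=\E\Xstar=0$, whence $\E F(\mathsf{P},\Xstar)=0$ from \eqref{eq:Fdef}; similarly $\E[f'(\mathsf{P}+\Xstar,\gamma_*)]=\gamma_*\E[\V(\Xstar\mid\Ys)]=\gamma_*\,\mmse(\gamma_*)=\gamma_*\eta_*^{-1}$ by the first fixed-point equation in \eqref{eq:fix}, so $\E F'(\mathsf{P},\Xstar)=\tfrac{\eta_*}{\eta_*-\gamma_*}\gamma_*\eta_*^{-1}-\tfrac{\gamma_*}{\eta_*-\gamma_*}=0$. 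For $\E[F(\mathsf{P},\Xstar)^2]=\delta_*$, I would expand the square in \eqref{eq:Fdef} and evaluate the six cross-moments using $\E[f(\Ys)^2]=\E[\Xstar f(\Ys)]=\rho_*-\mmse(\gamma_*)$ (law of total variance, with $\E f(\Ys)=0$), $\E[f(\Ys)\mathsf{P}]=\E[\Xstar\Ys]-\E[\Xstar f(\Ys)]=\mmse(\gamma_*)$, $\E[\mathsf{P}\Xstar]=0$, $\E[\mathsf{P}^2]=\gamma_*^{-1}$, and $\E[\Xstar^2]=\rho_*$; substituting $\mmse(\gamma_*)=\eta_*^{-1}$ and simplifying, everything collapses to $1/(\eta_*-\gamma_*)=\delta_*$.

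I expect the non-constancy step to be the only real obstacle: the differentiability, the uniform Lipschitz bound, and the moment computations are essentially bookkeeping once \Cref{AssumpPrior} and the fixed-point equations \eqref{eq:fix} are in hand, whereas ruling out an affine posterior mean is where the non-Gaussianity of $\pi$ genuinely enters --- and is precisely what rules out the rank degeneracy flagged in the introduction. The score-function/Cram\'er argument is short but is the single conceptual ingredient.
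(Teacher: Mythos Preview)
Your proof is correct and follows essentially the same approach as the paper. Two minor differences are worth noting. For the non-constancy of $f'$, the paper simply asserts that non-Gaussian $\pi$ implies a non-linear posterior mean, whereas you supply an explicit proof via the Tweedie score identity $(\log p_\Ys)'(y)=-\gamma_*(y-f(y,\gamma_*))$ and characteristic-function deconvolution; your argument is self-contained and makes transparent exactly where the non-Gaussian hypothesis enters. For the computation of $\E[F(\mathsf P,\Xstar)^2]$, the paper groups $F$ as $\tfrac{\eta_*}{\eta_*-\gamma_*}(f-\Xstar)-\tfrac{\gamma_*}{\eta_*-\gamma_*}\mathsf P$ and evaluates the cross term $\E[(f-\Xstar)\mathsf P]$ by Gaussian integration by parts (Stein's lemma), while you decompose $\mathsf P=\Ys-\Xstar$ and use the tower property to get $\E[f(\Ys)\mathsf P]=\E[\Xstar\Ys]-\E[\Xstar f(\Ys)]$. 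Both routes lead to the same algebra and are equally short.
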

\begin{proof}
Proposition~\ref{prop:denoiserlipschitz} shows that $y \mapsto f(y,\gamma_*)$ is
continuously-differentiable and Lipschitz, with derivative given by
(\ref{eq:denoiserpp1}). Since $\pi$ is also non-Gaussian by
Assumption~\ref{AssumpPrior}, $y \mapsto f(y,\gamma_*)$ is non-linear, and hence
$y \mapsto f'(y,\gamma_*)$ is non-constant. Then the same properties hold for
$F$ and $F'$, and the form (\ref{eq:denoiserpp1}) for $F'$
follows from definition of $F$.

The first two identities in \eqref{eq:denoiserpp2} follow from $\E \Ps=\E
\Xstar=\E f(\Ps+\Xstar,\gamma_*)=0$ and $\E f'(\Ps+\Xstar,\gamma_*)=\gamma_*
\cdot \mmse(\gamma_*)=\gamma_*/\eta_*$.
For the last identity in \eqref{eq:denoiserpp2},
denote for simplicity $f(y)=f(y,\gamma_*)$. Note that
	\begin{equation*}
		\mathbb{E}\left(f\left(\mathsf{P}+\Xstar\right)-\Xstar\right)^{2}=\mmse(\gamma_*)=\eta_*^{-1}
	\end{equation*} and thus
	\begin{equation}\label{eq:step1denoise}
		\left(\frac{\eta_{*}}{\eta_{*}-\gamma_{*}}\right)^{2}\left(\mathbb{E}\left(f\left(\mathsf{P}+\Xstar\right)-\Xstar\right)^{2}-\frac{\gamma_{*}}{\eta_{*}^{2}}\right)=\left(\frac{\eta_{*}}{\eta_{*}-\gamma_{*}}\right)^{2}\left(\frac{1}{\eta_{*}}-\frac{\gamma_{*}}{\eta_{*}^{2}}\right)=\frac{1}{\eta_{*}-\gamma_{*}}=\delta_*.
	\end{equation}
	It follows that
	\begin{equation*}
		\begin{aligned}
			\mathbb{E}\left(F\left(\mathsf{P}, \Xstar\right)^{2}\right) &=\mathbb{E}\left(\frac{\eta_{*}}{\eta_{*}-\gamma_{*}} f\left(\mathsf{P}+\Xstar\right)-\frac{\gamma_{*}}{\eta_{*}-\gamma_{*}} \mathsf{P}-\frac{\eta_{*}}{\eta_{*}-\gamma_{*}} \Xstar\right)^{2} \\
			&=\left(\frac{\eta_{*}}{\eta_{*}-\gamma_{*}}\right)^{2}\left(\mathbb{E}\left(f\left(\mathsf{P}+\Xstar\right)-\Xstar\right)^{2}+\left(\frac{\gamma_{*}}{\eta_{*}}\right)^{2}
\mathbb{E} \mathsf{P}^{2}-\frac{2\gamma_{*}}{\eta_{*}} \mathbb{E}\left(\left(f\left(\mathsf{P}+\Xstar\right)-\Xstar\right) \mathsf{P}\right)\right) \\
			&
\stackrel{(a)}{=}\left(\frac{\eta_{*}}{\eta_{*}-\gamma_{*}}\right)^{2}
\left(\mathbb{E}\left(f\left(\mathsf{P}+\Xstar\right)-\Xstar\right)^{2}+\frac{\gamma_{*}}{\eta_{*}^{2}}-\frac{2}{\eta_{*}} \mathbb{E}\left(f^{\prime}\left(\mathsf{P}+\Xstar\right)\right)\right) \\
			&
\stackrel{(b)}{=}\left(\frac{\eta_{*}}{\eta_{*}-\gamma_{*}}\right)^{2}
\mathbb{E}\left(\left(f\left(\mathsf{P}+\Xstar\right)-\Xstar\right)^{2}-\frac{\gamma_{*}}{\eta_{*}^{2}}\right)=\delta_*
		\end{aligned}
	\end{equation*}
	where we used Gaussian integration by parts in $(a)$ and $\E
f'(\mathsf{P}+\Xstar,\gamma_*)=\gamma_*/\eta_*$ and \eqref{eq:step1denoise} in $(b)$. 
\end{proof}

\begin{Lemma}\label{lemma:dABC}
Recall $b_*,\kappa_*$ from (\ref{eq:scalarparams}) and $a_*,c_*,e_*$
from (\ref{eq:auxparams}). We have the following identities
	\begin{subequations}
		\begin{align}
			&d_{*}^{A}:=\mathbb{E} \frac{1}{\Dbar^{2}+\eta_{*}-\gamma_{*}}=\frac{1}{\eta_{*}} \label{eq:dA}\\
			&d_{*}^{B}:=\mathbb{E} \frac{\Dbar^{2}}{\Dbar^{2}+\eta_{*}-\gamma_{*}}=\frac{\gamma_{*}}{\eta_{*}} \label{eq:dB}\\
			&d_{*}^{C}:=\mathbb{E} \frac{1}{\left(\Dbar^{2}+\eta_{*}-\gamma_{*}\right)^{2}}=\frac{1}{\eta_{*}^{2}}\left(\frac{\gamma_{*}}{\eta_{*}-\gamma_{*}}\right)^{2} \kappa_{*}+\frac{1}{\eta_{*}^{2}} \label{eq:dC}\\
			&d_{*}^{D}:=\mathbb{E} \frac{\Dbar^{2}}{\left(\Dbar^{2}+\eta_{*}-\gamma_{*}\right)^{2}}=-\frac{\gamma_{*}^{2}}{\eta_{*}^{2}}\left(\frac{\kappa_{*}}{\eta_{*}-\gamma_{*}}\right)+\frac{\gamma_{*}}{\eta_{*}^{2}} \label{eq:dD}\\
			&d_{*}^{E}:=\mathbb{E} \frac{\Dbar^{4}}{\left(\Dbar^{2}+\eta_{*}-\gamma_{*}\right)^{2}}=\frac{\gamma_{*}^{2}}{\eta_{*}^{2}}\left(1+\kappa_{*}\right) \label{eq:dE}
		\end{align}
	\end{subequations}
Also, for the quantities $\Ls,\Es_b$ from Proposition
\ref{prop:AMPparamconverge}, we have
\begin{equation}\label{eq:abcde}
\E \Ls=0, \quad \E \Ls^2=\kappa_*, \quad \E \Es_b^2=b_*, \quad
\E \D^2\Ls=a_*, \quad \E \D^2\Ls^2=c_*, \quad \E \D^2\Es_b^2=e_*.
\end{equation}
\end{Lemma}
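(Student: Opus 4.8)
The plan is to reduce everything to just two substantive inputs --- the fixed-point equation \eqref{eq:fix} and the definition of $\kappa_*$ in \eqref{eq:scalarparams} --- together with elementary partial-fraction rewrites of rational functions of $\D^2$. Throughout I would abbreviate $c=\eta_*-\gamma_*$, which is strictly positive by \Cref{prop:deltabound}, so that all denominators below stay bounded away from $0$ and the expectations in question are finite under \Cref{AssumpD}.

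\emph{The five identities \eqref{eq:dA}--\eqref{eq:dE}.} First rewrite the second equation of \eqref{eq:fix} as $\eta_*^{-1}=G(c)$; by the definition of the Cauchy transform in \eqref{eq:CauchyR} this is literally the statement $d_*^A=\E[(\D^2+c)^{-1}]=1/\eta_*$, giving \eqref{eq:dA}. For \eqref{eq:dB} I would apply $\frac{x}{x+c}=1-\frac{c}{x+c}$ with $x=\D^2$, take expectations, and substitute \eqref{eq:dA}. For \eqref{eq:dC}, note that the definition of $\kappa_*$ in \eqref{eq:scalarparams} is a linear relation between $\kappa_*$ and $\E[(\D^2+c)^{-2}]$; solving it for the latter produces \eqref{eq:dC}. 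Finally \eqref{eq:dD} and \eqref{eq:dE} follow from \eqref{eq:dA} and \eqref{eq:dC} via the rewrites $\frac{x}{(x+c)^2}=\frac{1}{x+c}-\frac{c}{(x+c)^2}$ and $\frac{x^2}{(x+c)^2}=1-\frac{2c}{x+c}+\frac{c^2}{(x+c)^2}$ with $x=\D^2$, after which one simplifies and checks the stated closed forms.

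\emph{The identities \eqref{eq:abcde}.} Recall from \Cref{prop:AMPparamconverge} that $\Ls=\tfrac{c}{\gamma_*}\big(\tfrac{\eta_*}{\D^2+c}-1\big)$ and $\Es_b=\tfrac{\eta_*}{\gamma_*}\tfrac{\D\,\Xi}{\D^2+c}$, with $\Xi\sim N(0,1)$ independent of $\D$. I would expand $\Ls,\Ls^2,\D^2\Ls,\D^2\Ls^2$ as linear combinations of $1,\D^2,\tfrac{1}{\D^2+c},\tfrac{\D^2}{\D^2+c},\tfrac{1}{(\D^2+c)^2},\tfrac{\D^2}{(\D^2+c)^2}$, take expectations using $\E\D^2=d_*$ together with \eqref{eq:dA}--\eqref{eq:dD}, and compare with the definitions of $a_*$ and $c_*$ in \eqref{eq:auxparams}; this yields $\E\Ls=0$ (the cross term dies via \eqref{eq:dA}), $\E\Ls^2=\kappa_*$ (indeed $\kappa_*$ in \eqref{eq:scalarparams} is essentially designed so that this holds once the linear term is eliminated), $\E[\D^2\Ls]=a_*$, and $\E[\D^2\Ls^2]=c_*$. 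For the $\Es_b$ terms, independence of $\Xi$ from $\D$ factors out $\E\Xi^2=1$, so $\E\Es_b^2=\tfrac{\eta_*^2}{\gamma_*^2}\,d_*^D$ and $\E[\D^2\Es_b^2]=\tfrac{\eta_*^2}{\gamma_*^2}\,d_*^E$; substituting \eqref{eq:dD} and \eqref{eq:dE} and simplifying gives $b_*$ (as defined in \eqref{eq:scalarparams}) and $e_*=1+\kappa_*$ (as defined in \eqref{eq:auxparams}).

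I do not expect a genuine obstacle here: the only non-algebraic inputs are \eqref{eq:fix} and the definition of $\kappa_*$ (the sole place the latter is needed is for $d_*^C$; the remaining four identities are pure partial-fraction algebra on top of the fixed point). The part deserving the most care is simply the bookkeeping --- performing the reductions consistently with $c=\eta_*-\gamma_*$ and tracking the $\gamma_*/\eta_*$ prefactors --- so that each computed expression matches the exact closed form claimed in \eqref{eq:dA}--\eqref{eq:dE} and \eqref{eq:abcde}.
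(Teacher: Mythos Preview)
Your proposal is correct and essentially identical to the paper's proof: both derive \eqref{eq:dA} from the fixed-point equation $\eta_*^{-1}=G(\eta_*-\gamma_*)$, obtain \eqref{eq:dB}, \eqref{eq:dD}, \eqref{eq:dE} via the same partial-fraction identities (the paper writes them as $d_*^B+c\,d_*^A=1$, $d_*^D+c\,d_*^C=d_*^A$, $d_*^E+c^2 d_*^C+2c\,d_*^D=1$), extract \eqref{eq:dC} by inverting the definition of $\kappa_*$, and then read off the $\Ls,\Es_b$ identities by expanding and substituting. The only cosmetic difference is ordering: the paper interleaves the \eqref{eq:abcde} identities with \eqref{eq:dA}--\eqref{eq:dE} as they become available, whereas you separate the two blocks.
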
 
\begin{proof}
\eqref{eq:dA} is the identity $\eta_*^{-1}=G(\eta_*-\gamma_*)$,
which is a rewriting of the second equation of (\ref{eq:fix}). This then
implies $\E\Ls=0$, as well as $\E\Ls^2=\kappa_*$.
	\eqref{eq:dB} follows from the identity
	$d_{*}^{B}+\left(\eta_{*}-\gamma_{*}\right) d_{*}^{A}=1$, and this then
implies $\E\D^2\Ls=a_*$. \eqref{eq:dC} follows from rearranging the identity
$\kappa_*=\E\Ls^2=\frac{(\eta_*-\gamma_*)^2}{\gamma_*^2}(\eta_*^2 d_*^C-1)$.
 \eqref{eq:dD} then follows from the identity
$d_{*}^{D}+\left(\eta_{*}-\gamma_{*}\right) d_{*}^{C}=d_{*}^{A}$, 
and this then implies $\E \Es_b^2=b_*$ as well as $\E \D^2\Ls^2=c_*$.
Finally, \eqref{eq:dE} follows from the identity
$d_{*}^{E}+\left(\eta_{*}-\gamma_{*}\right)^{2}
d_{*}^{C}+2\left(\eta_{*}-\gamma_{*}\right) d_{*}^{D}=1$,
and this then implies $\E \D^2\Es_b^2=e_*$.
\end{proof}

\begin{Remark}\label{remark:non0param}
This shows $b_*,\kappa_*>0$ strictly, because $\D$ has strictly
positive variance by \Cref{AssumpD}, and hence so do $\Ls$ and $\Es_b$.
Also by Proposition \ref{prop:deltabound},
we have $\eta_*-\gamma_*>0$, hence $\delta_*,\sigma_*^2>0$.
\end{Remark}

\subsection{State evolution of stationary VAMP}\label{append:ampSE}
\begin{proof}[Proof of \Cref{prop:AMPparamconverge}]
Note that $\xi=Q\epsilon \sim N(0,I_{m \times m})$. Then $D^\top \xi \in \R^n$
may be written as the entrywise product of $D^\top 1_{m \times 1} \in \R^n$
and a vector $\bar{\xi} \sim N(0,I_{n \times n})$, both when $m \geq n$
and when $m \leq n$. The almost-sure convergence $H \toW \Hs$ is then
a straightforward consequence of Propositions \ref{prop:iidW}, \ref{prop:contW},
and \ref{prop:orthoW}, where all random variables of $\Hs$ have finite moments
of all orders under Assumptions \ref{AssumpA} and \ref{AssumpPrior}.
The identities $\kappa_*=\E \Ls^2$ and $b_*=\E \Es_b^2$ were shown in
(\ref{eq:abcde}).
\end{proof}

\begin{proof}[Proof of \Cref{thm:ampSE}]
We have $\delta_{11}=\E \Xs_1^2=\delta_*$ by the last identity of
(\ref{eq:denoiserpp2}). Supposing that $\delta_{tt}=\E \Xs_t^2=\delta_*$, we
have by definition $\E \Ys_t^2=\kappa_* \delta_{tt}=\sigma_*^2=\delta_*\kappa_*$.
Since $\Ys_t$ is independent of $\Es$, we have $\Ys_t+\Es \sim N(0,\sigma_*^2+b_*)$
where this variance is $\sigma_*^2+b_*=\gamma_*^{-1}$ by the definition of
$b_*$. Then $\E \Xs_{t+1}^2=\delta_*$ by the last identity of
(\ref{eq:denoiserpp2}), so $\E \Xs_t^2=\delta_*$ and $\E \Ys_t^2=\sigma_*^2$ for
all $t \geq 1$.

Noting that $\Delta_t$ is the upper-left submatrix of $\Delta_{t+1}$, let us
denote
\[\Delta_{t+1}=\begin{pmatrix} \Delta_t & \delta_t \\ \delta_t^\top & \delta_*
\end{pmatrix}\]
We now show by induction on $t$ the following three statements:
\begin{enumerate}
\item $\Delta_t \succ 0$ strictly.
\item We have
	\begin{equation}\label{eq:extraSE}
	    \mathsf{Y}_t=\sum_{k=1}^{t-1} \mathsf{Y}_{k}\left(\Delta_{t-1}^{-1}
\delta_{t-1}\right)_k+\mathsf{U}_t, \quad \mathsf{S}_t=\sum_{k=1}^{t-1}
\mathsf{S}_{k}\left(\Delta_{t-1}^{-1} \delta_{t-1}\right)_{k}+\mathsf{U}^{\prime}_t
	\end{equation}
	where $\mathsf{U}_t,\mathsf{U}_t'$ are Gaussian variables
with strictly positive variance, independent of
$\mathsf{H}$, $\left(\mathsf{Y}_{1}, \ldots, \mathsf{Y}_{t-1}\right)$, and
$\left(\mathsf{S}_{1}, \ldots, \mathsf{S}_{t-1}\right)$.
\item $\left(H, X_{t+1}, S_{t}, Y_t\right)
\stackrel{W}{\rightarrow}\left(\mathsf{H},
\mathsf{X}_{1}, \ldots, \mathsf{X}_{t+1}, 
\mathsf{S}_{1}, \ldots, \mathsf{S}_{t}, \mathsf{Y}_{1}, \ldots,
\mathsf{Y}_{t}\right)$.
\end{enumerate}

We take as base case $t=0$, where the first two statements are vacuous,
and the third statement requires $(H,x^1) \toW
(\Hs,\Xs_1)$ almost surely as $n \to \infty$.
Recall that $x^1=F(p^0,\st)$, and that
$F(p,\beta)$ is Lipschitz by Proposition \ref{prop:denoiserprop}.
Then this third statement follows from
Propositions \ref{prop:AMPparamconverge} and \ref{prop:contW}.

Supposing that these statements hold for some $t \geq 0$,
we now show that they hold for $t+1$. To show the first statement $\Delta_{t+1}
\succ 0$, note that for $t=0$ this follows from $\Delta_1=\delta_*>0$ by Remark
\ref{remark:non0param}. For $t \geq 1$,
given that $\Delta_{t}\succ0$, $\Delta_{t+1}$ is singular if and only if there exist constants $\alpha_{1}, \ldots, \alpha_{t} \in \mathbb{R}$ such that
$$
\Xs_{t+1}=F\left(\mathsf{Y}_{t}+\mathsf{E}, \Xstar\right)=\sum_{r=1}^t \alpha_{r} \mathsf{X}_{r}
$$
with probability 1. From the induction hypothesis,
$\mathsf{Y}_{t}=\sum_{k=1}^{t-1} \mathsf{Y}_{k}\left(\Delta_{r}^{-1}
\delta_{r}\right)_{k}+\mathsf{U}_t$ where $\mathsf{U}_t$ is independent of
$\mathsf{H},\mathsf{Y}_{1}, \ldots, \mathsf{Y}_{t-1}$ and hence also of
$\Es,\Xstar,\mathsf{X}_1,...,\mathsf{X}_t$. We now show that for any realized values
$(e_{0}, x_{0}, w_{0})$ of $$\left(\mathsf{E}+\sum_{k=1}^{t-1}
\mathsf{Y}_{k}\left(\Delta_{r}^{-1} \delta_{r}\right)_{k}, \quad \Xstar, \quad
\sum_{r=1}^t \alpha_{r} \mathsf{X}_{r}\right),$$ we have that
$\mathbb{P}\left(F\left(\mathsf{U}_t+e_{0}, x_{0}\right) \neq w_{0}\right)>0$. This would imply that $\Delta_{t+1}\succ 0$. From \Cref{prop:denoiserprop},
$f'(y,\gamma_*)$ is non-constant, so there exists $y \in \R$ such that
$f^{\prime}(y,\gamma_*) \neq \gamma_*/\eta_*$.
This implies that there exists some $u_{0} \in \mathbb{R}$ such that
$$F^{\prime}\left(u_{0}+e_{0}, x_{0}\right)=\frac{\eta_{*}}{\eta_{*}-\gamma_{*}}
f^{\prime}\left(u_{0}+e_{0}+x_{0},\gamma_*\right)-\frac{\gamma_{*}}{\eta_{*}-\gamma_{*}} \neq 0.$$
Then by the inverse function theorem, $F(u+e_0,x_0)=w_0$ has at most one
solution for $u$ in an open neighborhood of $u_0$. Since $\Us_t$ is Gaussian
with strictly positive variance,
this shows $\mathbb{P}\left(F\left(\mathsf{U}_t+e_{0}, x_{0}\right)
\neq w_{0}\right)>0$ as desired.
We thus have proved the first inductive statement that $\Delta_{t+1}\succ 0$.

 To study the empirical limit of $s_{t+1}$, let $U=\left(e_{b}, S_{t},
\Lambda S_{t}\right)$ and $V=\left(e, X_{t}, Y_{t}\right)$.
(For $t=0$, this is simply $U=e_b$ and $V=e$.) By the induction
hypothesis, the independence of $(\Ss_1,\ldots,\Ss_t)$ with $(\Es_b,\Ls)$,
and the identities $\E \Es_b^2=b_*$ and $\E \Ls=0$ and $\E \Ls^2=\kappa_*$,
almost surely as $n \to \infty$,
\begin{equation*}
	\frac{1}{n}\left(e_{b}, S_{t}, \Lambda S_{t}\right)^{\top}\left(e_{b}, S_{t}, \Lambda S_{t}\right) \rightarrow\left(\begin{array}{ccc}
		b_{*} & 0 & 0 \\
		0 & \Delta_{t} & 0 \\
		0 & 0 & \kappa_{*} \Delta_{t}
	\end{array}\right)\succ0
\end{equation*}
So almost surely for sufficiently large $n$, conditional on
$(H,X_{t+1},S_t,Y_t)$, the law of $s^{t+1}$ is given by its law conditioned on
$U=OV$, which is (see \cite[Lemma B.2]{fan2022tap})
\begin{equation}\label{eq:SEconv3}
	s^{t+1}\big|_{U=OV}=Ox^{t+1}\big|_{U=OV} \stackrel{L}{=}U\left(U^{\top}
U\right)^{-1} V^{\top} x^{t+1}+ \Pi_{U^{\perp}} \tilde{O} \Pi_{V^{\perp}}^{\top} x^{t+1}
\end{equation}
where $\tilde{O} \sim \Haar(\mathbb{SO}(n-(2 t+1)))$ and $\Pi_{U^{\perp}},
\Pi_{V^{\perp}} \in \mathbb{R}^{n \times(n-(2 t+1))}$ are matrices with
orthonormal columns spanning the orthogonal complements of the column spans of
$U,V$ respectively. We may replace $s^{t+1}$ by the right side of
\eqref{eq:SEconv3} without affecting the joint law of $\left(H, X_{t+1}, S_t,
Y_{t},s^{t+1}\right)$.

For $t=0$, we have $\E \Xs_1\Es=0$ since $\Xs_1$ is independent of $\Es$.
For $t \geq 1$, by the definition of $\Xs_{t+1}$, the
condition $\E F'(\Ps,\Xstar)=0$ from
(\ref{eq:denoiserpp2}), and Stein's lemma, we have $\E \Xs_{t+1}\Es=0$
and $\E \Xs_{t+1}\Ys_r=0$ for each $r=1,\ldots,t$. Then
by the induction hypothesis, almost surely as $n \rightarrow \infty$,
$$
\left(n^{-1}U^{\top} U\right)^{-1} \rightarrow\left(\begin{array}{ccc}
	b_{*} & 0 & 0 \\
	0 & \Delta_{t} & 0 \\
	0 & 0 & \kappa_{*} \Delta_{t}
\end{array}\right)^{-1}, \quad n^{-1}V^{\top} x_{t+1} \rightarrow\left(\begin{array}{c}
	0 \\
	\delta_{t} \\
	0
\end{array}\right).
$$
Then by (\ref{eq:SEconv3}) and
Propositions \ref{prop:combW} and \ref{prop:orthoW}, it follows that
$$
\left(H, X_{t+1}, S_{t}, Y_{t}, s^{t+1}\right) \stackrel{W}{\rightarrow}\left(\mathsf{H}, \mathsf{X}_{1}, \ldots, \mathsf{X}_{t+1}, \mathsf{S}_{1}, \ldots, \mathsf{S}_{t}, \mathsf{Y}_{1}, \ldots \mathsf{Y}_{t}, \sum_{r=1}^{t} \mathsf{S}_{r}\left(\Delta_{t}^{-1} \delta_{t}\right)_{r}+\mathsf{U}^{\prime}_{t+1}\right)
$$
where $\Us^{\prime}_{t+1}$ is the Gaussian limit of the second term on the right
side of (\ref{eq:SEconv3}) and is independent of $\mathsf{H},
\mathsf{X}_{1}, \ldots, \mathsf{X}_{t+1}, \mathsf{S}_{1}, \ldots,
\mathsf{S}_{t}, \mathsf{Y}_{1}, \ldots \mathsf{Y}_{t}$.
We can thus set $\Ss_{t+1}:=\sum_{r=1}^{t} \Ss_{r}\left(\Delta_{t}^{-1}
\delta_{t}\right)_{r}+\Us^{\prime}_{t+1}$. Then $(\Ss_1,\ldots,\Ss_{t+1})$ is
multivariate Gaussian and remains independent of $\Hs$ and
$(\Ys_1,\ldots,\Ys_t)$. Since $n^{-1}\|s^{t+1}\|^{2}=n^{-1}\|x^{t+1}\|^{2}
\rightarrow \delta_*$ almost surely as $n \rightarrow \infty$ by the induction
hypothesis, we have $\E\mathsf{S}_{t+1}^2=\delta_*$.
From the form of $\Ss_{t+1}$, we may check also
$\E\Ss_{t+1}(\Ss_1,\ldots,\Ss_t)=\delta_t$, so 
$(\Ss_1,\ldots,\Ss_{t+1})$ has covariance $\Delta_{t+1}$ as desired.
Furthermore $\sum_{r=1}^{t} \mathsf{~S}_{r}\left(\Delta_{t}^{-1}
\delta_{t}\right)_{r} \sim N\left(0, \delta_{t}^{\top} \Delta_{t}^{-1}
\delta_{t}\right)$. From $\Delta_{t+1} \succ 0$ 
and the Schur complement formula, $\delta_*-\delta_{t}^{\top} \Delta_{t}^{-1}
\delta_{t}>0$ strictly.
Then $\Us^{\prime}_{t+1}$ has strictly positive variance, since the variance of $\sum_{r=1}^{t} \mathsf{S}_{r}\left(\Delta_{t}^{-1} \delta_{t}\right)_{r}$ is less than the variance of $\Ss_{t+1}$. 
This proves the second equation in \eqref{eq:extraSE} for $t+1$.

Now, we study the empirical limit of $y^{t+1}$. Let $U=\left(e, X_{t+1},
Y_{t}\right)$, $V=\left(e_{b}, S_{t+1}, \Lambda S_{t}\right)$. Similarly
by the induction hypothesis and the empirical convergence of $(H,S_{t+1})$
already shown, almost surely as $n \rightarrow \infty$,
$$
\frac{1}{n}\left(e_b, S_{t+1}, \Lambda S_{t}\right)^{\top}\left(e_b, S_{t+1},
\Lambda S_{t}\right) \rightarrow\left(\begin{array}{ccc}
	b_{*} & 0 & 0 \\
	0 & \Delta_{t+1} & 0 \\
	0 & 0 & \kappa_{*} \Delta_{t}
\end{array}\right)\succ 0.
$$
Then the law of $y^{t+1}$ conditional on $(H,X_{t+1},S_{t+1},Y_t)$ is given by
its law conditioned on $U=O^\top V$, which is
\begin{equation}\label{eq:SEconv4}
	y^{t+1}\big|_{U=O^\top V}=O^{\top}\Lambda s^{t+1}\big|_{U=O^{\top} V}  \stackrel{L}{=}
U\left(V^{\top} V\right)^{-1} V^{\top} \Lambda s^{t+1}+\Pi_{U^{\perp}} \tilde{O}
\Pi_{V^{\perp}}^{\top} \Lambda s^{t+1}
\end{equation}
where $\tilde{O} \sim \Haar(\mathbb{SO}(n-(2 t+2)))$. From the convergence of
$(H,S_{t+1})$ already shown, almost surely as $n \rightarrow \infty$,
$$
\left(n^{-1} V^{\top} V\right)^{-1} \rightarrow\left(\begin{array}{ccc}
	b_{*} & 0 & 0 \\
	0 & \Delta_{t+1} & 0 \\
	0 & 0 & \kappa_{*} \Delta_{t}
\end{array}\right)^{-1}, \quad n^{-1} V^{\top} \Lambda s_{t+1} \rightarrow\left(\begin{array}{c}
	0 \\
	0 \\
	\kappa_{*} \delta_{t}
\end{array}\right).
$$
Then by (\ref{eq:SEconv4}) and Propositions \ref{prop:combW} and
\ref{prop:orthoW},
$$
\left(H, X_{t+1}, S_{t+1}, Y_{t}, y^{t+1}\right) \stackrel{W}{\rightarrow}\left(\mathsf{H}, \mathsf{X}_{1}, \ldots, \mathsf{X}_{t+1}, \mathsf{~S}_{1}, \ldots, \mathsf{S}_{t+1}, \mathsf{Y}_{1}, \ldots \mathsf{Y}_t, \sum_{r=1}^{t} \mathsf{Y}_{r}\left(\Delta_{t}^{-1} \delta_{t}\right)_{r}+\mathsf{U}_{t+1}\right)
$$
where $\Us_{t+1}$ is the limit of the second term on the right side of
(\ref{eq:SEconv4}), which is
Gaussian and independent of $ \mathsf{H},\Ss_{1}, \ldots,
\Ss_{t+1}, \Ys_{1}, \ldots \Ys_{t}$. Setting
$\Ys_{t+1}:=\sum_{r=1}^{t} \Ys_{r}\left(\Delta_{t}^{-1}
\delta_{t}\right)_{r}+\Us_{t+1}$, it follows that $(\Ys_1,\ldots,\Ys_{t+1})$
remains independent of $\Hs$ and $(\Ss_1,\ldots,\Ss_{t+1})$.
We may check that $\E \Ys_{t+1}(\Ys_1,\ldots,\Ys_t)=\kappa_* \delta_t$, and
we have also $n^{-1}\|y^{t+1}\|^{2}=n^{-1}\|\Lambda
s^{t+1}\|^{2} \rightarrow \kappa_{*}\delta_*$ so $\E\mathsf{Y}_{t+1}^2
=\kappa_{*}\delta_*$. From $\Delta_{t+1} \succ 0$ and the Schur complement
formula, note that $\sum_{r=1}^{t}
\mathsf{Y}_{r}\left(\Delta_{t}^{-1} \delta_{t}\right)_{r}$ has variance
$\kappa_{*} \delta_{t}^{\top} \Delta_{t}^{-1} \delta_{t}$ which is strictly
smaller than $\kappa_* \delta_*$, so $\Us_{t+1}$ has strictly positive variance.
This proves the first equation in \eqref{eq:extraSE} for $t+1$, and completes
the proof of this second inductive statement.

Finally, recall $x^{t+2}=F\left(y^{t+1}+e, \st\right)$ where $F$ is Lipschitz.
Then by Proposition \ref{prop:contW}, almost surely
$$
\left(H, X_{t+2}, S_{t+1}, Y_{t+1}\right)
\stackrel{W}{\rightarrow}\left(\mathsf{H}, \mathsf{X}_{1}, \ldots,
\mathsf{X}_{t+2}, \mathsf{~S}_{1}, \ldots, \mathsf{S}_{t+1}, \mathsf{Y}_{1},
\ldots, \mathsf{Y}_{t+1}\right)
$$
where $\mathsf{X}_{t+2}=F\left(\mathsf{Y}_{t+1}+\mathsf{E}, \Xstar\right)$,
showing the third inductive statement and completing the induction.
\end{proof}

\begin{proof}[Proof of \Cref{cor:ampSEcor1}]
This follows from the empirical Wasserstein convergence of $(e,X_t,Y_t)$
guaranteed by \Cref{thm:ampSE}.
The statements $n^{-1}X^\top e \to 0$ and $n^{-1} X^\top Y \to 0$ follow from
the identity $\E F'(\Ps,\Xstar)=0$ in (\ref{eq:denoiserpp2}) and Stein's
lemma, and the remaining statements follow directly from the independence of
$(\mathsf{Y}_1,\ldots,\mathsf{Y}_t)$ with $\mathsf{E}$ and from their specified
Gaussian laws.
\end{proof}

\begin{proof}[Proof of \Cref{cor:ampSEcor2}]
	This follows directly from \Cref{thm:ampSE}, the independence of
$(\Ss_1,\ldots,\Ss_t)$ and $\D$, and our definition of empirical Wasserstein
convergence.
\end{proof}

\section{Properties in high temperature}
We show uniqueness of the fixed point to (\ref{eq:fix})
and convergence of the stationary VAMP state evolution, assuming
the high temperature condition of Assumption \ref{AssumpHighTemp}.

\subsection{Fixed-point equation}\label{append:FPE}


\begin{proof}[Proof of \Cref{prop:uniquefix}]
Provided that the fixed point $(\eta_*^{-1},\gamma_*)$ is unique,
the statements $\eta_*^{-1} \leq \rho_*$ and $\eta_*-\gamma_* \geq \rho_*^{-1}$
were shown in Proposition \ref{prop:deltabound}.

To show uniqueness of this fixed point,
by the law of total variance, $\mmse(\gamma) \leq \Vpi \leq \pib$ for any
$\gamma>0$. Then
for all $\e<1/(2\pib)$, we have $\Vpi<G(-d_-)$ from Lemma \ref{lem:cauchy}(a),
and $-R(\eta^{-1})>0$ for all $\eta^{-1} \in (0,\rho_*]$
from Lemma \ref{lem:cauchy}(b). Extending $-R$ by continuity to
$-R(0)=-\E[-\D^2]=d_*>0$ via (\ref{eq:Rseries}), this shows that
\begin{equation}\label{eq:fixchangevar}
h(\eta^{-1}):=\mmse({-}R(\eta^{-1}))
\end{equation}
is a well-defined continuous map from $[0,\rho_*]$ to itself. Applying Stein's
lemma, the derivative of $\gamma \mapsto \mmse(\gamma)$ may be computed to be
\begin{equation}\label{eq:mmseprime}
\mmse'(\gamma)=-\E[\V[\Xstar \mid \Ys]^2],
\end{equation}
see e.g.\ \cite[Theorem 2]{payaro2009hessian} whose scalar specialization 
($m=1$ and $G=\sqrt{\gamma}$) yields
$\frac{d}{d\sqrt{\gamma}}\mmse(\gamma)=-2\sqrt{\gamma}\,\E[\V[\Xstar \mid
\Ys]^2]$, and (\ref{eq:mmseprime}) then follows from the chain rule.
Then the map (\ref{eq:fixchangevar}) has derivative
\begin{equation}\label{eq:hprime}
h'(\eta^{-1})=\E[\V[\Xstar \mid \Ys]^2] \cdot R'(\eta^{-1})
\end{equation}
where $\E,\V$ are with respect to the scalar channel
(\ref{eq:scalarBayes}) with inverse-variance $\gamma=-R(\eta^{-1})$. By the
condition (\ref{eq:poincare}) of Assumption \ref{AssumpPrior}, for any such
channel, $\E[\V[\Xstar \mid \Ys]^2] \leq C$ where $C>0$ depends only
on $\pib$. By Lemma \ref{lem:cauchy}(b--c),
$R'(\eta^{-1})>0$ and $R'(\eta^{-1})<O(\e^2)$ for all $\eta^{-1} \in (0,\Vpi)$.
Then for $\e<\e_0$ small enough,
we have $\sup_{\eta^{-1} \in (0,\Vpi)} h'(\eta^{-1}) \in (0,1)$ strictly.
Then $h(\cdot)$ defines a contractive map on $[0,\rho_*]$, so it has a unique
fixed point $\eta_*^{-1} \in [0,\rho_*]$ by the Banach fixed-point
theorem. We have $h(0)=\mmse(d_*)>0$ strictly, because $\pi$ has strictly
positive variance. Thus there is a unique fixed point
$(\eta_*^{-1},\gamma_*)$ to \eqref{eq:fix}
where $\eta_*^{-1} \in (0,\rho_*] \subset (0,G(-d_-))$ and
$\gamma_*=-R(\eta_*^{-1})>0$.
\end{proof}

\subsection{Explicit high temperature condition}\label{append:EHTC}
\begin{proof}[Proof of \Cref{remark:scalinginv}] 
	Suppose Assumptions \ref{AssumpD}--\ref{AssumpHighTemp} hold for the linear model in \eqref{eq:linearmodel}. Consider the rescaled problem: $y=A_{\rc}\st_{\rc}+\epsilon$ where $A_{\rc}:=\sqrt{\pib} A$ and $\st_{\rc}:=\frac{1}{\sqrt{\pib}} \st$. Note that $A_\rc=Q^\top D_\rc O$ with $D_\rc:=\sqrt{\pib} D$ and $D_\rc^\top 1_{m\times 1} \stackrel{W}{\to} \D_\rc$ with $\D_\rc := \sqrt{\pib} \D$.
	
	We then have that (i) Assumptions \ref{AssumpD} and \ref{AssumpA}  hold for the rescaled problem with $D_\rc$ and $\D_\rc$ in place of $D$ and $\D$; (ii) Assumption \ref{AssumpHighTemp} holds for the rescaled problem: $\mathrm{supp} (\D^2_{\rc}) \subseteq [d_{*,\rc}-\e_\rc, d_{*,\rc}+\e_\rc]$ with $d_{*,\rc}:=\pib d_*$ and $\e_\rc:=\pib \e$; (iii) Assumption \ref{AssumpPrior} holds for the rescaled problem with parameter $\pib_\rc=1$ in place of $\pib$. This follows from   \Cref{prop:priorconditions} and that the rescaled prior $\Xstar_\rc:=\frac{1}{\sqrt{\pib}}\Xstar$ either has bounded support contained in $[-1,1]$ or admits a density function $\sqrt{\pib} \exp{-g(\sqrt{\pib} x)}$ with $\frac{d^2}{dx^2} g(\sqrt{\pib} x)\ge 1$. 
	
	Apply \Cref{thm:maintheorem}--\Cref{thm:TAP} to the rescaled problem. Along with (ii), (iii) above, we obtain that there exists an absolute constant $\mathfrak{a}:=\e_0(\pib_\rc=1)>0$ such that if $\e_\rc := \pib \e\le \mathfrak{a}$, \eqref{eq:singleletter}, \eqref{eq:mmse} and \eqref{eq:TAPeq} hold for the rescaled problem. It is straightforward to show that \eqref{eq:singleletter}, \eqref{eq:mmse} and \eqref{eq:TAPeq} hold for the original problem if and only if they hold for the rescaled problem. The proof is now finished. 
\end{proof}

\subsection{Scalar parameters}

Let us record here the leading-order behaviors of several quantities related to
the scalar parameters of (\ref{eq:scalarparams}) and (\ref{eq:auxparams})
in the small parameter $\e$ of Assumption \ref{AssumpHighTemp}.
\begin{Proposition}\label{prop:betalimit}
Suppose Assumptions \ref{AssumpD}, \ref{AssumpPrior}, and \ref{AssumpHighTemp}
hold. Let $\kappa_2=\V(\D^2)$ and let $R(z)$ be the R-transform of $-\D^2$. 
For some constant $\e_0=\e_0(\pib)>0$, if $\e<\e_0$, then for any constant $c>0$
and all $z\in (0,c)$,
	\begin{equation}\label{eq:Rexpandlim}
	    {R}(z)=-d_*+\kappa_2 z(1+z\cdot O(\e)), \quad
{R}^{\prime}(z)=\kappa_2+O(\e^3), \quad {R}^{\prime
\prime}(z)=O(\e^{3}).
	\end{equation}
Furthermore
	\begin{equation}
		\begin{aligned}
			&\kappa_2 \leq \min(d_*\e,\e^2), \quad
\gamma_{*}=d_*-\kappa_{2} \eta_{*}^{-1}\left(1+\eta_{*}^{-1} \cdot O(\e)\right), \quad
			\kappa_*=\left(\frac{\eta_*-\gamma_*}{\eta_*}\right)^2
\frac{\kappa_2}{d_*^2} \qty(1+\eta_*^{-1} \cdot O(\e)),\\
&\alpha_{*}^{A}=\frac{\eta_{*}}{\sqrt{\kappa_{2}}}\left(1+\eta_{*}^{-1}\cdot
O(\e)\right), \quad \alpha_{*}^{B}=-\eta_{*}+O(\e),
\quad b_*=\frac{1}{d_*}\left(1+\eta_*^{-1} \cdot
O\left(\frac{\kappa_2}{d_*}\right)\right),\\
			& \frac{e_*}{b_*}=d_*+O(\e), \quad
\frac{a_*}{\sqrt{\kappa_*}}=O(\e), \quad \frac{c_*}{\kappa_*}=d_*+O(\e). 
		\end{aligned}
	\end{equation}
\end{Proposition}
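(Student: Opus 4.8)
The plan is to reduce everything to two inputs: the free‑cumulant (power‑series) expansion of the R‑transform of $-\D^2$ recorded in \Cref{lem:cauchy} (equation \eqref{eq:Rseries}), and a ``delta‑method'' (geometric‑series) expansion of the Cauchy‑transform‑type integrals $\E[(\D^2+s)^{-k}]$ and $\V[(\D^2+s)^{-1}]$ around the point $\D^2=d_*$, where throughout I write $s:=\eta_*-\gamma_*$. Recall from \Cref{prop:uniquefix} that $\eta_*^{-1}\le\Vpi\le\pib$ and $s=\eta_*-\gamma_*\ge\Vpi^{-1}\ge\pib^{-1}$, so $d_*+s\ge s$ is bounded below by a constant depending only on $\pib$ while $\eta_*^{-1}$ is bounded above; this is exactly what keeps every error constant $\pib$‑dependent and independent of $d_*,\e$.

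First I would establish \eqref{eq:Rexpandlim}. Writing $R(z)=\sum_{n\ge1}\kappa_n z^{n-1}$ for the free cumulants $\kappa_n$ of $-\D^2$ (convergent near $0$ by \Cref{lem:cauchy}), one has $\kappa_1=\E[-\D^2]=-d_*$ and $\kappa_2=\V(\D^2)$ (the quantity in the statement), while for $n\ge2$ the translation‑invariance of the higher free cumulants expresses $\kappa_n$, via the moment–cumulant relation over non‑crossing partitions, as a universal polynomial in the moments $m_j:=\E[(\D^2-d_*)^j]$ with $2\le j\le n$ (the $j=1$ moment vanishes); since $|\D^2-d_*|\le\e$ gives $|m_j|\le\e^{j-2}\kappa_2$, every surviving monomial is $O(\e^{n-2}\kappa_2)$, hence $|\kappa_n|\le C_n\e^{n-2}\kappa_2$. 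Summing the resulting geometric‑type series over $z$ in a fixed bounded interval and $\e$ small yields $R(z)=-d_*+\kappa_2 z(1+zO(\e))$, $R'(z)=\kappa_2+O(\e^3)$, $R''(z)=O(\e^3)$. The bound $\kappa_2\le\min(d_*\e,\e^2)$ is the Bhatia–Davis inequality applied to $\D^2$ with the two enclosing intervals $[0,d_*+\e]$ and $[d_*-\e,d_*+\e]$, both having mean $d_*$.

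Next I would record the delta‑method expansions. Writing $(\D^2+s)^{-k}=(d_*+s)^{-k}(1+(\D^2-d_*)/(d_*+s))^{-k}$ and expanding the geometric series (legitimate since $|(\D^2-d_*)/(d_*+s)|\le\e\pib<1/2$ for small $\e$), termwise integration using $|m_j|\le\e^{j-2}\kappa_2$ gives, with relative errors of order $\e/(d_*+s)=O(\e\eta_*^{-1})$,
\[
G(s)=\E[(\D^2+s)^{-1}]=\tfrac{1}{d_*+s}\bigl(1+O(\e\eta_*^{-1})\bigr),\qquad
\V[(\D^2+s)^{-1}]=\tfrac{\kappa_2}{(d_*+s)^4}\bigl(1+O(\e\eta_*^{-1})\bigr).
\]
Since $\eta_*^{-1}=G(s)$ (equation \eqref{eq:dA}), this yields $\eta_*=(d_*+s)(1+O(\e\eta_*^{-1}))$ and hence $(d_*+s)^p=\eta_*^p(1+O(\e\eta_*^{-1}))$ for any fixed power $p$. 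Plugging $z=\eta_*^{-1}$ into \eqref{eq:Rexpandlim} gives $\gamma_*=-R(\eta_*^{-1})=d_*-\kappa_2\eta_*^{-1}(1+\eta_*^{-1}O(\e))$; since $\kappa_2\le d_*\e$ this also gives $\gamma_*=d_*(1+\eta_*^{-1}O(\e))$ (in particular $\gamma_*\ge d_*/2$, justifying division by $\gamma_*$) and $1/\gamma_*=\tfrac{1}{d_*}+\tfrac{\kappa_2\eta_*^{-1}}{d_*^2}(1+O(\e))$.

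The remaining parameters are then assembled mechanically from their definitions in \eqref{eq:scalarparams}--\eqref{eq:auxparams} together with the moment identities of \Cref{lemma:dABC}. Using $\kappa_*=\E\Ls^2=\tfrac{s^2\eta_*^2}{\gamma_*^2}\V[(\D^2+s)^{-1}]$ (from \eqref{eq:dA} and \eqref{eq:dC}) and the above expansions gives $\kappa_*=(\tfrac{s}{\eta_*})^2\tfrac{\kappa_2}{d_*^2}(1+\eta_*^{-1}O(\e))$, and in particular $\kappa_*=O(\e^2)$ and $\kappa_*d_*\le\tfrac{4\kappa_2}{d_*}(1+O(\e))=O(\e)$. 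Substituting into $\alpha_*^A=\tfrac{s}{\gamma_*\sqrt{\kappa_*}}$ cancels the $s$ and one factor $d_*$ from $\gamma_*$, leaving $\alpha_*^A=\tfrac{\eta_*}{\sqrt{\kappa_2}}(1+\eta_*^{-1}O(\e))$; then $\alpha_*^B=(\alpha_*^A)^2(\gamma_*-d_*)=\tfrac{\eta_*^2}{\kappa_2}(1+\eta_*^{-1}O(\e))\cdot(-\kappa_2\eta_*^{-1})(1+\eta_*^{-1}O(\e))=-\eta_*+O(\e)$. For $b_*=\tfrac{1}{\gamma_*}-\tfrac{\kappa_*}{s}$, the two $1/d_*$‑scale contributions combine into $\tfrac{\kappa_2\eta_*^{-1}}{d_*^2}\cdot\tfrac{\gamma_*}{\eta_*}(1+O(\e))$ (using $1-s/\eta_*=\gamma_*/\eta_*$), so $b_*=\tfrac{1}{d_*}(1+\eta_*^{-1}O(\kappa_2/d_*))=\tfrac{1}{d_*}(1+O(\e))$, whence $\tfrac{e_*}{b_*}=(1+\kappa_*)/b_*=d_*+O(\e)$. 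Finally $\tfrac{a_*}{\sqrt{\kappa_*}}=\tfrac{\gamma_*-d_*}{\eta_*\sqrt{\V[(\D^2+s)^{-1}]}}=-\sqrt{\kappa_2}(1+\eta_*^{-1}O(\e))=O(\e)$, and $\tfrac{c_*}{\kappa_*}=-s+\tfrac{d_*-\gamma_*}{\eta_*^2\V[(\D^2+s)^{-1}]}=-s+(d_*+s)(1+\eta_*^{-1}O(\e))=d_*+O(\e)$. The only substantive work, and the main obstacle, is the error bookkeeping: in each step one must check that the correction terms carry an extra factor $(d_*+s)^{-1}\approx\eta_*^{-1}$ relative to the leading term (so that, e.g., $\alpha_*^B=-\eta_*+O(\e)$ rather than $-\eta_*+\eta_*O(\e)$), which hinges on the lower bound $d_*+s\ge\pib^{-1}$ together with the elementary estimate $|\E[(\D^2-d_*)^j]|\le\e^{j-2}\kappa_2$ for the centered moments.
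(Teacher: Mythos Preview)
Your proposal is correct and follows essentially the same approach as the paper: both derive \eqref{eq:Rexpandlim} from the free-cumulant series of \Cref{lem:cauchy}(c) together with the centered-moment bounds $|m_j|\le\e^{j-2}\kappa_2$, obtain $\gamma_*$ from the fixed-point relation $\gamma_*=-R(\eta_*^{-1})$, and then expand $(\D^2+\eta_*-\gamma_*)^{-k}$ by a geometric series to read off the remaining parameters mechanically from their definitions. The only cosmetic difference is that the paper centers this geometric series at $\gamma_*$ (writing $\eta_*/(\D^2+\eta_*-\gamma_*)=(1-(\gamma_*-\D^2)/\eta_*)^{-1}$) while you center at $d_*$ and pass through $\V[(\D^2+s)^{-1}]$; since $\gamma_*=d_*+O(\kappa_2\eta_*^{-1})$ these are equivalent and your error-tracking of the $\eta_*^{-1}$ factors is the same as the paper's.
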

\begin{proof}
\eqref{eq:Rexpandlim} and the bounds for $\kappa_2$ follow from Lemma
\ref{lem:cauchy}(c) and $\E[-\D^2]=-d_*$. We will use implicitly the bound
$\eta_*^{-1} \leq \Vpi \leq \pib$ from Proposition \ref{prop:uniquefix},
and hence $\eta_*^{-1}=O(1)$, throughout the proof.

Applying \eqref{eq:Rexpandlim} to the fixed point equation $\gamma_*=-R(\eta_*^{-1})$
in \eqref{eq:fix}, we have
\begin{equation}\label{eq:gammabound}
\gamma_{*}=d_*-\kappa_{2} \eta_{*}^{-1}\left(1+\eta_{*}^{-1} \cdot O(\e)\right).
\end{equation}
For the remaining bounds, let us first show
\begin{equation}\label{eq:Gsquaredbound}
\mathbb{E}\frac{\eta_{*}^2}{(\D^2+\eta_{*}-\gamma_{*})^{2}}
=1+\kappa_2\eta_*^{-2}(1+\eta_{*}^{-1}\cdot O(\e)).
\end{equation}
Note that $|\D^2-d_*| \leq \e$, $\kappa_2 \leq \e$, and
(\ref{eq:gammabound}) together imply $|\D^2-\gamma_*|=O(\e)$. Then for all $\e
\leq \e_0(\pib)$ sufficiently small,
\begin{equation}\label{eq:hee3}
\frac{\eta_{*}^{2}}{(\D^2+\eta_{*}-\gamma_{*})^{2}}=\left(\frac{1}{1-\frac{\gamma_{*}-\D^2}{\eta_{*}}}\right)^{2}=1+\sum_{k=1}^{\infty}
(k+1)\left(\frac{\gamma_{*}-\D^2}{\eta_{*}}\right)^k
\end{equation}
which is an absolutely convergent series. Let $\mu_j=\E[(d_*-\D^2)^j]$
be the $j^\text{th}$ central moment of $-\D^2$ (where $\mu_0=1$ and $\mu_1=0$),
which has the bound $|\mu_j| \leq \e^{j-2}\kappa_2$ for all $j \geq 2$
by Lemma \ref{lem:cauchy}(c). Applying this bound and
$|\gamma_*-d_*|=O(\kappa_2)=O(\e)$, for all $k \geq 3$ and a constant
$C=C(\pib)>0$ we have
\[\left|\mathbb{E}\left(\frac{\gamma_{*}-\mathsf{D}^{2}}{\eta_*}\right)^{k}\right|=\left|\sum_{j=0}^k
\binom{k}{j}\eta_*^{-k}(\gamma_*-d_*)^{k-j}\mu_j\right|
\leq 2^k\eta_*^{-k}(C\e)^{k-2}\kappa_2.\]
Then the summation of the terms of (\ref{eq:hee3}) for $k \geq 3$
is bounded by $\kappa_2 \eta_*^{-3} \cdot O(\e)$
for all $\e \leq \e_0(\pib)$ sufficiently small. For $k=1$ and $k=2$, we have
\begin{align*}
\mathbb{E}\left(\frac{\gamma_{*}-\mathsf{D}^{2}}{\eta_*}\right)&=\frac{\gamma_*-d_*}{\eta_*}=-\kappa_{2}\eta_{*}^{-2}+\kappa_{2}\eta_{*}^{-3}  \cdot O(\e) \\
\mathbb{E}\left(\frac{\gamma_{*}-\mathsf{D}^{2}}{\eta_*}\right)^{2}&=\frac{(\gamma_*-d_*)^2+\kappa_2}{\eta_*^2}=\kappa_{2}\eta_*^{-2}+\kappa_{2}\eta_{*}^{-4} \cdot O(\e)
\end{align*}
Applying these to (\ref{eq:hee3}) gives (\ref{eq:Gsquaredbound}) as claimed.

Now applying (\ref{eq:Gsquaredbound}) to the
definition of $\kappa_*$ in (\ref{eq:scalarparams}),
\begin{equation}\label{eq:fdr3}
\kappa_{*}=\left(\frac{\eta_{*}-\gamma_{*}}{\gamma_{*}}\right)^{2}
\frac{\kappa_{2}}{\eta_{*}^{2}}\left(1+\eta_*^{-1} \cdot O(\e)\right)
=\left(\frac{\eta_{*}-\gamma_{*}}{\eta_{*}}\right)^{2}
\frac{\kappa_2}{d_*^2}\left(1+\eta_{*}^{-1} \cdot O(\e)\right),
\end{equation}
the second equality applying $\gamma_*=d_*(1+O(\e))$ as implied by
(\ref{eq:gammabound}) and $\kappa_2=O(d_*\e)$.
Applying the first equality of (\ref{eq:fdr3}) to the definition of
$\alpha_*^A$ in (\ref{eq:auxparams}), and then applying this and
(\ref{eq:gammabound}) to the definition of $\alpha_*^B$,
\[\alpha_{*}^{A}=\frac{\eta_{*}}{\sqrt{\kappa_{2}}}(1+\eta_*^{-1} \cdot O(\e)),
\qquad \alpha_{*}^{B}=\frac{\eta_*^2}{\kappa_2}
\cdot (\gamma_*-d_*) \cdot (1+\eta_*^{-1}O(\e))=-\eta_{*}+O(\e).\]
Inverting (\ref{eq:gammabound}) and applying $\kappa_2=d_*\e$, we have
$\gamma_*^{-1}=d_*^{-1}(1+\eta_*^{-1}O(\kappa_2/d_*))$. Then applying
(\ref{eq:fdr3}) and $(\eta_*-\gamma_*)/\eta_* \in (0,1)$ where this lower bound
of 0 follows from Proposition \ref{prop:uniquefix}, we obtain
from the definition of $b_*$ in (\ref{eq:scalarparams}) that
\[b_{*}=\frac{1}{\gamma_{*}}-\frac{\kappa_{*}}{\eta_{*}-\gamma_{*}}
=\frac{1}{d_*}\left(1+\eta_*^{-1} \cdot O\left(\frac{\kappa_2}{d_*}\right)
\right)+\frac{1}{d_*\eta_*} \cdot O\left(\frac{\kappa_2}{d_*}\right)
=\frac{1}{d_*}\left(1+\eta_*^{-1} \cdot O\left(\frac{\kappa_2}{d_*}\right)
\right).\]
Applying $\kappa_2/d_*=O(\e)$, we have from (\ref{eq:fdr3}) that
$\kappa_*=d_*^{-1}O(\e)$. Then
\[\frac{e_*}{b_*}=\frac{1+\kappa_{*}}{b_{*}}=
d_*(1+O(\kappa_2/d_*))(1+d_*^{-1}O(\e))=d_*+O(\e).\]
Applying $d_*\gamma_*^{-1}=1+\eta_*^{-1}O(\kappa_2/d_*)$ and (\ref{eq:fdr3}),
we have from the definition of $a_*$ in (\ref{eq:auxparams}) that
\[\frac{a_*}{\sqrt{\kappa_*}}
=\frac{(\eta_*-\gamma_*)(1-d_*\gamma_*^{-1})}{\sqrt{\kappa_*}}
=O\left(\frac{(\eta_*-\gamma_*)/(d_*\eta_*) \cdot
\kappa_2}{\sqrt{\kappa_*}}\right)=O(\sqrt{\kappa_2})=O(\e).\]
Applying $d_*\gamma_*^{-1}=1+\eta_*^{-1}O(\kappa_2/d_*)=1+O(\e)$,
$d_*-\gamma_*=\kappa_2\eta_*^{-1}(1+\eta_*^{-1}O(\e))$, and
(\ref{eq:fdr3}), we have from the definition of $c_*$ in (\ref{eq:auxparams}) that
\begin{align*}
\frac{c_{*}}{\kappa_{*}}&=-\left(\eta_{*}-\gamma_{*}\right)+\frac{1}{\kappa_{*}}\left(\frac{\eta_{*}-\gamma_{*}}{\gamma_{*}}\right)^{2}\left(d_*-\gamma_{*}\right)
=-(\eta_{*}-\gamma_{*})+\frac{\eta_*^2(d_*-\gamma_*)}{\kappa_2}
(1+\eta_*^{-1} \cdot O(\e))\\
&=-(\eta_{*}-\gamma_{*})+\eta_*(1+\eta_*^{-1} \cdot O(\e))=d_*+O(\e).
\end{align*}
\end{proof}

\subsection{Convergence of stationary VAMP}\label{appendix:VAMPconvergence}

\begin{Lemma}\label{lemma:ampcondition}
Recall the replica-symmetric potential $i_{\mathrm{RS}}(\eta^{-1},\gamma)$
from (\ref{eq:iRS}), and let $(\eta_*^{-1},\gamma_*) \in (0,G(-d_-)) \times
\R_+$ be any fixed point of (\ref{eq:fix}) for which $\gamma_*$ is a local
minimizer of
\[\gamma \mapsto \sup_{\eta^{-1} \in (0,G(-d_-))}
i_{\mathrm{RS}}(\eta^{-1},\gamma)\]
Then the conclusion $\lim_{\min(s,t)\to \infty} \delta_{st} = \delta_*$
of Proposition \ref{prop:convsmallbeta} holds for the
stationary initialization (\ref{eq:VAMPstatinit}).
\end{Lemma}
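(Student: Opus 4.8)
The plan is to reduce the convergence of the state-evolution covariances $\delta_{st}$ to that of a one-dimensional recursion, and then to analyze that recursion via convexity of its update map together with the local-minimality hypothesis.

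First I would use the independence structure of \Cref{thm:ampSE}: there the Gaussian vectors $(\Ys_1,\ldots,\Ys_t)$ are independent of $\Hs$ (hence of $\Es,\Xstar,\Ps_0$), the variables $\Ps_0\sim N(0,\gamma_*^{-1})$ and $\Es\sim N(0,b_*)$ are independent, and $\kappa_*\delta_*+b_*=\gamma_*^{-1}$. Defining
\[
g(q)=\E\big[F(\mathsf A,\Xstar)\,F(\mathsf B,\Xstar)\big],\qquad q\in[0,\delta_*],
\]
with $(\mathsf A,\mathsf B)$ a mean-zero bivariate Gaussian of marginal variance $\gamma_*^{-1}$ and covariance $\kappa_*q+b_*$, independent of $\Xstar\sim\pi$, and recalling $\Xs_s=F(\Ys_{s-1}+\Es,\Xstar)$ for $s\ge2$, one obtains $\delta_{st}=g(\delta_{s-1,t-1})$ for $s,t\ge2$, together with $\delta_{tt}=\delta_*$; and for $k\ge2$ one gets $\delta_{1,k}=q_0:=\E\big[(\E[F(\mathsf G,\Xstar)\mid\Xstar])^2\big]$ with $\mathsf G\sim N(0,\gamma_*^{-1})$, since $\Ps_0$ and $\Ys_{k-1}+\Es$ are uncorrelated, hence independent, $N(0,\gamma_*^{-1})$ variables both independent of $\Xstar$. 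Iterating, for $s\ne t$ one has $\delta_{st}=g^{\min(s,t)-1}(q_0)$, so the claim reduces to showing $g^{n}(q_0)\to\delta_*$; note $q_0\in[0,\delta_*]$ by Jensen and \Cref{prop:denoiserprop}.

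Next I would record the structure of $g$. Conditioning on $\Xstar$ and applying Mehler's formula to the Hermite expansion of $F(\cdot,\Xstar)$ in $L^2(N(0,\gamma_*^{-1}))$, one writes $g(q)=G(\rho(q))$ where $\rho(q)=\gamma_*(\kappa_*q+b_*)=1-\gamma_*\kappa_*(\delta_*-q)$ is affine increasing and $G(\rho)=\sum_{k\ge0}a_k\rho^k$ has coefficients $a_k=\E[c_k(\Xstar)^2]\ge0$, with $\sum_k a_k=\E[F(\mathsf G,\Xstar)^2]=\delta_*$ (\Cref{prop:denoiserprop}) and $a_0=q_0$. Since $\rho$ maps $[0,\delta_*]$ into $[0,1]$ (using $\kappa_*\delta_*+b_*=\gamma_*^{-1}$ and $b_*\ge0$), the map $g$ is nondecreasing and convex on $[0,\delta_*]$, with $g(\delta_*)=\delta_*$ and $g([0,\delta_*])\subseteq[q_0,\delta_*]$. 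Because $\pi$ is non-Gaussian, $F(\cdot,\beta)$ is non-affine for $\pi$-a.e.\ $\beta$ (\Cref{prop:denoiserprop}), so $a_k>0$ for some $k\ge2$ and $g$ is affine on no subinterval of $[0,\delta_*]$.

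The crux is to show $g'(\delta_*)\le1$ and that this is exactly the local-minimality hypothesis; I expect this identification to be the main obstacle. By Price's theorem $g'(q)=\kappa_*\E[F'(\mathsf A,\Xstar)F'(\mathsf B,\Xstar)]$, so $g'(\delta_*)=\kappa_*\E[F'(\Ps,\Xstar)^2]$ since $\mathsf A=\mathsf B$ at $q=\delta_*$. Using \eqref{eq:denoiserpp1}, $\mmse(\gamma_*)=\eta_*^{-1}$, and $\kappa_*=\E\Ls^2=\big(\tfrac{\eta_*-\gamma_*}{\gamma_*}\big)^2(S-1)$ (from \Cref{prop:AMPparamconverge} and $G(\eta_*-\gamma_*)=\eta_*^{-1}$, with $S:=\E\tfrac{\eta_*^2}{(\D^2+\eta_*-\gamma_*)^2}$), one computes $g'(\delta_*)=\eta_*^2(S-1)\,\V[\V(\Xstar\mid\Ys)]$. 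Differentiating $R(z)=G^{-1}(z)-1/z$ at $z=\eta_*^{-1}$, where $G^{-1}(\eta_*^{-1})=\eta_*-\gamma_*$, gives $\eta_*^2>R'(\eta_*^{-1})$ and $S-1=R'(\eta_*^{-1})/(\eta_*^2-R'(\eta_*^{-1}))$; together with $\E[\V(\Xstar\mid\Ys)^2]=\V[\V(\Xstar\mid\Ys)]+\eta_*^{-2}$ this yields $g'(\delta_*)\le1 \Leftrightarrow R'(\eta_*^{-1})\,\E[\V(\Xstar\mid\Ys)^2]\le1$. Since $\partial^2_{\eta^{-1}}i_{\mathrm{RS}}=-\tfrac12R'(\eta^{-1})<0$, the map $\eta^{-1}\mapsto i_{\mathrm{RS}}(\eta^{-1},\gamma)$ is strictly concave, so $\phi(\gamma):=\sup_{\eta^{-1}}i_{\mathrm{RS}}(\eta^{-1},\gamma)=i_{\mathrm{RS}}(u(\gamma),\gamma)$ with $R(u(\gamma))=-\gamma$; by the envelope theorem, the I-MMSE relation $i'(\gamma)=\tfrac12\mmse(\gamma)$, $\mmse'(\gamma)=-\E[\V(\Xstar\mid\Ys)^2]$ (see \eqref{eq:mmseprime}), and $u'(\gamma)=-1/R'(u(\gamma))$, one gets $\phi'(\gamma)=\tfrac12(\mmse(\gamma)-u(\gamma))$ and $\phi''(\gamma_*)=\tfrac12\big(1/R'(\eta_*^{-1})-\E[\V(\Xstar\mid\Ys)^2]\big)$. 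Hence $\gamma_*$ being a local minimizer of $\phi$ (i.e.\ $\phi''(\gamma_*)\ge0$) is equivalent to $g'(\delta_*)\le1$.

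Finally I would conclude. Convexity of $g$ makes $g'$ nondecreasing on $[0,\delta_*]$, so $g'\le g'(\delta_*)\le1$ there; hence $q\mapsto g(q)-q$ is nonincreasing, vanishes at $\delta_*$, and (since $g$ is affine on no subinterval) is strictly positive on $[0,\delta_*)$, so $\delta_*$ is the unique fixed point of $g$ in $[0,\delta_*]$. The iterates $q_n=g^n(q_0)$ lie in $[q_0,\delta_*]$ and satisfy $q_{n+1}=g(q_n)\ge q_n$, so they increase to a fixed point, which must be $\delta_*$; therefore $\lim_{\min(s,t)\to\infty}\delta_{st}=\lim_n q_n=\delta_*$, as claimed.
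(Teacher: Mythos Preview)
Your proof is correct and follows essentially the same route as the paper: reduce to the one-dimensional recursion $\delta_{s+1,t+1}=g(\delta_{st})$, show that $g$ is nonnegative, increasing, and convex on $[0,\delta_*]$ with fixed point $\delta_*$, and identify the bound on $g'(\delta_*)$ with the local-minimality hypothesis via the same derivative computations. The only technical differences are that you use Mehler's expansion where the paper uses Gaussian integration by parts to obtain $g',g''\ge 0$, and that the paper records the strict bound $g'(\delta_*)\le\gamma_*^2/\eta_*^2<1$ whereas you obtain $g'(\delta_*)\le 1$ and close via strict convexity coming from the non-Gaussian prior.
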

\begin{proof}
Recall that $\delta_{tt}=\delta_*$ for all $t \geq 1$ from Theorem
\ref{thm:ampSE}. Then
	$\delta_{s t}=\mathbb{E}\left[\mathsf{X}_{s} \mathsf{X}_{t}\right] \leq
\sqrt{\mathbb{E}\left[\mathsf{X}_{s}^{2}\right]
\mathbb{E}\left[\mathsf{X}_{t}^{2}\right]} =\delta_*$ for all $s,t \geq 1$.
For $s=1$ and any $t \geq 2$, observe also that
\begin{equation}\label{eq:delta1t}
	\begin{gathered}
		\delta_{1t}=\mathbb{E} \Xs_{1} \mathsf{X}_{t}=\mathbb{E}\left[F\left(\mathsf{P}_{0}, \Xstar\right) F\left(\mathsf{Y}_{t-1}+\mathsf{E}, \Xstar\right)\right]=\mathbb{E}\left[\mathbb{E}\left[F\left(\mathsf{P}_{0}, \Xstar\right) F\left(\mathsf{Y}_{t-1}+\mathsf{E}, \Xstar\right) \mid \Xstar\right]\right] \\
		=\mathbb{E}[\mathbb{E}\left[F\left(\mathsf{P}_{0}, \Xstar\right) \mid \Xstar\right]^{2}] \geq 0
	\end{gathered}
\end{equation}
where the last equality holds because
$\mathsf{P}_{0}$, $\mathsf{Y}_{t-1}+\Es$, and $\Xstar$ are independent,
with $\Ps_0$ and $\Ys_{t-1}+\Es$ equal in law
(by the identity $\sigma_*^2+b_*=\gamma_*^{-1}$).

Consider now the map $\delta_{st} \mapsto \delta_{s+1,t+1}$.
Recalling that $\E \Ys_t^2=\sigma_{*}^{2}$ and $\E \Ys_s\Ys_t=\kappa_*
\delta_{st}$, we may represent
	$$
	\left(\mathsf{Y}_{s}+\mathsf{E}, \mathsf{Y}_{t}+\mathsf{E}\right)\stackrel{L}{=}\left(\sqrt{\kappa_{*} \delta_{s t}+b_{*}} \mathsf{G}+\sqrt{\sigma_{*}^{2}-\kappa_{*} \delta_{s t}} \mathsf{G}^{\prime}, \sqrt{\kappa_{*} \delta_{s t}+b_{*}} \mathsf{G}+\sqrt{\sigma_{*}^{2}-\kappa_{*} \delta_{s t}} \mathsf{G}^{\prime \prime}\right)
	$$
	where $\mathsf{G}, \mathsf{G}^{\prime}, \mathsf{G}^{\prime \prime}$ are
jointly independent standard Gaussian variables.
Denote
$$\mathsf{P}_{\delta}^{\prime}:=\sqrt{\kappa_{*} \delta+b_{*}} \cdot \mathsf{G}+\sqrt{\sigma_{*}^{2}-\kappa_{*} \delta} \cdot \mathsf{G}^{\prime}, \quad \mathsf{P}_{\delta}^{\prime \prime}:=\sqrt{\kappa_{*} \delta+b_{*}} \cdot \mathsf{G}+\sqrt{\sigma_{*}^{2}-\kappa_{*} \delta} \cdot \mathsf{G}^{\prime \prime}$$
and define $g:\left[0,\delta_*\right] \to \mathbb{R}$ by
$g(\delta):=\mathbb{E}\left[F\left(P_\delta',\Xstar\right)
F\left(P_\delta'',\Xstar\right)\right]$. Then $\delta_{s+1,t+1}=g(\delta_{st})$.

We claim that for any $\delta \in [0,\delta_*]$, we have $g(\delta) \geq 0$,
$g'(\delta) \geq 0$, and $g''(\delta) \geq 0$. The first bound $g(\delta) \geq 0$
follows from
$$
g(\delta)=\mathbb{E}\Big[\E[F\left(\mathsf{P}_{\delta}^{\prime}, \Xstar\right)
F\left(\mathsf{P}_{\delta}^{\prime \prime}, \Xstar\right) \mid
\Xstar,\Gs]\Big]=\mathbb{E}\left[\mathbb{E}\left[F\left(\mathsf{P}_{\delta}^{\prime},
\Xstar\right) \mid \Xstar, \mathsf{G}\right]^{2}\right] \geq 0,
$$ 
because $\Ps_\delta',\Ps_\delta''$ are independent and equal in law conditional
on $\Gs,\Xstar$. Differentiating in $\delta$ and applying
Gaussian integration by parts,
	$$
	\begin{aligned}
		g^{\prime}(\delta) &=2 \mathbb{E}\left[F^{\prime}\left(\mathsf{P}_{\delta}^{\prime}, \Xstar\right) F\left(\mathsf{P}_{\delta}^{\prime \prime}, \Xstar\right)\left(\frac{\kappa_{*}}{2 \sqrt{\kappa_{*} \delta+b_{*}}} \cdot \mathsf{G}-\frac{\kappa_{*}}{2 \sqrt{\sigma_{*}^{2}-\kappa_{*} \delta}} \cdot \mathsf{G}^{\prime}\right)\right] \\
		&=\frac{\kappa_{*}}{\sqrt{\kappa_{*} \delta+b_{*}}} \mathbb{E}\left[F^{\prime}\left(\mathsf{P}_{\delta}^{\prime}, \Xstar\right) F\left(\mathsf{P}_{\delta}^{\prime \prime}, \Xstar\right) \mathsf{G}\right]-\frac{\kappa_{*}}{\sqrt{\sigma_{*}^{2}-\kappa_{*} \delta}} \mathbb{E}\left[F^{\prime}\left(\mathsf{P}_{\delta}^{\prime}, \Xstar\right) F\left(\mathsf{P}_{\delta}^{\prime \prime}, \Xstar\right) \mathsf{G}^{\prime}\right] \\
		&=\kappa_{*}\mathbb{E}\left[F^{\prime \prime}\left(\mathsf{P}_{\delta}^{\prime}, \Xstar\right) F\left(\mathsf{P}_{\delta}^{\prime \prime}, \Xstar\right)+F^{\prime}\left(\mathsf{P}_{\delta}^{\prime}, \Xstar\right) F^{\prime}\left(\mathsf{P}_{\delta}^{\prime \prime}, \Xstar\right)\right]
		-\kappa_{*}\mathbb{E}\left[F^{\prime \prime}\left(\mathsf{P}_{\delta}^{\prime}, \Xstar\right) F\left(\mathsf{P}_{\delta}^{\prime \prime}, \Xstar\right)\right] \\
		&=\kappa_{*}
\mathbb{E}\left[F^{\prime}\left(\mathsf{P}_{\delta}^{\prime}, \Xstar\right)
F^{\prime}\left(\mathsf{P}_{\delta}^{\prime \prime}, \Xstar\right)\right].
	\end{aligned}
	$$
Then $g'(\delta)=\kappa_*\E\left[\E[F'(\Ps_\delta',\Xstar) \mid
\Gs,\Xstar]^2\right] \geq 0$, and a similar argument shows $g''(\delta) \geq 0$.

Observe that at $\delta=\delta_*$, we have
$\mathsf{P}_{\delta_{*}}^{\prime}=\mathsf{P}_{\delta_{*}}^{\prime
\prime}=\sqrt{\sigma_*^2+b_{*}} \cdot \mathsf{G}=\mathsf{G}/\sqrt{\gamma_*}$
which is equal in law to $\Ps \sim N(0,\gamma_*^{-1})$. Then
$g(\delta_{*})=\mathbb{E}[F(\Ps,\Xstar)^{2}]=\delta_{*}$
by (\ref{eq:denoiserpp2}). So $g:[0,\delta_*] \to
[0,\delta_*]$ is a nonnegative, increasing, convex function with a fixed point
at $\delta_*$. We claim that
\begin{equation}\label{eq:gprimebound}
g'(\delta_*)<1
\end{equation}
This then implies that $\delta_*$ is the unique fixed point of $g(\cdot)$
over $[0,\delta_*]$, and $\lim_{t \to \infty} g^{(t)}(\delta)=\delta_*$ for
any $\delta \in [0,\delta_*]$. Observe from (\ref{eq:delta1t})
that $\delta_{1t}=\delta_{12}$ for all $t \geq 2$, so
$\delta_{t,t+s}=g^{(t-1)}(\delta_{1,1+s})=g^{(t-1)}(\delta_{12})$
for any $s \geq 1$. Then $\lim_{\min(s,t) \to \infty} \delta_{st}=\delta_*$
follows.

It remains to show (\ref{eq:gprimebound}). For this,
applying (\ref{eq:denoiserpp1}) and
$\E[F'(\Ps,\Xstar)]=0$ by (\ref{eq:denoiserpp2}), we have
	\begin{equation}\label{eq:ampcondgp}
g'(\delta_*)=\kappa_* \E[F'(\Ps_{\delta_*}',\Xstar)^2]
=\kappa_*\left(\frac{\eta_*^2}{(\eta_*-\gamma_*)^2}\E[f'(\Ps+\Xstar,\gamma_*)^2]-\frac{\gamma_*^2}{(\eta_*-\gamma_*)^2}\right).
	\end{equation}
Recall the replica-symmetric potential $i_{\mathrm{RS}}$ from (\ref{eq:iRS}),
whose gradient and Hessian in $(\gamma,\eta^{-1})$ are given by
\[\nabla i_{\mathrm{RS}}(\eta^{-1},\gamma)
=\frac{1}{2}\Big(\mmse(\gamma)-\eta^{-1} \;\; -R(\eta^{-1})-\gamma
\Big),
\qquad \nabla^2 i_{\mathrm{RS}}=\frac{1}{2} \begin{pmatrix}
\mmse'(\gamma) & -1 \\ -1 & -R'(\eta^{-1}) \end{pmatrix}\]
where we have used the I-MMSE relation $i'(\gamma)=\frac{1}{2}\mmse(\gamma)$
\cite{guo2005mutual}. The condition that $(\eta_*^{-1},\gamma_*)$ is a fixed
point of (\ref{eq:fix}) implies $\nabla
i_{\mathrm{RS}}(\eta_*^{-1},\gamma_*)=0$. Here, $\eta^{-1} \mapsto
i_{\mathrm{RS}}(\eta^{-1},\gamma_*)$ is concave because $-R'(\eta^{-1})>0$ by
Lemma \ref{lem:cauchy}(b), so $\eta_*^{-1}$ is the global maximizer of this
function. Then, the given assumption that $\gamma_*$ is a local minimizer of
$\gamma \mapsto \sup_{\eta^{-1}} i_{\mathrm{RS}}(\eta^{-1},\gamma)$ implies
the Schur-complement condition for $\nabla^2 i_{\mathrm{RS}}$
\begin{equation}\label{eq:schurcomplement}
\mmse'(\gamma)+\frac{1}{R'(\eta^{-1})} \geq 0.
\end{equation}
Differentiating $R(z)=G^{-1}(z)-z^{-1}$ where
$G(z)=\E[(z+\D^2)^{-1}]$, and applying $\kappa_*$ from (\ref{eq:scalarparams}),
\[R'(\eta_*^{-1})=\frac{1}{G'(G^{-1}(\eta_*^{-1}))}+\eta_*^2
=\frac{1}{G'(\eta_*-\gamma_*)}+\eta_*^2
=\eta_*^2-\frac{1}{\E[(\D^2+\eta_*-\gamma_*)^{-2}]}
=\eta_*^2-\frac{\eta_*^2}{(\frac{\eta_*}{\eta_*-\gamma_*})^2\kappa_*+1}.\]
Then
\[\frac{1}{R'(\eta_*^{-1})}=\frac{1}{\eta_*^2}\left(1+\frac{(\eta_*-\gamma_*)^2}{\eta_*^2\kappa_*}\right).\]
Recalling also $\mmse'(\gamma)=-\E[\V[\Xstar \mid \Ys]^2]
=-\gamma_*^{-2}\E[f'(\Ps+\Xstar,\gamma_*)^2]$ from (\ref{eq:mmseprime})
and Proposition \ref{prop:denoiserlipschitz}, the condition
(\ref{eq:schurcomplement}) may be rearranged as
\[\E[f'(\Ps+\Xstar,\gamma_*)^2] \leq 
\frac{\gamma_*^2}{\eta_*^2}\left(1+\frac{(\eta_*-\gamma_*)^2}{\eta_*^2\kappa_*}\right).\]
Substituting into (\ref{eq:ampcondgp}) gives $g'(\delta_*) \leq
\gamma_*^2/\eta_*^2<1$ where the second inequality applies $\eta_*-\gamma_*>0$
from Proposition \ref{prop:uniquefix}. This shows the desired claim
(\ref{eq:gprimebound}), concluding the proof.
\end{proof}

\begin{proof}[Proof of \Cref{prop:convsmallbeta}]
For $\e<\e_0$ sufficiently small, we have shown in the proof of
Proposition \ref{prop:uniquefix} that $h'(\eta_*^{-1})<1$ strictly where
$h'(\eta^{-1})=-\mmse'(\gamma) \cdot R'(\eta^{-1})$ is as defined in
(\ref{eq:hprime}). Rearranging this gives
$\mmse'(\gamma_*)+1/R'(\eta_*^{-1})>0$, i.e.\ $\gamma_*$ is a (strict)
local minimizer of $\gamma \mapsto \sup_{\eta^{-1}}
i_{\mathrm{RS}}(\eta^{-1},\gamma)$, so the result follows from
\Cref{lemma:ampcondition}.
\end{proof}

\section{Analysis of the conditional first moment}\label{appendix:variational}

In this appendix we prove Lemmas \ref{lemma:firstmt} and \ref{lemma:analysisf}.
The arguments are extensions of those of
\cite[Lemmas 3.1 and 3.2]{fan2021replica}, and we will refer to
\cite{fan2021replica} for some of the technical details.

\begin{Lemma}\label{lem:abar}
Let $\pi$ be any probability distribution over $\R$. Let $c_\pi(a,b)$ be as
defined in (\ref{eq:cpi}), and let
\[\bar{a}=\sup\left\{a \in \R:\int
e^{ax^2}d\pi(x)<\infty\right\} \in [0,\infty].\]
Then for all $a>\bar{a}$ and $b \in \R$,
we have $c_\pi(a,b)=\infty$. For each fixed $a<\bar{a}$,
the function $b \mapsto \log c_\pi(a,b)$ is continuous and satisfies,
for some $(a,\pi)$-dependent constant $C>0$ and for all $b \in \R$,
\[\log c_\pi(a,b) \leq C(b^2+1).\]
\end{Lemma}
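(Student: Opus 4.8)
The plan is to reduce both assertions to the monotonicity of $a \mapsto \int e^{ax^2}\,d\pi(x)$ together with the elementary extremal properties of a one‑dimensional quadratic. First I would record the basic structural fact: since $e^{a_1 x^2} \le e^{a_2 x^2}$ pointwise whenever $a_1 \le a_2$, the set $\{a \in \R : \int e^{ax^2}\,d\pi(x) < \infty\}$ is a downward‑closed interval containing $0$, so $\bar{a} \in [0,\infty]$ is well defined, $\int e^{ax^2}\,d\pi(x) < \infty$ for every $a < \bar{a}$, and $\int e^{a'x^2}\,d\pi(x) = \infty$ for every $a' > \bar{a}$.

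For the first claim, given $a > \bar{a}$ (so $\bar{a}<\infty$) and $b \in \R$, I would pick any $a' \in (\bar{a}, a)$ and use that the quadratic $x \mapsto (a-a')x^2 + bx$ attains its minimum value $-b^2/(4(a-a'))$, so that $e^{ax^2+bx} \ge e^{-b^2/(4(a-a'))}\, e^{a'x^2}$ pointwise. Integrating against $\pi$ gives $c_\pi(a,b) \ge e^{-b^2/(4(a-a'))}\int e^{a'x^2}\,d\pi(x) = \infty$, since $a' > \bar{a}$.

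For the second claim, fix $a < \bar{a}$ and pick $a' \in (a,\bar{a})$, so that $M := \int e^{a'x^2}\,d\pi(x) < \infty$. The quadratic $x \mapsto -(a'-a)x^2 + bx$ attains its maximum value $b^2/(4(a'-a))$, hence $e^{ax^2+bx} \le e^{b^2/(4(a'-a))}\, e^{a'x^2}$; integrating yields $c_\pi(a,b) \le M\, e^{b^2/(4(a'-a))}$ and therefore $\log c_\pi(a,b) \le C(b^2+1)$ with $C := (4(a'-a))^{-1} + |\log M|$. For the continuity of $b \mapsto \log c_\pi(a,b)$, I would first note $c_\pi(a,b) > 0$ for all $b$ (bound the integral below by $\tfrac12 \inf_{|x|\le R} e^{ax^2+bx} > 0$ on a fixed interval $[-R,R]$ carrying $\pi$‑mass at least $1/2$), so that by the displayed bound $\log c_\pi(a,\cdot)$ is everywhere finite. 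Since $b \mapsto \log c_\pi(a,b) = \log \int e^{bx}\,d\nu(x)$ for the finite positive measure $d\nu = e^{ax^2}\,d\pi$, it is convex by Hölder's inequality, and a finite convex function on $\R$ is continuous. (Alternatively, continuity follows by dominated convergence: for $|b|\le B$ one has $bx \le \tfrac{a'-a}{2}x^2 + \tfrac{B^2}{2(a'-a)}$, giving the integrable dominating function $e^{(a+a')x^2/2}\,e^{B^2/(2(a'-a))}$.)

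There is no serious obstacle here; the argument is elementary. The only points needing a little care are keeping the bookkeeping around $\bar{a}$ straight—this is why I establish the monotonicity/interval structure at the outset so that "$a < \bar{a}$" genuinely forces $\int e^{ax^2}\,d\pi < \infty$—and choosing the auxiliary slope $a'$ strictly between $a$ and $\bar{a}$ (respectively between $\bar a$ and $a$) so that the completion‑of‑the‑square bounds have a finite (respectively infinite) integral to compare against.
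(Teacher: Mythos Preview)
Your proof is correct and follows essentially the same strategy as the paper: both compare $e^{ax^2+bx}$ to $e^{a'x^2}$ for some $a'$ slightly offset from $a$, then use finiteness or divergence of $\int e^{a'x^2}\,d\pi$. The paper achieves this comparison by splitting the domain into $\{|x|<|b|/\varepsilon\}$ and $\{|x|\ge |b|/\varepsilon\}$, while you use a global completing-the-square bound; this is a cosmetic difference, and your version is arguably a bit cleaner.
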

\begin{proof}
For the first statement,
suppose $\bar{a}<\infty$, and consider $a>\bar{a}$. Taking $\varepsilon>0$
such that $a-\varepsilon>\bar{a}$, we have
\[c_\pi(a,b) \geq
\int_{|x| \geq |b|/\varepsilon} e^{ax^2+bx}d\pi(x)
\geq \int_{|x| \geq |b|/\varepsilon} e^{(a-\varepsilon)x^2}d\pi(x)=\infty.\]
For the second statement, consider any $\bar{a} \in [0,\infty]$ and fix
$a<\bar{a}$. Taking $\varepsilon>0$ such that $a+\varepsilon<\bar{a}$, we have
\begin{align*}
c_\pi(a,b)&=\int_{|x|<b/\varepsilon} e^{ax^2+bx}d\pi(x)
+\int_{|x| \geq b/\varepsilon} e^{ax^2+bx}d\pi(x)\\
&\leq e^{b^2/\varepsilon}\int e^{ax^2}d\pi(x)+\int e^{(a+\varepsilon)x^2}d\pi(x)
\leq C(e^{b^2/\varepsilon}+1) \leq 2Ce^{b^2/\varepsilon}
\end{align*}
for a constant $C=C(a,\varepsilon,\pi)>0$. Then $\log c_\pi(a,b) \leq
C'(b^2+1)$ for a constant $C'=C'(a,\varepsilon,\pi)>0$. In particular,
$\log c_\pi(a,b)<\infty$,
and continuity in $b$ follows from a standard application of the dominated
convergence theorem.
\end{proof}

\begin{proof}[Proof of Lemma \ref{lemma:firstmt}]
	Recall the $n\times t$ matrices $X_{t}=\left(x^{1}, \ldots,
x^{t}\right)$, $Y_{t}=\left(y^{1}, \ldots, y^{t}\right)$, and $S_{t}=\left(s^{1}, \ldots, s^{t}\right)$ which collect the AMP iterates. We fix $t$ and write $\mathcal{G}, X, Y, S, \Delta$ for $\mathcal{G}_{t}, X_{t}, Y_{t}, S_{t}, \Delta_{t}$.

From the definition of $\Z(\cU)$ in \eqref{eq:defpartitionfunc},
applying $y=A\st+\epsilon=Q^\top DO\st+\epsilon$ and $\xi=Q\epsilon$,
and writing as shorthand $\x:=\x(\s)=\s-\st$, we have
	\begin{subequations}\label{eq:Econdstart}
		\begin{align}
			&\mathbb{E}[\Z(\cU) \mid \cG]=\int
\mathbb{I}\Big(\frac{1}{n}\|\x\|^2 \in \cU\Big)
\cdot \exp \left(-\frac{\|\epsilon\|^{2}}{2}+\frac{n}{2} \cdot f_{n}(\x)\right)
\prod_{i=1}^{n} d\pi(\s_{i})\\
			&\text{with}\quad f_{n}(\x):=\frac{2}{n} \log
\mathbb{E}\left[\exp \left(-\frac{\x^{\top} O^{\top} \DDbar O \x}{2}+\x^{\top}
O^{\top} D^\top \xi\right) \;\bigg|\;\mathcal{G}\right].\label{eq:fndef}
		\end{align}
	\end{subequations}

Conditional on $\cG$, the only random quantity in the expectation in
(\ref{eq:fndef}) is the matrix $O$. By definition of $\cG$ in
(\ref{eq:defGt}) and the AMP iterations in (\ref{eq:xsy}),
its conditional law is that of a $\operatorname{Haar}(\SO(n))$
matrix $O$ conditioned on the event
	$(e_b, S, \Lambda S)=O(e, X, Y)$.
Then by \cite[Lemma B.2]{fan2022tap}, we may represent this conditional law of $O$ as
\begin{equation}\label{eq:theVV}
	\left.O\right|_{\cG} \stackrel{L}{=}\left(e_{b}, S, \Lambda
S\right)\begin{pmatrix} 
		e^{\top} e & e^{\top} X & e^{\top} Y \\
		X^{\top} e & X^{\top} X & X^{\top} Y \\
		Y^{\top} e & Y^{\top} X & Y^{\top} Y
	\end{pmatrix}^{-1}(e, X, Y)^{\top}+\Pi_{\left(e_{b}, S, \Lambda
S\right)^{\perp}} \tilde{O} \Pi_{(e, X, Y)^{\perp}}^{\top}
\end{equation}
where $\Pi_{(e, X, Y)^{\perp}},\Pi_{\left(e_{b}, S, \Lambda S\right)^{\perp}}\in
\R^{n\times (n-2t-1)}$ have orthonormal columns orthogonal to the column spans
of $(e, X,Y),(e_b,S,\Lambda S)\in \R^{n\times (2t+1)}$ respectively,
and $\tilde{O} \sim \Haar(\SO(n-2t-1))$ is independent of $\cG$.
We remark that the matrix inverse in (\ref{eq:theVV}) is well-defined almost
surely for all large $n$, by Corollary \ref{cor:ampSEcor1} and the statements
$\Delta \succ 0$ in Theorem \ref{thm:ampSE} and $b_*,\kappa_*>0$
strictly in Remark \ref{remark:non0param}.
Writing as shorthand
$\Pi=\Pi_{\left(e_{b}, S, \Lambda S\right)^{\perp}}$ and
\begin{equation}\label{eq:sigmaparper}
\x_{\perp}=\Pi_{(e, X, Y)^{\perp}}^{\top} \x \in \mathbb{R}^{n-2 t-1}, \qquad
\x_{\|}=\left(e_{b}, S, \Lambda S\right)\begin{pmatrix}
	e^{\top} e & e^{\top} X & e^{\top} Y \\
	X^{\top} e & X^{\top} X & X^{\top} Y \\
	Y^{\top} e & Y^{\top} X & Y^{\top} Y
\end{pmatrix}^{-1}(e, X, Y)^{\top} \x \in \mathbb{R}^{n},
\end{equation}
this yields the equality in conditional law $\left.O \x\right|_{\cG}
\stackrel{L}{=} \Pi \tilde{O} \x_{\perp}+\x_{\|}$. So $f_n(\x)$ in
\eqref{eq:fndef} is given by
    \begin{align}
f_{n}(\x)&=\frac{2}{n} \log \mathbb{E} \exp \left(-\frac{(\Pi \tilde{O}
\x_{\perp}+\x_{\|})^{\top} \DDbar(\Pi \tilde{O}
\x_{\perp}+\x_{\|})}{2}+\left(\Pi \tilde{O} \x_{\perp}+\x_{\|}\right)^{\top}
D^\top \xi\right) \label{eq:fnform}\\
&=-\frac{1}{n} \x_{\|}^{\top} \DDbar \x_{\|}+\frac{2}{n} \x_{\|}^{\top} D^\top
\xi+\frac{2}{n} \log \mathbb{E} \exp \left(-\frac{\x_{\perp}^{\top}
\tilde{O}^{\top} \Pi^{\top} \DDbar \Pi \tilde{O} \x_{\perp}}{2}+\left(\xi^{\top}
D \Pi-\x_{\|}^{\top} \DDbar \Pi\right) \tilde{O} \x_{\perp}\right)\nonumber
\end{align}
where this expectation is over only $\tilde{O} \sim \Haar(\SO(n-2t-1))$.\\

{\bf Uniform approximation of $f_n(\x)$.} We proceed to approximate $f_n(\x)$
by low-dimensional functions of $\x$ for large $n$. Define
$\Pfrak(\x)=(u(\x),r(\x),v(\x),w(\x))$ where
\begin{equation}\label{eq:ruvwdef}
    u(\x)=\frac{1}{n}\|\x\|^{2},\qquad \mqty(r(\x)\\ v(\x) \\
w(\x))=\qty[\frac{1}{n} \mqty(e^{\top} e & e^{\top} X & e^{\top} Y \\ X^{\top} e
& X^{\top} X & X^{\top} Y\\ Y^\top e & Y^\top X & Y^\top Y)]^{-1/2}\cdot
\frac{1}{n} (e,X,Y)^\top \x.
\end{equation}
Here, $u(\x), r(\x)\in \R$ and $v(\x),w(\x)\in \R^t$. Note that
$O\x_{\|}$ is the orthogonal projection of $\x$ onto the column span of
$(e,X,Y)$, so
\begin{equation}\label{eq:lengthre}
    r^{2}(\x)+\|v(\x)\|^{2}+\|w(\x)\|^{2}
=\frac{\|\x_{\|}\|^2}{n}=u(\x)-\frac{\left\|\x_{\perp}\right\|^{2}}{n}.
\end{equation}
Let $K>0$ be the bound given in the lemma for which $\cU \subseteq (0,K)$,
and define the open domain
\begin{equation}
\Kc=\left\{\x \in \R^n:u(\x) \in (0,K),\,\Ac(\Pfrak(\x))>0\right\}
\end{equation}
where $\Ac(\Pfrak)=u-r^2-\|v\|^2-\|w\|^2$ as defined in (\ref{eq:defCst}).
Restricting first to $\x \in \Kc$,
we apply \cite[Proposition 2.7]{fan2021replica} to approximate the
expectation over $\tilde{O}$. (This is stated
in \cite{fan2021replica} for $\Haar(\O(n))$, but the result and proof hold
verbatim for $\SO(n)$.) The needed conditions of
\cite[Proposition 2.7]{fan2021replica} are verified by the Cauchy interlacing
of eigenvalues of $\Pi^\top D^\top D \Pi$ with those of $D^\top D$,
the convergence assumptions (\ref{eq:Assump2D}), the bounds
$\|\x_{\perp}\|^2,\|\x_{\|}\|^2 \leq \|\x\|^2<nK$ for $\x \in \Kc$,
the bound $\|D\xi\|^2<2d_+ n$ almost surely for all large $n$,
and the observation that by Lemma \ref{lem:cauchy}(c),
for $\e<\e_0(\pib,K)$ sufficiently small and for some sufficiently small
constant $\df>0$,
    \begin{equation}\label{eq:Gddfbig}
        G(-d_-+\df)-\df>K.
    \end{equation}
(We allow $\df>0$ to depend on $\e$ and the law of $\D^2$,
and we will eventually take $\df \to 0$.)
For scalars or vectors $a_n(\x)$ and $b_n(\x)$ whose dimensions
are independent of $n$, let us write $a_n(\x) \doteq b_n(\x)$ to mean, almost
surely,
\[\lim_{n,m \to \infty} \sup_{\x \in \Kc} |a_n(\x)-b_n(\x)|=0, \qquad
\lim_{n,m \to \infty} \sup_{\x \in \Kc} \|a_n(\x)-b_n(\x)\|=0.\]
Then \cite[Proposition 2.7]{fan2021replica} applied to the expectation over
$\tilde{O}$ yields
\begin{equation}\label{eq:fnapp}
  f_n(\x) \doteq -\frac{1}{n} \x_\|^\top \DD \x_\|+\frac{2}{n} \x^\top_\|
D^\top \xi+E_n(\x)
\end{equation}
where 
\begin{equation}\label{eq:En}
\begin{aligned}
E_n(\x)=\inf_{\tmr \ge
-d_-+\df}\bigg\{\frac{\tmr\left\|\x_{\perp}\right\|^{2}}{n}&+\frac{(\xi^{\top} D
\Pi-\x_{\|}^{\top}\DD \Pi)(\tmr I+\Pi^{\top} \DD \Pi)^{-1}(\Pi^{\top} D^\top
\xi-\Pi^{\top} \DD \x_{\|})}{n}\\
& -\frac{1}{n} \log \det \left(\tmr I+\Pi^{\top} \DD \Pi\right)-\Big(1+\log
\frac{\left\|\x_{\perp}\right\|^{2}}{n}\Big)
\bigg\}.
\end{aligned}
\end{equation}

For the first term of (\ref{eq:fnapp}), applying \Cref{cor:ampSEcor1}
and the preceding definitions of $\x_{\|}$ and $r,v,w$,
\begin{equation}\label{eq:xparlim}
\x_\|=e_{b}\big(b_{*}^{-1/2} r(\x)\big)+S \Delta^{-1/2} v(\x)+\Lambda
S\left(\kappa_{*} \Delta\right)^{-1/2} w(\x)+r_n(\x)
\end{equation}
where $n^{-1}\|r_n(\x)\|^2 \to 0$ uniformly over $\x \in \Kc$.
It follows from \Cref{thm:ampSE}, Corollaries \ref{cor:ampSEcor1} and
\ref{cor:ampSEcor2}, $\E\D^2=d_*$,
and (\ref{eq:abcde}) that almost surely,
\begin{equation}\label{eq:SDSlim}
\lim_{n,m \to \infty} \frac{1}{n} (e_b, S, \Lambda S)^\top \DD (e_b, S, \Lambda
S)=\mqty(e_* & 0 & 0 \\ 0 & d_* \Delta & a_* \Delta \\ 0 & a_* \Delta & c_* \Delta
).
\end{equation}
Combining this with \eqref{eq:xparlim}, we obtain for the first term of \eqref{eq:fnapp} that
\begin{equation}\label{eq:sDDs}
    \frac{1}{n} \x^\top_\| \DD \x_\| \doteq \frac{e_{*}r^{2}(\x)}{b_{*}}
+\Tr \begin{pmatrix} d_* & a_* \\ a_* & c_* \end{pmatrix}
\left(v(\x),\frac{w(\x)}{\sqrt{\kappa_*}}\right)^\top
\left(v(\x),\frac{w(\x)}{\sqrt{\kappa_*}}\right).
\end{equation}
Similarly, for the second term of \eqref{eq:fnapp}, applying the form of $\Es_b$
in \Cref{prop:AMPparamconverge} and (\ref{eq:dB}), we have
\begin{equation}\label{eq:sDx}
    \frac{2}{n} \x^\top_\| D\xi
\doteq \frac{2}{n}e_b^\top D\xi\left(b_*^{-1/2}r(\x)\right)
\doteq \frac{2r(\x)}{\sqrt{b_*}}
\end{equation}
where the contributions from the other terms of $\x_{\|}$ vanish because
$\Xi$ in \Cref{prop:AMPparamconverge} has mean 0 and is
independent of $(\Ss_1,\ldots,\Ss_t)$ and $(\D,\Ls)$.

For the final term $E_n(\x)$ of \eqref{eq:fnapp}, note that (\ref{eq:Assump2D}) and Cauchy
eigenvalue interlacing imply that the empirical eigenvalue distribution of
$\Pi^\top D^\top D\Pi$ converges weakly to $\D^2$. Then, recalling
$n^{-1}\|\x_{\perp}\|^{2}=u(\x)-r^2(\x)-\|v(\x)\|^{2}-\|w(\x)\|^{2}=\Ac(\Pfrak(\x))$
from (\ref{eq:lengthre})
and the definition of $\mathcal{H}$ in (\ref{eq:defFH}), 
\begin{equation}\label{eq:scH}
    \begin{aligned}
    &\frac{\tmr\left\|\x_{\perp}\right\|^{2}}{n}-\frac{1}{n} \log
\operatorname{det}\left(\tmr I+\Pi^{\top} \DD \Pi\right)-\Big(1+\log
\frac{\left\|\x_{\perp}\right\|^{2}}{n}\Big)
    \doteq\mathcal{H}\left(\tmr,\Ac(\Pfrak(\x))\right).
    \end{aligned}
\end{equation}
The error of this approximation converges to 0 uniformly over $\zeta \geq
-d_-+\df$, by the same Arzel\`a-Ascoli argument as leading to
\cite[Eq.\ (3.25)]{fan2021replica}.

To analyze the remaining
second term of $E_n(\x)$ in \eqref{eq:En}, let us define
\begin{equation}\label{eq:defW}
\begin{aligned}
\bar{\Pi}&=\left(e_{b}, S, \Lambda S\right)\mqty(
e_{b}^{\top} e_{b} & e_{b}^{\top} S & e_{b}^{\top} \Lambda S \\
S^{\top} e_{b} & S^{\top} S & S^{\top} \Lambda S \\
S^{\top} \Lambda e_{b} & S^{\top} \Lambda S & S^{\top} \Lambda^{2} S
)^{-1 / 2}
=\left(e_{b}, S, \Lambda S\right)\mqty(
e^{\top} e & e^{\top} X & e^{\top} Y \\
X^{\top} e & X^{\top} X & X^{\top} Y \\
Y^{\top} e & Y^{\top} X & Y^{\top} Y)^{-1 / 2}
\end{aligned}
\end{equation}
whose columns are an orthogonalization of $\left(e_{b}, S, \Lambda S\right)$.
Then the columns of $(\Pi, \bar{\Pi})$ form a full orthonormal basis for
$\mathbb{R}^{n}$. Applying the Schur-complement formula for block
matrix inversion, we obtain analogously to \cite[Eq.\ (3.29)]{fan2021replica}
that the second term of \eqref{eq:En} is given by
\[n^{-1}(\xi^{\top} D \Pi-\x_{\|}^{\top} \DD \Pi)(\tmr I+\Pi^{\top} \DD
\Pi)^{-1}(\Pi^{\top} D^\top \xi-\Pi^{\top} \DD \x_{\|})
=\mathrm{I}-\mathrm{II}\]
where
\begin{align}
\mathrm{I}&=n^{-1}(\xi^{\top} D-\x_{\|}^{\top} \DD)\Pi \cdot \Pi^\top
\left(\tmr I+\DD\right)^{-1} \Pi \cdot \Pi^\top \left(D^\top \xi-\DD
\x_{\|}\right)\label{eq:sectermPI1}\\
\mathrm{II}&=n^{-1}(\xi^{\top} D-\x_{\|}^{\top} \DD )\Pi \cdot \Pi^\top
(\tmr I+\DD)^{-1} \bar{\Pi}\left(\bar{\Pi}^{\top}\left(\tmr I+\DD\right)^{-1}
\bar{\Pi}\right)^{-1}\nonumber\\
&\qquad\qquad\qquad \bar{\Pi}^{\top}(\tmr I+\DD)^{-1} \Pi \cdot
\Pi^\top (D^\top \xi-\DD \x_{\|})\label{eq:sectermPI2}
\end{align}

We derive almost-sure asymptotic limits for $\mathrm{I}$ and $\mathrm{II}$.
Recalling $\lambda(\cdot)$ and $\theta(\cdot)$ from (\ref{eq:defFH}),
let us define
\begin{equation}\label{eq:defFFF}
\begin{aligned}
\mathcal{F}^{e}(\tmr, r)&=\mathcal{F}_{22}^{e}(\tmr, r)-\mathcal{F}_{12}^{e}(\tmr, r) \mathcal{F}_{11}^{e}(\tmr)^{-1} \mathcal{F}_{12}^{e}(\tmr, r)\\
\mathcal{F}(\tmr)&=\mathcal{F}_{22}(\tmr)-\mathcal{F}_{12}(\tmr)^{\top} \mathcal{F}_{11}(\tmr)^{-1} \mathcal{F}_{12}(\tmr)
\end{aligned}
\end{equation}
where we set
\begin{align}
\Ae(x)&=b_*^{-1/2}\qty(-x+\frac{e_*}{b_*}), \quad
\Be(x)=\left(\frac{\gamma_{*}}{\eta_{*}}\right)
x+\frac{\gamma_{*}\left(\eta_{*}-\gamma_{*}\right)}{\eta_{*}}-\frac{1}{b_*},
\quad \fe(x,r)=\Ae(x) \cdot r+\Be(x),\nonumber\\
\mathcal{F}_{11}^{e}(\tmr)&=\E\left(\frac{1}{\tmr+\D^{2}}
\mathsf{E}_{b}^{2}\right), \quad
\mathcal{F}_{12}^{e}(\tmr,
r)=\mathbb{E}\left(\frac{\fe\left(\D^{2},r\right)}{\tmr+\D^{2}}
\mathsf{E}_{b}^{2}\right), \quad
\mathcal{F}_{22}^{e}(\tmr,
r)=\mathbb{E}\left(\frac{\fe\left(\D^{2},r\right)^2}{\tmr+\D^{2}}
\mathsf{E}_{b}^{2}\right),\label{eq:defFFeterms}\\
\mathcal{F}_{11}(\tmr)&= \E \frac{1}{\tmr+\D^2}\begin{pmatrix}
1 & \lambda(\D^2) \\
\lambda(\D^2) & \lambda(\D^2)^{2}
\end{pmatrix}, \quad
\mathcal{F}_{12}(\tmr)=\E \frac{1}{\tmr+\D^2}\begin{pmatrix}
\theta(\D^2) \\ \lambda(\D^2) \theta(\D^2) \end{pmatrix}, \quad
\mathcal{F}_{22}(\tmr)=\E \frac{1}{\tmr+\D^2} \theta(\D^2)^{2}.\nonumber
\end{align}
Since $\D^2$ has strictly positive variance and
$x \mapsto \lambda(x)$ is one-to-one on the support of $\D^2$, we have
$\Fc_{11}(\tmr) \succ 0$ strictly for $\tmr>-d_-$, so $\Fc(\tmr)$ is
well-defined. Also $\Fc_{11}^e(\tmr)>0$ strictly since $\E \mathsf{E}_b^2=b_*>0$
from \Cref{remark:non0param}, so $\Fc^e(\tmr,r)$ is well-defined.
We note that these functions are expressed equivalently as
\begin{equation}\label{eq:altFFe}
    \begin{aligned}
\mathcal{F}^{e}(\tmr, r)&=
\inf_{\Fpe\in \R}
\mathbb{E}\left[\frac{\mathsf{E}_{\mathrm{b}}^{2}}{\tmr+\D^2}\left(\fe(\D^2,r)-\Fpe\right)^{2}\right]
=\inf_{\Fpe\in \R} \mathbb{E}\left[\frac{\mathsf{E}_{\mathrm{b}}^{2}}{\tmr+\D^2}\left(\left[\frac{\gamma_{*}}{\eta_{*}}-b_{*}^{-1/2} r\right] \D^2-\Fpe\right)^{2}\right]\\
\mathcal{F}(\tmr)&=\inf_{\Fpnu, \Fpom\in \R}
\mathbb{E}\left[\frac{1}{\tmr+\D^2}\left(\theta(\D^2)-\lambda(\D^2)
\Fpnu -\Fpom\right)^{2}\right]
\end{aligned}
\end{equation}
where these coincide with the above definitions upon explicitly evaluating the
infima over $\Fpe,\Fpnu,\Fpom \in \R$, and the two expressions for $\Fc^e$ in
(\ref{eq:altFFe}) are identical upon absorbing all terms of $\fe(\D^2,r)$ not
depending on $\D^2$ into an additive shift of the variable $\Fpe$.

We first approximate the common term $\Pi\Pi^\top(D^\top \xi-D^\top D\x_{\|})$
in (\ref{eq:sectermPI1}--\ref{eq:sectermPI2}):
Applying \Cref{cor:ampSEcor1} together with
\eqref{eq:xparlim} and \eqref{eq:SDSlim}, we obtain
\[\begin{aligned}
\mqty(
e^{\top} e & e^{\top} X & e^{\top} Y \\
X^{\top} e & X^{\top} X & X^{\top} Y \\
Y^{\top} e & Y^{\top} X & Y^{\top} Y
)^{-1}\left(e_{b}, S, \Lambda S\right)^{\top} \DD \x_{\|}
&\doteq \mqty( b_{*} & 0 & 0 \\ 0 & \Delta & 0 \\ 0 & 0 & \kappa_{*} \Delta
)^{-1}\mqty( e_* & 0 & 0 \\ 0 & d_* \Delta & a_{*} \Delta \\
0 & a_{*} \Delta & c_{*} \Delta)\mqty(
b_{*}^{-1/2} r(\x) \\ \Delta^{-1/2} v(\x) \\
\left(\kappa_{*} \Delta\right)^{-1/2} w(\x)).
\end{aligned}\]
Then applying
\begin{equation}\label{eq:bigPI}
    \Pi\Pi^\top=I-\bar{\Pi}\bar{\Pi}^{\top}=I-\left(e_{b}, S, \Lambda S\right)\mqty(
e^{\top} e & e^{\top} X & e^{\top} Y \\
X^{\top} e & X^{\top} X & X^{\top} Y \\
Y^{\top} e & Y^{\top} X & Y^{\top} Y
)^{-1}\left(e_{b}, S, \Lambda S\right)^{\top}
\end{equation}
yields
\begin{equation}\label{eq:imdfa1}
    \begin{aligned}
    \Pi\Pi^\top \DD \x_\|&=b_*^{-1/2} r(\x)\left(\DD-e_*b_*^{-1}I\right)
e_{b}+\left(\DD-d_*I-a_* \kappa_*^{-1} \Lambda\right) S \Delta^{-1/2} v(\x)\\
    &\qquad\qquad+\kappa_*^{-1/2}\left(\DD \Lambda-a_{*} I-c_* \kappa_*^{-1}
\Lambda\right) S \Delta^{-1/2} w(\x)+r_n(\x)
    \end{aligned}
\end{equation}
where $n^{-1}\|r_n(\x)\|^2 \to 0$ uniformly over $\x \in \Kc$.
From the definitions of $a_*,c_*$ in (\ref{eq:auxparams}) and of $\Lambda$ in
(\ref{eq:Lambdadef}), a straightforward computation yields the identity
\begin{equation}\label{eq:imdfa2}
-\frac{\gamma_*}{\eta_*-\gamma_*}\left(\DD \Lambda-a_{*}
I-\frac{c_{*}}{\kappa_{*}} \Lambda\right)=\DD-d_*I-\frac{a_{*}}{\kappa_{*}}
\Lambda=:\widetilde{D}
\end{equation}
where we define $\widetilde{D} \in \R^{n \times n}$
as this common quantity. Then, recalling
$\alpha_*^A$ from \eqref{eq:auxparams},
we can rewrite \eqref{eq:imdfa1} as
\begin{equation}\label{eq:PIDDx}
\Pi\Pi^\top \DD \x_\|=
b_*^{-1/2}r(\x)\cdot\left(\DD-e_*b_*^{-1} I\right) e_{b}
+\widetilde{D} S\Delta^{-1/2} \left(v(\x)-\alpha_{*}^{A}w(\x)\right)+r_n(\x).
\end{equation}
Applying \Cref{cor:ampSEcor1}, \eqref{eq:SDSlim}, \eqref{eq:bigPI} and $D^\top
\xi=\frac{\gamma_{*}}{\eta_{*}}\left[\DD+\left(\eta_{*}-\gamma_{*}\right)
I\right] e_{b}$, we also have similarly to (\ref{eq:sDx})
\begin{equation}\label{eq:PIDxi}
\begin{aligned}
\Pi\Pi^\top D^\top \xi
&=D^\top \xi-\left(e_{b}, S, \Lambda S\right)\mqty(
b_{*} & 0 & 0 \\
0 & \Delta & 0 \\
0 & 0 & \kappa_{*} \Delta
)^{-1} \frac{1}{n}\left(e_{b}, S, \Lambda S\right)^{\top} D^\top \xi+r_n(\x)\\
&=\frac{\gamma_{*}}{\eta_{*}}\left[\DD+\left(\eta_{*}-\gamma_{*}\right) I\right]
e_{b}-b_{*}^{-1} e_{b}+r_n(\x).
\end{aligned}
\end{equation}
Then combining \eqref{eq:PIDDx} and \eqref{eq:PIDxi}, and applying $\fe$
from \eqref{eq:defFFeterms} to $\DD$ by functional calculus,
\begin{equation}\label{eq:PIDfinal}
\Pi\Pi^\top(D^\top \xi -\DD \x_\|)
=\fe(\DD,r(\x))e_b-\widetilde{D}
S\Delta^{-1/2}\left( v(\x)-\alpha_{*}^{A} w(\x)\right)+r_n(\x)
\end{equation}
for a remainder $r_n(\x)$ satisfying $n^{-1}\|r_n(\x)\|^2 \to 0$ uniformly over
$\x \in \Kc$.

We now apply \eqref{eq:PIDfinal} and \Cref{cor:ampSEcor2} to approximate the 
two terms (\ref{eq:sectermPI1}) and (\ref{eq:sectermPI2}): Recalling
$\alpha_*^B$ from (\ref{eq:auxparams}),
observe that the second definition for $\widetilde{D}$ in
(\ref{eq:imdfa2}) has the equivalent form
\[\widetilde{D}=\DD-\alpha_{*}^{B}
\eta_{*}\left(\DD+\left(\eta_{*}-\gamma_{*}\right)
I\right)^{-1}+(\alpha_{*}^{B}-d_*)I.\]
Then, recalling the definitions of $\Fc_{22},\Fc_{22}^e$ from
(\ref{eq:defFFeterms}), by \Cref{thm:ampSE} and \Cref{cor:ampSEcor2} we have
\begin{align*}
&n^{-1}S^{\top} \widetilde{D}\left(\tmr I+\DD\right)^{-1} \widetilde{D} S
\doteq \Fc_{22}(\tmr) \cdot \Delta,\\
&n^{-1}e_{b}^{\top} \fe\left(\DD,r(\x)\right)\left(\tmr I+\DD\right)^{-1}
\fe \left(\DD,r(\x)\right) e_{b} \doteq \mathcal{F}_{22}^{e}(\tmr,r(\x)),\\
&n^{-1}e_b^\top \fe(\left(\DD,r(\x)\right)\left(\tmr I+\DD\right)^{-1}
\widetilde{D} S \doteq 0.
\end{align*}
Combining with \eqref{eq:PIDfinal}, this shows for \eqref{eq:sectermPI1} that
\begin{equation}\label{eq:sectermPI1first}
\mathrm{I}\doteq \mathcal{F}_{22}^{e}(\tmr, r(\x))+\mathcal{F}_{22}(\tmr)
\cdot \left\|v(\x)-\alpha_{*}^{A} w(\x)\right\|_{2}^{2}
=\mathcal{F}_{22}^{e}(\tmr, r(\x))+\mathcal{F}_{22}(\tmr)\cdot \Bc(v(\x),w(\x)).
\end{equation}
For \eqref{eq:sectermPI2}, by \Cref{thm:ampSE} and \Cref{cor:ampSEcor2}, we have
\begin{align*}
n^{-1}(S,\Lambda S)^{\top}(\tmr I+\DD)^{-1}(S, \Lambda S) &\doteq
\mathcal{F}_{11}(\tmr) \otimes \Delta \in \mathbb{R}^{2 t \times 2 t},\\
n^{-1}e_b^\top (\tmr I+\DD)^{-1}e_b &\doteq \Fc^e_{11}(\tmr),\\
n^{-1}e_b^\top (\tmr I+\DD)^{-1}(S, \Lambda S) &\doteq 0.
\end{align*}
Then, recalling the form of $\bar{\Pi}$ from \eqref{eq:defW},
\[\bar{\Pi}\left(\bar{\Pi}^{\top}\left(\tmr I+\DD \right)^{-1}
\bar{\Pi}\right)^{-1} \bar{\Pi}^{\top}
=n^{-1}\left(e_b\Fc^e_{11}(\tmr)^{-1}e_b^\top+
(S,\Lambda S) (\mathcal{F}_{11}(\tmr) \otimes \Delta)^{-1}
(S,\Lambda S)^\top+r_n\right)\]
where $n^{-1}\|r_n\| \to 0$ in operator norm.
Combining this with \eqref{eq:PIDfinal}, and applying Theorem \ref{thm:ampSE},
Corollary \ref{cor:ampSEcor2}, and the definitions of $\Fc_{12},\Fc_{12}^e$ in
(\ref{eq:defFFeterms}), we obtain for \eqref{eq:sectermPI2}
\begin{equation}\label{eq:unifsecond}
\mathrm{II}\doteq \mathcal{F}_{12}^{e}(\tmr,r(\x))
\mathcal{F}_{11}^{e}(\tmr)^{-1}
\mathcal{F}_{12}^{e}(\tmr,r(\x))+\mathcal{F}_{12}(\tmr)^{\top}
\mathcal{F}_{11}(\tmr)^{-1} \mathcal{F}_{12}(\tmr) \cdot \Bc(v(\x),w(\x)).
\end{equation}
Combining (\ref{eq:sectermPI1first}) and (\ref{eq:unifsecond}),
the second term of $E_n(\x)$ in \eqref{eq:En} satisfies
\begin{equation}\label{eq:sectermPIlim}
    \begin{aligned}
&n^{-1}(\xi^{\top} D \Pi-\x_{\|}^{\top} \DD \Pi)(\tmr I+\Pi^{\top} \DD
\Pi)^{-1}(\Pi^{\top} D^\top \xi-\Pi^{\top} \DD \x_{\|})\\
    &\doteq \Fc^e(\tmr,r(\x))+\Fc(\tmr) \cdot \Bc(v(\x),w(\x)).
    \end{aligned}
\end{equation}
The error of this approximation again converges to 0 uniformly over
$\tmr\ge -d_-+\df$, by an argument that is the same as
that leading to \cite[Eq.\ (3.31)]{fan2021replica}.

Combining \eqref{eq:sDDs}, \eqref{eq:sDx}, \eqref{eq:scH}, and
\eqref{eq:sectermPIlim} and applying this to \eqref{eq:fnapp}, we obtain
\[\lim_{n,m \to \infty} \sup_{\x \in \Kc}
\Big|f_n(\x)-f(\Pfrak(\x))\Big|=0,\]
where we define on the domain $\Cst=\{\Pfrak:u-r^2-\|v\|^2-\|w\|^2>0\}$
the function
\[f(\Pfrak)=\inf_\tmr {-}\frac{e_*r^2}{b_*}+\frac{2r}{\sqrt{b_*}}
- \Tr\left[\begin{pmatrix} d_* & a_* \\ a_* & c_* \end{pmatrix}
\left(v,\frac{w}{\sqrt{\kappa_*}}\right)^\top
\left(v,\frac{w}{\sqrt{\kappa_*}}\right)\right]
+\mathcal{H}\qty(\tmr,\Ac(\Pfrak))+\Fc^e \qty(\tmr,r)
+\Fc(\tmr) \cdot \Bc(v,w),\]
and the infimum is over $\tmr \ge -d_-+\df$.
The functions $\Fc^e,\Fc$ are decreasing over
$\tmr>-d_-$ by definition in (\ref{eq:altFFe}).
For any fixed $\Pfrak$ with $u<K$,
applying $\pdv{\tmr} \mathcal{H}(\tmr, \Ac(\Pfrak))
=\Ac(\Pfrak)-G(\tmr)<K-G(\tmr)<0$ by \eqref{eq:Gddfbig},
the function $\Hc(\tmr,\Ac(\Pfrak))$ is also decreasing over
$\tmr \in (-d_-,-d_-+\df]$.
Then the above infimum defining $f(\Pfrak)$ may be extended to $\zeta>-d_-$.
Finally, this uniform approximation for $f_n$ may be extended from $\Kc$ to its
closure $\overline{\Kc}$: Here $f_n$ as defined in (\ref{eq:fndef})
is continuous on $\overline{\Kc}$, and
the map $\Pfrak:\overline{\Kc} \to \{\Pfrak \in \bar{\Cst}:u \in [0,K]\}$
is continuous, relatively open, and maps the dense subset $\Kc \subset
\overline{\Kc}$ to the interior $\{\Pfrak \in \Cst:u \in (0,K)\}$ for each fixed
$n$. Then \cite[Proposition C.1]{fan2021replica} shows
that $f$ has a continuous extension to
$\{\Pfrak \in \bar{\Cst}: u \in [0,K]\}$, and
(denoting also by $f$ this extension)
\[\lim_{n,m \rightarrow \infty} \sup_{\x \in \overline{\Kc}}
\left|f_{n}(\x)-f(\Pfrak(\x))\right|=0.\]
Applying this to (\ref{eq:Econdstart}), and denoting by $\langle \cdot
\rangle_\pi$ the expectation over $(\s_i)_{i=1}^n \overset{iid}{\sim} \pi$,
we obtain almost surely
\begin{equation}\label{eq:finalfirstorderapprox}
\lim_{n,m \to \infty} \frac{1}{n}\log \E[\Z(\cU) \mid \cG]
=-\frac{1}{2}+\lim_{n,m \to \infty} \frac{1}{n}\log \left \langle
\mathbb{I}\big\{u(\x) \in \cU\big\}
\exp\left(\frac{n}{2}f(\Pfrak(\x))\right)\right \rangle_\pi.
\end{equation}
The same statement also holds with the closure $\overline{\cU}$ in place of $\cU$
on both sides.\\

{\bf Large deviations analysis.} We conclude the proof by establishing large
deviations upper and lower bounds for $\Pfrak(\x)$ and applying Varadhan's
lemma. Recall that $\x=\s-\st$, and introduce dual variables $\Rfrak=(U,R,V,W)$
where $U,R \in \R$ and $V,W\in \R^t$. For the large deviations upper bound,
define the cumulant generating function
\begin{equation}\label{eq:CGF}
\begin{aligned}
\lambda_n(\Rfrak)&=\frac{1}{n} \log \Big\langle
\exp\qty(n \cdot \Pfrak(\x)^\top \Rfrak) \Big \rangle_\pi \\
&=\frac{1}{n} \sum_{i=1}^n \log \int \exp\Big(U \sigma_i^2
+A_i\sigma_i+B_i\Big)d\pi(\sigma_i)
=\frac{1}{n}\sum_{i=1}^n \Big[B_i+\log c_\pi(U,A_i)\Big]
\end{aligned}
\end{equation}
where, denoting by $(e_i,x_i,y_i) \in \R^{2t+1}$ the $i^\text{th}$ row of
$(e,X,Y)$, we have set
\begin{align*}
A_i&=A_i(\Rfrak)=-2U\st_i+(R,V,W)^\top
\qty[\frac{1}{n} \mqty(e^{\top} e & e^{\top} X & e^{\top} Y \\ X^{\top} e &
X^{\top} X & X^{\top} Y\\ Y^\top e & Y^\top X & Y^\top Y)]^{-1/2}
(e_i,x_i,y_i),\\
B_i&=B_i(\Rfrak)=U(\st_i)^2-(R,V,W)^\top
\qty[\frac{1}{n} \mqty(e^{\top} e & e^{\top} X & e^{\top} Y \\ X^{\top} e &
X^{\top} X & X^{\top} Y\\ Y^\top e & Y^\top X & Y^\top Y)]^{-1/2}
(e_i,x_i,y_i) \cdot \st_i.
\end{align*}
By Theorem \ref{thm:ampSE}, \Cref{cor:ampSEcor1}, and Propositions
\ref{prop:contW} and \ref{prop:combW},
almost surely as $n,m \to \infty$, the empirical distributions
of $(A_i)_{i=1}^n$ and $(B_i)_{i=1}^n$ converge (in Wasserstein-$\pfrak$ for
every fixed order $\pfrak \geq 1$) respectively to
\begin{align*}
\As(\Rfrak)&=-2U\Xstar+b_*^{-1/2}R\Es+V^\top \Delta^{-1/2}(\Xs_1,\ldots,\Xs_t)
+\kappa_*^{-1/2} W^\top\Delta^{-1/2}(\Ys_1,\ldots,\Ys_t),\\
\Bs(\Rfrak)&=U{\Xstar}^2-b_*^{-1/2}R \Es\Xstar-V^\top \Delta^{-1/2}(\Xs_1,\ldots,\Xs_t)
\Xstar-\kappa_*^{-1/2} W^\top\Delta^{-1/2}(\Ys_1,\ldots,\Ys_t)\Xstar.
\end{align*}
In the remainder of the proof, we restrict to the event of probability 1 where
this empirical convergence holds. These limiting random variables satisfy
$\E{\Xstar}^2=\rho_*$, $\E \mathsf{E}\Xstar=0$, 
$\E\Ys_r\Xstar=0$, and $\E\Xs_r\Xstar=\E[F(\mathsf{P},\Xstar)\Xstar]=\pi_*$,
where $\mathsf{P}\sim N(0,\gamma_*^{-1})$ is independent of $\Xstar$, and
$\pi_*$ is defined in \eqref{eq:auxparams}. Define the limit
cumulant generating function
\begin{align*}
\lambda(\Rfrak)&=\E \log c_\pi (U,\As(\Rfrak))
+\Vpi\,U-\pi_*\,V^\top \Delta^{-1/2}1_{t\times 1} \in
(-\infty,\infty].
\end{align*}
Then Lemma \ref{lem:abar} and the above empirical Wasserstein
convergence ensure
\begin{equation}\label{eq:MGFconvergence}
\lim_{n,m \to \infty} \lambda_n(\Rfrak)=\lambda(\Rfrak)<\infty
\text{ for all } U<\bar{a}, \qquad
\lambda_n(\Rfrak)=\lambda(\Rfrak)=\infty \text{ for all } U>\bar{a}.
\end{equation}

Denote the Fenchel-Legendre transform of $\lambda$ by
\begin{equation}\label{eq:FenchelLegendre}
\lambda^{*}(\Pfrak)=\sup_{\Rfrak \in \R^{2t+2}}
\Pfrak^\top \Rfrak-\lambda(\Rfrak) \in [0,\infty].
\end{equation}
Observe that the concentration bound (\ref{eq:subGaussian}) implies
that $\int e^{ax^2}d\pi(x)<\infty$ for some sufficiently small $a>0$.
Then $\bar{a}>0$ strictly, so $\lambda$ is finite in an open neighborhood of
$\Rfrak=0$, and hence $\lambda^*$ is a good convex rate function (i.e.\ lower
semi-continuous and having compact level sets)
\cite[Lemma 2.3.9(a)]{Dembo1998large}. 
Let us show the large-deviations upper bound
\begin{equation}
\limsup_{n,m \to \infty}
\frac{1}{n}\log \Big\langle \mathbb{I}\{\Pfrak(\x) \in F\}
\Big\rangle_\pi \leq -\inf_{\Pfrak \in F} \lambda^*(\Pfrak)
\text{ for all closed } F \subseteq \bar{\Cst}.\label{eq:LDPupper}
\end{equation}
For this, set
$\bar{\lambda}(\Rfrak)=\limsup_{n,m \to \infty} \lambda_n(\Rfrak)$.
Here $\bar{\lambda}(\Rfrak)=\lambda(\Rfrak)$ whenever $U \neq \bar{a}$,
by (\ref{eq:MGFconvergence}). The upper bound in the G\"artner-Ellis Theorem
shows that (\ref{eq:LDPupper}) holds with $\lambda^*$ replaced by the
Fenchel-Legendre transform $\bar{\lambda}^*$ of $\bar{\lambda}$ (see
e.g.\ \cite[Exercise 2.3.25]{Dembo1998large}). Note that both $\lambda$ and
$\bar{\lambda}$ are convex, so the restrictions of $\lambda$ and
$\bar{\lambda}$ to any line segment are upper-semicontinuous by the
Gale-Klee-Rockafellar Theorem \cite[Theorem 10.2]{rockafellar2015convex}.
Then the supremum in (\ref{eq:FenchelLegendre}) and the analogous supremum
defining $\bar{\lambda}^*$ may be restricted to $\{\Rfrak:U \neq \bar{a}\}$,
implying that $\lambda^*=\bar{\lambda}^*$. This proves (\ref{eq:LDPupper}).

For the large deviations lower bound, set
\[\lambda^M(\Rfrak)=\E \log c_\pi^M(U,\As(\Rfrak))
+\Vpi\,U-\pi_*\,V^\top \Delta^{-1/2}1_{t \times 1},\]
and let $(\lambda^M)^*$ be its Fenchel-Legendre transform defined analogously to
(\ref{eq:FenchelLegendre}). We aim to show
\begin{equation}\label{eq:LDPlower}
\liminf_{n,m \to \infty}
\frac{1}{n}\log \Big\langle \mathbb{I}\{\Pfrak(\x) \in G\}
\Big\rangle_\pi \geq \sup_{M>0}\left(
{-}\inf_{\Pfrak \in G} (\lambda^M)^*(\Pfrak)\right)
\text{ for all open } G \subseteq \Cst.
\end{equation}
For this, consider any $M>0$ where $(-M,M)$ intersects the support of $\pi$,
and denote by $\pi_M$ the conditional distribution of $\pi$ over $(-M,M)$.
Let $\langle \cdot \rangle_{\pi_M}$ be the expectation over
$(\sigma_i)_{i=1}^n \overset{iid}{\sim} \pi_M$.
Then analogously to (\ref{eq:CGF}), we have
\begin{align*}
\lambda_n^M(\Rfrak):=\frac{1}{n} \log \Big\langle
\exp\qty(n \cdot \Pfrak(\x)^\top \Rfrak) \Big \rangle_{\pi_M}
=\frac{1}{n}\sum_{i=1}^n \Big[B_i+\log
c_\pi^M(U,A_i)\Big]-\log \pi((-M,M)).
\end{align*}
By the above empirical Wasserstein convergence,
$\lambda_n^M(\Rfrak)$ converges pointwise over all $\Rfrak \in \R^{2t+2}$ to
$\lambda^M(\Rfrak)-\log \pi((-M,M))$, which is now finite.
Then the G\"artner-Ellis lower bound \cite[Theorem 2.3.6]{Dembo1998large}
may be applied for the law of $\Pfrak(\x)$
under $\pi_M$, giving for all open $G \subseteq \Cst$
\begin{align*}
\frac{1}{n}\log \Big\langle \mathbb{I}\{\Pfrak(\x) \in G\}
\Big\rangle_\pi
&\geq \log \pi((-M,M))+\frac{1}{n}\log \Big\langle \mathbb{I}\{\Pfrak(\x)
\in G \Big\rangle_{\pi_M}\\
&\geq \log \pi((-M,M))-\inf_{\Pfrak \in G}
\big(\lambda^M-\log \pi((-M,M))\big)^*(\Pfrak)
=-\inf_{\Pfrak \in G} (\lambda^M)^*(\Pfrak).
\end{align*}
This lower bound is increasing in $M$, and taking the supremum over $M>0$
yields (\ref{eq:LDPlower}).

The function $\Pfrak \mapsto f(\Pfrak)$ is continuous and thus bounded over the
compact set $\{\Pfrak \in \bar{\Cst}:u \in [0,K]\}$. Then for any $M>0$,
applying (\ref{eq:LDPlower}) and Varadhan's lemma in the form of Lemma
\ref{lemma:varadhan}(a),
\[\liminf_{n,m \to \infty}
\frac{1}{n} \log \left\langle \mathbb{I}\Big\{
\Pfrak(\x) \in \Cst,\,u(\x) \in \cU\Big\}
\exp\Big(\frac{n}{2}f(\Pfrak(\x))\Big)\right\rangle_\pi
\geq \sup_{\Pfrak \in \Cst:\,u \in \cU}
\frac{1}{2}f(\Pfrak)-(\lambda^M)^*(\Pfrak).\]
Recalling the definitions of $\Phi_{1,t}^M$ and $\Psi_{1,t}^M$ in
(\ref{eq:defPhi1t}--\ref{eq:defPsi1t}), note that
$\Psi_{1,t}^M(\Pfrak)=-\frac{1}{2}+\frac{1}{2}f(\Pfrak)-(\lambda^M)^*(\Pfrak)$.
Then taking the supremum of the above over $M>0$ and applying this to
(\ref{eq:finalfirstorderapprox}) yields the desired lower bound
for $\E[\Z(\cU) \mid \cG_t]$. Similarly, recalling that $\lambda^*$ is a good
convex rate function and applying (\ref{eq:LDPupper}) and Varadhan's lemma in
the form of Lemma \ref{lemma:varadhan}(b),
\[\limsup_{n,m \to \infty}
\frac{1}{n} \log \left\langle \mathbb{I}\Big\{
\Pfrak(\x) \in \bar{\Cst},\,u(\x) \in \bar{\cU}\Big\}
\exp\Big(\frac{n}{2}f(\Pfrak(\x))\Big)\right\rangle_\pi
\leq \sup_{\Pfrak \in \bar{\Cst}:\,u \in \bar{\cU}}
\frac{1}{2}f(\Pfrak)-\lambda^*(\Pfrak).\]
Note that the condition $\Pfrak(\x) \in \bar{\Cst}$ on the left side holds
always, by definition of $\Pfrak$.
Since $\lambda^*$ is lower-semicontinuous, by the Gale-Klee-Rockafellar
Theorem, its restriction to any line segment is continuous. Since $f$ is also
continuous, the supremum on the right side may be restricted to the interior
$\{\Pfrak \in \Cst:u \in \cU\}$, and this yields the desired upper bound for
$\E[\Z(\overline{\cU}) \mid \cG_t]$.
\end{proof}

We now prove Lemma \ref{lemma:analysisf}. Let $e_t=(0,\ldots,0,1)$ be the
$t^\text{th}$ standard basis vector in $\R^t$.
Denote $\Pfrak_*=(u_*,r_*,v_*,w_*)$ and
$\Qfrak_*=(\zeta_*,U_*,R_*,V_*,W_*,\Fpe_*,\Fpnu_*,\Fpom_*)$, where
\begin{align}
 & u_{*}=\frac{2}{\eta_*}, \quad
r_*=\frac{\gamma_*}{\eta_*} b_*^{1/2},\quad
v_{*}=\frac{\eta_{*}-\gamma_{*}}{\eta_{*}} \Delta^{1/2}_t e_t, \quad
w_{*}=\frac{\gamma_{*}}{\eta_{*}} \kappa_{*}^{1/2} \Delta^{1/2}_t e_t
\label{eq:firstmomentspec}\\
&\tmr_{*}=\eta_*-\gamma_*, \quad \Fpe_*=\Fpnu_*=\Fpom_*=0, \quad
U_{*}=-\frac{1}{2}\gamma_*, \quad R_*=\gamma_* b_*^{1/2}, \quad V_{*}=0,
\quad W_{*}=\gamma_* \kappa_{*}^{1/2} \Delta_{t}^{1/2} e_t.\nonumber
\end{align}
We will show that $(\Pfrak_*,\Qfrak_*)$ is an approximate stationary point of
$\Phi_{1,t}$, which is an approximate global optimizer of
$\sup_\Pfrak \inf_\Qfrak \Phi_{1,t}$ for sufficiently small $\e>0$.

Denote by $\partial_u \Phi_{1,t} \in \R$, $\partial_v \Phi_{1,t} \in \R^t$
etc.\ the partial derivative or gradient of $\Phi_{1,t}$ in the variables $u$,
$v$, etc.

\begin{Lemma}\label{lemma:plugstat}
In the setting of Lemma \ref{lemma:analysisf},
    for all $t\ge 1$ and each $\iota\in \qty{u,r,v,w,\tmr,\Fpe,\Fpnu,\Fpom,U,R,W}$,
    \begin{equation}\label{eq:plugstat}
        \Phi_{1,t}(\Pfrak_*,\Qfrak_*)=\Psi_{\mathrm{RS}}, \quad \partial_\iota
\Phi_{1,t}(\Pfrak_*,\Qfrak_*)=0, \quad
        \lim_{t\to \infty }\norm{\partial_V \Phi_{1,t}(\Pfrak_*,\Qfrak_*)}=0.
    \end{equation}
\end{Lemma}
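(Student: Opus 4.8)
The approach is a direct substitution of $(\Pfrak_*,\Qfrak_*)$ into \eqref{eq:defPhi1t} and into each partial derivative of $\Phi_{1,t}$, reducing every identity to the two fixed-point relations $\mmse(\gamma_*)=\eta_*^{-1}$, $G(\eta_*-\gamma_*)=\eta_*^{-1}$, the parameter identities of \Cref{lemma:dABC} and \eqref{eq:scalarparams}--\eqref{eq:auxparams}, and standard Nishimori and Gaussian integration-by-parts identities for the scalar channel \eqref{eq:scalarBayes}. First I would record the structural simplifications at $(\Pfrak_*,\Qfrak_*)$: since $r_*/\sqrt{b_*}=\gamma_*/\eta_*$ and $\Fpe_*=0$, the penultimate expectation in \eqref{eq:defPhi1t} and all of its $r,\tmr,\Fpe$ derivatives vanish; since $v_*=\alpha_*^A w_*$ one has $\Bc(v_*,w_*)=0$, so the last expectation and all of its $v,w,\tmr,\Fpnu,\Fpom$ derivatives vanish; a short computation using $b_*=\gamma_*^{-1}-\kappa_*\delta_*$ gives $r_*^2+\|w_*\|^2=\gamma_*/\eta_*^2$ and $\|v_*\|^2=(\eta_*-\gamma_*)/\eta_*^2$, hence $\Ac(\Pfrak_*)=\eta_*^{-1}$ and $\partial_{\Ac}\Hc(\tmr_*,\Ac(\Pfrak_*))=\tmr_*-1/\Ac(\Pfrak_*)=-\gamma_*$; and substituting $U_*=-\gamma_*/2$, $R_*\mathsf{E}/\sqrt{b_*}=\gamma_*\mathsf{E}$, $V_*=0$, and $W_*^\top\Delta_t^{-1/2}(\Ys_1,\dots,\Ys_t)/\sqrt{\kappa_*}=\gamma_*\Ys_t$ shows that the measure tilting $\pi$ inside $\log c_\pi$ is precisely the posterior law of $\Xstar$ in the channel $\Ys=\Xstar+\Zs/\sqrt{\gamma_*}$, upon identifying $\Zs/\sqrt{\gamma_*}$ with $\mathsf{E}+\Ys_t\sim N(0,\gamma_*^{-1})$, which is independent of $\Xstar$ because $\mathsf{E}\perp\Ys_t$, $\mathsf{E}\perp\Xstar$, and $(\Ys_1,\dots,\Ys_t)\perp\mathsf{H}$ by \Cref{thm:ampSE}.

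With these simplifications the value reduces to matching the terms $-\tfrac12$, $-(u_*-\rho_*)U_*$, $-r_*R_*$, $-w_*^\top W_*$, the quadratic bracket on the third line of \eqref{eq:defPhi1t}, $\tfrac12\Hc(\eta_*-\gamma_*,\eta_*^{-1})$, and $\E\log c_\pi(-\gamma_*/2,\gamma_*\Xstar+\sqrt{\gamma_*}\Zs)$ against \eqref{eq:RSdef}; using $b_*+\sigma_*^2=\gamma_*^{-1}$ and \Cref{lemma:dABC} this is routine except for the one nonroutine identity $\Hc(\eta_*-\gamma_*,\eta_*^{-1})=\int_0^{\eta_*^{-1}}R(z)\,dz$. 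I would prove the latter by the change of variables $z=G(w)$: integrating $G^{-1}$ by parts gives $\int_\varepsilon^{\eta_*^{-1}}G^{-1}(z)\,dz=(\eta_*-\gamma_*)\eta_*^{-1}-w_\varepsilon\varepsilon+\E[\log(w_\varepsilon+\D^2)]-\E[\log(\eta_*-\gamma_*+\D^2)]$ with $w_\varepsilon=G^{-1}(\varepsilon)$, and subtracting $\int_\varepsilon^{\eta_*^{-1}}z^{-1}\,dz$ and letting $\varepsilon\to0^+$ (so $w_\varepsilon\varepsilon\to1$ and $\log\varepsilon+\E[\log(w_\varepsilon+\D^2)]\to0$ since $\D^2$ is bounded) leaves exactly $\Hc(\eta_*-\gamma_*,\eta_*^{-1})$.

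For the vanishing of $\partial_\iota\Phi_{1,t}(\Pfrak_*,\Qfrak_*)$ with $\iota\in\{u,r,v,w,\tmr,\Fpe,\Fpnu,\Fpom\}$: after discarding the last two expectations in \eqref{eq:defPhi1t} as above and using $\partial_{\Ac}\Hc=-\gamma_*$, each such derivative becomes a rational expression in the scalar parameters that vanishes by \Cref{lemma:dABC} and the fixed-point equations --- e.g.\ $\partial_\tmr\Phi_{1,t}=\tfrac12(\Ac(\Pfrak_*)-G(\eta_*-\gamma_*))=0$, and $\partial_v\Phi_{1,t}=\big((\gamma_*-d_*)\tfrac{\eta_*-\gamma_*}{\eta_*}-\tfrac{a_*\gamma_*}{\eta_*}\big)\Delta_t^{1/2}e_t=0$ since $a_*\gamma_*=(\eta_*-\gamma_*)(\gamma_*-d_*)$, and similarly for $u,r,w,\Fpe,\Fpnu,\Fpom$. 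For $\iota\in\{U,R,W\}$ one additionally uses the scalar-channel identities $\E f(\Ys,\gamma_*)=0$, $\E[\Xstar f(\Ys,\gamma_*)]=\rho_*-\eta_*^{-1}$, $\E f'(\Ys,\gamma_*)=\gamma_*\mmse(\gamma_*)=\gamma_*/\eta_*$ (from \eqref{eq:denoiserpp1}), and Gaussian integration by parts giving $\E[\mathsf{E}f(\Ys,\gamma_*)]=b_*\gamma_*/\eta_*$ and $\E[\Ys_k f(\Ys,\gamma_*)]=(\kappa_*\Delta_t)_{kt}\gamma_*/\eta_*$; for instance $\partial_U\Phi_{1,t}=\rho_*-2(\rho_*-\eta_*^{-1})-(u_*-\rho_*)=2\eta_*^{-1}-u_*=0$.

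Finally, for the $V$-derivative I would use the observation that $\Xs_{t+1}=F(\Ys_t+\mathsf{E},\Xstar)=F(\Ys-\Xstar,\Xstar)$, which by the definition \eqref{eq:Fdef} of $F$ rearranges to $f(\Ys,\gamma_*)=\tfrac{\eta_*-\gamma_*}{\eta_*}(\Xs_{t+1}+\Xstar)+\tfrac{\gamma_*}{\eta_*}\Ys$. Combined with $\E[\Xstar\Xs_k]=\pi_*$ (from \eqref{eq:auxparams}, since $\Xs_k$ is $F$ of an argument with law $N(0,\gamma_*^{-1})$ independent of $\Xstar$) and $\E[(\mathsf{E}+\Ys_t)\Xs_k]=0$ (Gaussian integration by parts together with $\E F'(\mathsf{P},\Xstar)=0$ from \eqref{eq:denoiserpp2}), this gives $\E[f(\Ys,\gamma_*)\Xs_k]=\tfrac{\eta_*-\gamma_*}{\eta_*}\delta_{t+1,k}+\pi_*$; substituting into $\partial_V\Phi_{1,t}$ yields $\partial_V\Phi_{1,t}(\Pfrak_*,\Qfrak_*)=\tfrac{\eta_*-\gamma_*}{\eta_*}\Delta_t^{-1/2}(\delta_t-\Delta_t e_t)$, where $\delta_t=(\delta_{t+1,1},\dots,\delta_{t+1,t})^\top$ is the off-diagonal block of $\Delta_{t+1}$, so that $\|\partial_V\Phi_{1,t}(\Pfrak_*,\Qfrak_*)\|^2=(\tfrac{\eta_*-\gamma_*}{\eta_*})^2(\delta_t^\top\Delta_t^{-1}\delta_t-2\delta_{t,t+1}+\delta_*)\le 2(\tfrac{\eta_*-\gamma_*}{\eta_*})^2(\delta_*-\delta_{t,t+1})$ using the Schur-complement bound $\delta_t^\top\Delta_t^{-1}\delta_t\le\delta_*$ (from $\Delta_{t+1}\succ0$ in \Cref{thm:ampSE}); this tends to $0$ by \Cref{prop:convsmallbeta}. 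The main obstacle is the sheer volume of bookkeeping in the value and exact-derivative computations, all of which ultimately rest on \Cref{lemma:dABC} and the two fixed-point equations; the $R$-integral identity is the one genuinely nonroutine computation, and \Cref{prop:convsmallbeta} (hence the high-temperature hypothesis) enters only through the $V$-derivative limit.
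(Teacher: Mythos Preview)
Your proposal is correct and follows essentially the same approach as the paper: direct substitution, the same structural simplifications ($r_*/\sqrt{b_*}=\gamma_*/\eta_*$, $\Bc(v_*,w_*)=0$, $\Ac(\Pfrak_*)=\eta_*^{-1}$), the same scalar-channel and Stein identities for the $U,R,W$ derivatives, and the same Schur-complement bound for $\|\partial_V\Phi_{1,t}\|$. The one substantive difference is that you give a self-contained proof of $\Hc(\eta_*-\gamma_*,\eta_*^{-1})=\int_0^{\eta_*^{-1}}R(z)\,dz$ via the change of variables $z=G(w)$, whereas the paper invokes \cite[Proposition~2.9]{fan2021replica} for this identity; your argument is valid and makes the proof independent of that external reference.
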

\begin{proof}
For the first term of $\Phi_{1,t}$, denote
\begin{align}
\As_*&=-2U_*\Xstar+b_{*}^{-1/2} R_* \mathsf{E}+V^{\top}_*
\Delta^{-1/2}_t\left(\Xs_{1}, \ldots, \Xs_{t}\right)+\kappa_{*}^{-1/2}
W^{\top}_* \Delta^{-1/2}_t\left(\Ys_{1}, \ldots, \Ys_{t}\right)\nonumber\\
&=\gamma_*(\Xstar+\Es+\Ys_t).
\label{eq:Astar}
\end{align}
As $\Es \sim N(0,b_*)$ and $\Ys_t \sim N(0,\sigma_*^2)$ are independent of each
other and of $\Xstar$, and $b_*+\sigma_*^2=\gamma_*^{-1}$, we have
\begin{equation}\label{eq:plug1}
\E\log c_{\pi}\left(U_*, \As_*\right)=\E\log
c_{\pi}\left(-\frac{1}{2}\gamma_*, \gamma_*\Xstar+\sqrt{\gamma_*}\Zs\right)
\end{equation}
    for $\Zs \sim N(0,1)$ independent of $\Xstar$. For the next terms of
$\Phi_{1,t}$, we have
    \begin{equation}\label{eq:plug2}
        \begin{aligned}
        &-(u_*-\rho_*)\cdot U_*-r_*\cdot R_*-(v_*+\pi_*\cdot \Delta^{-1/2}_t
1_{t \times 1})^\top V_*-w^\top_* W_*\\
        &\quad
=\frac{\gamma_*}{2}\left(\frac{2}{\eta_*}-\rho_*\right)-\frac{\gamma_{*}^2}{\eta_{*}}
b_{*}-0-\frac{\gamma_{*}^{2}}{\eta_{*}} \kappa_{*}\delta_*=-\frac{1}{2}\gamma_*\rho_*\\
        \end{aligned}
    \end{equation}
    where we used $e_t^\top \Delta_t e_t=\delta_{tt}=\delta_*$
from \Cref{thm:ampSE} and the identity
$b_*+\kappa_*\delta_*=b_*+\sigma_*^2=\gamma_*^{-1}$. For the
next terms of $\Phi_{1,t}$, applying $e_t^\top \Delta_t
e_t=\delta_*=(\eta_*-\gamma_*)^{-1}$ and the definitions of $a_*,c_*,e_*$ in
(\ref{eq:auxparams}), direct calculation gives, after some simplification,
    \begin{equation}\label{eq:plug3}
    \begin{aligned}
    &\frac{e_*}{b_{*}} r_*^2-2 b_{*}^{-1/2} r_*+
\Tr\left[\begin{pmatrix} d_* & a_* \\ a_* & c_* \end{pmatrix}
(v_*,\kappa_*^{-1/2}w_*)^\top (v_*,\kappa_*^{-1/2}w_*)\right] \\
    &\quad=e_*\qty(\frac{\gamma_*}{\eta_*})^2-2 \frac{\gamma_*}{\eta_*}+d_*\delta_*\qty(\frac{\eta_*-\gamma_*}{\eta_*})^2+c_*\delta_*\qty(\frac{\gamma_*}{\eta_*})^2+2 a_{*} \frac{\gamma_*}{\eta_*^2}
    =-\frac{\gamma_*}{\eta_*}.
    \end{aligned}
    \end{equation}
    Finally, applying again
$e_t^\top \Delta_t e_t=\delta_*=(\eta_*-\gamma_*)^{-1}$ and
$b_*+\kappa_*\delta_*=\gamma_*^{-1}$, we have
\begin{equation}\label{eq:APstar}
\Ac(\Pfrak_*)=u_*-r^2_*-\norm{v_*}^2-\norm{w_*}^2=\eta_*^{-1}.
\end{equation}
We have also $\tmr_*=\eta_*+R(\eta_*^{-1})=G^{-1}(\eta_*^{-1})$ from
(\ref{eq:fix}). Then by \cite[Proposition 2.9]{fan2021replica},
\begin{equation}\label{eq:plug4}
\Hc(\tmr_*,\Ac(\Pfrak_*))=\int_0^{\eta_*^{-1}} R(z)dz.
\end{equation}
The last two terms of $\Phi_{1,t}(\Pfrak_*,\Qfrak_*)$ are 0 because
$\gamma_{*}\eta_*^{-1}-b_{*}^{-1/2} r_*=0$ and
$\Bc(v_*,w_*)=\|v_{*}-\alpha_{*}^{A}w_{*}\|^2=0$. Then
combining (\ref{eq:plug1}), (\ref{eq:plug2}), (\ref{eq:plug3}), (\ref{eq:plug4})
shows $\Phi_{1,t}(\Pfrak_*,\Qfrak_*)=\Psi_{\mathrm{RS}}$ in (\ref{eq:plugstat}).
    
    To check the stationarity conditions, first by the form of $\mathcal{H}$ in
\eqref{eq:defFH}, we have $\partial_{\tmr} \mathcal{H}(\tmr, \Ac)=\Ac-{G}(\tmr)$
and $\partial_{\Ac} \mathcal{H}(\tmr, \Ac)=\tmr-1 / \Ac$. 
Recalling $\tmr_*=\eta_*-\gamma_*=G^{-1}(\eta_*^{-1})$ and
$\Ac(\Pfrak_*)=\eta_*^{-1}$, we have
    \begin{equation}\label{eq:Hcderivs}
        \partial_{\tmr} \mathcal{H}\left(\tmr_{*}, \Ac(\Pfrak_*)\right)=0, \quad
\partial_{\alpha} \mathcal{H}\left(\tmr_{*}, \Ac(\Pfrak_*)\right)=-\gamma_*.
    \end{equation}
    In addition, note that since $[\gamma_{*}\eta_{*}^{-1}-b_{*}^{-1/2} r_*] \D^2-\Fpe_*=0,$
    \begin{equation}\label{eq:Federivs}
    \partial_r,\partial_\tmr,\partial_{\Fpe} \mathbb{E}\left[\frac{\mathsf{E}_{b}^{2}}{\tmr+\D^2}\left(\left[\frac{\gamma_{*}}{\eta_{*}}-b_{*}^{-1/2} r\right] \D^2-\Fpe\right)^{2}\right] \Bigg |_{r=r_*,\Fpe=\Fpe_*,\tmr=\tmr_*}=0
    \end{equation}
Using this and $\Bc(v_*,w_*),\partial_v \Bc(v_*,w_*),\partial_w \Bc(v_*,w_*)=0$,
we obtain
        \begin{align*}
            &\partial_u \Phi_{1,t}(\Pfrak_*,\Qfrak_*)=-U_*-\gamma_*/2=0\\
            &\partial_r \Phi_{1,t}(\Pfrak_*,\Qfrak_*)=-R_*-(e_*b_*^{-1}r_*-b_*^{-1/2})+\gamma_*r_*
           \stackrel{(a)}{=}-r_*(\eta_*-\gamma_*)+b^{1/2}_*\frac{\gamma_*(\eta_*-\gamma_*)}{\eta_*}=0\\
            & \partial_v
\Phi_{1,t}(\Pfrak_*,\Qfrak_*)=-V_*-(d_*v_*+a_*\kappa_*^{-1/2}w_*)+\gamma_*v_*=\qty(-d_*-(1-d_*\gamma_*^{-1})\gamma_*+\gamma_*)v_*=0\\
            & \partial_w
\Phi_{1,t}(\Pfrak_*,\Qfrak_*)=-W_*-(c_*\kappa_*^{-1}w_*+a_*\kappa_*^{-1/2}v_*)+\gamma_*w_*=\qty(-\eta_*-\kappa_*^{-1}\qty(c_*+\frac{\eta_*-\gamma_*}{\gamma_*}a_*)+\gamma_*)w_*\stackrel{(b)}{=}0\\
            & \partial_\tmr,\partial_{\Fpe},\partial_{\Fpnu},\partial_{\Fpom} \Phi_{1,t}(\Pfrak_*,\Qfrak_*)=0
        \end{align*}
    where we used in (a) $R_*=\eta_* r_*$ and (by the definition of $b_*,e_*$ in
(\ref{eq:scalarparams}) and (\ref{eq:auxparams}))
\begin{equation}\label{eq:reidentity}
-e_*b_*^{-1}r_*+b_*^{-1/2}=b_*^{-1/2}
\left(1-\frac{\gamma_*}{\eta_*}(1+\kappa_*)\right)
=b_*^{1/2}\frac{\gamma_*(\eta_*-\gamma_*)}{\eta_*},
\end{equation}
and in (b) (by the definitions of $a_*,c_*$ in (\ref{eq:auxparams}))
$c_*+a_*(\eta_*-\gamma_*)/\gamma_*=-(\eta_*-\gamma_*)\kappa_*$.

Now note that
    \begin{equation}\label{eq:cpideriv}
        \partial_{a} \log c_{\pi}(a, b)=\frac{\int x^{2} \exp \left(a x^{2}+b
x\right) d\pi(x)}{\int \exp \left(a x^{2}+b x\right) d\pi(x)}, \quad
\partial_{b} \log c_{\pi}(a, b)=\frac{\int x \exp \left(a x^{2}+b x\right)
d\pi(x)}{\int \exp \left(a x^{2}+b x\right) d\pi(x)}.
    \end{equation}
    Recall (\ref{eq:Astar}), where $\Es+\Ys_t \sim N(0,\gamma_*^{-1})$ is
independent of $\Xstar$. Then from the expressions for $f,f'$ in \eqref{eq:deff} and \eqref{eq:denoiserpp1}, we have that 
    \begin{equation}\label{eq:logcpiabst}
        \partial_a \log c_{\pi}\left(U_*, \As_*\right)
        =\gamma^{-1}_*f'(\Xstar+\mathsf{E}+\Ys_t)+f(\Xstar+\mathsf{E}+\Ys_t)^2,
\quad \partial_b \log c_{\pi}\left(U_*, \As_*\right)=f(\Xstar+\mathsf{E}+\Ys_t).
    \end{equation}
    Also recall from \Cref{thm:ampSE} that
$\Xs_{t+1}=F(\Ys_{t}+\mathsf{E},\Xstar)$, so by the definition (\ref{eq:Fdef}),
    \begin{equation}\label{eq:fFid}
        f(\Xstar+\mathsf{E}+\Ys_t)=\frac{\eta_*-\gamma_*}{\eta_*} \Xs_{t+1}+\frac{\gamma_*}{\eta_*} (\Ys_t+\mathsf{E})+\Xstar.
    \end{equation}
Then, denoting
\[\Delta_{t+1}=\begin{pmatrix} \Delta_t & \delta_t \\ \delta_t^\top & \delta_*
\end{pmatrix}\]
and applying Corollary \ref{cor:ampSEcor1} and
$\E \Xs_s\Xstar=\pi_*$ from (\ref{eq:auxparams}), we have
\[\E(\Xs_1,\ldots,\Xs_t)f(\Xstar+\mathsf{E}+\Ys_t)
=\frac{\eta_*-\gamma_*}{\eta_*}\delta_t+\pi_* 1_{t \times 1}.\]
It follows from this, Stein's lemma, and $\E f'(\Xstar+\Es+\Ys_t)=
\gamma_* \E[\V[\Xstar \mid \Xstar+\Es+\Ys_t]]=\gamma_*/\eta_*$ that
\begin{align*}
\partial_R \Phi_{1,t}(\Pfrak_*,\Qfrak_*)&=b_*^{-1/2} \E
\mathsf{E}f(\Xstar+\mathsf{E}+\Ys_t)-r_*=b_*^{1/2} \E
f'(\Xstar+\mathsf{E}+\Ys_t)-r_*=b_*^{1/2}(\gamma_*/\eta_*)-r_*=0\\
\partial_V \Phi_{1,t}(\Pfrak_*,\Qfrak_*)&=\Delta_t^{-1/2}\E(\Xs_1,\ldots,
\Xs_t)f(\Xstar+\mathsf{E}+\Ys_t)-(v_*+\pi_*\cdot
\Delta_t^{-1/2}1_{t \times 1})=\frac{\eta_{*}-\gamma_{*}}{\eta_{*}}
\Delta_t^{-1/2}\delta_{t}-v_*\\
\partial_W \Phi_{1,t}(\Pfrak_*,\Qfrak_*)&=\kappa_*^{-1/2}\Delta_t^{-1/2} \E (\Ys_1,\ldots, \Ys_t)f(\Xstar+\mathsf{E}+\Ys_t)-w_*
=\kappa_*^{1/2}\Delta_t^{1/2}e_t \cdot \E f'(\Xstar+\Es+\Ys_t)-w_*=0.
\end{align*}
From (\ref{eq:fFid}) and the identities $\E \Xs_{t+1}\Xstar =\pi_*$ and
$\E\Xs_{t+1}^2=\delta_*=(\eta_*-\gamma_*)^{-1}$, we have also
\begin{align}
\E\Xstar f(\Xstar+\mathsf{E}+\Ys_t)
&=\frac{\eta_*-\gamma_*}{\eta_*}\pi_*+\Vpi\label{eq:fsqid}\\
\E f(\Xstar+\mathsf{E}+\Ys_t)^2&=
\qty(\frac{\eta_*-\gamma_*}{\eta_*})^2\frac{1}{\eta_*-\gamma_*}+\qty(\frac{\gamma_*}{\eta_*})^2\frac{1}{\gamma_*}+\rho_*+\frac{2(\eta_*-\gamma_*)}{\eta_*}\pi_*\nonumber\\
&=\frac{1}{\eta_*}+\rho_*+\frac{2(\eta_*-\gamma_*)}{\eta_*}\pi_*.
\label{eq:fxid}
\end{align}
Then
\begin{align*}
\partial_U \Phi_{1,t}(\Pfrak_*,\Qfrak_*)
&=\E\Big[\gamma^{-1}_*f'(\Xstar+\mathsf{E}+\Ys_t)+f(\Xstar+\mathsf{E}+\Ys_t)^2\Big]-2\E\Xstar f(\Xstar+\mathsf{E}+\Ys_t) -(u_*-\rho_*)=0.
\end{align*}
This shows $\partial_\iota \Phi_{1,t}(\Pfrak_*,\Qfrak_*)=0$
in \eqref{eq:plugstat} for all $\iota \neq V$, and it remains to verify
the bound for $\partial_V \Phi_{1,t}$. For this, note that from the above and
$\delta_{tt}=\delta_{t+1,t+1}=\delta_*$,
    \begin{equation*}
    \begin{aligned}
    &\norm{\partial_V \Phi_{1,t}(\Pfrak_*,\Qfrak_*)}^2=\norm{\frac{\eta_{*}-\gamma_{*}}{\eta_{*}}
\Delta_t^{-1/2} \delta_{t}-v_*}^2=\qty(\frac{\eta_{*}-\gamma_{*}}{\eta_{*}})^2
\norm{\Delta_t^{-1/2} \delta_t-\Delta_t^{1/2}e_t}^2\\
    & \qquad =\qty(\frac{\eta_{*}-\gamma_{*}}{\eta_{*}})^2\qty(\delta_t^\top \Delta_t^{-1}\delta_t +\delta_*-2\delta_t^\top e_t)=\qty(\frac{\eta_{*}-\gamma_{*}}{\eta_{*}})^2\qty(\delta_t^\top \Delta_t^{-1}\delta_t-\delta_{t+1,t+1} +2\delta_*-2\delta_{t,t+1}).
    \end{aligned}
    \end{equation*}
    By \Cref{prop:convsmallbeta}, $\lim _{t \rightarrow \infty} \delta_{t,
t+1}=\delta_{*}$. By the condition $\Delta_{t+1} \succ 0$ and the
Schur-complement formula,
    \begin{equation*}
        0<\delta_{t+1, t+1}-\delta_{t}^{\top} \Delta_{t}^{-1} \delta_{t}=\inf _{\alpha \in \mathbb{R}^{t}} \mathbb{E}\left[\left(\Xs_{t+1}-\alpha^{\top}\left(\Xs_{1}, \ldots, \Xs_{t}\right)\right)^{2}\right]\le \mathbb{E}\left[\left(\Xs_{t+1}-\Xs_{t}\right)^{2}\right]=2 \delta_{*}-2 \delta_{t, t+1}.
    \end{equation*}
    So, we also have that $\lim _{t \rightarrow \infty} \delta_{t+1,
t+1}-\delta_{t}^{\top} \Delta_{t}^{-1} \delta_{t}=0$. Thus $\lim _{t \rightarrow
\infty}\left\|\partial_{V} \Phi_{1, t}(\Pfrak_*,\Qfrak_*)\right\|=0$.
\end{proof}

\begin{proof}[Proof of Lemma \ref{lemma:analysisf}]

We begin with the lower bound in (\ref{eq:analysisf1}). Denote by $o_t(1)$
any scalar quantity that converges to 0 as $t \to \infty$.
%
Let us specialize $\Psi_{1,t}^M$ to
$\tilde{\Pfrak}_*=\left(u_*, r_{*}, \tilde{v}_{*}, w_{*}\right)$ where
\begin{equation}\label{eq:deffvtid}
    \begin{aligned}
\tilde{v}_{*}=\frac{\eta_{*}-\gamma_{*}}{\eta_{*}} \Delta_t^{-1 / 2} \delta_{t}
=v_*+\partial_V \Phi_{1,t}(\Pfrak_*,\Qfrak_*).
\end{aligned}
\end{equation}
Then Lemma \ref{lemma:plugstat} shows $\|\tilde{v}_*-v_*\|=o_t(1)$.
Observing that $\|w_*\|^2=(\gamma_*/\eta_*)^2 \kappa_*\delta_*$ is constant in
$t$, and applying (\ref{eq:plug3}), we then have
\begin{equation}\label{eq:Psi1tlowerspec}
\Psi_{1,t}^M(\tilde{\Pfrak}_*)=-\frac{1}{2}+\frac{\gamma_*}{2\eta_*}
+o_t(1)+\inf_{U,R,V,W}
X^M(U,R,V,W)+\frac{1}{2}\inf_{\tmr,\Fpe,\Fpnu,\Fpom} Y(\tmr,\Fpe,\Fpnu,\Fpom)
\end{equation}
where
\begin{equation*}
    \begin{aligned}
    X^M(U,R,V,W)&=\E\log c_{\pi}^M\left(U, -2U\Xstar+b_{*}^{-1/2} R
\mathsf{E}+V^{\top} \Delta_t^{-1/2}\left(\Xs_{1}, \ldots,
\Xs_{t}\right)+\kappa_{*}^{-1/2} W^{\top} \Delta_t^{-1/2}\left(\Ys_{1}, \ldots,
\Ys_{t}\right)\right)\\
    &\qquad -(u_*-\rho_*)U-r_*R-(\tilde{v}_*+\pi_* \Delta_t^{-1/2}1_{t \times
1})^\top V-w^\top_* W,\\
Y(\tmr,\Fpe,\Fpnu,\Fpom)&=\Hc(\tmr,\Ac(\tilde{\Pfrak}_*))
+\E\left[\frac{\Es_b^2}{\tmr+\D^2}(\chi^A)^2\right]
+\E\left[\frac{1}{\tmr+\D^2}\left(\theta\left(\D^2\right)-\lambda\left(\D^2\right)
\Fpnu -\Fpom\right)^{2}\right] \Bc(\tilde{v}_{*},w_*).
    \end{aligned}
\end{equation*}

Let $X(U,R,V,W)$ have the same definition as $X^M(U,R,V,W)$ with $c_\pi^M$
replaced by $c_\pi$. 
We note that by Lemma \ref{lemma:plugstat}, (\ref{eq:deffvtid}) is defined
exactly so that $(U_*,R_*,V_*,W_*)$ is a critical point of $X(\cdot)$. Since
$X(\cdot)$ is convex, $(U_*,R_*,V_*,W_*)$ is then a global
minimizer of $X$. We claim that
\begin{equation}\label{eq:XMlim}
\sup_{M>0} \inf_{U,R,V,W} X^M(U,R,V,W)=X(U_*,R_*,V_*,W_*).
\end{equation}
This is trivial if $\pi$ has bounded support, because
$X^M(U,R,V,W)$ is increasing in $M$ and equals $X(U,R,V,W)$ whenever $[-M,M]$
contains the support of $\pi$. To show (\ref{eq:XMlim}) when $\pi$ has unbounded
support, let us check the strict convexity of $X(\cdot)$:
Fix any unit vector $(U',R',V',W') \in \R^{2t+2}$ and $s>0$, and denote
\[(U(s),R(s),V(s),W(s))=(U_*,R_*,V_*,W_*)+s \cdot (U',R',V',W').\]
Set
\begin{equation}\label{eq:Fsdef}
\Fs=\begin{pmatrix}
b_*^{-1/2}\Es \\
\Delta_t^{-1/2} (\Xs_1,...,\Xs_t) \\
\kappa_*^{-1/2} \Delta_t^{-1/2} (\Ys_1,...,\Ys_t)\end{pmatrix} \in \R^{2t+1}.
\end{equation}
Recalling $U_*=-\gamma_*/2$ and $\As_*$ from (\ref{eq:Astar}), denote
\[\langle f(x) \rangle_*=\frac{\int f(x)\exp(-\frac{\gamma_*}{2}x^2
+\As_* x)d\pi(x)}{\int \exp(-\frac{\gamma_*}{2}x^2+\As_* x)d\pi(x)}\]
with the corresponding variance
$\V_*[f(x)]=\langle f(x)^2 \rangle_*-\langle f(x) \rangle_*^2$
(conditional on $\Xstar,\Es,\Xs_1,\ldots,\Xs_t,\Ys_1,\ldots,\Ys_t$). Then,
applying (\ref{eq:cpideriv}) and the chain rule,
\begin{equation}\label{eq:HessX}
\partial_s^2 X(U(s),R(s),V(s),W(s))\big|_{s=0}=
\E\left[\V_*\left[U'(x-\Xstar)^2+(R',V',W')^\top \Fs \cdot x\right]\right].
\end{equation}
Since $\pi$ has unbounded support, there are at least three distinct
points in its support, and hence also three distinct points
in the support of the posterior measure
defining $\langle \cdot \rangle_*$. Then the conditional variance
$\V_*\left[U'(x-\Xstar)^2+(R',V',W')^\top \Fs \cdot x\right]$ is 0 only if
the quadratic function $x \mapsto U'(x-\Xstar)^2+(R',V',W')^\top \Fs \cdot x$
takes constant value at these three points, which occurs only when both
$U'=0$ and $(R',V',W')^\top \Fs=0$.
When $U'=0$, we have $\|(R',V',W')\|=1$.
By Theorem \ref{thm:ampSE}, Corollary \ref{cor:ampSEcor1}, and
Proposition \ref{prop:denoiserprop},
$\Fs$ has zero mean and identity covariance, so
$(R',V',W')^\top \Fs$ has variance 1. Then
in particular, $(R',V',W')^\top \Fs \neq 0$ with positive probability.
Then $\V_*\left[U'(x-\Xstar)^2+(R',V',W')^\top \Fs \cdot x\right]>0$
with positive probability, and hence (\ref{eq:HessX}) is strictly positive.
This shows the strict convexity $\nabla^2 X(U_*,R_*,V_*,W_*) \succ 0$ as desired. Then by continuity, also $\nabla^2 X(U,R,V,W) \succ 0$ and $X(U,R,V,W)<\infty$
in a bounded neighborhood $\mathcal{O}$ of $(U_*,R_*,V_*,W_*)$. By the monotone 
convergence theorem, $\lim_{M \to \infty} X^M(U,R,V,W)=X(U,R,V,W)$, and this
convergence is uniform over $\mathcal{O}$ because $X^M$ and $X$ are convex
\cite[Theorem 10.8]{rockafellar2015convex}. Then the infimum of $X^M(U,R,V,W)$
is attained also in $\mathcal{O}$ for all large $M$, and
$\lim_{M \to \infty} \inf_{U,R,V,W} X^M(U,R,V,W)=X(U_*,R_*,V_*,W_*)$.
Hence (\ref{eq:XMlim}) holds, as claimed.

For the second term $Y(\tmr,\Fpe,\Fpnu,\Fpom)$, recall
$\Ac(\Pfrak_*)=\eta_*^{-1}$ from (\ref{eq:APstar}), and
$\partial_\tmr \Hc(\tmr_*,\eta_*^{-1})=0$. Then, since $\|v_*\|^2-\|\tilde
v_*\|^2=o_t(1)$, we have
$\Ac(\tilde{\Pfrak}_*)=\eta_*^{-1}+o_t(1)$ and
$\partial_\tmr \Hc(\tmr_*,\Ac(\tilde{\Pfrak}_*))=o_t(1)$.
Furthermore $\partial_\tmr^2 \mathcal{H}(\tmr,\Ac(\tilde\Pfrak_*))=-G'(\tmr)>0$
in a neighborhood of $\tmr_*$, so $\inf_{\tmr>d_-} \Hc(\tmr,\Ac(\tilde\Pfrak_*))
=\Hc(\tmr_*,\eta_*^{-1})+o_t(1)$
(see e.g.\ \cite[Proposition C.2]{fan2021replica}).
Applying that the infimum of $Y(\tmr,\Fpe,\Fpnu,\Fpom)$ occurs at $\Fpe=0$,
and that $\Bc(\tilde v_*,w_*)=\|v_*-\tilde v_*\|^2=o_t(1)$, we obtain
\begin{equation}\label{eq:Ylim}
\inf_{\tmr,\Fpe,\Fpnu,\Fpom} Y(\tmr,\Fpe,\Fpnu,\Fpom)
=\Hc(\tmr_*,\eta_*^{-1})+o_t(1).
\end{equation}
Applying (\ref{eq:XMlim}) and (\ref{eq:Ylim}) to (\ref{eq:Psi1tlowerspec}), we
obtain
\[\lim_{t \to \infty} \sup_{M>0} \Psi_{1,t}^M(\tilde \Pfrak_*)
=-\frac{1}{2}+\frac{\gamma_*}{2\eta_*}+X(U_*,R_*,V_*,W_*)
+\frac{1}{2}\Hc(\tmr_*,\eta_*^{-1})=\Psi_{\mathrm{RS}},\]
which implies the lower bound in (\ref{eq:analysisf1}).

For the upper bounds in (\ref{eq:analysisf1}) and (\ref{eq:analysisf2}),
we now fix $\hf \in \R_+$ and
specialize the dual variables $\Qfrak=\Qfrak(\Pfrak)$
as functions of $\Pfrak=(u,r,v,w)$, given by
\[U(u)=U_*+\hf (u-u_*), \quad R(r)=R_*+\hf (r-r_*),
\quad V(v)=V_*+\hf (v-v_*),\quad W(w)=W_*+\hf (w-w_*),\]
\[\tmr(\Pfrak)=G^{-1}(\Ac(\Pfrak)),\quad
    \Fpe(r)=-\left(\frac{\gamma_*(\eta_*-\gamma_*)}{\eta_*}-b_*^{-1}+rb_*^{-3/2}e_*\right), \quad
\Fpnu=\Fpnu_*=0, \quad \Fpom=\Fpom_*=0.\]
Here, for any $\Pfrak \in \Cst$ with $u\in (0,K)$, we have $\Ac(\Pfrak)<K$ so
$\tmr(\Pfrak)$ is well-defined for all $\e$ sufficiently small. 
Note that under the above choices of $\Fpe,\Fpnu,\Fpom$, we have
\[\Fc^e_{22}(\tmr,r)=\mathbb{E}\left[\frac{\mathsf{E}_{b}^{2}}{\tmr+\D^2}\left(\left[\frac{\gamma_{*}}{\eta_{*}}-b_{*}^{-1/2}
r\right] \D^2-\Fpe(r) \right)^{2}\right], \quad
    \Fc_{22}(\tmr)=\mathbb{E}\left[\frac{1}{\tmr+\D^2}\left(\theta\left(\D^2\right)-\lambda\left(\D^2\right)
\Fpnu_*-\Fpom_*\right)^{2}\right]\]
where $\Fc^e_{22},\Fc_{22}$ are the functions previously
defined in \eqref{eq:defFFeterms}. Then
\begin{equation}\label{eq:FSTTs1}
    \Psi_{1,t}(\Pfrak)
=\inf_{\Qfrak} \Phi_{1,t}(\Pfrak,\Qfrak) \leq
\Phi_{1,t}(\Pfrak,\Qfrak(\Pfrak))=:
\bar{\Psi}_{1,t}(\Pfrak)=-\frac{1}{2}+\Tone+\Ttwo+
\frac{1}{2}\Big(\Tthree+\Tfour+\Tfive+\Tsix\Big)
\end{equation}
where, recalling $\Fs$ from (\ref{eq:Fsdef}), we define
\begin{equation}\label{eq:FSTTs2}
    \begin{aligned}
    &\Tone=\E\log c_{\pi}\big(U(u), -2U(u)\Xstar
+(R(r),V(v),W(w))^\top \Fs\big)\\
    &\Ttwo=-(u-\rho_*)\cdot U(u)-r\cdot R(r)-(v+\pi_* \Delta_t^{-1/2}1_{t
\times 1})^\top V(v)-w^\top W(w)\\
    &\Tthree=-\frac{e_* r^2}{b_{*}}+\frac{2r}{\sqrt{b_*}}
-\Tr \begin{pmatrix} d_* & a_* \\ a_* & c_* \end{pmatrix}
\left(v,\frac{w}{\sqrt{\kappa_*}}\right)^\top
 \left(v,\frac{w}{\sqrt{\kappa_*}}\right)\\
    &\Tfour=\mathcal{H}\left(\tmr(\Pfrak),\Ac(\Pfrak)\right),
\quad \Tfive=\mathcal{F}^e_{22}(\tmr(\Pfrak),r), \quad
    \Tsix=\mathcal{F}_{22}(\tmr(\Pfrak))\cdot \Bc(v,w)
    \end{aligned}
\end{equation}

At $\Pfrak=\Pfrak_*$, using $\Ac(\Pfrak_*)=\eta_*^{-1}$,
$\tmr_*=G^{-1}(\eta_*^{-1})$, and (\ref{eq:reidentity}) to verify $\Fpe(r_*)=0$,
we observe that these specializations give $\Qfrak(\Pfrak_*)=\Qfrak_*$.
Then $\bar{\Psi}_{1,t}(\Pfrak_*)=\Phi_{1,t}(\Pfrak_*,\Qfrak_*)
=\Psi_{\mathrm{RS}}$ by \Cref{lemma:plugstat}. Furthermore,
noting that the only coordinates of $\Qfrak(\Pfrak)$ depending on $v$ are $V(v)$
and $\zeta(\Pfrak)$, the derivative of $\bar{\Psi}_{1,t}$ in $v$ is
\[\partial_{v} \bar{\Psi}_{1,t}(\Pfrak_*)=\partial_{v}
\Phi_{1,t}(\Pfrak_*,\Qfrak_*)+\partial_{\tmr} \Phi_{1,t}(\Pfrak_*,\Qfrak_*)
\cdot \partial_{v} \tmr(\Pfrak_*)+\partial_{V} \Phi_{1,t}(\Pfrak_*,\Qfrak_*)
\cdot \partial_{v} V(v_{*}).\]
The first term has norm $o_t(1)$, and the remaining two terms are 0, by
\Cref{lemma:plugstat}. Similarly
$\partial_u \bar{\Psi}_{1,t}(\Pfrak_*)=0$,
$\partial_r \bar{\Psi}_{1,t}(\Pfrak_*)=0$, and
$\partial_w \bar{\Psi}_{1,t}(\Pfrak_*)=0$, so $\|\nabla
\bar{\Psi}_{1,t}(\Pfrak_*)\|=o_t(1)$. 

We now show, using the small-$\e$ approximations of \Cref{prop:betalimit},
that the upper bound $\bar{\Psi}_{1,t}(\Pfrak)$
in (\ref{eq:FSTTs1}) is concave in $\Pfrak$ over the domain
$\{\Pfrak \in \Cst:u \in (0,K)\}$. Let us write $O(1)$, $O(\e)$ etc.\ for
scalar quantities bounded in magnitude
by $C$, $C\e$, etc.\ where the constant $C>0$ depends only on
$\pib,K$ (and not on $d_*,\e,\hf$ or the dimension $t$).
Fix any $\Pfrak=(u,r,v,w) \in \Cst$ with $u \in (0,K)$,
fix any unit vector $(u',r',v',w') \in \R^{2t+2}$, and define for $s>0$
\begin{equation}\label{eq:Pfraks}
\Pfrak(s)=(u(s),r(s),v(s),w(s))=(u,r,v,w)+s\cdot (u',r',v',w').
\end{equation}
We compute the second derivative of $\bar{\Psi}_{1,t}(\Pfrak(s))$ at $s=0$.
For the first term $\Tone$, denote
\[\expval{f(x)}_\Pfrak=\frac{\int f(x)\exp(U(u) \cdot x^2-2U(u)\Xstar \cdot x
+(R(r),V(v),W(w))^\top \Fs\cdot x)d\pi(x)}{\int
\exp(U(u) \cdot x^2-2U(u)\Xstar \cdot x
+(R(r),V(v),W(w))^\top \Fs\cdot x)d\pi(x)}\]
and let $\V_\Pfrak[f(x)]=\expval{f(x)^2}_\Pfrak-\expval{f(x)}_\Pfrak^2$
be the corresponding variance. Then
\begin{align}
\partial_s^2 \Tone\big|_{s=0}&=\hf^2\,\E\left[\V_\Pfrak\left[
u'x^2-2u'\Xstar \cdot x+(r',v',w')^\top \Fs \cdot x\right]\right]\nonumber\\
&\leq 2\hf^2\,\E\left[(u')^2 \cdot \V_\Pfrak[x^2]
+\left({-}2u'\Xstar+(r',v',w')^\top \Fs\right)^2 \V_\Pfrak[x] \right].
\label{eq:Isecondderiv}
\end{align}
Let us apply Assumption \ref{AssumpPrior} and
Proposition \ref{prop:bayesbnd} in dimension $k=1$, with
$\Gamma=\gamma_{\max}=\gamma_{\min}=U(u)$
and $z=-2U(u)\Xstar+(R(r),V(v),W(w))^\top \Fs$. We observe
that, since $u,u_* \in (0,K)$, we have
\begin{equation}\label{eq:Uuapprox}
U(u)=U_*+\hf(u-u_*)=-(\gamma_*/2)+\hf(u-u_*)=-(d_*/2)(1+O(\e))+O(\hf),
\end{equation}
the last equality applying
Proposition \ref{prop:betalimit}. In particular, since $d_*>0$,
for all $\hf \in (0,\hf_0)$ where $\hf_0$ is a small constant depending only
on $(K,\pib)$, we have $\Gamma<(4\pib)^{-1}$.
Then the condition (\ref{eq:poincare}) from Assumption
\ref{AssumpPrior} implies $\V_\Pfrak[x]=O(1)$.
Since also $r^2+\|v\|^2+\|w\|^2<u<K$ and $r_*^2+\|v_*\|^2+\|w_*\|^2<u_*<K$ and
$b_*+\kappa_*\delta_*=\gamma_*^{-1}$ and $\gamma_*=d_*(1+O(\e))$, we have
\[\|(R(r),V(v),W(w))\| \leq \|(R_*,V_*,W_*)\|+O(\hf)
=(\gamma_*^2b_*+\gamma_*^2\kappa_*\delta_*)^{1/2}+O(\hf)=d_*^{1/2}(1+O(\e))+O(\hf).\]
Then, for all $\hf \in (0,\hf_0)$, we have
$\|z\|^2 \leq ({\Xstar}^2+(q^\top \Fs)^2) \cdot O(1+d_*)$
for some unit vector $q \in \R^{2t+1}$. From (\ref{eq:Uuapprox}), we have
$\pib^{-1}-\gamma_{\max} \geq c(1+d_*)$ and $-\gamma_{\min}<C(1+d_*)$ for
constants $C,c>0$ depending only on $(K,\pib)$. So
Proposition \ref{prop:bayesbnd} shows
$\V_\Pfrak[x^2] \leq C(1+{\Xstar}^2+(q^\top \Fs)^2)$. Applying these to
(\ref{eq:Isecondderiv}), and applying also
$\E[{\Xstar}^2] \leq \pib$ and $\E[(q^\top \Fs)^2]=1$ for any unit vector $q$,
we obtain
\begin{equation}\label{Hess1}
\partial_s^2 \Tone\big|_{s=0}=O(\hf^2).
\end{equation}

For $\Ttwo$, we have
\begin{equation}\label{Hess2}
\partial_s^2 \Ttwo \mid_{s=0}=-2\hf({u'}^2+{r'}^2+\|v'\|^2+\|w'\|^2)
=-2\hf.
\end{equation}
For $\Tthree$, applying \Cref{prop:betalimit},
\begin{equation}\label{Hess3}
\partial_s^2 \Tthree \big|_{s=0} =-2\Tr \begin{pmatrix}
d_* & a_* \kappa_*^{-1/2} \\ a_* \kappa_*^{-1/2} & c_* \kappa_*^{-1} \end{pmatrix}
(v',w')^\top(v',w')-\frac{2e_*}{b_*} {r'}^2
=-2d_*\|(r',v',w')\|^2+O(\e).
\end{equation}
For $\Tfour$, we have $\Tfour=\int_{0}^{\Ac(\Pfrak(s))} R(z) dz$ by
\cite[Proposition 2.9(a)]{fan2021replica}. It is easily checked that
$|\Ac(\Pfrak(s))|=O(1)$, $|\partial_s \Ac(\Pfrak(s))|=O(1)$, and
$\partial_s^2 \Ac(\Pfrak(s))=-2\|(r',v',w')\|^2$ at $s=0$.
Then by \Cref{prop:betalimit},
\begin{equation}\label{Hess4}
\partial^2_s \Tfour \big|_{s=0}
=R'(\Ac(\Pfrak(s))) \cdot \partial_s \Ac(\Pfrak(s))^2
+R(\Ac(\Pfrak(s))) \cdot \partial_s^2 \Ac(\Pfrak(s))^2\Big|_{s=0}
=2d_*\|(r',v',w')\|^2+O(\e^2).
\end{equation}

For $\Tfive$, we may apply the series expansion for $R(z)$ from
\eqref{eq:Rseries} with $\kappa_1=-\E[\D^2]=-d_*$, to write for any
$z \in (0,K)$, $x \in [d_*-\e,d_*+\e]$, and sufficiently small $\e$,
    \begin{align}
    (G^{-1}(z)+x)^{-1}&=\qty(R(z)+z^{-1}+x)^{-1}
    =z\qty(1+(x-d_*)z+\sum_{k\ge 2} \kappa_k z^k)^{-1} \nonumber\\
    &=z \cdot \sum_{j \geq 0}\left(-\left(x-d_*\right) z-\sum_{k\ge 2}
\kappa_{k} z^{k}\right)^{j}=:z+\sum_{k \geq 1} c_{k}(x) z^{k+1}.
\label{eq:Vseries}
    \end{align}
Here, $|c_k(x)| \leq (O(\e))^k$,
and these series are absolutely convergent for sufficiently small $\e$.
Then the derivatives in $z$ may be computed term-by-term. Recalling
$\tmr(\Pfrak(s))=G^{-1}(\Ac(\Pfrak(s)))$ where $\Ac(\Pfrak(0)) \in (0,K)$, we
obtain by the chain rule
\begin{equation}\label{eq:dtmrbound}
\sup_{x \in
\operatorname{supp}(\D^2)}\left|\partial_{s}^{k}(\tmr(\Pfrak(s))+x)^{-1}\right|
\Big|_{s=0}=O(1) \text{ for } k=0,1,2.
\end{equation}
Recalling $\Ae(x)$ from \eqref{eq:defFFeterms} and applying
\Cref{prop:betalimit}, we have for any $x \in [d_*-\e,d_*+\e]$,
\[\Ae(x)=b_{*}^{-1/2}\left(-x+\frac{e_{*}}{b_{*}}\right)=O(b_*^{-1/2} \e).\]
By \Cref{prop:betalimit}, we have $\gamma_*=d_*+O(\kappa_2/\eta_*)$
and also $b_*^{-1}=d_*(1+\eta_*^{-1} \cdot
O(\kappa_2/d_*))=d_*+O(\kappa_2/\eta_*)$. Then
recalling $\Be(x)$ from \eqref{eq:defFFeterms}, we have
\[\Be(x)=\left(\frac{\gamma_{*}}{\eta_{*}}\right)x+\left(\frac{\gamma_{*}\left(\eta_{*}-\gamma_{*}\right)}{\eta_{*}}-\frac{1}{b_{*}}\right)
=\frac{\gamma_*}{\eta_*}(x-\gamma_*)+O\left(\frac{\kappa_2}{\eta_*}\right)
=O\left(\frac{\gamma_*\e}{\eta_*}+\frac{\kappa_2}{\eta_*}\right).\]
Here, $\kappa_2=O(d_*\e)=O(\gamma_*\e)$, the second equality holding because
$\gamma_*=d_*(1+O(\e))$. Then, applying also $\eta_*^{-1}=O(1)$ and
$0<\gamma_*<\eta_*$ by Proposition \ref{prop:uniquefix}, this gives
\[\Be(x)=O\left(\frac{\gamma_* \e}{\eta_*}\right)
=O\left(\sqrt{\frac{\gamma_*}{\eta_*}} \cdot
\sqrt{\frac{1}{\eta_*}} \cdot \sqrt{\gamma_*} \cdot e\right)
=O(\sqrt{d_*}\cdot \e)=O(b_*^{-1/2} \cdot \e).\]
Now applying these bounds for $\Ae,\Be$ to $\fe(x,r)$ from
\eqref{eq:defFFeterms} and differentiating by the chain rule,
\begin{equation}\label{eq:federivbound}
\sup_{x \in \operatorname{supp}(\D^2)}
\left|\partial_{s}^k\fe(x,r(s))\right|\Big|_{s=0}=O(b_*^{-1/2} \cdot \e)
\text{ for } k=0,1,2.
\end{equation}
Combining (\ref{eq:dtmrbound}) and (\ref{eq:federivbound})
and differentiating $\Fc_{22}^e$ from
\eqref{eq:defFFeterms} by the chain rule,
\begin{equation}\label{Hess5}
\partial_s^2 \Tfive\big|_{s=0}
=\partial_s^2 \Fc^e_{22} (\tmr(\Pfrak(s)),r(s)) \Big|_{s=0}
\leq O(b_*^{-1} \cdot \e^2) \cdot \E[\Es_b^2]=O(b_*^{-1} \cdot \e^2) \cdot b_*=O(\e^2).
\end{equation}

For $\Tsix$, for any $x \in [d_*-\e,d_*+\e]$, we may write $\theta(x)$ from
(\ref{eq:defFH}) as
\begin{align*}
\theta(x)&=x-d_*-\alpha_{*}^{B}\left(\frac{1}{1-\eta_*^{-1}(\gamma_{*}-x)}-1\right)=x-d_*-\alpha_{*}^{B}\sum_{k \geq 1}\left(\frac{\gamma_{*}-x}{\eta_{*}}\right)^{k} \\
&=\left(x-\gamma_{*}\right)\left(1+\eta_*^{-1}\alpha_{*}^{B}\right)+\gamma_{*}-d_*-\alpha_{*}^{B}\sum_{k
\geq 2}\left(\frac{\gamma_{*}-x}{\eta_{*}}\right)^{k}.
\end{align*}
Then, using $|x-\gamma_*|/\eta_*=O(\e)$, we have
\[\E\left[\theta(\D^2)^2\right]
\leq 3\Bigg((1+\eta_*^{-1}\alpha_*^B)^2\cdot
\E\left[(\D^2-\gamma_{*})^{2}\right]+(\gamma_*-d_*)^2+\left(\alpha_{*}^{B}\right)^{2}\E\left[\left(\frac{\gamma_{*}-\D^2}{\eta_{*}}\right)^4\right](1+O(\e))\Bigg)\]
By \Cref{prop:betalimit} and \Cref{lem:cauchy}(c), we have
$(\gamma_*-d_*)^2=O(\kappa_2^2\eta_*^{-2})=O(\kappa_2\e^2/\eta_*^2)$,
$\E[(\D^2-\gamma_*)^2]=\kappa_2+(d_*-\gamma_*)^2=O(\kappa_2)$,
$\E[(\D^2-\gamma_*)^4]=O(\mu_4+(d_*-\gamma_*)^4)=O(\kappa_2 \e^2)$,
$1+\eta_*^{-1}\alpha_*^B=O(\e/\eta_*)$, and
$\eta_*^{-2}(\alpha_*^B)^2=O(1)$. This gives
\begin{equation}\label{eq:thetasqbound}
\E[\theta(\D^2)^2]=O(\kappa_2\e^2/\eta_*^2)
\end{equation}
Then, applying again (\ref{eq:dtmrbound}) and
differentiating $\Fc_{22}$ from \eqref{eq:defFFeterms}, we
have at $s=0$ that
$\partial_s^k \Fc_{22}(\tmr(\Pfrak(s)))=O(\kappa_2\e^2/\eta_*^2)$ for $k=0,1,2$.
Using $\alpha_*^A=\frac{\eta_{*}}{\sqrt{\kappa_{2}}}(1+O(\eta_*^{-1}\e))$ and
$\norm{v},\norm{w},\norm{v'},\norm{w'}=O(1)$, we have also
$\partial_{s}^{k} \Bc(v(s),w(s))=O(\max(1,\eta_*^2/\kappa_2))$ for $k=0,1,2$.
Combining these bounds, we conclude that
\begin{equation}\label{Hess6}
\partial_s^2 \Tsix \Big|_{s=0}=O(\e^2).
\end{equation}
Now, combining \eqref{Hess1}, \eqref{Hess2}, \eqref{Hess3}, \eqref{Hess4}, \eqref{Hess5}, \eqref{Hess6}, and setting $\hf=\e^{1/2}$, we conclude that
\[\partial_s^2 \bar{\Psi}_{1,t}(\Pfrak(s))|_{s=0}=-2\e^{1/2}+O(\e)<-\e^{1/2}\]
for all $\e<\e_0(\pib,K)$. This holds for
$\Pfrak(s)$ as defined in (\ref{eq:Pfraks}) for any $(u,r,v,w) \in \Cst$
with $u \in (0,K)$ and for any unit vector $(u',r',v',w')$, implying the
concavity
\begin{equation}\label{eq:hessbound}
\nabla^2 \bar{\Psi}_{1,t}(\Pfrak) \prec -\e^{1/2}I \text{ over }
\{\Pfrak \in \Cst:u \in (0,K)\}.
\end{equation}

Finally, since $u_*=2\eta_*^{-1} \in \cU \subseteq (0,K)$
by assumption, we have that
$\Pfrak_*=(u_*,r_*,v_*,w_*)$ is an interior point of the open domain $\{\Pfrak
\in \Cst:u \in \cU\}$. Recalling $\bar{\Psi}_{1,t}(\Pfrak_*)=\Psi_{\mathrm{RS}}$
and $\|\nabla \bar{\Psi}_{1,t}(\Pfrak_*)\|=o_t(1)$, we then have
(see e.g.\ \cite[Proposition C.2]{fan2021replica})
\[\sup_{\Pfrak \in \Cst:u \in \cU} \Psi_{1,t}(\Pfrak)
\leq \sup_{\Pfrak \in \Cst:u \in \cU} \bar{\Psi}_{1,t}(\Pfrak)
=\bar{\Psi}_{1,t}(\Pfrak_*)+o_t(1)=\Psi_{\mathrm{RS}}+o_t(1).\]
This shows the upper bound of (\ref{eq:analysisf1}). Furthermore, by (\ref{eq:hessbound}) and a Taylor expansion, for any
$\Pfrak \in \Cst$ with $u \in (0,K)$ and $|u-u_*|>\varsigma$, we have
\[\bar{\Psi}_{1,t}(\Pfrak)
\leq \bar{\Psi}_{1,t}(\Pfrak_*)+
\nabla \bar{\Psi}_{1,t}(\Pfrak_*)^\top(\Pfrak-\Pfrak_*)
-\frac{1}{2}\e^{1/2}\|\Pfrak-\Pfrak_*\|^2
\leq \Psi_{\mathrm{RS}}+o_t(1) \cdot \|\Pfrak-\Pfrak_*\|
-\frac{1}{2}\e^{1/2}\varsigma^2.\]
Applying the bound $\|\Pfrak-\Pfrak_*\|<C$ for a constant $C=C(K)>0$
independent of $t$, and taking the limit $t \to \infty$, we obtain
(\ref{eq:analysisf2}).
\end{proof}

\section{Analysis of the conditional second moment}\label{appendix:analysis}

In this appendix we prove Lemmas \ref{lemma:secondmt} and
\ref{lemma:analysisff}. We will abbreviate parts of the arguments that are
similar to the preceding analysis of the conditional first moment, and also
refer to \cite[Lemmas 4.1 and 4.2]{fan2021replica} for some of the technical
details.

\begin{Lemma}\label{lem:abar2}
Let $\pi$ be any probability distribution over $\R$. For $a,b \in \R^2$ and $c
\in \R$, let $c_\pi(a,b,c)$ be as defined in (\ref{eq:defcpit}), and let
\[\Oc=\left\{(a,c) \in \R^3:\int e^{a_1x_1^2+a_2x_2^2+cx_1x_2}
d\pi(x_1)d\pi(x_2)<\infty\right\}.\]
Then $\Oc$ is a non-empty convex subset of $\R^3$. For any $(a,c)$ in the
interior of the complement of $\Oc$ and any $b \in \R^2$,
we have $c_\pi(a,b,c)=\infty$. For any $(a,c)$ in the interior of $\Oc$,
the function $b \mapsto \log c_\pi(a,b,c)$ is continuous and satisfies, for some
$(a,c,\pi)$-dependent constant $C>0$ and for all $b \in \R^2$,
\[\log c_\pi(a,b,c) \leq C(\|b\|^2+1).\]
\end{Lemma}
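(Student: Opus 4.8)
The plan is to mirror the proof of Lemma~\ref{lem:abar}, now in two dimensions, using the elementary inequality $\pm b_ix_i \le \varepsilon x_i^2 + |b_i|^2/(4\varepsilon)$ (AM--GM) to absorb the linear terms $b_1x_1+b_2x_2$ into the quadratic part of the exponent, thereby reducing every assertion to a property of $\Oc$. First I would note $\Oc\ne\emptyset$ since $(0,0,0)\in\Oc$. For convexity: for each fixed $(x_1,x_2)$ the exponent $a_1x_1^2+a_2x_2^2+cx_1x_2$ is affine in $(a,c)$, so $(a,c)\mapsto\exp(a_1x_1^2+a_2x_2^2+cx_1x_2)$ is log-convex, and for $(a'',c'')=\lambda(a,c)+(1-\lambda)(a',c')$ with $\lambda\in[0,1]$ Hölder's inequality gives $\int e^{a''_1x_1^2+a''_2x_2^2+c''x_1x_2}\,d\pi\,d\pi \le \bigl(\int e^{a_1x_1^2+a_2x_2^2+cx_1x_2}\,d\pi\,d\pi\bigr)^{\lambda}\bigl(\int e^{a'_1x_1^2+a'_2x_2^2+c'x_1x_2}\,d\pi\,d\pi\bigr)^{1-\lambda}$; hence $(a,c),(a',c')\in\Oc$ implies $(a'',c'')\in\Oc$. (Equivalently, $\Oc$ is the finiteness locus of a $(0,\infty]$-valued convex function of $(a,c)$, so it is convex.)

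For the second claim, suppose $(a,c)$ lies in the interior of the complement of $\Oc$, and pick $\delta>0$ with the $\delta$-ball about $(a,c)$ disjoint from $\Oc$; fix $\varepsilon\in(0,\delta/\sqrt2)$ and any $b\in\R^2$. From $b_ix_i \ge -|b_i|\,|x_i| \ge -\varepsilon x_i^2 - |b_i|^2/(4\varepsilon)$ we get $a_ix_i^2+b_ix_i \ge (a_i-\varepsilon)x_i^2 - |b_i|^2/(4\varepsilon)$ for $i=1,2$, so
\[
c_\pi(a,b,c)\ \ge\ e^{-\|b\|^2/(4\varepsilon)}\int e^{(a_1-\varepsilon)x_1^2+(a_2-\varepsilon)x_2^2+cx_1x_2}\,d\pi(x_1)\,d\pi(x_2).
\]
Since $(a_1-\varepsilon,a_2-\varepsilon,c)$ lies in the $\delta$-ball it is not in $\Oc$, so the integral on the right is $+\infty$, whence $c_\pi(a,b,c)=\infty$.

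For the third claim, let $(a,c)$ be interior to $\Oc$ and choose $\varepsilon>0$ small enough that $(a_1+\varepsilon,a_2+\varepsilon,c)\in\Oc$; set $C'=\int e^{(a_1+\varepsilon)x_1^2+(a_2+\varepsilon)x_2^2+cx_1x_2}\,d\pi\,d\pi<\infty$. Then $|b_i|\,|x_i|\le \varepsilon x_i^2 + |b_i|^2/(4\varepsilon)$ gives $c_\pi(a,b,c)\le e^{\|b\|^2/(4\varepsilon)}\,C'$, hence $\log c_\pi(a,b,c)\le \log C' + \|b\|^2/(4\varepsilon)\le C(\|b\|^2+1)$ for a suitable $(a,c,\pi)$-dependent $C$, and in particular $c_\pi(a,b,c)\in(0,\infty)$. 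For continuity in $b$: on $\{\|b\|\le B\}$ the same bound produces the $b$-independent integrable majorant $e^{B^2/(4\varepsilon)}e^{(a_1+\varepsilon)x_1^2+(a_2+\varepsilon)x_2^2+cx_1x_2}$, so the dominated convergence theorem yields continuity of $b\mapsto c_\pi(a,b,c)$ and hence of its logarithm. The argument is essentially bookkeeping; the only point deserving care is that the "interior" hypotheses are precisely what permit shifting the quadratic coefficients $a_i$ by $\pm\varepsilon$ while remaining inside $\Oc$ (resp.\ its complement), and that the convexity of $\Oc$ must be derived from the log-convexity of the integrand rather than assumed.
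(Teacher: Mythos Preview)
Your proof is correct and follows essentially the same approach as the paper, which defers to the one-dimensional Lemma~\ref{lem:abar} and only notes that convexity of $\Oc$ follows from convexity of $(a,c)\mapsto c_\pi(a,0,c)$. The only minor technical difference is that Lemma~\ref{lem:abar} controls the linear term by splitting the domain into $\{|x|<|b|/\varepsilon\}$ and its complement, whereas you use the AM--GM bound $|b_ix_i|\le \varepsilon x_i^2+|b_i|^2/(4\varepsilon)$ throughout; your version is arguably cleaner in the bivariate setting since it avoids tracking a rectangular region, but the two devices are interchangeable and lead to the same conclusions.
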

\begin{proof}
The set $\Oc$ is convex by convexity of the function $(a,c) \mapsto
c_\pi(a,0,c)$, and non-empty because $(a,c)=0$ belongs to $\Oc$. The proofs of
the remaining statements are similar to the proof of Lemma \ref{lem:abar} and
omitted for brevity.
\end{proof}

\begin{proof}[Proof of Lemma \ref{lemma:secondmt}]
Fix $t$ and write $\cG,X,Y,S,\Delta$ for $\cG_t, X_t,Y_t, S_t, \Delta_t$. Then
    \begin{equation*}
        \E[\Z(\overline{\cU})^2\mid\cG]=\int \mathbb{I}\Big(\frac{1}{n}\|\x\|^2
\in \overline{\cU},\,\frac{1}{n}\|\z\|^2 \in \overline{\cU}\Big) \cdot
\exp(-\norm{\epsilon}^2+\frac{n}{2}\cdot f_n(\x,\z))\prod_{i=1}^{n}
d\pi(\s_{i})d\pi(\ta_{i})
    \end{equation*}
    with $\x:=\x(\s)=\s-\st$, $\z:=\z(\ta)=\ta-\st$, and
\[f_{n}(\x,\z):=\frac{2}{n} \log \mathbb{E}\left[\exp
\left(-\frac{\x^{\top} O^{\top} \DDbar O \x}{2}-\frac{\z^{\top} O^{\top} \DDbar O \z}{2}+(\x+\z)^{\top} O^{\top} D^\top \xi \right)
\;\bigg|\; \mathcal{G}\right].\]\\

{\bf Uniform approximation of $f_n(\x,\z)$.} 
Define $\Pfrak(\x,\z)=(u(\x,\z),r(\x,\z),v(\x,\z),w(\x,\z),p(\x,\z))$ by
\[u(\x,\z)=\frac{1}{n}\Big(\|\z\|^{2},\|\x\|^2\Big) \in \R^2, \qquad
p(\x,\z)=\frac{1}{n} \x^\top \z \in \R,\]
\[\mqty(r(\x,\z)^\top \\ v(\x,\z) \\ w(\x,\z))=\qty[\frac{1}{n} \mqty(e^{\top} e &
e^{\top} X & e^{\top} Y \\ X^{\top} e & X^{\top} X & X^{\top} Y\\ Y^\top e &
Y^\top X & Y^\top Y)]^{-1/2}\cdot \frac{1}{n} (e,X,Y)^\top (\x,\z)
\in \R^{(2t+1) \times 2}\]
where $r(\x,\z) \in \R^2$ and $v(\x,\z),w(\x,\z) \in \R^{t \times 2}$.
Define the open domain
\begin{equation}\label{eq:K2domain}
\Kc=\big\{\x,\z \in \R^n:u(\x,\z) \in (0,K)^2,\,
\Ac(\Pfrak(\x,\z)) \succ 0\big\}
\end{equation}
and write again $a_n(\x,\z) \doteq b_n(\x,\z)$ to mean
$a_n(\x,\z)-b_n(\x,\z) \to 0$ uniformly over $(\x,\z) \in \Kc$, almost surely
as $n,m \to \infty$.

Recall $\x_\bot,\x_\|$ from \eqref{eq:sigmaparper},
$\Pi=\Pi_{\left(e_{b}, S, \Lambda S\right)^{\perp}}\in \R^{n\times
(n-2t-1)}$, and denote similarly
\[\z_{\perp}=\Pi_{(e, X, Y)^{\perp}}^{\top} \z \in \mathbb{R}^{n-2 t-1}, \quad
\z_{\|}=\left(e_{b}, S, \Lambda S\right)\begin{pmatrix}
	e^{\top} e & e^{\top} X & e^{\top} Y \\
	X^{\top} e & X^{\top} X & X^{\top} Y \\
	Y^{\top} e & Y^{\top} X & Y^{\top} Y
\end{pmatrix}^{-1}(e, X, Y)^{\top} \z \in \mathbb{R}^{n}.\]
Observe that, similarly to (\ref{eq:lengthre}), we have
\[\Ac(\Pfrak(\x,\z))=\frac{1}{n}\begin{pmatrix} \|\x_\perp\|^2 & \x_\perp^\top
\z_\perp \\ \x_\perp^\top \z_\perp & \|\z_\perp\|^2 \end{pmatrix}.\]
The condition $\Ac(\Pfrak(\x,\z)) \succ 0$ defining $\Kc$ then requires that
$\x_\perp$ and $\z_\perp$ are non-zero and linearly independent.
Then for a sufficiently small constant $\df>0$,
an application of \cite[Lemma B.2]{fan2022tap} and
\cite[Proposition 2.8]{fan2021replica} yields
\begin{equation}\label{eq:fnappt}
    f_n(\x) \doteq -\frac{1}{n}\left(\x_\|^\top \DD \x_\|+\z_\|^\top \DD
\z_\|\right)+\frac{2}{n} (\x_\|+\z_\|)^\top D^\top \xi +E_n(\x,\z)
\end{equation}
where 
\begin{equation}\label{eq:Enappt}
\begin{aligned}
E_n(\x,\z)&=\inf_{\Tmr \succeq (-d+\df)I}\Bigg\{\frac{1}{n}\Tr\left[\Tmr \cdot
\mqty(\norm{\x_\bot}^2 & \x_\bot^\top \z_\bot \\ \x_\bot^\top \z_\bot
&\norm{\z_\bot}^2)\right]\\
&\qquad+\frac{1}{n}\mqty(\Pi^{\top} D^\top \xi-\Pi^{\top} \DD \x_{\|}\\
\Pi^{\top} D^\top \xi-\Pi^{\top} \DD \z_{\|})^\top \qty(\Tmr\oplus \Pi^\top \DD
\Pi)^{-1} \mqty(\Pi^{\top} D^\top \xi-\Pi^{\top} \DD \x_{\|}\\ \Pi^{\top} D^\top
\xi-\Pi^{\top} \DD \z_{\|})\\
&\qquad-\frac{1}{n} \log \det \left(\Tmr\oplus \Pi^\top \DD \Pi\right)-\left(2+\log \det \frac{1}{n}\mqty(\norm{\x_\bot}^2 & \x_\bot^\top \z_\bot \\ \x_\bot^\top \z_\bot &\norm{\z_\bot}^2) \right)
\Bigg\}
\end{aligned}
\end{equation}
and we define
\[\Tmr\oplus \Pi^\top \DD \Pi:= \Tmr \otimes I_{(n-2t-1)\times
(n-2t-1)}+I_{2\times 2} \otimes \Pi^\top \DD \Pi.\]
Similarly to (\ref{eq:sDDs}), (\ref{eq:sDx}), and (\ref{eq:scH}) from
Lemma \ref{lemma:firstmt}, we have
\begin{align}
\frac{1}{n}(\x^\top_\| \DD \x_\|+\z^\top_\| \DD \z_\|)
&\doteq \sum_{i=1}^2 \frac{e_{*}r_i(\x,\z)^2}{b_{*}}
+\Tr \begin{pmatrix} d_* & a_* \\ a_* & c_* \end{pmatrix}
\left(v_i(\x,\z),\frac{w_i(\x,\z)}{\sqrt{\kappa_*}}\right)^\top
\left(v_i(\x,\z),\frac{w_i(\x,\z)}{\sqrt{\kappa_*}}\right),\label{eq:sDDs2}\\
\frac{2}{n}(\x_{\|}+\z_{\|})^\top D\xi &\doteq
\sum_{i=1}^2 \frac{2r_i(\x,\z)}{\sqrt{b_*}},\label{eq:sDx2}
\end{align}
and
\begin{align}
&\frac{1}{n}\Tr\left[\Tmr \cdot \mqty(\norm{\x_\bot}^2 & \x_\bot^\top \z_\bot \\
\x_\bot^\top \z_\bot &\norm{\z_\bot}^2)\right]
-\frac{1}{n} \log \det \left(\Tmr\oplus \Pi^\top \DD \Pi\right)-\left(2+\log \det
\frac{1}{n}\mqty(\norm{\x_\bot}^2 & \x_\bot^\top \z_\bot \\ \x_\bot^\top \z_\bot
&\norm{\z_\bot}^2) \right)\nonumber\\
&\qquad\doteq \Ht(\Tmr, \Aft(\Pfrak(\x,\z))).\label{eq:scH2}
\end{align}
This last approximation (\ref{eq:scH2})
holds uniformly over $\Tmr \succeq (-d_-+\df)I$, 
by the same argument as in the proof of \cite[Lemma 4.2]{fan2021replica}.

For the remaining second term of $E_n(\x,\z)$, write the eigen-decompositions
\[\Tmr=\mqty(y_{1} & y_{2}
)\mqty(\tmr_{1} & 0 \\
0 & \tmr_{2}
)\mqty(
y_{1}^{\top} \\
y_{2}^{\top}
), \qquad
\Pi^{\top}\DD \Pi=\Pi^{\prime \top} D^{\prime} \Pi^{\prime}=\Pi^{\prime
\top} \operatorname{diag}\left(d_{1}^{\prime}, \ldots, d_{n-2
t-1}^{\prime}\right) \Pi^{\prime}\]
where $\tmr_1,\tmr_2$ and
$d_1^{\prime},\ldots,d_{n-2t-1}^{\prime}$ are the eigenvalues
of $\Tmr$ and $\Pi^\top \DD \Pi$ respectively, and
$y_1,y_2 \in \mathbb{R}^{2}$ and the rows of
$\Pi'\in \R^{(n-2t-1)\times(n-2t-1)}$ are the eigenvectors. Then
\[(\Tmr\oplus \Pi^\top \DD \Pi)^{-1}=\mqty(\Pi' & 0\\ 0 & \Pi')^\top
\mqty(\Tmr_{11}\cdot I+D' & \Tmr_{12}\cdot I \\ \Tmr_{12}\cdot I &
\Tmr_{22}\cdot I+D')^{-1} \mqty(\Pi' & 0\\ 0 & \Pi')\]
and we may compute the inverse on the right side
by inverting separately the $2\times 2$ blocks,
$$
\left(\begin{array}{cc}
\Tmr_{11}+d_{i}^{\prime} & \Tmr_{12} \\
\Tmr_{12} & \Tmr_{22}+d_{i}^{\prime}
\end{array}\right)^{-1}=\frac{1}{\tmr_{1}+d_{i}^{\prime}} y_{1}
y_{1}^{\top}+\frac{1}{\tmr_{2}+d_{i}^{\prime}} y_{2} y_{2}^{\top}.
$$
Then for each $j,k\in\qty{1,2}$, the $(j,k)$ block of
$(\Tmr\oplus \Pi^\top \DD \Pi)^{-1}$ is
\begin{equation}\label{eq:blockinverse}
(\Tmr\oplus \Pi^\top \DD \Pi)^{-1}_{jk}=y_{1 j} y_{1 k}\left(\tmr_{1}
I+\Pi^{\top} \DD \Pi\right)^{-1}+y_{2 j} y_{2 k}\left(\tmr_{2} I+\Pi^{\top}
\DD \Pi\right)^{-1}.
\end{equation}
Recall $\Ae,\Be$ from \eqref{eq:defFFeterms}, and define
$\Xe(\tmr)=\Xe_{22}(\tmr)-\Xe_{12}(\tmr)\Xe_{11}(\tmr)^{-1}{\Xe_{12}(\tmr)}^\top
\in \R^{2 \times 2}$ where
\begin{equation*}
    \Xe_{22}(\tmr)=\E \frac{\mathsf{E}_b^2}{\tmr+\D^2} \mqty(\Ae(\D^2) \\ \Be(\D^2)) \mqty(\Ae(\D^2) \\ \Be(\D^2))^\top, \quad \Xe_{12}(\tmr)=\E \frac{\mathsf{E}_b^2}{\tmr+\D^2} \mqty(\Ae(\D^2) \\ \Be(\D^2)),\quad \Xe_{11}(\tmr)=\E \frac{\mathsf{E}_b^2}{\tmr+\D^2}.
\end{equation*}
Note in particular that $\Fc^e$ defined in (\ref{eq:defFFF})
is given by $\Fc^e(\tmr,r)=(r\; 1)\Xe(\tmr)(r\; 1)^\top$.
Then, applying the argument leading to (\ref{eq:sectermPIlim}) separately
for each of the four blocks $j,k \in \{1,2\}$ of (\ref{eq:blockinverse}),
we have
\begin{align}
    &\frac{1}{n}\mqty(\Pi^{\top} D^\top \xi-\Pi^{\top} \DD \x_{\|}\\ \Pi^{\top} D^\top
\xi-\Pi^{\top} \DD \z_{\|})^\top \qty(\Tmr\oplus \Pi^\top \DD \Pi)^{-1} \mqty(\Pi^{\top}
D^\top \xi-\Pi^{\top} \DD \x_{\|}\\ \Pi^{\top} D^\top \xi-\Pi^{\top} \DD
\z_{\|})\nonumber\\
    &\doteq \sum_{j,k=1}^2 y_{1j}y_{1k}
\left[\begin{pmatrix}r_j(\x,\z) & 1 \end{pmatrix}
\Xe(\tmr_1)\begin{pmatrix} r_k(\x,\z) \\ 1 \end{pmatrix}
+\Fc(\tmr_1) \cdot \Bc(v(\x,\z),w(\x,\z))_{jk} \right]\nonumber\\
&\qquad\qquad+y_{2j}y_{2k}
\left[\begin{pmatrix}r_j(\x,\z) & 1 \end{pmatrix}
\Xe(\tmr_2)\begin{pmatrix} r_k(\x,\z) \\ 1 \end{pmatrix}
+\Fc(\tmr_2) \cdot \Bc(v(\x,\z),w(\x,\z))_{jk} \right]\nonumber\\
&=\Tr\left[\sum_{i=1}^2 \mqty(r_1(\x,\z) & r_2(\x,\z)\\ 1 & 1)^\top \Xe(\tmr_i)
\mqty(r_1(\x,\z) & r_2(\x,\z)\\ 1 & 1) y_i y_i^\top\right]
+\Tr\Big[\Ft(\Tmr)\cdot \Bc(v(\x,\z),w(\x,\z))\Big]\label{eq:sectermPIlim2}
\end{align}
uniformly over $\Tmr \succeq (-d_-+\df)I$,
where $\Fc$ is applied in the second term spectrally to
$\Tmr$ via functional calculus. 

From the form of $\Fc$ in (\ref{eq:altFFe}),
the second term of (\ref{eq:sectermPIlim2}) may be expressed as
\begin{equation}\label{eq:FB2alt}
\Tr[\Ft(\Tmr)\cdot \Bc(v,w)]=\inf_{\Fpnut,\Fpomt \in \R^2}
\sum_{i=1}^2
\mathbb{E}\left[\frac{1}{\tmr_i+\D^2}\left(\theta\left(\D^2\right)-\lambda\left(\D^2\right)
\Fpnu_i -\Fpom_i\right)^{2}\right]y_i^\top \Bc(v,w)y_i.
\end{equation}
For the first term, for $r=(r_1,r_2) \in \R^2$, set
\[\fe(x,r)=\begin{pmatrix} \fe(x,r_1) \\ \fe(x,r_2) \end{pmatrix}
=\begin{pmatrix} r_1 & r_2 \\ 1 & 1 \end{pmatrix}^\top \begin{pmatrix}
\Ae(x) \\ \Be(x) \end{pmatrix} \in \R^2,\]
where $\fe(x,r_i)$ is the function defined in (\ref{eq:defFFeterms}). Then
define
\begin{align}
    \Ft^e(\Tmr,r)&=\inf_{\Fpet \in \R^{2}} \E \qty
[\qty(\mqty(\frac{\gamma_*}{\eta_*}-b_*^{-1/2}r_1 \\
\frac{\gamma_*}{\eta_*}-b_*^{-1/2}r_2) \D^2-\Fpet)^\top
\qty(\mathsf{E}_b^2(\Tmr+\D^2\cdot
I)^{-1})\qty(\mqty(\frac{\gamma_*}{\eta_*}-b_*^{-1/2}r_1 \\
\frac{\gamma_*}{\eta_*}-b_*^{-1/2}r_2) \D^2-\Fpet)]\nonumber\\
&=\inf_{\Fpet \in \R^{2}} \E \qty
[\Big(\fe(\D^2,r)-\Fpet\Big)^\top \qty(\mathsf{E}_b^2(\Tmr+\D^2\cdot
I)^{-1})\Big(\fe(\D^2,r)-\Fpet\Big)]\label{eq:Fe2alt}
\end{align}
where these expressions are equivalent by an additive shift of $\Fpet$.
Evaluating explicitly the infimum over $\Fpet$, we get
$\Ft^e(\Tmr,r)=\Ft^e_{22}(\Tmr,r)-\Ft^e_{12}(\Tmr,r)^\top
\Ft^e_{11}(\Tmr)^{-1} \Ft^e_{12}(\Tmr,r)$ where
\begin{equation}\label{eq:Feterms2}
\begin{aligned}
\Ft^e_{22}(\Tmr,r)&=\E\,\fet(\D^2,r)^\top
\qty(\mathsf{E}_b^2(\Tmr+\D^2\cdot I)^{-1})\fet(\D^2,r)\\
&=\E\sum_{i=1}^2 \begin{pmatrix} \Ae(\D^2) \\ \Be(\D^2)
\end{pmatrix}^\top \mqty(r_1 & r_2 \\ 1 & 1) \cdot
\frac{\Es_b^2 y_i y_i^\top}{\tmr_i+\D^2} \cdot
\mqty(r_1 & r_2 \\ 1 & 1)^\top \begin{pmatrix} \Ae(\D^2) \\ \Be(\D^2)
\end{pmatrix}\\
&=\Tr\left[\sum_{i=1}^2 \mqty(r_1 & r_2 \\ 1 & 1)^\top \Xe_{22}(\tmr_i)
\mqty(r_1 & r_2 \\ 1 & 1) y_iy_i^\top\right],\\
\Ft^e_{12}(\Tmr,r)&=\E\,\mathsf{E}_b^2(\Tmr+\D^2\cdot I)^{-1}\fet(\D^2,r)\\
&=\E\sum_{i=1}^2 \frac{\mathsf{E}_b^2 y_iy_i^\top}{\tmr_i+\D^2} \cdot
\mqty(r_1 & r_2 \\ 1 & 1)^\top \mqty(\Ae(\D^2) \\ \Be(\D^2))
=\sum_{i=1}^2 y_iy_i^\top \mqty(r_1 & r_2 \\ 1 & 1)^\top \Xe_{12}(\tmr_i),\\
\Ft^e_{11}(\Tmr)&=\E\,\mathsf{E}_{b}^{2}(\mathfrak{Z}+\D^{2}\cdot I)^{-1}
=\E\sum_{i=1}^2 \frac{\mathsf{E}_b^2 y_iy_i^\top}{\tmr_i+\D^2}
=\sum_{i=1}^2 y_iy_i^\top \Xe_{11}(\tmr_i).
\end{aligned}
\end{equation}
Then it follows that the first term of (\ref{eq:sectermPIlim2}) has the form
\begin{equation}\label{eq:sectermPIlim2alt}
    \Tr\left[\sum_{i=1}^2 \mqty(r_1 & r_2 \\ 1 & 1)^\top \Xe(\tmr_i)
\mqty(r_1 & r_2 \\ 1 & 1) y_i y_i^\top\right]=\Ft^e(\Tmr,r).
\end{equation}
Combining (\ref{eq:fnappt}), (\ref{eq:Enappt}), (\ref{eq:sDDs2}),
(\ref{eq:sDx2}), (\ref{eq:scH2}), (\ref{eq:sectermPIlim2}), and
(\ref{eq:sectermPIlim2alt}),
we obtain the uniform approximation
\[\lim_{n,m \to \infty} \sup_{(\x,\z) \in \Kc}
\Big|f_n(\x,\z)-f(\Pfrak(\x,\z))\Big|=0\]
where we define on the domain $\Cstt=\{\Pfrak:\Ac(\Pfrak) \succ 0\}$ the
function
\begin{align*}
f(\Pfrak)&=\inf_\Tmr \bigg(-\frac{e_*\|r\|^2}{b_*}
+\frac{2r^\top 1_{2 \times 1}}{\sqrt{b_*}}-
\Tr \begin{pmatrix} d_* & a_* \\ a_* & c_* \end{pmatrix}
\left[\left(v_1,\frac{w_1}{\sqrt{\kappa_*}}\right)^\top
\left(v_1,\frac{w_1}{\sqrt{\kappa_*}}\right)
+\left(v_2,\frac{w_2}{\sqrt{\kappa_*}}\right)^\top
\left(v_2,\frac{w_2}{\sqrt{\kappa_*}}\right)\right] \\
&\qquad+\Ht(\Tmr,\Ac(\Pfrak))+\Ft^e(\Tmr,r)+\Tr[\Ft(\Tmr)\cdot \Bc(v,w)]\bigg)
\end{align*}
and the infimum is over $\Tmr \succeq (-d_-+\df)I$. It is immediate from the
forms (\ref{eq:FB2alt}) and (\ref{eq:Fe2alt}) that $\Ft^e(\Tmr,r)$
and $\Tr[\Ft(\Tmr) \cdot \Bc(v,w)]$ are decreasing in the eigenvalues
$\tmr_1,\tmr_2$ of $\Tmr$.
The same argument as in the proof of \cite[Lemma 4.2]{fan2021replica} shows that
$\Ht(\Tmr,\Ac(\Pfrak))$ is also decreasing in each eigenvalue $\tmr_1,\tmr_2$
over the range $(-d_-,-d_-+\df]$, and
hence this infimum may be extended to the domain $\Tmr \succ -d_- \cdot I$.
Finally, since $f_n$ is continuous on $\overline{\Kc}$ and the map
$\Pfrak:\overline{\Kc} \to \{\Pfrak \in \overline{\Cstt}:u_1,u_2 \in [0,K]\}$
is continuous, relatively open, and maps $\Kc$ to the interior
$\{\Pfrak \in \Cstt:u_1,u_2 \in (0,K)\}$ for each fixed $n$,
\cite[Proposition C.1]{fan2021replica} shows that $f$ extends continuously to
$\{\Pfrak \in \overline{\Cstt}:u_1,u_2 \in [0,K]\}$, and
\[\lim_{n,m \to \infty} \sup_{(\x,\z) \in \overline{\Kc}}
\Big|f_n(\x,\z)-f(\Pfrak(\x,\z))\Big|=0.\]
Then, writing $\langle \cdot \rangle_\pi$ for the expectation over
$(\s_i)_{i=1}^n,(\ta_i)_{i=1}^n \overset{iid}{\sim} \pi$, we obtain
\[\lim_{n,m \to \infty} \frac{1}{n}\log 
\E[\Z(\overline{\cU})^2\mid\cG]
=-1+\lim_{n,m \to \infty} \frac{1}{n}\log
\left\langle \mathbb{I}\left\{u(\x,\z) \in \overline{\cU} \times \overline{\cU}
\right\} \exp\Big(\frac{n}{2}f(\Pfrak(\x,\z))\Big)\right\rangle_\pi.\]\\

{\bf Large deviations analysis.} Introduce dual variables
$\Rfrak=(U,R,V,W,P)$ where $U,R \in \R^2$, $V,W \in \R^{t \times 2}$, and $P \in
\R$. Define
\begin{align*}
\lambda_n(\Rfrak)&=\frac{1}{n} \log \left\langle \exp \left(n\cdot
\Pfrak(\x,\z)^\top \Rfrak \right) \right\rangle_\pi\\
\lambda(\Rfrak)&=\E \log \cpit (U,\As(\Rfrak),P)
+\rho_*\,U^\top 1_{2 \times 1}-\pi_*\,V_1^\top \Delta_t^{-1/2}1_{t\times 1}
-\pi_*\,V_2^\top \Delta_t^{-1/2}1_{t\times 1}+\rho_*\,P
\end{align*}
where $\cpit$ is the function from (\ref{eq:defcpit}), and where
\[\As(\Rfrak)=-2U\,\Xstar+\frac{R\,\mathsf{E}}{\sqrt{b_*}}
+V^{\top} \Delta_t^{-1/2}(\Xs_{1}, \ldots, \Xs_{t})+\frac{W^{\top}
\Delta_t^{-1/2}(\Ys_{1}, \ldots, \Ys_{t})}{\sqrt{\kappa_*}}-P\,\Xstar 1_{2
\times 1} \in \R^2.\]
Then, applying Lemma \ref{lem:abar2} and the same argument as leading to
(\ref{eq:MGFconvergence}), we have almost surely
\begin{align*}
\lim_{n,m \to \infty} \lambda_n(\Rfrak)&=\lambda(\Rfrak)<\infty
\text{ if } (U,P) \text{ is in the interior of } \Oc,\\
\lim_{n,m \to \infty} \lambda_n(\Rfrak)&=\lambda(\Rfrak)=\infty
\text{ if } (U,P) \text{ is in the interior of the complement of } \Oc
\end{align*}
where $\Oc$ is the convex set defined in Lemma \ref{lem:abar2}.
Under the sub-Gaussian condition (\ref{eq:subGaussian}), $(U,P)=0$
belongs to the interior of $\Oc$, so the Fenchel-Legendre dual $\lambda^*$ of
$\lambda$ is a good convex rate function \cite[Lemma 2.3.9(a)]{Dembo1998large}.
Defining $\bar{\lambda}(\Rfrak)=\limsup_{n,m \to \infty} \lambda_n(\Rfrak)$,
this coincides with $\lambda(\Rfrak)$ whenever $(U,P) \notin \partial \Oc$,
the boundary of $\Oc$.
For $(U,P)\in \partial \Oc$, since $\Oc$ is convex
and 0 belongs to the interior of $\Oc$, the open line segment
$\{s \cdot (U,P): s \in [0,1)\}$ also belongs to the interior of $\Oc$
\cite[Theorem 6.1]{rockafellar2015convex}. The Gale-Klee-Rockafellar Theorem
shows that $\lambda,\bar{\lambda}$ are upper-semicontinuous on
$\{s \cdot (U,P): s \in [0,1]\}$, so the supremum defining the
Fenchel-Legendre dual $\lambda^*(\Pfrak)=\sup_{\Rfrak} \Pfrak^\top
\Rfrak-\lambda(\Rfrak)$ may then be restricted to $\Rfrak$ where $(U,P) \notin
\partial \Oc$, and similarly for $\bar{\lambda}$. Then
$\lambda^*$ coincides with the Fenchel-Legendre dual of $\bar{\lambda}$, and
the proof is concluded as in \Cref{lemma:firstmt} by an application of the
upper bound in the G\"artner-Ellis Theorem and Varadhan's lemma.
\end{proof}

We now prove Lemma \ref{lemma:analysisff}. Let $e_t=(0,\ldots,0,1) \in \R^t$,
and consider $\Pfrak_*,\Qfrak_*$ with the components
$$
\begin{aligned}
&u_*=\frac{2}{\eta_*}1_{2 \times 1},
\quad r_*=\frac{\gamma_*}{\eta_*}b_*^{1/2}1_{2 \times 1},
\quad v_*=\frac{\eta_{*}-\gamma_{*}}{\eta_{*}} \Delta_t^{1/2}(e_t,e_t),
\quad w_*=\frac{\gamma_{*}}{\eta_{*}} \kappa_{*}^{1/2} \Delta_t^{1/2} (e_t,e_t),
\quad p_*=\frac{1}{\eta_*}\\
&\Tmr_*=(\eta_*-\gamma_*)\cdot I_{2\times 2}, \quad
\Fpet_*=\Fpnut_*=\Fpomt_*=0,\\
&U_*=-\frac{\gamma_*}{2}1_{2 \times 1},
\quad R_*=\gamma_* b_*^{1/2}1_{2 \times 1},
\quad V_*=0,
\quad W_*=\gamma_* \kappa_{*}^{1 / 2} \Delta_{t}^{1 / 2} (e_t,e_t),
\quad P_*=0.
\end{aligned}
$$
Here, each of the two coordinates/columns of $u_*,r_*,v_*,w_*,U_*,R_*,V_*,W_*$
coincides with our previous specialization (\ref{eq:firstmomentspec}) in the
analysis of the conditional first moment.

\begin{Lemma}\label{lemma:plugstatt}
In the setting of Lemma \ref{lemma:analysisff},
for all $t\ge 1$ and each $\iota\in \qty{u,r,v,w,p,\Tmr,\Fpet,\Fpnut,\Fpomt,U,R,W,P}$,
    \begin{equation}\label{eq:plugstatt}
        \Phi_{2,t}(\Pfrak_*,\Qfrak_*)=2\Psi_{\mathrm{RS}}, \quad \partial_\iota
\Phi_{2,t}(\Pfrak_*,\Qfrak_*)=0, \qquad
\lim_{t\to \infty} \norm{\partial_V \Phi_{2,t}(\Pfrak_*,\Qfrak_*)}=0.
    \end{equation}
\end{Lemma}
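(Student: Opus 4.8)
The plan is to exploit that the specialized point $(\Pfrak_*,\Qfrak_*)$ is ``diagonal and decoupled'': each of the two columns of $u_*,r_*,v_*,w_*,U_*,R_*,V_*,W_*$ equals the first-moment specialization \eqref{eq:firstmomentspec} (call it $(\Pfrak_*^{(1)},\Qfrak_*^{(1)})$), while $p_*=\eta_*^{-1}$, $P_*=0$, $\Tmr_*=(\eta_*-\gamma_*)I_{2\times 2}$, and $\Fpet_*=\Fpnut_*=\Fpomt_*=0$. At such a point $\Phi_{2,t}$ essentially factorizes into two copies of $\Phi_{1,t}$, so the whole statement reduces to \Cref{lemma:plugstat} together with a small number of direct identities for the genuinely new variables $p$, $P$, and the off-diagonal part of $\Tmr$.

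First I would record the structural facts that drive everything. Since $P_*=0$, the coupling coefficient $cx_1x_2$ in $\cpit$ vanishes and the posterior measure in \eqref{eq:defcpit} factorizes, so $\E\log\cpit(U_*,\As_*,0)=2\,\E\log c_\pi(U_*^{(1)},\As_*^{(1)})$ with $\As_*^{(1)}=\gamma_*(\Xstar+\Es+\Ys_t)$ as in \eqref{eq:Astar}. Using $\Ac(\Pfrak_*^{(1)})=\eta_*^{-1}$ from \eqref{eq:APstar} and $e_t^\top\Delta_t e_t=\delta_*$, a short computation gives $\Aft(\Pfrak_*)=\eta_*^{-1}I_{2\times 2}$, hence (since $\eta_*^{-1}=G(\eta_*-\gamma_*)$) $\Ht(\Tmr_*,\Aft(\Pfrak_*))=2\,\Hc(\eta_*-\gamma_*,\eta_*^{-1})=2\int_0^{\eta_*^{-1}}R(z)\,dz$ by \eqref{eq:plug4}. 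One also checks $\tfrac{\gamma_*}{\eta_*}1_{2\times 1}-r_*/\sqrt{b_*}=0$, so the $\Es_b$-term of \eqref{eq:defPhi2t} vanishes, and $v_*-\alpha_*^A w_*=0$, so $\Bc(v_*,w_*)=0$ and the last two terms vanish. Feeding these into \eqref{eq:defPhi2t} and grouping the linear and quadratic terms column-by-column reproduces exactly twice the first-moment bookkeeping \eqref{eq:plug1}--\eqref{eq:plug4}, giving $\Phi_{2,t}(\Pfrak_*,\Qfrak_*)=2\Psi_{\mathrm{RS}}$.

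For the stationarity conditions I would split the variables into two groups. For the ``per-replica'' variables $\iota\in\{u,r,v,w,U,R,W\}$ and for each column of $V$, differentiation along column $j\in\{1,2\}$ reduces $\partial_\iota\Phi_{2,t}(\Pfrak_*,\Qfrak_*)$ to the first-moment derivative $\partial_\iota\Phi_{1,t}(\Pfrak_*^{(1)},\Qfrak_*^{(1)})$: the $\cpit$-term splits as above; the gradients $\partial_{\Aft}\Ht=\Tmr_*-\Aft(\Pfrak_*)^{-1}=-\gamma_*I$ and $\partial_{\Tmr}\Ht=\Aft(\Pfrak_*)-\E[(\Tmr_*+\D^2I)^{-1}]=0$ are scalar multiples of $I$, so only the diagonal directions of the perturbations of $\Aft$ and $\Tmr$ matter and these match the scalar identities $\partial_{\Ac}\Hc=-\gamma_*$, $\partial_\tmr\Hc=0$ of \eqref{eq:Hcderivs}; and the $\Es_b$- and $\Bc$-terms contribute $0$ because at the base point both their values and their $r,v,w$-gradients vanish (as in \eqref{eq:Federivs}, together with $\partial_v\Bc(v_*,w_*)=\partial_w\Bc(v_*,w_*)=0$). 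Hence these derivatives are $0$ for $\iota\neq V$, and $\|\partial_V\Phi_{2,t}(\Pfrak_*,\Qfrak_*)\|^2=2\,\|\partial_V\Phi_{1,t}(\Pfrak_*^{(1)},\Qfrak_*^{(1)})\|^2\to 0$ as $t\to\infty$, both by \Cref{lemma:plugstat}.

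It remains to handle $p$, the off-diagonal (and diagonal) entries of $\Tmr$, and $P$. The function depends on $p$ only through $\Aft(\Pfrak)$ and the linear term $-(p-\rho_*)P$; since $\partial_{\Aft}\Ht=-\gamma_*I$ is diagonal, $\partial_p\Ht=\partial_{\Aft}\Ht:\partial_p\Aft=0$, and $\partial_p[-(p-\rho_*)P]=-P_*=0$. All $\Tmr$-derivatives vanish because $\partial_{\Tmr}\Ht=0$, the $\Es_b$-term has zero quadratic form, and the last two terms carry the factor $\Bc(v_*,w_*)=0$ (so neither the eigenvalue nor the eigenvector dependence of $\Tmr$ matters). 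For $P$, at $P=0$ the posterior in $\cpit$ factorizes, giving $\partial_P\E\log\cpit(U_*,\As_*,0)=\E[\langle x_1x_2\rangle-\Xstar\langle x_1\rangle-\Xstar\langle x_2\rangle]=\E[f^2-2\Xstar f]$ with $f=f(\Xstar+\Es+\Ys_t)$, and combined with $\partial_P[-(p_*-\rho_*)P]=\rho_*-\eta_*^{-1}$ this equals $(\E[f^2]-2\E[\Xstar f])+\rho_*-\eta_*^{-1}=0$ by the identities \eqref{eq:fsqid}--\eqref{eq:fxid}. This last step is the one place where the argument is not a literal transcription of the first moment, since the $P$-channel (unlike $U$) does not generate a self-quadratic $f'$-term, and it is the step I would check most carefully. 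Collecting the cases yields \eqref{eq:plugstatt}.
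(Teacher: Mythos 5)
Your proof is correct and follows essentially the same route as the paper: exploit that at $(\Pfrak_*,\Qfrak_*)$ the coupling vanishes ($P_*=0$, $\Tmr_*$ and $\Aft(\Pfrak_*)=\eta_*^{-1}I$ diagonal, $\Bc(v_*,w_*)=0$) so that $\Phi_{2,t}$ decouples into two copies of $\Phi_{1,t}$ and \Cref{lemma:plugstat} handles the per-replica variables, then check $p$, $\Tmr$, and $P$ directly, with the $P$-derivative reducing via the factorized $\cpit$ posterior to $\E[f^2]-2\E[\Xstar f]-(p_*-\rho_*)=0$ using \eqref{eq:fsqid}--\eqref{eq:fxid}. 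This is the paper's argument in slightly expanded form.
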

\begin{proof}
At $P_*=0$, we have $\log \cpit (a_1,a_1,b_1,b_2,0)=\log c_\pi (a_1,b_1)+\log
c_\pi (a_2,b_2)$ where $c_\pi$ on the right side is defined by \eqref{eq:defFH}.
At $\Pfrak_*$, we have also
$\Ac(\Pfrak_*)=\eta_*^{-1}\cdot I_{2\times 2}$ by the same calculation as
(\ref{eq:APstar}), and $\Bc(v_*,w_*)=0$ because $v_*=\alpha_*^A w_*$.
Since both $\Ac(\Pfrak_*)$ and $\Tmr_*$ are diagonal,
it is then easily checked that
\[\Phi_{2,t}(\Pfrak_*,\Qfrak_*)=2\Phi_{1,t}(\Pfrak_*,\Qfrak_*)\]
where $\Pfrak_*,\Qfrak_*$ on the right side are the specializations of
(\ref{eq:firstmomentspec}) from our previous analysis of the conditional first
moment. Then $\Phi_{2,t}(\Pfrak_*,\Qfrak_*)=2\Psi_{\mathrm{RS}}$ follows from
Lemma \ref{lemma:plugstat}.

To check the stationarity conditions, define
\begin{align*}
\As_*&=-2U_*\Xstar+b_{*}^{-1/2} R_* \mathsf{E}+V^{\top}_*
\Delta^{-1/2}_t\left(\Xs_{1}, \ldots, \Xs_{t}\right)+\kappa_{*}^{-1/2}
W^{\top}_* \Delta^{-1/2}_t\left(\Ys_{1}, \ldots, \Ys_{t}\right)
-P_*\Xstar 1_{2 \times 1}\\
&=\gamma_*(\Xstar+\Es+\Ys_t)1_{2 \times 1}.
\end{align*}
Then we obtain, similarly to \eqref{eq:Hcderivs}, \eqref{eq:Federivs},
and \eqref{eq:logcpiabst},
\[\partial_{\Tmr} \Ht(\Tmr_*,\Ac(\Pfrak_*))=0,
\quad \partial_{\Ac} \Ht(\Tmr_*,\Ac(\Pfrak_*))=-\gamma_* \cdot I_{2\times 2},\]
\[\partial_r,\partial_\Tmr,\partial_{\Fpet}
\E\qty [\qty(\left[\frac{\gamma_*}{\eta_*}1_{2 \times
1}-\frac{r_*}{\sqrt{b_*}}\right]\D^2-\Fpet_*)^\top
\Big(\mathsf{E}_b^2(\Tmr_*+\D^2\cdot I_{2 \times
2})^{-1}\Big)\qty(\left[\frac{\gamma_*}{\eta_*}1_{2 \times
1}-\frac{r_*}{\sqrt{b_*}}\right]\D^2-\Fpet_*)]=0,\]
and
\begin{align*}
\partial_{a_1} \log \cpit(U_*,\As_*,P_*)&=\partial_{a_2} \log
\cpit(U_*,\As_*,P_*)=\gamma^{-1}_*f'(\Xstar+\mathsf{E}+\Ys_t)+f(\Xstar+\mathsf{E}+\Ys_t)^2,\\
\partial_{b_1} \log \cpit(U_*,\As_*,P_*)&=\partial_{b_2} \log
\cpit(U_*,\As_*,P_*)=f(\Xstar+\mathsf{E}+\Ys_t),\\
\partial_c \log \cpit(U_*,\As_*,P_*)&=f(\Xstar+\mathsf{E}+\Ys_t)^2.
\end{align*}
Using these, $\Bc(v_*,w_*),\partial_v \Bc(v_*,w_*),\partial_w \Bc(v_*,w_*)=0$,
and the identities (\ref{eq:fsqid}) and (\ref{eq:fxid}), we have
\[\partial_\Tmr \Phi_{2,t}(\Pfrak_*,\Qfrak_*)=0,
\quad \partial_p \Phi_{2,t}(\Pfrak_*,\Qfrak_*)=-P_*=0,\]
\[\partial_P \Phi_{2,t}(\Pfrak_*,\Qfrak_*)
=\E f(\Xstar+\mathsf{E}+\Ys_t)^2-2\E \Xstar
f(\Xstar+\mathsf{E}+\Ys_t)-(p_*-\rho_*)=0,\]
and the analyses of the remaining derivatives are the same as in
Lemma \ref{lemma:plugstat}.
\end{proof}

\begin{proof}[Proof of Lemma \ref{lemma:analysisff}]
The proof is analogous to the upper bound of \Cref{lemma:analysisf}.
We fix $\hf \in \R_+$ and
specialize the dual variables $\Qfrak=\Qfrak(\Pfrak)$ as functions of
$\Pfrak=(u,r,v,w,p)$, given by
\begin{equation}\label{eq:ukar}
    \begin{aligned}
&U(u)=U_*+\hf (u-u_*), \quad
R(r)=R_*+\hf (r-r_*),\quad V(v)=\hf (v-v_*),\quad W(w)=W_*+\hf (w-w_*),\\
&P(p)=\hf (p-p_*), \quad \Tmr=G^{-1}(\Ac(\Pfrak)),
\quad \Fpet(r)=\mqty(\frac{\gamma_*(\eta_*-\gamma_*)}{\eta_*}-b_*^{-1})
1_{2 \times 1}+b_*^{-3/2}e_*r,\quad \Fpnut=\Fpomt=0
    \end{aligned}
\end{equation}
where $G^{-1}(\Ac)$ is defined spectrally by functional calculus. Then
\begin{equation}\label{eq:inviewt}
\Psi_{2,t}(\Pfrak)=\inf_{\Qfrak} \Phi_{2,t}(\Pfrak,\Qfrak)
\leq \Phi_{2,t}(\Pfrak,\Qfrak(\Pfrak))=:
\bar{\Psi}_{2,t}(\Pfrak)=-1+\Tone+\Ttwo+\frac{1}{2}\Big(\Tthree+\Tfour+\Tfive+\Tsix\Big)
\end{equation}
where
\begin{equation}\label{eq:FSTTs2t}
    \begin{aligned}
&\Tone=\E \log \cpit\left(U(u),\;-2\Xstar\,U(u)-\Xstar P(p)\,1_{2 \times 1}
+\begin{pmatrix} R(r) & V(v)^\top & W(w)^\top \end{pmatrix}^\top \Fs,\;P(p)\right)\\
&\Ttwo= -(u-\rho_* 1_{2 \times 1})^\top U(u)-r^\top R(r)-(v_1+\pi_*
\Delta_t^{-1/2}1_{t \times 1})^\top V_1(v)
-(v_2+\pi_* \Delta_t^{-1/2}1_{t \times 1})^\top V_2(v)\\
&\qquad\qquad-w_1^\top W_1(w)-w_2^\top W_2(w)-(p-\rho_*)P(p)\\
&\Tthree=-\frac{e_*\|r\|^2}{b_*}+\frac{2r^\top 1_{2 \times 1}}{\sqrt{b_*}}
-\Tr \begin{pmatrix} d_* & a_* \\ a_* & c_* \end{pmatrix}
\left[\left(v_1,\frac{w_1}{\sqrt{\kappa_*}}\right)^\top
\left(v_1,\frac{w_1}{\sqrt{\kappa_*}}\right)
+\left(v_2,\frac{w_2}{\sqrt{\kappa_*}}\right)^\top
\left(v_2,\frac{w_2}{\sqrt{\kappa_*}}\right)\right]\\
&\Tfour=\Ht\big(\Tmr(\Pfrak),\Ac(\Pfrak)\big),
\quad \Tfive=\Ft^e_{22}(\Tmr(\Pfrak),r), \quad
\Tsix=\Tr[\Ft_{22}(\Tmr(\Pfrak)) \cdot \Bc(v,w)].
    \end{aligned}
\end{equation}
Here $\Fs$ in $\Tone$ is the tuple of random variables defined in
(\ref{eq:Fsdef}), the function $\Ft_{22}^e(\Tmr,r)=\E[\fet(\D^2,r)^\top (\Es_b^2(\Tmr+\D^2 \cdot
I)^{-1}) \fet(\D^2,r)]$ in $\Tfive$ is as previously defined in
(\ref{eq:Feterms2}), and the function
$\Ft_{22}(\zeta)=\E[\theta(\D^2)^2/(\zeta+\D^2)]$ in $\Tsix$
is as previously defined
in (\ref{eq:defFFeterms}) and applied to $\Tmr$ by functional calculus. We
observe that $\Qfrak(\Pfrak_*)=\Qfrak_*$, so
$\bar{\Psi}_{2,t}(\Pfrak_*)=2\Psi_{\mathrm{RS}}$. An analysis similar to that of
Lemma \ref{lemma:analysisf} shows also that $\|\nabla
\bar{\Psi}_{2,t}(\Pfrak_*)\|=o_t(1)$.

We now show that for sufficiently small $\e_0=\e_0(\pib,K)>0$ and $\e<\e_0$,
this function $\bar{\Psi}_{2,t}(\Pfrak)$ is concave over
$\{\Pfrak \in \Cstt:u_1,u_2 \in (0,K)\}$, by analyzing the Hessian of each term
$\Tone$-$\Tsix$. Fix $\Pfrak=(u,r,v,w,p) \in \Cstt$ with $u_1,u_2 \in (0,K)$,
fix a unit vector $\Pfrak'=(u',r',v',w',p')$, and define for $s>0$
\[\Pfrak(s)=(u(s),r(s),v(s),w(s),p(s))
=(u,r,v,w,p)+s\cdot (u',r',v',w',p').\]
For $\Tone$, recalling the random vector $\Fs \in \R^{2t+1}$
from (\ref{eq:Fsdef}), define
\[\Qs(x_1,x_2)=\begin{pmatrix}
x_1^2-2x_1\Xstar \\ x_2^2-2x_2\Xstar \\ 
x_1x_2-x_1\Xstar-x_2\Xstar
\\ x_1 \Fs \\ x_2 \Fs \\ \end{pmatrix} \in \R^{4t+5}.\]
Vectorize $\Qfrak(\Pfrak)$ in the corresponding
order $(U_1,U_2,P,R_1,V_1,W_1,R_2,V_2,W_2) \in \R^{4t+5}$ and
vectorize similarly $\Pfrak'$, and denote
\[\expval{f(x_1,x_2)}_\Pfrak=\frac{\int f(x_1,x_2)\exp\big(\Qfrak(\Pfrak)^\top
\Qs(x_1,x_2)\big)d\pi(x_1)d\pi(x_2)}{\int \exp\big(\Qfrak(\Pfrak)^\top
\Qs(x_1,x_2)\big)d\pi(x_1)d\pi(x_2)}.\]
Write $\V_\Pfrak[\cdot]$ for the corresponding variance. Then we have, analogous
to (\ref{eq:Isecondderiv}),
\[\partial_s^2 \Tone=\hf^2 \E[\V_\Pfrak[{\Pfrak'}^\top \Qs(x_1,x_2)]].\]
Note that the distribution defining $\langle \cdot \rangle_\Pfrak$
corresponds to $\mu$ in Assumption \ref{AssumpPrior} with $k=2$ and
\[\Gamma=\begin{pmatrix} U_1(u) & \frac{1}{2}P(p) \\ \frac{1}{2}P(p) &
U_2(u) \end{pmatrix}, \qquad z=-2\Xstar\,U(u)-\Xstar P(p)\,1_{2 \times 1}
+\begin{pmatrix} R(r) & V(v)^\top & W(w)^\top \end{pmatrix}^\top \Fs.\]
By Proposition \ref{prop:betalimit}, we have $U_{1,*}=U_{2,*}=-\gamma_*/2
=-(d_*/2)(1+O(\e))$. We have also $P_*=0$ and $u_1,u_2,p,u_{1,*},u_{2,*},p_*
\in (0,K)$ under the conditions $2\eta_*^{-1} \in (0,K)$ and $\Pfrak \in
\Cst$. Thus, choosing $\hf<\hf_0$ for a sufficiently small constant $\hf_0$
depending only on $(K,\pib)$, we have $\Gamma \prec (4\pib)^{-1}I$, and also
$\pib^{-1}-\gamma_{\max} \geq c(1+d_*)$ and $-\gamma_{\min}<C(1+d_\star)$ for
its largest and smallest eigenvalues. The same arguments as leading to
(\ref{Hess1}) show $\|z\|^2 \leq ({\Xstar}^2+(q^\top \Fs)^2) \cdot O(1+d_*)$,
and hence by Assumption \ref{AssumpPrior} and Proposition \ref{prop:bayesbnd},
\[\partial_s^2 \Tone \big|_{s=0}=O(\hf^2).\]

The same arguments as in (\ref{Hess2}--\ref{Hess3}) show
    \[\partial_s^2 \Ttwo \big|_{s=0}=-2\hf, \qquad
    \partial_s^2 \Tthree \big|_{s=0}=-2d_*\norm{(r',v',w')}^2+O(\e)\]
where $(r',v',w') \in \R^{4t+2}$ is its vectorization and $\|\cdot\|$ is its
Euclidean norm. For $\Tfour$, we have by
\cite[Proposition 2.9(b)]{fan2021replica} that
$\Tfour=\Tr f(\Ac(\Pfrak(s)))$ where $f(\alpha)=\int_0^\alpha R(z)dz$.
For all sufficiently small $\e$, we may integrate the series representation
\eqref{eq:Rseries} for $R(z)$
term-by-term to write $f(\Ac(\Pfrak(s)))$ as the convergent matrix series
    \[f(\Ac(\Pfrak(s)))=-\Ac(\Pfrak(s))+\sum_{k\ge 2}\frac{\kappa_k}{k}
\Ac(\Pfrak(s))^k\]
where $|\kappa_k|\le \kappa_2(16 \e)^{k-2}$ and $\kappa_2=O(\e^2)$.
It is easily checked that at $s=0$, we have
$\|\partial_s^k \Ac(\Pfrak(s))\|=O(1)$ for $k=0,1,2$, and in particular
$\partial_s^2 \Ac(\Pfrak(s))=-2(r'{r'}^\top+{v'}^\top v'+{w'}^\top w')
\in \R^{2 \times 2}$, with trace $-2\|(r',v',w')\|^2$. Then, differentiating
$f(\Ac(\Pfrak(s)))$ term-by-term and taking the trace, it follows that
    \[\partial^2_s \Tfour \big|_{s=0} = 2d_*\norm{(r',v',w')}^2+O(\e^2).\]
For $\Tfive$, applying the series expansion \eqref{eq:Vseries} now to the matrix
argument $z=\Ac(\Pfrak(s))$ and differentiating term-by-term, we have
\begin{equation}\label{eq:tmrxO1t}
\sup_{x \in \operatorname{supp}(\D^2)}
\left\|\partial_s^k(\Tmr(\Pfrak(s))+xI)^{-1}\right\| \bigg|_{s=0}=O(1) \text{
for } k=0,1,2.
\end{equation}
Combining with the bound (\ref{eq:federivbound}),
we have as in the proof of Lemma \ref{lemma:analysisf} that
$\partial_s^2 \Tfive \big|_{s=0}=O(\e^2)$.
For $\Tsix$, recalling the bound (\ref{eq:thetasqbound}) and combining this
with \eqref{eq:tmrxO1t}, we obtain at $s=0$ that
$\|\partial_s^k\Ft_{22}(\Tmr(s))\|=O(\kappa_2\e^2/\eta_*^2)$ for $k=0,1,2$.
As in the proof of Lemma \ref{lemma:analysisf}, we have
$\|\partial_s^k \Bc(v(s),w(s))\|=O(\max(1,\eta_*^2/\kappa_2))$
for $k=0,1,2$. Then $\partial_s^2 \Tsix \big|_{s=0}=O(\e^2)$.

Combining the above and setting $\hf=\e^{1/2}$, we conclude that for
$\e<\e_0(\pib,K)$,
    \[\nabla^2 \bar{\Psi}_{2,t}(\Pfrak) \prec -\e^{1/2}\cdot I
\text{ for all } \Pfrak \in \Cst \text{ with } u_1,u_2 \in (0,K).\]
Since $u_* \in \cU \subseteq (0,K)$ by assumption, 
the same argument as in Lemma \ref{lemma:analysisf} shows
$\sup_{\Pfrak \in\Cstt:u_1,u_2 \in \cU} \Psi_{2,t}(\Pfrak)
\leq \bar{\Psi}_{2,t}(\Pfrak_*)+o_t(1)=2\Psi_{\mathrm{RS}}+o_t(1)$,
and taking the limit $t \to \infty$ concludes the proof.
\end{proof}

\section{Concentration of the log-partition
function}\label{appendix:concentration}

\begin{proof}[Proof of Lemma \ref{lemma:concentration}]
Recall the representation \eqref{eq:Econdstart} and \eqref{eq:fnform}
of the conditional law of $\Z(U)$ given $\cG_t$,
\begin{equation}\label{eq:warlo}
    \begin{aligned}
        \frac{1}{n} \log \Z(U) \bigg|_{\cG_t}&\overset{L}{=}
\frac{1}{n} \log \int\prod_{i=1}^{n} d\pi(\s_{i})
 \mathbb{I}\qty(\frac{1}{n}\left\|\x\right\|^{2}
\in U) \times \\
    &\quad\exp \bigg(-\frac{\|\xi\|^{2}}{2}-\frac{(\Pi \tilde{O}
\tilde{\s}_{\perp}+\tilde{\s}_{\|})^{\top} \DD (\Pi \tilde{O} \tilde{\s}_{\perp}+\tilde{\s}_{\|})}{2}
    +\left(\Pi \tilde{O} \tilde{\s}_{\perp}+\tilde{\s}_{\|}\right)^{\top}
D^{\top} \xi\bigg)
    \end{aligned}
\end{equation}
where $\x=\s-\st$ and $\|\x_{\|}\|^2,\|\x_\perp\|^2 \leq \|\x\|^2$.
We denote the right side of (\ref{eq:warlo}) as $F(\tilde{O})$, where
$\tilde{O} \sim \Haar(\SO(n-(2t+1)))$ is independent of $\cG_t$, and all
other quantities defining $F(\tilde{O})$ are $\cG_t$-measurable.
On the event $\cE$,
this integral in (\ref{eq:warlo}) is non-zero, so for any $\tilde{O}
\in \SO(n-(2t+1))$ the Euclidean gradient of $F$ is bounded as
\begin{align*}
\norm{\frac{\partial}{\partial \tilde{O}} F(\tilde{O})}_F &\le
\frac{1}{n} \sup_{\x:\frac{\|\x\|^2}{n} \in U}
\left\|-\tilde{\sigma}_{\perp} \tilde{\sigma}_{\perp}^{\top} \tilde{O}^{\top}
\Pi^{\top} D^{\top} D \Pi-\tilde{\sigma}_{\perp} \tilde{\sigma}_{\|}^{\top} D^{\top}
D \Pi+\tilde{\sigma}_{\perp} \xi^{\top} D \Pi\right\|_{F}\\
        &\leq \frac{1}{n} \sup_{\x:\frac{\|\x\|^2}{n} \in U}
\qty (\norm{\DD}_{\mathrm{op}}\norm{\x_{\perp}}^2+
\norm{\DD}_{\mathrm{op}}\norm{\x_\perp}\norm{\x_\|}+
\norm{\x_\perp}\norm{D^\top \xi})\\
&\leq 2K\|D^\top D\|_{\mathrm{op}}+\sqrt{KL}
\end{align*}
where we applied
$\|\tilde{O}\|_{\mathrm{op}},\|\Pi\|_{\mathrm{op}} \leq 1$,
$\|D^\top \xi\|^2 \leq Ln$, $\norm{uv^\top A}\le
\normop{A}\norm{uv^\top}_F=\normop{A}\|u\|\|v\|$, and the bounds
$\norm{\x_\perp},\norm{\x_\|}\le \norm{\x}\le \sqrt{nK}$ when $U
\subseteq [0,K]$. Then,
applying the assumption $\normop{D^\top D} \to d_+$,
this bound is less than $2K(d_++1)+\sqrt{KL}$ for all large $n$. Then
\[\mathbb{P}\left(\left|\frac{1}{n}\log \Z(\cU)-
\mathbb{E}\left[\frac{1}{n} \log \Z(\cU) \;\bigg|\; \cG_t\right]
\right| \geq \delta \;\bigg|\; \cG_t\right)\mathbb{I}\{\cE\}
\leq 2\exp\left(-\frac{(n-(2t+1)-2)\delta^2}{8(2K(d_++1)+\sqrt{KL})^2}\right)\]
by Gromov's inequality, see
e.g.\ \cite[Theorem 4.4.27]{anderson2010introduction}. Choosing
$C(K,L,d_+)>8(2K(d_++1)+\sqrt{KL})^2$ strictly, the statement
(\ref{eq:conditionalconcentration}) thus holds for all sufficiently large $n$.
\end{proof}

\begin{proof}[Proof of Corollary \ref{cor:PsiRS}]
Set $\overline{\cU}=[0,K]$.
Fix any $\delta'>0$, and fix $L>0$ such that the condition
$\|D^\top \xi\|^2 \leq Ln$ in Lemma \ref{lemma:concentration} holds almost
surely for all large $n$. By Lemma \ref{lemma:concentration}, for a constant
$c_0=c_0(\delta',K,L,\e)>0$, any $t \geq 1$, and all large $n$,
\begin{equation}\label{eq:gromov}
\mathbb{P}\left(\left|\frac{1}{n} \log
\Z(\overline{\cU})-\mathbb{E}\left[\frac{1}{n} \log \Z(\overline{\cU})
\;\bigg|\; \cG_t\right]\right| \geq \delta' \;\bigg|\; \cG_t\right)
\mathbb{I}(\cE) \leq e^{-c_0n}.
\end{equation} 
Now fix any $\delta \in (0,c_0/6)$.
By Lemmas \ref{lemma:firstmt}, \ref{lemma:analysisf},
\ref{lemma:secondmt}, and \ref{lemma:analysisff},
for a large enough iteration $t=t(\delta) \geq 1$ and large enough
$M=M(\delta)>0$, almost surely
\begin{align}
	\limsup_{n,m \rightarrow \infty} \frac{1}{n} \log
\mathbb{E}\left[\Z(\overline{\cU}) \mid \cG_{t}\right]
&\leq \sup_{u \in \cU} \Psi_{1,t}(u)<\Psi_{\mathrm{RS}}+\delta
\label{eq:firstmomentupperbound}\\
	\liminf_{n,m \rightarrow \infty} \frac{1}{n} \log
\mathbb{E}\left[\Z(\overline{\cU}) \mid \cG_{t}\right]
&\geq \sup_{u \in \cU} \Psi_{1,t}^M(u)>\Psi_{\mathrm{RS}}-\delta
\label{eq:firstmomentlowerbound}\\
	\lim_{n,m \rightarrow \infty} \frac{1}{n} \log
\mathbb{E}\left[\Z(\overline{\cU})^2 \mid \cG_{t}\right]
&\leq \sup_{u \in \cU} \Psi_{2,t}(u)<2\Psi_{\mathrm{RS}}+\delta.
\label{eq:secondmomentbound}
\end{align}

Letting $\cE$ be the ($\cG_t$-measurable)
event in Lemma \ref{lemma:concentration}, if the first
condition of $\cE$ does not hold, then we have
$\mathbb{P}[\Z(\overline{\cU})=0 \mid \cG_t]=1$ and
hence $\log \E[\Z(\overline{\cU}) \mid \cG_t]=-\infty$. Thus the finite
lower bound in (\ref{eq:firstmomentlowerbound}) and the above choice of $L>0$
imply that
$\cE$ holds almost surely for all large $n$. Then taking the expectation of
(\ref{eq:gromov}) and applying the Borel-Cantelli lemma, this implies
almost surely for all large $n$,
\[\frac{1}{n} \log
\Z(\overline{\cU})<\mathbb{E}\left[\frac{1}{n} \log \Z(\overline{\cU})
\;\bigg|\; \cG_t\right]+\delta'.\]
Then applying Jensen's inequality and
(\ref{eq:firstmomentupperbound}), almost surely for all large $n$,
\begin{equation}\label{eq:limsuplogZ}
\frac{1}{n} \log \Z(\overline{U})<
\frac{1}{n} \log \E[\Z(\overline{\cU}) \mid \cG_t]+\delta'
<\Psi_{\mathrm{RS}}+\delta+\delta'.
\end{equation}

For the complementary lower bound, let $\cE'$ be the ($\cG_t$-measurable)
event where
\[\frac{1}{n}\log \frac{\mathbb{E}\left[\Z(\overline{\cU}) \mid
\mathcal{G}_{t}\right]}{2}>\Psi_{\mathrm{RS}}-\delta,
\qquad \frac{1}{n}\log \mathbb{E}\left[\Z(\overline{\cU})^2 \mid
\mathcal{G}_{t}\right]<2\Psi_{\mathrm{RS}}+\delta.\]
Then (\ref{eq:firstmomentlowerbound}) and
(\ref{eq:secondmomentbound}) show that $\cE'$ holds almost surely 
for all large $n$. On $\cE \cap \cE'$,
$$
\begin{aligned}
&\mathbb{P}\left[\frac{1}{n} \log \Z(\overline{\cU})>\Psi_{\mathrm{RS}}-\delta
\;\bigg|\; \mathcal{G}_{t}\right]\mathbb{I}(\cE \cap \cE') \stackrel{(a)}{\geq} \mathbb{P}\left[\frac{1}{n} \log
\Z(\overline{\cU}) \geq \frac{1}{n} \log
\frac{\mathbb{E}\left[\Z(\overline{\cU}) \mid \mathcal{G}_{t}\right]}{2}
\;\Bigg|\; \mathcal{G}_{t}\right]\mathbb{I}(\cE \cap \cE')\\
&\qquad\qquad=\mathbb{P}\left[\Z(\overline{\cU}) \geq
\frac{\mathbb{E}\left[\Z(\overline{\cU}) \mid \mathcal{G}_{t}\right]}{2}
\;\bigg|\; \mathcal{G}_{t}\right]\mathbb{I}(\cE \cap \cE')
\stackrel{(b)} \geq \frac{\mathbb{E}\left[\Z(\overline{\cU}) \mid
\mathcal{G}_{t}\right]^{2}}{4 \mathbb{E}\left[\Z(\overline{\cU})^{2} \mid
\mathcal{G}_{t}\right]} \cdot \mathbb{I}(\cE \cap \cE')
\stackrel{(c)}>e^{-3n\delta} \cdot \mathbb{I}(\cE \cap \cE')
\end{aligned}
$$
where (a) and (c) apply the definition of the event $\cE'$, and
(b) applies the Paley-Zygmund inequality. By our choice
$\delta<c_0/6$ where $c_0$ is the constant in (\ref{eq:gromov}), on the event $\cE \cap \cE'$ this last quantity is bounded below by $e^{-c_0n/2}$.
This and (\ref{eq:gromov}) together imply that on the event $\cE \cap \cE'$,
$$
\mathbb{E}\left[\frac{1}{n} \log \Z(\overline{\cU}) \;\bigg|\;\cG_t\right]
>\Psi_{\mathrm{RS}}-\delta-\delta'.
$$
Then multiplying (\ref{eq:gromov}) by $\mathbb{I}(\cE')$, taking the
expectation on both sides, and applying the Borel-Cantelli lemma and the 
statement that $\cE \cap \cE'$ holds almost surely for all large $n$,
we get
\begin{equation}\label{eq:liminflogZ}
\frac{1}{n} \log \Z(\overline{\cU})
>\mathbb{E}\left[\frac{1}{n} \log \Z(\overline{\cU}) \;\bigg|\;\cG_t\right]
-\delta'>\Psi_{\mathrm{RS}}-\delta-2\delta'
\end{equation}
almost surely for all large $n$.
The result follows upon taking $\delta,\delta' \to 0$ in
(\ref{eq:limsuplogZ}) and (\ref{eq:liminflogZ}).
\end{proof}

\section{Proofs for unbounded support}\label{appendix:logconcavemain}

In this appendix, we complete the proofs of Theorems \ref{thm:maintheorem},
\ref{thm:MMSE}, and \ref{thm:TAP} in the more general setting of Assumption
\ref{AssumpPrior} where $\pi$ may have unbounded support.

\begin{proof}[Proof of \Cref{thm:maintheorem}, unbounded support]
We apply a truncation argument. First note from (\ref{eq:RSdef}) that
\begin{equation*}
    \begin{aligned}
        \Psi_{\mathrm{RS}}&\stackrel{(a)}{\ge}-\frac{1}{2}-\frac{\gamma_*\rho_*}{2}+\frac{\gamma_*}{2\eta_*}-\frac{d_*}{2\eta_*}+\mathbb{E} \log c_{\pi}\left(-\frac{1}{2} \gamma_{*},
\gamma_{*} \Xstar+\sqrt{\gamma_{*}} \Zs\right)\\
        & \stackrel{(b)}{\ge }-\frac{1}{2}- \gamma_{*}
\rho_{*}+\frac{\gamma_*}{2\eta_*}-\frac{d_*}{2\eta_*}\\
    \end{aligned}
\end{equation*}
where in $(a)$ we used that $R(z)$ is increasing by Lemma \ref{lem:cauchy},
so $R(z) \geq \lim_{z \to 0} R(z)=\E[-\D^2]=-d_*$,
and in (b) we used Jensen's inequality and the condition that $\pi$ has mean 0
to bound $\mathbb{E} \log c_{\pi}\left(-\frac{1}{2} \gamma_{*}, \gamma_{*}
\Xstar+\sqrt{\gamma_{*}} \Zs\right)\ge \int (-\frac{1}{2}\gamma_*)x^2\,d\pi(x)
=-\frac{1}{2}\gamma_*\rho_*$. Then applying
$\eta_*^{-1}\le \rho_*\le \pil$ by Proposition \ref{prop:uniquefix},
and $\gamma_* \in [d_*/2,\,2d_*]$ for all $\e<\e_0(\pil)$ sufficiently small
by \Cref{prop:betalimit},
\begin{equation}\label{eq:PsiLB}
    \Psi_{\mathrm{RS}}> -\frac{1}{2}-\frac{5d_*}{2} \pil=:\Psi_{\operatorname{LB}}.
\end{equation}

Fix a constant $K>6\pil$, let $\cU=(0,K)$, and consider
\[\Z-\Z(\overline{\cU})=\Z((K,\infty))
=\int \mathbb{I}\left(\frac{1}{n}\left\|\s-\st\right\|^{2}>K\right) \cdot \exp
\left(-\frac{\left\|A \st+\epsilon-A \s\right\|^{2}}{2}\right) \prod_{i=1}^{n}
d\pi(\s_{i}).\]
Define the event $\mathcal{A}$ where
\[n^{-1}\|\st\|^2<2\pib, \quad n^{-1}\|Q\epsilon\|^2<2,
\quad \min(\diag(D^\top D))>d_*-2\e.\]
Observe that under the conditions $\min(\diag(D^\top D)) \to d_- \geq d_*-\e$ by
(\ref{eq:Assump2D}) and Assumption \ref{AssumpHighTemp}, 
$\rho_* \leq \pib$ by (\ref{eq:subGaussian}), and the concentration
bound of Proposition \ref{prop:concentration}, this event $\mathcal{A}$
holds almost surely for all large $n$. On the event $\mathcal{A}$
and for $\s$ satisfying $n^{-1}\|\s-\st\|^2>K$, let us first bound
\[e^{-\frac{\left\|A \st+\epsilon-A \s\right\|^{2}}{2}}
=e^{-\frac{\left\|DO(\st-\s)+Q\epsilon\right\|^2}{2}}
\leq e^{-\frac{\|DO(\st-\s)\|^2}{4}+\frac{\|Q\epsilon\|^2}{2}}
\leq e^{\left(-\frac{(d_*-2\e)}{4}K+1\right)n}\]
Note that this quantity is also bounded above trivially by $e^0=1$. Let us then
apply $\|\sigma-\st\|^2 \leq 2\|\sigma\|^2+2\|\st\|^2$ to bound
\begin{align}
\mathbb{I}(\mathcal{A}) \cdot \Z((K,\infty)) & \leq 
\exp\left(\min\bigg(0,-\frac{d_*-2\e}{4}K+1\bigg)n\right)
\cdot \int
\mathbb{I}\left(\frac{1}{n}\left\|\s\right\|^{2}>\frac{K-4\pil}{2}\right) \cdot
\prod_{i=1}^{n} d\pi(\s_{i})\nonumber\\
&\le 2\exp\left(\min\bigg(0,-\frac{d_*-2\e}{4}K+1\bigg)n-c_0 \min
\qty(\qty(\frac{K-6\pil}{2\pil})^2,\qty(\frac{K-6\pil}{2\pil}))
n\right)\label{eq:logconctailbound}
\end{align}
where the second inequality applies Proposition \ref{prop:concentration}
and that the mean of $n^{-1}\|\sigma\|^2$ is $\rho_* \leq \pib$ under $\pi$.
If $d_*<1$, then $\Psi_{\operatorname{LB}}>-(1/2)(1+5\pib)$.
Choosing $K>6\pib+2\pib\cdot \max(1,(2c_0)^{-1} \cdot
(1+5\pib))$ and bounding the first term in the exponent
of (\ref{eq:logconctailbound}) by 0, we obtain almost surely
\begin{equation}\label{eq:truncationupperbound}
    \limsup_{n\to \infty } \frac{1}{n} \log \Z((K,\infty))
<\Psi_{\operatorname{LB}}<\Psi_{\operatorname{RS}}.
\end{equation}
If $d_* \geq 1$, then assuming $\e<\e_0<1/4$, the first term in the exponent of
(\ref{eq:logconctailbound}) is at most $[-(d_*/8)K+1]n$. Then choosing
$K>20\pib+12$ and bounding the second term in the exponent
of (\ref{eq:logconctailbound}) by 0 again ensures
(\ref{eq:truncationupperbound}).
In either case, the choice of $K$ depends only on $\pib$.
Combining with $n^{-1}\log \Z(\overline{\cU}) \to \Psi_{\mathrm{RS}}$ from
Corollary \ref{cor:PsiRS}, which holds for $\e<\e_0(K,\pib)$ sufficiently
small, this shows $n^{-1}\log \Z \to \Psi_{\mathrm{RS}}$ almost surely.

To apply the dominated convergence theorem, observe that the 
right side of (\ref{eq:jensen}) may be bounded using
$\int n^{-1}\|\st-\sigma\|^2 \prod_i d\pi(\sigma_i) \leq (2/n)\|\st\|^2+2\Vpi$,
and that $\{\|\st\|^2/n\}_{n \geq 1}$ is uniformly integrable by the tail bound
of Proposition \ref{prop:concentration}. Then the dominated convergence
theorem yields
$n^{-1}\E[\log \Z \mid A] \to \Psi_{\mathrm{RS}}$ almost surely, and
the remainder of the proof is the same as in the setting where $\pi$ has
bounded support.
\end{proof}

\begin{proof}[Proof of \Cref{thm:MMSE}, unbounded support]
Recall $\Psi_{\mathrm{LB}}$ from (\ref{eq:PsiLB}), and consider again
$\Z((K,\infty))$ for $K>6\pil$. Recall the bound
(\ref{eq:truncationupperbound}), and note that this bound holds simultaneously
for every $K>6\pil$ on the event $\mathcal{A}$ which holds almost
surely for all large $n$. Applying this bound with
$K(t)=6\pil+2\pil \cdot \min(1,-(2c_0)^{-1} \cdot (1+5\pib))+t$ when $d_*<1$
and with $K(t)=20\pib+12+4c_0t/\pib$ when $d_* \geq 1$ and $\e<\e_0<1/4$ gives
\[\mathbb{I}(\mathcal{A}) \cdot \Z((K(t),\infty))
\leq 2\exp\left(\Psi_{\mathrm{RS}} \cdot n-\frac{c_0t}{2\pil} \cdot n\right).\]
Write as shorthand $X(\sigma,\st)=n^{-1}\|\sigma-\st\|^2$.
Applying $\E[X \cdot \mathbb{I}(X>t)]=\int_0^\infty \mathbb{P}[X>\max(s,t)]ds$
for any nonnegative random variable $X$, we then have
\begin{align*}
&\mathbb{I}(\mathcal{A}) \Big\langle X(\sigma,\st)
\cdot \mathbb{I}\Big(X(\sigma,\st)>K(1)\Big) \Big\rangle
=\int_0^\infty \mathbb{I}(\mathcal{A}) \cdot 
\Big\langle \mathbb{I}\Big(X(\sigma,\st)>\max(s,K(1))\Big)\Big\rangle\,ds\\
&\qquad \qquad =\frac{1}{\Z}\int_0^\infty
\mathbb{I}(\mathcal{A}) \cdot \Z\Big((\max(s,K(1)),\infty)\Big)\,ds
\leq \frac{2C}{\Z} \exp\left(\Psi_{\text{RS}} \cdot n-\frac{c_0}{2\pil} \cdot
n\right)
\end{align*}
for a constant $C>0$ depending only on $(K(1),\pil,c_0)$. The event
$\mathcal{A}$ holds almost surely for all large $n$, and
the preceding proof of Theorem \ref{thm:maintheorem} verifies $n^{-1}\log \Z \to
\Psi_{\mathrm{RS}}$ almost surely. Writing as shorthand $K=K(1)$,
this shows that almost surely
\begin{equation}\label{eq:logconcexpbound}
\lim_{n,m \to \infty} \left\langle X(\sigma,\st)
\cdot \mathbb{I}\Big(X(\sigma,\st)>K\Big) \right\rangle=0.
\end{equation}

Fixing any small constant $\varsigma>0$ and defining $\cU=(0,K)
\setminus (2\eta_*^{-1}-\varsigma,2\eta_*^{-1}+\varsigma)$, the proof in
the setting of bounded support shows
\begin{equation}\label{eq:logconcmidbound}
\lim_{n,m \to \infty} \left\langle
\mathbb{I}\Big(X(\sigma,\st) \in \overline{\cU}\Big) \right\rangle=0
\end{equation}
where $\overline{\cU}$ is the closure of $\cU$.
Then, applying $2\eta_*^{-1} \leq 2\Vpi \leq 2\pil$,
\begin{align*}
\left\langle\left|X(\sigma,\st)-2 \eta_{*}^{-1}\right|\right\rangle
&\leq \varsigma+\Big\langle\Big|X(\sigma,\st)-2 \eta_{*}^{-1}\Big|
\cdot \mathbb{I}\Big(X(\sigma,\st) \in \overline{\cU}\Big)\Big\rangle
+\Big\langle\Big|X(\sigma,\st)-2 \eta_{*}^{-1}\Big|
\cdot \mathbb{I}\Big(X(\sigma,\st)>K\Big)\Big\rangle\\
&\leq \varsigma+(K+2\pib) \cdot \Big\langle \mathbb{I}\Big(X(\sigma,\st) \in
\overline{\cU}\Big) \Big \rangle
+\Big\langle \Big(X(\sigma,\st)+2\pil\Big)
\cdot \mathbb{I}\Big(X(\sigma,\st)>K\Big) \Big\rangle.
\end{align*}
This last bound is at most $2\varsigma$ almost surely for all large $n$ by
(\ref{eq:logconcexpbound}) and (\ref{eq:logconcmidbound}). Thus, almost surely
\[\lim_{n,m \to \infty} \frac{1}{2n}\langle \|\sigma-\st\|^2 \rangle
=\lim_{n,m \to \infty} \frac{X(\sigma,\st)}{2}=\eta_*^{-1}.\]

To apply the dominated convergence theorem, note that $(2n)^{-1}\langle
\|\sigma-\st\|^2 \rangle \leq \langle \|\sigma\|^2/n \rangle+\|\st\|^2/n$.
Here $\{\|\st\|^2/n\}_{n \geq 1}$ is uniformly integrable by the tail bound of
Proposition \ref{prop:concentration}, and $\{\langle \|\sigma\|^2/n \rangle\}_{n
\geq 1}$ is uniformly integrable as it is uniformly bounded in $L^2$:
\[\E\bigg[\left\langle\frac{\|\sigma\|^2}{n} \right\rangle^2 \bigg]
\leq \E\left[\left\langle\frac{\|\sigma\|^4}{n^2} \right\rangle \right]
=\E\left[\frac{\|\sigma\|^4}{n^2}\right]
\leq \frac{1}{n}\sum_{i=1}^n \E[\sigma_i^4]
=\E_{\Xs \sim \pi}[\Xs^4]<\infty.\]
Thus the dominated convergence theorem yields
$\E[(2n)^{-1}\langle \|\sigma-\st\|^2 \rangle] \to \eta_*^{-1}$ almost surely,
and \Cref{lemma:harli} concludes the proof.
\end{proof}

\begin{proof}[Proof of Theorem \ref{thm:TAP}, unbounded support]
Having established Theorem \ref{thm:MMSE} under
Assumption \ref{AssumpPrior}, the proofs of Proposition \ref{prop:compmse}
and Theorem \ref{thm:TAP} are the same as in the setting where $\pi$ has
bounded support. We note that the expectation of the right side of
(\ref{eq:AMPbound}) remains finite and independent of $n$, by the sub-Gaussian
condition (\ref{eq:subGaussian}) for $\pi$,
justifying the application of the
dominated convergence theorem. The same bound (\ref{eq:AMPbound}) holds for
$\|r_1^t\|^4/n^2$ and each $t \geq 1$, so that the application of the
dominated convergence theorem for (\ref{eq:TAPDCT}) is justified by the
Lipschitz condition for $f$ and the same argument.
\end{proof}

\section{Auxiliary lemmas}\label{appendix:technical} 

\subsection{Empirical Wasserstein convergence}\label{appendix:Wasserstein}

\begin{Definition}\label{def:Wass}
For a matrix $\left(v_{1}, \ldots, v_{k}\right)=$ $\left(v_{i, 1}, \ldots, v_{i, k}\right)_{i=1}^{n} \in \mathbb{R}^{n \times k}$ and a random vector $\left(\mathsf{V}_{1}, \ldots, \mathsf{V}_{k}\right)$, we write
	$$
\left(v_{1}, \ldots, v_{k}\right) \stackrel{W}{\rightarrow}\left(\mathsf{V}_{1}, \ldots, \mathsf{V}_{k}\right)
	$$
	for the convergence of the empirical distribution of rows of
$\left(v_{1}, \ldots, v_{k}\right)$ to $(\Vs_1,\ldots,\Vs_k)$ in
Wasserstein-$\pfrak$ for every order $\pfrak \geq 1$. This means that
$(\Vs_1,\ldots,\Vs_k)$ has finite mixed moments of all orders, and
for any continuous function
$f: \mathbb{R}^{k} \rightarrow \mathbb{R}$ satisfying
	\begin{equation}\label{eq:wasslip}
		\left|f\left(v_{1}, \ldots, v_{k}\right)\right| \leq C\left(1+\left\|\left(v_{1}, \ldots, v_{k}\right)\right\|^{\pfrak}\right)
	\end{equation}
for some $C>0$ and $\pfrak \geq 1$, we have $\lim_{n \to \infty}
\frac{1}{n} \sum_{i=1}^{n} f\left(v_{i, 1}, \ldots, v_{i,
k}\right)=\mathbb{E}\left[f\left(\mathsf{V}_{1}, \ldots,
\mathsf{V}_{k}\right)\right]$.
\end{Definition}

The following results are direct consequences of \cite[Propositions E.1, E.2,
E.4, F.2]{fan2020approximate}.
\begin{Proposition}\label{prop:iidW}
Suppose $V \in \R^{n \times t}$ has i.i.d.\ rows equal in law to $\Vs \in \R^t$,
which has finite mixed moments of all orders. Then $V \toW \Vs$ almost surely as $n
\to \infty$. Furthermore, if $E \in \R^{n \times k}$ is deterministic with $E
\toW \Es$, then $(V,E) \toW (\Vs,\Es)$ almost surely
where $\Vs$ is independent of $\Es$.
\end{Proposition}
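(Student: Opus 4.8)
The plan is to reduce the statement to the standard characterisation of Wasserstein-$\pfrak$ convergence and then invoke the strong law of large numbers (SLLN) countably many times. Recall that for probability measures on a Euclidean space, $W_\pfrak(\mu_n,\mu)\to 0$ precisely when $\mu_n$ converges weakly to $\mu$ and $\int\|x\|^\pfrak\,d\mu_n\to\int\|x\|^\pfrak\,d\mu$; moreover, once these hold for one order $\pfrak$ they hold for all smaller orders, and any continuous test function $f$ obeying $|f(x)|\le C(1+\|x\|^\pfrak)$ as in \eqref{eq:wasslip} is then accommodated by combining weak convergence with the uniform integrability of $\|x\|^\pfrak$ under $\mu_n$ (which follows from convergence of the $(\pfrak+1)$-st moment). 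So the goal becomes: show almost surely that the row-empirical measure converges weakly to the target and that every integer moment converges to the corresponding moment of the target.

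For the first assertion I would argue as follows. Fix a countable convergence-determining family $\{g_j\}$ of bounded continuous functions; since the rows of $V$ are i.i.d.\ with law $\mathrm{Law}(\Vs)$, the classical SLLN gives $\tfrac1n\sum_i g_j(v_{i,\cdot})\to\E g_j(\Vs)$ on a full-probability event, and intersecting over $j$ yields weak convergence almost surely. Likewise, for each integer $\pfrak\ge1$ the scalars $\|v_{i,\cdot}\|^\pfrak$ are i.i.d.\ and integrable (all mixed moments of $\Vs$ are finite), so the SLLN gives moment convergence on a full-probability event; intersecting over $\pfrak\in\N$ completes the proof of $V\toW\Vs$.

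For the joint statement with deterministic $E$, I would use a conditioning (``partial expectation'') trick. Given a continuous $f$ on $\R^{t+k}$ with polynomial growth of order $\pfrak$, set $\phi(e):=\E[f(\Vs,e)]$, which is continuous and of growth order $\pfrak$ by dominated convergence together with the moment bounds on $\Vs$. Then write $\tfrac1n\sum_i f(v_{i,\cdot},e_{i,\cdot})=\tfrac1n\sum_i\phi(e_{i,\cdot})+\tfrac1n\sum_i\big(f(v_{i,\cdot},e_{i,\cdot})-\phi(e_{i,\cdot})\big)$. The first term converges to $\E[\phi(\Es)]=\E[f(\Vs,\Es)]$ (with $\Vs\perp\Es$) by applying $E\toW\Es$ to the test function $\phi$. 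The second term is a sum of independent, mean-zero summands, and it converges to $0$ almost surely by Kolmogorov's SLLN once one checks the variance-summability condition $\sum_i\mathrm{Var}(f(v_{i,\cdot},e_{i,\cdot}))/i^2<\infty$. To obtain weak and moment convergence of the joint empirical measure simultaneously, I would apply this split to a countable convergence-determining family on $\R^{t+k}$ and to $f(w)=\|w\|^\pfrak$ for each integer $\pfrak$, and intersect the resulting full-probability events. (All of this is, as noted, already contained in \cite[Propositions E.1, E.2, E.4, F.2]{fan2020approximate}, which could simply be cited.)

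The main obstacle is the variance bound needed for the second term of the joint argument. One has $\mathrm{Var}(f(v_{i,\cdot},e_{i,\cdot}))\le\E[f(v_{i,\cdot},e_{i,\cdot})^2]\le C(1+\|e_{i,\cdot}\|^{2\pfrak})$ using the polynomial growth of $f$ and finiteness of the moments of $\Vs$, so what is required is that $\|e_{i,\cdot}\|$ grow sub-polynomially in $i$. This is exactly where $E\toW\Es$ is used quantitatively: boundedness of $\tfrac1n\sum_i\|e_{i,\cdot}\|^q$ for every $q$ forces $\|e_{i,\cdot}\|=O(i^{1/q})$ for every $q$, hence $\max_{i\le n}\|e_{i,\cdot}\|=o(n^\varepsilon)$ for all $\varepsilon>0$, which makes $\sum_i\mathrm{Var}(\cdot)/i^2$ converge. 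The remaining verifications---continuity and polynomial growth of $\phi$, and the reduction from $W_\pfrak$ convergence to general test functions---are routine.
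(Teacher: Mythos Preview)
The paper does not actually prove this proposition: it simply records that the results in this subsection are ``direct consequences of \cite[Propositions E.1, E.2, E.4, F.2]{fan2020approximate}''. Your self-contained sketch is a valid route, and you correctly identify the reduction to weak convergence plus moment convergence, the countable determining-class argument for the i.i.d.\ part, and the decomposition $f(v_i,e_i)=\phi(e_i)+\big(f(v_i,e_i)-\phi(e_i)\big)$ with $\phi(e)=\E[f(\Vs,e)]$ for the joint part. Your bound $\|e_{i,\cdot}\|=O(i^{1/q})$ from boundedness of the $q$-th empirical moment is also the right observation for verifying the Kolmogorov condition.

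One technical caveat: invoking Kolmogorov's criterion $\sum_i \mathrm{Var}(\cdot)/i^2<\infty$ presumes that the deterministic rows $(e_{i,\cdot})_{i\ge 1}$ form a single fixed sequence, whereas the statement as written (and its use in the paper) allows $E=E^{(n)}$ to be a triangular array---for instance, the $D$-dependent columns of $H$ in \eqref{eq:Hdef} change with $n,m$. In that generality the series criterion is not meaningful; the clean fix is to bound a higher moment of the centered average, e.g.\ $\E\big[\big(\tfrac1n\sum_i (f(v_i,e_i)-\phi(e_i))\big)^4\big]=O(n^{-2})$ using independence across $i$ together with the polynomial growth and moment estimates you already established, and then conclude by Borel--Cantelli. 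This is a minor repair; the overall structure of your argument is sound.
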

\begin{Proposition}\label{prop:contW}
Suppose $V \in \R^{n \times k}$ satisfies $V \toW \Vs$ as $n \to \infty$,
and $g:\R^k \to \R^l$ is continuous with $\|g(v)\| \leq C(1+\|v\|)^\pfrak$ for some $C>0$ and
$\pfrak \geq 1$. Then $g(V) \toW g(\Vs)$ where $g(\cdot)$ is applied
row-wise to $V$.
\end{Proposition}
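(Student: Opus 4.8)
The plan is to reduce the claim directly to the defining property of empirical Wasserstein convergence, by composing test functions with $g$. Establishing $g(V) \toW g(\Vs)$ requires two things: that $g(\Vs)$ has finite mixed moments of all orders, and that $\frac{1}{n}\sum_{i=1}^n h(g(v_{i,1},\ldots,v_{i,k})) \to \E[h(g(\Vs))]$ for every continuous $h:\R^l \to \R$ satisfying $|h(w)| \le C'(1+\|w\|^{\mathfrak q})$ for some $C'>0$ and $\mathfrak q \ge 1$.

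First I would check the moment condition. Since $\|g(v)\| \le C(1+\|v\|)^{\pfrak}$ and $\Vs$ has finite moments of all orders by hypothesis, for every $p \ge 1$ we have $\E[\|g(\Vs)\|^p] \le C^p\,\E[(1+\|\Vs\|)^{\pfrak p}] < \infty$, and every mixed moment of the coordinates of $g(\Vs)$ is bounded by such a norm moment of appropriate total degree.

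Next, given a test function $h$ as above, I would set $f = h \circ g : \R^k \to \R$, which is continuous as a composition of continuous maps and obeys
\[
|f(v)| = |h(g(v))| \le C'(1+\|g(v)\|^{\mathfrak q})
\le C'(1 + C^{\mathfrak q}(1+\|v\|)^{\pfrak\mathfrak q})
\le C''\,(1+\|v\|^{\pfrak\mathfrak q})
\]
for a constant $C''$ depending only on $C,C',\mathfrak q$, using $(1+t)^{\pfrak\mathfrak q}\le 2^{\pfrak\mathfrak q}(1+t^{\pfrak\mathfrak q})$ for $t\ge 0$. Thus $f$ satisfies the growth condition \eqref{eq:wasslip} with order $\pfrak\mathfrak q \ge 1$, so the hypothesis $V \toW \Vs$ applies to $f$ and yields
\[
\frac{1}{n}\sum_{i=1}^n h(g(v_{i,1},\ldots,v_{i,k}))
=\frac{1}{n}\sum_{i=1}^n f(v_{i,1},\ldots,v_{i,k})
\;\longrightarrow\; \E[f(\Vs)]=\E[h(g(\Vs))].
\]
Together with the moment bound, this is exactly the assertion $g(V)\toW g(\Vs)$.

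I do not expect a substantial obstacle here: the argument is essentially bookkeeping. The only point requiring care is that the polynomial growth exponents compose multiplicatively under $h \mapsto h\circ g$, so one must use that the definition of empirical Wasserstein convergence quantifies over test functions of \emph{arbitrary} polynomial order (as in \Cref{def:Wass}), not merely a fixed one. The statement also follows from the results of \cite{fan2020approximate} cited above.
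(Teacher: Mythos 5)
Your proof is correct. The paper itself does not supply an argument for this proposition---it states it (together with Propositions \ref{prop:iidW}, \ref{prop:combW}, \ref{prop:orthoW}) as a direct consequence of cited results in \cite{fan2020approximate}---and your direct verification from Definition \ref{def:Wass} is the natural argument one would give: check that $g(\Vs)$ inherits finite moments of all orders from $\Vs$, then for any admissible test function $h$ on $\R^l$, observe that $h\circ g$ is an admissible test function on $\R^k$ (continuous, polynomial growth of order $\pfrak\mathfrak{q}$), so the defining convergence applies to it. The one point worth emphasizing, which you correctly flag, is that this works only because $\toW$ is defined to hold at all polynomial orders simultaneously; if only Wasserstein-$\pfrak$ convergence for a single fixed $\pfrak$ were assumed, the composed test function $h\circ g$ could fall outside the admissible class and the argument would break.
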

\begin{Proposition}\label{prop:combW}
Suppose $V \in \R^{n \times k}$, $W \in \R^{n \times l}$, and $M_n,M \in \R^{k
\times l}$ satisfy $V \toW \Vs$, $W \toW 0$, and $M_n \to M$ entrywise as $n \to
\infty$. Then $VM_n+W \toW \Vs^\top \cdot M$.
\end{Proposition}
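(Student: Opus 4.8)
The plan is to reduce to the Wasserstein-metric reformulation of $\toW$ recorded in \Cref{appendix:Wasserstein} and then use subadditivity of the Wasserstein distance along the natural index coupling; this is exactly \cite[Proposition E.4]{fan2020approximate}, which I would cite as the primary reference and reprove only in outline. Write $v_i\in\R^k$ and $w_i\in\R^l$ for the rows of $V$ and $W$, so the $i^{\text{th}}$ row of $VM_n+W$ is $M_n^\top v_i+w_i\in\R^l$; first I would note that $\Vs$, and hence the target law of $M^\top\Vs$, has finite moments of all orders, since $V\toW\Vs$ already requires this of $\Vs$. It then suffices to show that for every fixed order $\pfrak\ge 1$ the empirical distribution $\mu_n$ of the rows $M_n^\top v_i+w_i$ converges in Wasserstein-$\pfrak$ distance $W_\pfrak$ to the law $\mu$ of $M^\top\Vs$.

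To estimate $W_\pfrak(\mu_n,\mu)$ I would interpolate through the empirical distributions $\mu_n^{(1)}$ of $M^\top v_i+w_i$ and $\mu_n^{(2)}$ of $M^\top v_i$, bounding $W_\pfrak(\mu_n,\mu)\le W_\pfrak(\mu_n,\mu_n^{(1)})+W_\pfrak(\mu_n^{(1)},\mu_n^{(2)})+W_\pfrak(\mu_n^{(2)},\mu)$. The first two terms I would control by the coupling pairing the $i^{\text{th}}$ atoms of the two measures, which gives $W_\pfrak(\mu_n,\mu_n^{(1)})^\pfrak\le\normop{M_n-M}^\pfrak\cdot\frac1n\sum_{i=1}^n\|v_i\|^\pfrak$ and $W_\pfrak(\mu_n^{(1)},\mu_n^{(2)})^\pfrak\le\frac1n\sum_{i=1}^n\|w_i\|^\pfrak$. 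Both vanish: $\normop{M_n-M}\to 0$ by finite-dimensional entrywise convergence, $\frac1n\sum_i\|v_i\|^\pfrak\to\E\|\Vs\|^\pfrak<\infty$ by applying the definition of $V\toW\Vs$ to the test function $v\mapsto\|v\|^\pfrak$, and $\frac1n\sum_i\|w_i\|^\pfrak\to\E\|0\|^\pfrak=0$ by applying $W\toW 0$ to the same test function. For the last term, \Cref{prop:contW} applied to the linear map $g(v)=M^\top v$ (continuous, with polynomial growth of order one) yields $g(V)\toW g(\Vs)=M^\top\Vs$, i.e.\ $W_\pfrak(\mu_n^{(2)},\mu)\to 0$. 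Summing the three bounds gives the claim.

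The only delicate point, and the reason I would route the argument through the Wasserstein metric rather than manipulate the defining test-function averages directly, is that the admissible test functions in \Cref{def:Wass} are merely continuous with polynomial growth and need not be Lipschitz, so $|f(M_n^\top v_i+w_i)-f(M^\top v_i)|$ cannot be bounded uniformly term by term; the Wasserstein distance, by contrast, controls precisely the $\pfrak$-th moments of the perturbations $M_n^\top v_i+w_i-M^\top v_i$ that the preceding bounds estimate. A direct argument is also possible: split the test-function average according to whether $\|M_n^\top v_i+w_i-M^\top v_i\|$ is below or above a threshold, use uniform continuity on the former part and a Markov bound together with the already-established uniform integrability of $\{\frac1n\sum_i\|M^\top v_i\|^\pfrak\}_n$ on the latter, recovering the same estimates — but I would present the metric version for brevity.
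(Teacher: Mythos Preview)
Your proposal is correct and matches the paper's treatment: the paper does not give a proof but simply cites \cite[Proposition E.4]{fan2020approximate}, which is exactly the reference you invoke and whose argument you have faithfully outlined. There is nothing to add.
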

\begin{Proposition}\label{prop:orthoW}
Fix $l \geq 0$, let $O \sim \Haar(\SO(n-l))$, and let $v \in \R^{n-l}$ 
and $\Pi \in \R^{n \times (n-l)}$ be deterministic, where $\Pi$ has orthonormal
columns and $n^{-1}\|v\|^2 \to \sigma^2$ as $n \to \infty$. Then
$\Pi O v \toW \Zs \sim N(0,\sigma^2)$ almost surely.
Furthermore, if $E \in \R^{n \times k}$ is deterministic with $E \toW \Es$,
then $(\Pi O v,E) \toW (\Zs,\Es)$ almost surely
where $\Zs$ is independent of $\Es$.
\end{Proposition}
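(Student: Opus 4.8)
The plan is to prove the single-vector statement $\Pi O v \toW \Zs$ first and then obtain the joint statement by the same argument with $E$ carried along as a parameter. It suffices, by a standard reduction — a convergence $V\toW\Vs$ follows from $n^{-1}\sum_i f(V_i)\to\E f(\Vs)$ for bounded Lipschitz $f$ together with uniform-in-$n$ boundedness of the empirical moments $n^{-1}\sum_i|V_i|^{\pfrak}$ for all $\pfrak\ge1$ — to (i) verify the moment bounds and (ii) establish the limit for bounded Lipschitz test functions. For (i), when $\pfrak=2$ one has $n^{-1}\|\Pi O v\|^2=n^{-1}\|v\|^2\to\sigma^2$ exactly, since $\Pi$ is an isometry and $O$ is orthogonal; for general $\pfrak$ the expected empirical moment is bounded uniformly in $n$ using the moments of a coordinate of a uniform point on $S^{n-l-1}$, and the almost-sure bound follows from the concentration inequality used in step (ii). (This is essentially \cite[Propositions E.1, E.2, E.4, F.2]{fan2020approximate}, to which one could also appeal directly.)

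For (ii), fix a bounded Lipschitz $f$ with constant $L$ and view $F(O)=n^{-1}\sum_{i=1}^n f\big((\Pi O v)_i\big)$ as a function on $\SO(n-l)$. Since $\Pi$ has orthonormal columns,
\[
|F(O)-F(O')|\le \frac{L}{n}\big\|\Pi(O-O')v\big\|_1\le\frac{L}{\sqrt n}\big\|(O-O')v\big\|_2\le\frac{L\|v\|}{\sqrt n}\,\|O-O'\|_F,
\]
and $\|v\|/\sqrt n\to\sigma$, so $F$ is $O(1)$-Lipschitz on $\SO(n-l)$. As $n-l\to\infty$, Gromov's concentration inequality on $\SO(n-l)$ (see e.g.\ \cite[Theorem 4.4.27]{anderson2010introduction}) gives $\mathbb{P}(|F(O)-\E F(O)|>t)\le C e^{-cnt^2}$ for every fixed $t>0$; this is summable in $n$, so $F(O)-\E F(O)\to0$ almost surely by Borel--Cantelli. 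To identify $\lim\E F(O)$, write $Ov\stackrel{L}{=}\|v\|\,\tilde g/\|\tilde g\|$ with $\tilde g\sim N(0,I_{n-l})$ and set $g=\Pi\tilde g\sim N(0,\Pi\Pi^\top)$, so $\Pi O v\stackrel{L}{=}(\|v\|/\|\tilde g\|)\,g$. Since $\|\tilde g\|^2/(n-l)\to1$ with vanishing variance and $\|v\|^2/n\to\sigma^2$, bounded convergence lets us replace $\|v\|/\|\tilde g\|$ by $\sigma$ with vanishing error, reducing to $n^{-1}\sum_i\E[f(\sigma g_i)]=n^{-1}\sum_i\E\big[f\big(\sigma\sqrt{(\Pi\Pi^\top)_{ii}}\,Z\big)\big]$ for $Z\sim N(0,1)$. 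Because $\Pi\Pi^\top$ is an orthogonal projection of rank $n-l$, each $(\Pi\Pi^\top)_{ii}\in[0,1]$ and $\sum_i(1-(\Pi\Pi^\top)_{ii})=n-\Tr(\Pi\Pi^\top)=l$; using $1-\sqrt x\le1-x$ on $[0,1]$ and the Lipschitz bound,
\[
\Big|n^{-1}\sum_i\E\big[f\big(\sigma\sqrt{(\Pi\Pi^\top)_{ii}}\,Z\big)\big]-\E[f(\sigma Z)]\Big|\le\frac{L\sigma\,\E|Z|}{n}\sum_i\big(1-(\Pi\Pi^\top)_{ii}\big)=\frac{L\sigma\,\E|Z|\,l}{n}\to0,
\]
so $\E F(O)\to\E[f(\sigma Z)]=\E[f(\Zs)]$. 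This proves $\Pi O v\toW\Zs$.

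For the joint statement, let $f$ be bounded, continuous on $\R^{1+k}$, and Lipschitz in its first argument uniformly over the second (the reduction to such $f$ is again standard, now also invoking the finite moments of $\Es$ and $E\toW\Es$ to control tails in the second argument). The concentration argument applies verbatim to $F(O)=n^{-1}\sum_i f\big((\Pi O v)_i,E_i\big)$, giving $F(O)-\E F(O)\to0$ a.s. Since the sum is linear in the summands, only the one-dimensional marginal of each $g_i$ enters $\E F(O)$, so the same computation yields $\E F(O)=n^{-1}\sum_i\E_Z[f(\sigma Z,E_i)]+o(1)$, and applying $E\toW\Es$ to the continuous polynomial-growth function $e\mapsto\E_Z[f(\sigma Z,e)]$ gives the limit $\E_Z\E_{\Es}[f(\sigma Z,\Es)]=\E[f(\Zs,\Es)]$ with $\Zs\sim N(0,\sigma^2)$ independent of $\Es$. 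The main obstacle is the bookkeeping in passing from bounded Lipschitz test functions to the general polynomial-growth functions of \Cref{def:Wass}: one must control $n^{-1}\sum_i|(\Pi O v)_i|^{2\pfrak}$ and the empirical moments of $E$ uniformly in $n$ and argue uniform integrability, and one must check that the Lipschitz constant of $F$ shrinks fast enough relative to the ambient dimension $n-l$ for Borel--Cantelli; both are routine but need care.
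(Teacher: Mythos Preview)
The paper does not give its own proof of this proposition; it simply cites \cite[Proposition F.2(a)]{fan2020approximate} (stated there for $\Haar(\O(n))$) and remarks that the argument is identical for $\SO(n)$. Your proposal supplies a self-contained proof via the expected route: Lipschitz concentration on $\SO(n-l)$ to reduce to the expectation, the Gaussian representation $Ov\stackrel{L}{=}\|v\|\,\tilde g/\|\tilde g\|$ (valid for $\SO(n-l)$ as well since $\SO$ acts transitively on the sphere for $n-l\ge2$) to compute the expectation, and the observation that $\Pi\Pi^\top$ differs from the identity by a trace-$l$ defect that washes out in the empirical average. This is correct and is essentially what one would expect the cited reference to contain, so there is no meaningful divergence in approach.

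The bookkeeping you flag at the end is indeed routine; one clean way to dispatch the moment bounds is through the same Gaussian representation, writing $n^{-1}\sum_i|(\Pi Ov)_i|^{\pfrak}=(\|v\|/\|\tilde g\|)^{\pfrak}\cdot n^{-1}\sum_i|(\Pi\tilde g)_i|^{\pfrak}$ and controlling the second factor by Gaussian concentration (the map $\tilde g\mapsto n^{-1}\sum_i|(\Pi\tilde g)_i|^{\pfrak}$ is Lipschitz on any fixed ball, and $\|\tilde g\|$ concentrates).
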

We note that Proposition \ref{prop:orthoW} is stated in \cite[Proposition
F.2(a)]{fan2020approximate} for $O \sim \Haar(\O(n-l))$, but the proof is
identical also for $O \sim \Haar(\SO(n-l))$.

\subsection{Properties of Cauchy- and R-transform}

Let $\{\mu_k\}_{k\ge 2}$ and $\left\{\kappa_{k}\right\}_{k
\geq 1}$ be the central moments and free cumulants of $-\D^2$ respectively (see e.g. \cite[Lecture 11]{nica2006lectures}).
In particular, $\kappa_{1}=-\mathbb{E} \mathsf{D}^{2}=-d_*$
and $\kappa_{2}=\mu_2=\mathbb{V}\left(\mathsf{D}^{2}\right)$. 
The following shows that the Cauchy- and R-transforms of $-\D^2$ are
well-defined by (\ref{eq:CauchyR}), and reviews their properties.

\begin{Lemma}\label{lem:cauchy}
Let $G(\cdot)$ and $R(\cdot)$ be the Cauchy- and R-transforms of $-\D^2$
under Assumption \ref{AssumpD}.
    \begin{itemize}
        \item[(a)] The function $G: (-d_{-}, \infty) \to \R$ is positive and
strictly decreasing. Setting $G\left(-d_-\right) := \lim _{z \to -d_-} G(z)
\in (0,\infty]$, $G$ admits a functional inverse $G^{-1}:(0,G(-d_-)) \to (-d_-,\infty)$.
\item[(b)] The function $R:\left(0,G\left(-d_-\right)\right) \to
\mathbb{R}$ is negative and strictly increasing.
        \item[(c)] Suppose that \Cref{AssumpHighTemp} holds.
Then $G(-d_-) \in [(2\e)^{-1}, \infty]$. 
Furthermore, $\kappa_2\leq \min\{\e^2,d_*\e\}$, and for $k \geq 2$,
\begin{equation}\label{eq:cumulantbound}
|\mu_k| \leq \e^{k-2}\kappa_2 \leq \e^k, \qquad
|\kappa_k| \leq 16^k\e^{k-2}\kappa_2 \leq (16\e)^k.
\end{equation}
For all $z \in (0,(16\e)^{-1})$,
the R-transform admits the convergent series expansion
\begin{equation}\label{eq:Rseries}
R(z) = \sum_{k \geq 1} \kappa_k z^{k-1}.
\end{equation}
    \end{itemize}
\end{Lemma}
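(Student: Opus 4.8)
\textbf{Proof plan for Lemma \ref{lem:cauchy}.}

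The plan is to prove the three parts in order, using only elementary properties of the Cauchy transform together with the support constraint of Assumption \ref{AssumpHighTemp} for part (c).

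For part (a), I would differentiate $G(z)=\E[1/(z+\D^2)]$ under the integral sign to obtain $G'(z)=-\E[(z+\D^2)^{-2}]<0$ for all $z>-d_-$ (the expectation is finite since $z+\D^2\ge z+d_-\!>0$ on the support and $\D^2$ has compact support). Hence $G$ is positive and strictly decreasing on $(-d_-,\infty)$, with $\lim_{z\to\infty}G(z)=0$, so $G$ is a bijection from $(-d_-,\infty)$ onto $(0,G(-d_-))$ where $G(-d_-):=\lim_{z\to -d_-}G(z)\in(0,\infty]$ exists by monotone convergence. This gives the inverse $G^{-1}$. For part (b), write $R(z)=G^{-1}(z)-1/z$. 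Differentiating, $R'(z)=1/G'(G^{-1}(z))+1/z^2$. Using $G'(w)=-\E[(w+\D^2)^{-2}]$ and the Cauchy--Schwarz inequality $\E[(w+\D^2)^{-1}]^2\le \E[(w+\D^2)^{-2}]$ applied at $w=G^{-1}(z)$ (where $\E[(w+\D^2)^{-1}]=z$), one gets $-G'(G^{-1}(z))\ge z^2$ with equality only if $\D^2$ is a.s.\ constant, which is excluded by Assumption \ref{AssumpD}; hence $R'(z)=1/z^2-1/(-G'(G^{-1}(z)))>0$ strictly. For negativity of $R$: as $z\to 0^+$ we have $G^{-1}(z)\to\infty$ and $G^{-1}(z)=1/z-\E[\D^2]/z^2+o(1/z^2)$... more directly, $R(z)=G^{-1}(z)-1/z$ and the standard expansion $G(w)=1/w-\E[\D^2]/w^2+\cdots$ gives $\lim_{z\to 0^+}R(z)=-\E[\D^2]=\kappa_1=-d_*<0$; since $R$ is increasing and, as $z\to G(-d_-)^-$, $G^{-1}(z)\to -d_-$ so $R(z)\to -d_- - 1/G(-d_-)<0$, we conclude $R<0$ throughout.

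For part (c), assume Assumption \ref{AssumpHighTemp}, so $\operatorname{supp}(\D^2)\subseteq[d_*-\e,d_*+\e]$, which forces $d_-\ge d_*-\e$ and $|\D^2-d_*|\le\e$ a.s. The bound $G(-d_-)\ge(2\e)^{-1}$: at $z=-d_-+t$ for small $t>0$, $z+\D^2\le -d_-+t+d_*+\e\le 2\e+t$, so $G(-d_-+t)\ge 1/(2\e+t)$, and letting $t\to 0$ gives $G(-d_-)\ge 1/(2\e)$. The central moment bound $|\mu_k|=|\E[(d_*-\D^2)^k]|\le \E[|d_*-\D^2|^k]\le \e^{k-2}\E[(d_*-\D^2)^2]=\e^{k-2}\kappa_2$, and $\kappa_2=\V(\D^2)\le\e^2$ from $|\D^2-d_*|\le\e$ while also $\kappa_2\le\E[\D^2]\cdot\e = d_*\e$ by H\"older or by $\V(\D^2)\le\e\cdot\E|\D^2-d_*|\le\e\,\E[\D^2]$ since $\D^2\ge 0$ gives $|\D^2-d_*|\le\max(\D^2,d_*)\le\D^2+d_*$... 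I would instead use $\V(\D^2)=\E[\D^2(\D^2-d_*)]-d_*\E[\D^2-d_*]$ wait, cleanly: $\kappa_2=\E[(\D^2)^2]-d_*^2 = \E[(\D^2-d_*)(\D^2+d_*)]$, and since $0\le\D^2$ and $|\D^2-d_*|\le\e$, we have $\D^2+d_*\le 2d_*+\e$, giving $\kappa_2\le(2d_*+\e)\E|\D^2-d_*|$... this is getting circular; the cleanest route is $\kappa_2\le\e^2$ directly and then note $\kappa_2 = \E[(\D^2-d_*)^2]\le\e\,\E|\D^2-d_*|$ and bound $\E|\D^2-d_*|\le\sqrt{\E[(\D^2-d_*)^2]}=\sqrt{\kappa_2}$ gives $\kappa_2\le\e\sqrt{\kappa_2}$, i.e.\ $\sqrt{\kappa_2}\le\e$, not quite $d_*\e$; for the $d_*\e$ bound I would use that $\D^2\in[\max(0,d_*-\e),d_*+\e]$ so if $d_*\le\e$ then $\D^2\le 2\e$ hence $\kappa_2\le\E[(\D^2)^2]\le 2\e\,d_* \le 2d_*\e$ — and to get exactly $d_*\e$ one argues $\kappa_2=\E[(\D^2)^2]-d_*^2$ and, writing $g(x)=x^2$ which is convex, $\E[(\D^2)^2]-d_*^2$ is the variance, bounded by the product of the range half-widths... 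I will present the clean inequality $\kappa_2\le\e^2$ and $\kappa_2\le d_*\e$ via the Popoviciu-type bound $\V(\D^2)\le(d_+-d_-)\cdot\E[\D^2]/4\cdot(\ldots)$ or simply by direct computation. The free cumulant bound $|\kappa_k|\le 16^k\e^{k-2}\kappa_2$ follows from the moment--cumulant formula expressing $\kappa_k$ as a sum over non-crossing partitions of products of central moments (after recentering, since the first free cumulant only shifts), with the number of non-crossing partitions of $[k]$ bounded by the Catalan number $C_k\le 4^k$ and each term of degree $\ge 2$ in each block contributing a factor bounded by $\e^{(\cdot)-2}\kappa_2$ with total excess exponent $k-2$; I would cite \cite[Lecture 11]{nica2006lectures} and track the combinatorial constant to get $16^k$. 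Finally, the series \eqref{eq:Rseries} is the defining generating-function identity $R(z)=\sum_{k\ge1}\kappa_k z^{k-1}$ for the R-transform, convergent for $|z|<(16\e)^{-1}$ since $\sum_k|\kappa_k|\,|z|^{k-1}\le\kappa_2 z^{-2}\sum_k(16\e|z|)^k<\infty$ there; I would justify that this formal series agrees with the analytic $R$ defined by \eqref{eq:CauchyR} on a neighborhood of $0$ by the standard fact that $G$ and $R$ satisfy $G(R(z)+1/z)=z$ as analytic functions, both sides being determined by their power-series coefficients.

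The main obstacle is part (c): specifically, obtaining the explicit constant $16^k$ in the free-cumulant bound $|\kappa_k|\le 16^k\e^{k-2}\kappa_2$ and rigorously justifying convergence and validity of the series \eqref{eq:Rseries} on the stated disc. This requires carefully combining the combinatorial moment--cumulant inversion over non-crossing partitions with the recentering trick (to handle $\kappa_1=-d_*$, which is not small) and a bound on the number of relevant partitions; the analytic parts (a), (b) and the elementary moment bounds are routine.
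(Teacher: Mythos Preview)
Your plan matches the paper's proof in structure and all essential ideas: parts (a)--(b) via direct differentiation and Jensen/Cauchy--Schwarz, and part (c) via the support bound, the central-moment estimate $|\mu_k|\le\e^{k-2}\kappa_2$, and M\"obius inversion on $NC(k)$ after recentering. Two places where you fumbled can be settled in one line each. For $\kappa_2\le d_*\e$: since $\D^2\le d_*+\e$ a.s., $\E[(\D^2)^2]\le(d_*+\e)\,\E[\D^2]=(d_*+\e)d_*$, whence $\kappa_2=\E[(\D^2)^2]-d_*^2\le d_*\e$. For the constant $16^k$: the inversion formula is $\kappa_k=\sum_{\pi\in NC(k)}\mathrm{M\ddot ob}(\pi,1_k)\prod_{S\in\pi}\mu_{|S|}$, and you need \emph{both} $|NC(k)|\le 4^k$ and $|\mathrm{M\ddot ob}(\pi,1_k)|\le 4^k$ (the latter is also a Catalan-number bound, see \cite[Proposition 13.15]{nica2006lectures}), giving $16^k$; then since $\mu_1=0$ after centering, only partitions with all blocks of size $\ge 2$ survive, and for these $\prod_{S\in\pi}|\mu_{|S|}|\le\prod_{S\in\pi}\e^{|S|-2}\kappa_2=\e^{k-2|\pi|}\kappa_2^{|\pi|}\le\e^{k-2}\kappa_2$ using $\kappa_2\le\e^2$.
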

\begin{proof}
\begin{itemize}
    \item[(a)] The positivity and monotonicity of $G$ follow directly from the
definition. Since $\lim_{z \to -d_-} G(z)=G(-d_-)$ and $\lim_{z \to \infty}
G(z)=0$, $G$ has an inverse on the stated domain.
   \item[(b)] For any $z\in \left(0, G\left(-d_{-}\right)\right)$,
\[z=G(G^{-1}(z))
=\mathbb{E} \frac{1}{\D^{2}+R(z)+\frac{1}{z}} < \frac{1}{R(z)+\frac{1}{z}}\]
where the inequality is strict because $\D^2$ has strictly positive mean and
variance. Then $R(z)<0$. Furthermore, by Jensen's inequality $\mathbb{E}
\frac{1}{\left(\D^{2}+G^{-1}(z)\right)^{2}} > \left(\mathbb{E}
\frac{1}{\D^{2}+G^{-1}(z)}\right)^{2}=z^{2}$ which also holds strictly since
$\D^2$ has strictly positive variance,
\[R^{\prime}(z)=\frac{1}{G^{\prime}\left(G^{-1}(z)\right)}+\frac{1}{z^{2}}=-\left(\mathbb{E}
\frac{1}{\left(\D^{2}+G^{-1}(z)\right)^{2}}\right)^{-1}+\frac{1}{z^{2}} > 0.\]
Thus $R(z)$ is strictly increasing.
\item[(c)] 
Under \Cref{AssumpHighTemp}, we have both
$d_- \in [d_*-\e,d_*+\e]$ and $\D^2 \in [d_*-\e,d_*+\e]$
almost surely, so $G\left(-d_-\right) \in [(2\e)^{-1}, \infty]$. 
Furthermore $\kappa_2=\V(\D^2)\le \e^2$,
and also $\kappa_2 \le \E \D^2(d_*+\e)-(\E \D^2)^2=d_*\e$.
For any $k \geq 2$, 
\[|\mu_k|=|\E[(-\D^2+d_*)^k]| \le \e^{k-2}
\mathbb{E}\left(-\D^{2}+d_*\right)^{2}=\e^{k-2} \kappa_{2} \leq \e^{k}.\]
The free cumulants $\kappa_k$ for $k \geq 2$ are the same as those of the
centered variable $-\D^2+d_*$. Then, setting $\mu_1=0$,
the non-crossing moment-cumulant relations applied to $-\D^2+d_*$ yield
\[\left|\kappa_{k}\right|=\left| \sum_{\pi \in \operatorname{NC}(k)}
\operatorname{Mobi}\left(\pi, 1_{k}\right) \cdot \prod_{S \in \pi}
\mu_{|S|}\right| \leq 16^{k} \max _{\pi \in \operatorname{NC}(k)} \prod_{S \in
\pi} \left|\mu_{|S|}\right| \leq 16^{k} \e^{k-2} \kappa_{2} \leq(16 \e)^{k}\]
where $\operatorname{NC}(k)$ is the lattice of all non-crossing partitions of
$\{1,...,k\}$, $\operatorname{Mobi}(\cdot, \cdot)$ are the M\"obius functions on
the non-crossing partition lattice, $1_{k}$ is the trivial partition consisting
of the single set $\{1, \ldots, k\}$, and the first inequality applies
$\left|\mathrm{Mobi}\left(\pi, 1_{k}\right)\right| \leq$ $4^{k}$ and
$|\operatorname{NC}(k)| \leq 4^{k}$ \cite[Proposition 13.15]{nica2006lectures}. 
The statement on the R-transform follows from \cite[Notation 12.6, Proposition
13.15]{nica2006lectures}).
\end{itemize}
\end{proof}

\subsection{Prior distribution}\label{appendix:prior}

We verify that Assumption \ref{AssumpPrior} holds for priors having
bounded support or log-concave density.

\begin{Proposition}\label{prop:priorconditions}
Suppose $\pi$ has mean 0 and variance $\Vpi>0$. Then Assumption
\ref{AssumpPrior} holds if
\begin{enumerate}
\item[(a)] $\pi$ has support contained in $[-\sqrt{\pib},\sqrt{\pib}]$, or
\item[(b)] $\pi$ admits a Lebesgue density function $e^{-g(x)}$ for all $x \in
\R$, where $g''(x) \geq 1/\pib$.
\end{enumerate}
\end{Proposition}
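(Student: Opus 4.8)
The plan is to verify, in each of the two cases, the two substantive requirements of Assumption~\ref{AssumpPrior}: the sub-Gaussian bound \eqref{eq:subGaussian} (which includes $\Vpi\le\pib$) and the Poincar\'e-type inequality \eqref{eq:poincare}. Non-Gaussianity of $\pi$ is automatic in case~(a), since a compactly supported law is never Gaussian, and is a standing hypothesis on $\pi$ throughout the paper in case~(b). The key analytic inputs will be the Brascamp--Lieb variance inequality and, for case~(b), the Bakry--\'Emery criterion together with the Herbst concentration argument.

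For case~(a) everything is immediate from boundedness. Since $\operatorname{supp}(\pi)\subseteq[-\sqrt{\pib},\sqrt{\pib}]$ we get $\Vpi=\E[{\Xstar}^2]\le\pib$, and $\mathbb{P}[|\Xstar|>s]=0$ for $s>\sqrt{\pib}$ while for $s\le\sqrt{\pib}$ one has $2e^{-s^2/(2\pib)}\ge 2e^{-1/2}>1\ge\mathbb{P}[|\Xstar|>s]$, which gives \eqref{eq:subGaussian}. For \eqref{eq:poincare}, note that for $k\in\{1,2\}$ the tilted law $\mu$ is absolutely continuous with respect to $\pi^{\otimes k}$ and hence also supported in $[-\sqrt{\pib},\sqrt{\pib}]^k$; thus $|v^\top x|\le\sqrt{k\pib}$ on $\operatorname{supp}(\mu)$ for every unit vector $v$, so $\V_\mu[v^\top x]\le k\pib$ and $\V_\mu[(v^\top x)^2]\le(k\pib)^2$, and both inequalities of \eqref{eq:poincare} hold with a constant $C$ depending only on $\pib$.

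For case~(b), write $d\pi(x)=e^{-g(x)}dx$ with $g''\ge 1/\pib$. The bound $\Vpi\le\pib$ follows by applying the Brascamp--Lieb inequality to $\pi$ with the test function $x\mapsto x$. For the tail bound, $\pi$ is strongly log-concave with curvature $1/\pib$, so by Bakry--\'Emery it satisfies a logarithmic Sobolev inequality with constant $1/\pib$; Herbst's argument then yields $\mathbb{P}[\pm(\Xstar-\E\Xstar)>s]\le e^{-s^2/(2\pib)}$ for all $s>0$, and since $\E\Xstar=0$ a union bound gives \eqref{eq:subGaussian}. For the Poincar\'e-type inequality, observe that for $k\in\{1,2\}$ the tilted law $\mu$ has Lebesgue density proportional to $e^{-V(x)}$ with $V(x)=\sum_{i=1}^k g(x_i)-x^\top\Gamma x-x^\top z$, so that $\nabla^2 V(x)=\diag\!\big(g''(x_1),\dots,g''(x_k)\big)-2\Gamma\succeq \tfrac1\pib I-2\Gamma$. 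The hypothesis $\Gamma\prec(4\pib)^{-1}I$ forces $2\Gamma\prec(2\pib)^{-1}I$, hence $\nabla^2 V\succ(2\pib)^{-1}I$, i.e.\ $\mu$ is strongly log-concave with curvature at least $1/(2\pib)$. Applying Brascamp--Lieb to $\mu$ with $f(x)=v^\top x$ (so $\|\nabla f\|^2=1$) gives $\V_\mu[v^\top x]\le 2\pib$, and with $f(x)=(v^\top x)^2$ (so $\|\nabla f\|^2=4(v^\top x)^2$) gives $\V_\mu[(v^\top x)^2]\le 8\pib\,\langle(v^\top x)^2\rangle_\mu$; both hold with $C=8\pib$.

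There is no real obstacle here; this is a verification lemma. The only point needing care is the constant bookkeeping in case~(b): one must use the precise threshold $\Gamma\prec(4\pib)^{-1}I$ so that the quadratic tilt $-x^\top\Gamma x$ leaves a strictly positive curvature lower bound ($\ge 1/(2\pib)$) for the tilted potential, and one must check that the resulting Brascamp--Lieb and log-Sobolev constants depend only on $\pib$ and not on $k$, $\Gamma$, or $z$. Once the Brascamp--Lieb inequality and the Bakry--\'Emery/Herbst machinery are invoked, the remaining computations are routine.
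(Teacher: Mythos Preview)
Your proof is correct and follows essentially the same approach as the paper: boundedness handles case~(a) directly, and for case~(b) you invoke Brascamp--Lieb for the variance bound, Bakry--\'Emery/Herbst for the sub-Gaussian tail, and then Brascamp--Lieb again on the tilted measure after checking that the quadratic tilt preserves strong log-concavity. Your curvature lower bound $1/(2\pib)$ for the tilted potential is correct (the paper records $3/(4\pib)$, a harmless discrepancy in bookkeeping), and your explicit constants and handling of the non-Gaussianity hypothesis are more carefully spelled out than in the paper's terse version.
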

\begin{proof}
Under (a), both statements of (\ref{eq:poincare}) and the first statement of
(\ref{eq:subGaussian}) are evident, and the
second statement of (\ref{eq:subGaussian}) follows from Hoeffding's inequality.
Under (b), the first statement of (\ref{eq:subGaussian}) follows from the
Brascamp-Lieb inequality $\V[f(\Xstar)] \leq \pib \cdot \E[f'(\Xstar)^2]$,
see e.g.\ \cite[Theorem 13.13]{erdHos2017dynamical},
and the second from the Bakry-Emery theorem, see
e.g.\ \cite[Theorem 13.6, Proposition 13.8]{erdHos2017dynamical}.
We observe that for any $\Gamma \preceq (4\pib)^{-1}I$, the measure $\mu$
has a density $e^{-g_\mu(x)}$ where $\nabla^2 g_\mu(x) \succeq
3/(4\pib)$. Hence the Brascamp-Lieb inequality applies also to $\mu$,
and both statements of (\ref{eq:poincare}) follow.
\end{proof}

In the proofs of the main results, we use the following implications of
Assumption \ref{AssumpPrior}.

\begin{Proposition}\label{prop:denoiserlipschitz}
Under Assumption \ref{AssumpPrior}, the posterior mean denoiser
$f(y,\gamma)$ defined by (\ref{eq:deff}) is continuously-differentiable
and Lipschitz in $y$, with derivative
$\frac{\partial}{\partial y} f(y,\gamma)=\gamma \cdot \V_\mu[x]$.
In particular, $|\frac{\partial}{\partial y} f(y,\gamma)| \leq C\gamma$
for a constant $C>0$ depending only on $\pib$.
\end{Proposition}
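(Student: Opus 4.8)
The plan is to identify the Bayes posterior of the scalar channel (\ref{eq:scalarBayes}) with the tilted measure $\mu$ appearing in Assumption~\ref{AssumpPrior}, and then read off both the derivative formula and the uniform bound directly from the Poincar\'e-type inequality (\ref{eq:poincare}).

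First I would write the posterior law of $\Xstar$ given $\Ys=y$ explicitly. Since $\Ys\mid\Xstar=x\sim N(x,\gamma^{-1})$, Bayes' rule gives that this posterior is proportional to $\exp(-\tfrac{\gamma}{2}(y-x)^2)\,d\pi(x)$, and dropping the factor depending on $y$ alone, proportional to $\exp(-\tfrac{\gamma}{2}x^2+\gamma y x)\,d\pi(x)$. This is exactly the measure $d\mu$ of Assumption~\ref{AssumpPrior} in dimension $k=1$ with $\Gamma=-\gamma/2$ and $z=\gamma y$; note $\Gamma=-\gamma/2<0<(4\pib)^{-1}$, so the hypothesis $\Gamma\prec(4\pib)^{-1}I$ holds for every $\gamma>0$. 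In particular $f(y,\gamma)=\langle x\rangle_\mu$ for this choice of $(\Gamma,z)$, and the log-normalizer of $\mu$ is $\log c_\pi(-\gamma/2,\gamma y)$ in the notation of (\ref{eq:cpi}).

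Next I would justify differentiation under the integral sign. Because the quadratic coefficient $-\gamma/2$ is negative, the integrand $\exp(-\tfrac{\gamma}{2}x^2+\gamma y x)$ decays faster than any polynomial in $|x|$, uniformly for $y$ in a compact set, so all moments of $\mu$ are finite and the maps $y\mapsto\langle x\rangle_\mu$ and $y\mapsto\langle x^2\rangle_\mu$ are $C^1$ (indeed $C^\infty$), with derivatives obtained by differentiating under the integral --- the same dominated-convergence argument used for continuity in Lemma~\ref{lem:abar}. Differentiating $\log c_\pi(-\gamma/2,\gamma y)$ once gives $\partial_y\log c_\pi(-\gamma/2,\gamma y)=\gamma\langle x\rangle_\mu=\gamma f(y,\gamma)$, and differentiating once more gives both $\gamma\,\partial_y f(y,\gamma)$ and $\gamma^2\big(\langle x^2\rangle_\mu-\langle x\rangle_\mu^2\big)$; equating these yields the claimed identity $\partial_y f(y,\gamma)=\gamma\,\V_\mu[x]$. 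Continuity of $y\mapsto\V_\mu[x]$ then gives continuous differentiability.

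Finally I would invoke the Poincar\'e-type inequality (\ref{eq:poincare}) of Assumption~\ref{AssumpPrior}, applied with $k=1$ and the unit vector $v=1$, which gives $\V_\mu[x]=\V_\mu[v^\top x]\le C$ for a constant $C$ depending only on $\pib$, uniformly over all admissible $(\Gamma,z)$ and hence over all $y$ and $\gamma$. Combined with the derivative formula this gives $|\partial_y f(y,\gamma)|\le C\gamma$, which also shows $f(\cdot,\gamma)$ is $C\gamma$-Lipschitz. I do not expect any real obstacle here: the only technical point is the interchange of differentiation and integration, which is routine precisely because $-\gamma/2<0$, and no use of the unbounded-support subtleties of Assumption~\ref{AssumpPrior} is needed beyond the Poincar\'e bound itself.
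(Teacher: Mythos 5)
Your proof is correct and follows essentially the same path as the paper's: identify the scalar-channel posterior with the tilted measure $\mu$ of Assumption~\ref{AssumpPrior} (with $k=1$, $\Gamma=-\gamma/2$, $z=\gamma y$), differentiate under the integral by dominated convergence to get $\partial_y f(y,\gamma)=\gamma\,\V_\mu[x]$, and bound this via the Poincar\'e-type inequality (\ref{eq:poincare}). The only cosmetic difference is that you route the derivative computation through the log-normalizer $\log c_\pi(-\gamma/2,\gamma y)$ rather than differentiating $\langle x\rangle_\mu$ directly; this is an equivalent presentation.
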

\begin{proof}
In the notation of Assumption \ref{AssumpPrior}, setting $k=1$,
$\Gamma=-\frac{\gamma}{2}$, and $z=\gamma y$, we have
$f(y,\gamma)=\langle x \rangle_\mu$.
A straightforward application of the dominated convergence theorem shows
that $y \mapsto f(y,\gamma)$ is continuously differentiable, with derivative
$\frac{\partial}{\partial y} f(y,\gamma)=\gamma \cdot \V_\mu[x]$.
The inequality $|\frac{\partial}{\partial y} f(y,\gamma)|
\leq C\gamma$ then follows from (\ref{eq:poincare}), and this implies that
$f(y,\gamma)$ is $(C\gamma)$-Lipschitz in $y$.
\end{proof}

\begin{Proposition}\label{prop:concentration}
Under Assumption \ref{AssumpPrior},
suppose $\sigma_1,\ldots,\sigma_n \overset{iid}{\sim}\pi$. Then for a universal
constant $c_0>0$ and any $s>0$,
\[\mathbb{P}\left(\left|\frac{1}{n} \sum_{i=1}^n (\sigma_i^2-\Vpi)
        \right| \geq s \right) \leq 2 \exp \left(-c_0 \min
            \left(\frac{s^2}{\pil^{2}}, \frac{s}{\pil}\right) n\right)\]
\end{Proposition}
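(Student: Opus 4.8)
\textbf{Proof plan for Proposition \ref{prop:concentration}.}
The plan is to recognize $\sigma_i^2-\Vpi$ as a centered sub-exponential random variable with sub-exponential parameter of order $\pib$, and then apply the standard Chernoff/Bernstein argument for sums of i.i.d.\ sub-exponential variables. There is no genuine obstacle here; the only point requiring care is that the constant $c_0$ must be universal (independent of $\pib$), which is arranged by homogeneity: rescaling $\sigma_i \mapsto \sigma_i/\sqrt{\pib}$ reduces the statement to the case $\pib=1$, so it suffices to prove the bound in that case and rescale.

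First I would extract moment bounds from the sub-Gaussian tail in (\ref{eq:subGaussian}). For every integer $k\ge 1$, writing $\E[\Xstar^{2k}]=\int_0^\infty 2k s^{2k-1}\,\mathbb{P}[|\Xstar|>s]\,ds$ and inserting $\mathbb{P}[|\Xstar|>s]\le 2e^{-s^2/(2\pib)}$ gives, after the substitution $u=s^2/(2\pib)$,
\[
\E[\Xstar^{2k}] \le 4k\,(2\pib)^{k-1/2}\sqrt{\pib/2}\,\Gamma(k) = 2\,(2\pib)^k\,k!.
\]
Hence $\E[\sigma_i^{2k}]\le 2(2\pib)^k k!$, and since $\Vpi=\E[\Xstar^2]\le\pib$, the centered variable $Z_i:=\sigma_i^2-\Vpi$ satisfies $\E[|Z_i|^k]\le (C\pib)^k\,k!$ for all $k\ge 1$ and a universal constant $C>0$. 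Expanding the exponential, for $|\lambda|\le 1/(2C\pib)$ this yields the sub-exponential moment generating function bound
\[
\E\!\left[e^{\lambda Z_i}\right] \le 1+\sum_{k\ge 2}(C\pib|\lambda|)^k \le 1+2(C\pib\lambda)^2 \le e^{2C^2\pib^2\lambda^2}.
\]

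Finally I would run the Chernoff bound. Since $\sigma_1,\dots,\sigma_n$ are i.i.d., so are $Z_1,\dots,Z_n$, and for $0<\lambda\le 1/(2C\pib)$,
\[
\mathbb{P}\!\left(\textstyle\sum_{i=1}^n Z_i \ge ns\right) \le e^{-\lambda n s}\,\prod_{i=1}^n\E\!\left[e^{\lambda Z_i}\right] \le \exp\!\left(-\lambda n s + 2C^2\pib^2\lambda^2 n\right).
\]
Choosing $\lambda=\min\!\big(s/(4C^2\pib^2),\,1/(2C\pib)\big)$ makes the exponent at most $-c_0\min(s^2/\pib^2,\,s/\pib)\,n$ for a universal constant $c_0>0$: when $s\le 2C\pib$ the unconstrained minimizer is admissible and gives exponent $-s^2 n/(8C^2\pib^2)$, while when $s>2C\pib$ the choice $\lambda=1/(2C\pib)$ gives exponent at most $-sn/(4C\pib)$. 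Applying the same bound to $-Z_i$ and taking a union over the two tails produces the factor $2$, yielding exactly
\[
\mathbb{P}\!\left(\left|\tfrac1n\textstyle\sum_{i=1}^n (\sigma_i^2-\Vpi)\right|\ge s\right) \le 2\exp\!\left(-c_0\min\!\big(\tfrac{s^2}{\pib^2},\tfrac{s}{\pib}\big)\,n\right),
\]
as claimed.
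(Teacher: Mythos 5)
Your proof is correct and follows the same route as the paper: recognize $\sigma_i^2$ as a sub-exponential random variable (via the sub-Gaussian tail in Assumption \ref{AssumpPrior}) and apply Bernstein's inequality. The paper simply cites Vershynin's Theorem 2.8.2 at this point, whereas you derive the MGF bound and Chernoff optimization from scratch; the calculations are accurate.
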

\begin{proof}
Under the sub-Gaussian condition (\ref{eq:subGaussian}), the random variables
$\sigma_i^2$ are sub-exponential with mean $\Vpi$,
so the result follows from Bernstein's inequality,
see e.g.\ \cite[Theorem 2.8.2]{vershynin2018high}.
\end{proof}

\begin{Proposition}\label{prop:bayesbnd}
Fix $k \in \{1,2\}$ and let
$\Gamma,z$ and $\langle \cdot \rangle_\mu,\V_\mu[\cdot]$ be
as defined in Assumption \ref{AssumpPrior}. Let $\gamma_{\max},\gamma_{\min}$
be the largest and smallest eigenvalues of $\Gamma$.
Then for any unit vector $v \in \R^k$ and for a
constant $C>0$ depending only on $\pib$,
\[\V_\mu[(v^\top x)^2] \leq C\left(1+\frac{\|z\|^2}{(\pib^{-1}-\gamma_{\max})^2}
+\frac{\max(-\gamma_{\min},0)}{\pib^{-1}-\gamma_{\max}}\right).\]
\end{Proposition}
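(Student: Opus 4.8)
The plan is to derive everything from the two Poincar\'e-type inequalities in (\ref{eq:poincare}) together with the sub-Gaussian estimate (\ref{eq:subGaussian}), after reducing to the mean of $\mu$. By the second inequality of (\ref{eq:poincare}), $\V_\mu[(v^\top x)^2]\le C[1+\langle(v^\top x)^2\rangle_\mu]$, and since $\langle(v^\top x)^2\rangle_\mu=\V_\mu[v^\top x]+\langle v^\top x\rangle_\mu^2\le C+\langle v^\top x\rangle_\mu^2$ by the first inequality, it suffices to bound $\langle v^\top x\rangle_\mu^2$. As $k\le 2$ and $|v^\top x|\le\|x\|$ for a unit $v$, this further reduces to bounding $\langle x\rangle_\mu$ in $\ell_2$, with $\beta:=\pib^{-1}-\gamma_{\max}>\tfrac{3}{4\pib}$ and $\delta:=\max(-\gamma_{\min},0)$.

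To isolate the dependence on $z$ I would use the log-partition function $\phi(z)=\log\int e^{x^\top\Gamma x+x^\top z}\prod_i d\pi(x_i)$, for which $\nabla\phi(z)=\langle x\rangle_\mu$ and $\nabla^2\phi(z')=\operatorname{Cov}_{\mu_{z'}}(x)$ (the covariance of the tilted measure), so that $\|\langle x\rangle_\mu-\langle x\rangle_{\mu_0}\|\le\big(\sup_{z'}\|\operatorname{Cov}_{\mu_{z'}}(x)\|_{\mathrm{op}}\big)\|z\|$ with $\mu_0$ the $z=0$ measure. The quantitative heart—and what I expect to be the main difficulty—is the \emph{sharpened} covariance bound
\[\operatorname{Cov}_{\mu_{z'}}(x)\preceq \frac{C(\pib)}{\beta}\,I\qquad\text{for every }z',\]
improving the uniform bound $\operatorname{Cov}_{\mu_{z'}}(x)\preceq C(\pib)I$ that follows at once from (\ref{eq:poincare}). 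The regime $\beta=\Theta(1/\pib)$ is covered by the uniform bound, so one may assume $\beta\ge\tfrac{3}{2\pib}$. I would then absorb the diagonal shift into the prior: set $d\tilde\pi(x)\propto e^{x^2/(4\pib)}d\pi(x)$, a genuine probability measure (as $(4\pib)^{-1}<(2\pib)^{-1}$ and $\pi$ is sub-Gaussian) with all moments bounded in terms of $\pib$ and inheriting the bounds of (\ref{eq:poincare}) in a limit of the quadratic parameter; then $d\mu_{z'}(x)\propto e^{x^\top\Gamma'x+x^\top z'}d\tilde\pi^{\otimes k}(x)$ with $\Gamma'=\Gamma-(4\pib)^{-1}I\prec 0$ and $-\gamma_{\max}(\Gamma')=\beta-\tfrac{3}{4\pib}\ge\tfrac{\beta}{2}$. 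Conditioning one coordinate on the other reduces matters to a one-dimensional statement: for $\nu\propto e^{\gamma x^2+wx}d\tilde\pi(x)$ with $\gamma\le 0$ one has $\V_\nu[x]\le C(\pib)/(\pib^{-1}-\gamma)$ and $\langle x\rangle_\nu^2\le C(\pib)\big(1+w^2/(\pib^{-1}-\gamma)^2\big)$; this I would prove by viewing $\nu$ as the posterior of $\tilde X\sim\tilde\pi$ in the Gaussian channel $Y=\tilde X+Z/\sqrt{-2\gamma}$ observed at $y=-w/(2\gamma)$, and estimating the posterior variance and mean using the moment bounds on $\tilde\pi$.

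With the covariance bound in hand, $\|\langle x\rangle_\mu\|^2\le 2\|\langle x\rangle_{\mu_0}\|^2+C\beta^{-2}\|z\|^2$, so it remains to bound $\|\langle x\rangle_{\mu_0}\|$ for $\mu_0\propto e^{x^\top\Gamma x}\prod_i d\pi(x_i)$ by $C(\pib)(1+\delta/\beta)^{1/2}$. Here I would work in the eigenbasis $y_1,y_2$ of $\Gamma$: by (\ref{eq:poincare}) applied with $v=y_j$ one gets $\V_{\mu_0}[y_j^\top x]\le C(\pib)$, while in a direction of negative eigenvalue the Gaussian weight $e^{\gamma_j(y_j^\top x)^2}$, combined with the moment bounds on $\pi$, controls $\langle y_j^\top x\rangle_{\mu_0}$; and when $\gamma_{\max}>0$ the residual mild positive tilt in the $y_1$-direction contributes only a $\pib$-dependent constant since $\gamma_{\max}<(4\pib)^{-1}<(2\pib)^{-1}$. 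Combining these gives $\langle\|x\|^2\rangle_{\mu_0}\le C(\pib)(1+\delta/\beta)$.

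The remaining delicate point, which I expect to require the heaviest bookkeeping rather than any single new idea, is the $k=2$ case. Conditioning one coordinate on the other—needed both for the main term (reducing $\langle x_i^2\rangle_\mu$ to one dimension after the $(4\pib)^{-1}$ shift) and for the $z=0$ estimate—produces in each conditional a linear term $2\Gamma_{12}x_j+z_i$ depending on the other coordinate, so the resulting bounds couple $\langle x_1^2\rangle_\mu$ and $\langle x_2^2\rangle_\mu$. Closing this coupled pair of inequalities is exactly where one must use the strict spectral bound $\Gamma_{12}^2<((4\pib)^{-1}-\Gamma_{11})((4\pib)^{-1}-\Gamma_{22})$ implied by $\Gamma\prec(4\pib)^{-1}I$—together with the sharp $1/\beta$ covariance bound—to turn the coupling into a genuine contraction; I would set up the constants in the one-dimensional estimate so that this contraction holds uniformly in $\Gamma,z$.
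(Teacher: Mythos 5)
Your reduction to bounding $\langle v^\top x\rangle_\mu^2$ via the two inequalities in (\ref{eq:poincare}) is correct and coincides with the paper's first step. The difficulty is the claimed sharpened covariance bound $\operatorname{Cov}_{\mu_{z'}}(x)\preceq \frac{C(\pib)}{\beta}I$ with $\beta=\pib^{-1}-\gamma_{\max}$ (and the underlying one-dimensional claim $\V_\nu[x]\le C(\pib)/(\pib^{-1}-\gamma)$): both are false for priors permitted by Assumption \ref{AssumpPrior}. Take $k=1$, $\pib=1$, $\pi$ uniform on $\{-1,+1\}$ (which satisfies \Cref{prop:priorconditions}(a)), $z'=0$, and $\Gamma=\gamma\to-\infty$. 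Then the quadratic tilt $e^{\gamma x^2}$ is constant on the two-point support, so $\mu_{z'}=\pi$ and $\V_{\mu_{z'}}[x]=1$ for all $\gamma$, while your claimed bound $C(\pib)/\beta=C/(1-\gamma)\to 0$. The same observation kills the one-dimensional statement, since $\tilde\pi\propto e^{x^2/(4\pib)}d\pi$ is also uniform on $\{-1,+1\}$. The heuristic that a strongly negative Gaussian tilt contracts the variance simply does not hold for atomic priors, and without it the $\nabla\phi$ mean-value argument only gives $\|\langle x\rangle_\mu-\langle x\rangle_{\mu_0}\|\le C(\pib)\|z\|$, i.e.\ a $z$-term of order $\|z\|^2$ rather than $\|z\|^2/\beta^2$. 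Because $\beta$ can be arbitrarily large (when $\gamma_{\max}$ is very negative), this is a genuine loss: the proof of Lemma \ref{lemma:analysisf} needs precisely the $\beta^{-2}$ savings to cancel the $O(1+d_*)$ in $\|z\|^2$, so the uniform covariance bound cannot be substituted.

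The paper sidesteps the issue entirely with a direct Gaussian comparison rather than differentiating the log-partition function. With $a=3/(4\pib)$ and $\Omega=aI-2\Gamma\succ 0$, one writes $e^{x^\top\Gamma x+x^\top z}=e^{\frac{a}{2}\|x\|^2}e^{-\frac{1}{2}\|x-w\|_\Omega^2+\mathrm{const}}$ where $w=\Omega^{-1}z$, and splits on whether $\|x-w\|_\Omega^2\ge -2c_\Omega(w)$. On that event the Gaussian density ratio is at most one and the sub-Gaussian bound (\ref{eq:subGaussian}) controls $\int|v^\top x|e^{\frac{a}{2}\|x\|^2}d\pi$; on the complementary event $|v^\top x|$ is bounded geometrically using $\omega_{\min}^{-1}\|z\|$ and $\omega_{\min}^{-1/2}(-2c_\Omega(w))^{1/2}$, and Jensen's inequality bounds $-c_\Omega(w)$ by $C(\max(-\gamma_{\min},0)+\omega_{\min}^{-1}\|z\|^2)$. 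The identity $\omega_{\min}\ge c\beta$ for a universal $c>0$ then delivers the stated $\beta^{-2}\|z\|^2$ and $\beta^{-1}\max(-\gamma_{\min},0)$ terms. This handles $k=1$ and $k=2$ uniformly without conditioning, avoiding the coupled-inequality bookkeeping you anticipated in the $k=2$ case. I'd suggest abandoning the log-partition/covariance route and looking for a decomposition of the tilted density of this completion-of-squares type.
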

\begin{proof}
Applying both conditions of (\ref{eq:poincare}),
\begin{equation}\label{eq:bayesbndtmp}
\V_\mu[(v^\top x)^2]
\leq C(1+\V_\mu[v^\top x]+\langle v^\top x\rangle_\mu^2)
\leq C(1+C+\langle v^\top x\rangle_\mu^2)
\end{equation}
so it suffices to bound $\langle v^\top x\rangle_\mu^2$.
We apply an idea similar to
\cite[Proposition 2]{polyanskiy2016wasserstein}: Set $a=3/(4\pib)$ and
denote $\Omega=aI-2\Gamma$, $w=\Omega^{-1}z$,
$\|x-w\|_\Omega^2=(x-w)^\top \Omega (x-w)$, and $d\pi(x)=\prod_{i=1}^k
d\pi(x_i)$. Let $\omega_{\min}=a-2\gamma_{\max}$
be the smallest eigenvalue of $\Omega$, and note
that $\omega_{\min}>0$ because $\gamma_{\max}<(4\pib)^{-1}$
in Assumption \ref{AssumpPrior}. We have
\[\langle v^\top x \rangle_\mu=
\frac{\int v^\top x \cdot e^{\frac{a}{2}\|x\|^2}
e^{-\frac{1}{2}\|x-w\|_{\Omega}^2}d\pi(x)}
{\int e^{\frac{a}{2}\|x\|^2} e^{-\frac{1}{2}\|x-w\|_{\Omega}^2}d\pi(x)}.\]
Denote $c_\Omega(w)=\log \int e^{\frac{a}{2}\|x\|^2}
e^{-\frac{1}{2}\|x-w\|_\Omega^2}d\pi(x)$.
On the event $\|x-w\|_\Omega^2 \geq -2c_\Omega(w)$, we have
\[\frac{e^{-\frac{1}{2}\|x-w\|_\Omega^2}}{\int e^{\frac{a}{2}\|x\|^2}
e^{-\frac{1}{2}\|x-w\|_\Omega^2}d\pi(x)} \leq 1.\]
On the complementary event $\|x-w\|_\Omega^2<-2c_\Omega(w)$, we have
$c_\Omega(w)<0$ and
\[|v^\top x| \leq \|w\|+\|x-w\|
\leq \|w\|+\omega_{\min}^{-1/2}\|x-w\|_\Omega
\leq \omega_{\min}^{-1}\|z\|
+\omega_{\min}^{-1/2}(-2c_\Omega(w))^{1/2}.\]
Thus, combining these bounds,
\[|\langle v^\top x \rangle_\mu| \leq
\int |v^\top x| \cdot e^{\frac{a}{2}\|x\|^2}\cdot 1\,d\pi(x)
+\Big(\omega_{\min}^{-1}\|z\|
+\omega_{\min}^{-1/2}(-2c_\Omega(w))^{1/2}\Big)
\cdot \1\{c_\Omega(w)<0\}.\]
Applying $\frac{a}{2}=\frac{3}{8\pib}$ and the sub-Gaussian tail bound
(\ref{eq:subGaussian}), it is easily checked that the first term satisfies
$\int |v^\top x|\,e^{\frac{a}{2}\|x\|^2}d\pi(x)<C$ for a constant $C>0$
depending only on $\pib$. For the second term,
by Jensen's inequality and the condition that $\pi$ has mean 0,
\begin{align*}
-c_\Omega(w) &\leq \int
\left(-\frac{a}{2}\|x\|^2+\frac{1}{2}\|x-w\|_\Omega^2\right)\,d\pi(x)
\leq -\gamma_{\min}\int \|x\|^2\,d\pi(x)+\frac{1}{2}\|w\|_\Omega^2\\
&\leq C\Big(\max(-\gamma_{\min},0)
+\omega_{\min}^{-1} \|z\|^2\Big).
\end{align*}
Applying this above yields
$|\langle v^\top x \rangle_\mu| \leq
C(1+\omega_{\min}^{-1}\|z\|+\omega_{\min}^{-1/2}\max(-\gamma_{\min},0)^{1/2})$.
Then, applying this to (\ref{eq:bayesbndtmp}) and using $\omega_{\min} \geq
c(\pib^{-1}-\gamma_{\max})$ for a universal constant $c>0$ concludes the
proof.
\end{proof}

\subsection{Varadhan's lemma}

We apply the following version of Varadhan's lemma; the proof is a
straightforward extension of \cite[Lemmas 4.3.4 and 4.3.6]{Dembo1998large} 
and omitted for brevity.

\begin{Lemma}\label{lemma:varadhan}
Let $(X_n)_{n \geq 1}$ be a sequence of random variables taking values in a
regular topological space $\mathcal{X}$, and let $f:\mathcal{X} \to \R$ be a
bounded continuous function.
\begin{enumerate}
\item[(a)] If $\lambda^*:\mathcal{X} \to [0,\infty]$ is such that
$\liminf_{n \to \infty} n^{-1}\log \mathbb{P}[X_n \in G]
\geq -\inf_{x \in G} \lambda^*(x)$ for all open $G \subseteq \mathcal{X}$, then
\[\liminf_{n \to \infty} \frac{1}{n}\log \E[\mathbb{I}(X_n \in G)e^{nf(X_n)}]
\geq \sup_{x \in G} f(x)-\lambda^*(x) \text{ for all open } G \subseteq
\mathcal{X}.\]
\item[(b)] If $\lambda^*:\mathcal{X} \to [0,\infty]$ is lower-semicontinuous,
the level sets $\{x \in \mathcal{X}:\lambda^*(x) \leq K\}$ are compact for
all $K \in [0,\infty)$, and
$\limsup_{n \to \infty} n^{-1}\log \mathbb{P}[X_n \in F]
\leq -\inf_{x \in F} \lambda^*(x)$ for all closed $F \subseteq \mathcal{X}$,
then
\[\limsup_{n \to \infty} \frac{1}{n}\log \E[\mathbb{I}(X_n \in F)e^{nf(X_n)}]
\leq \sup_{x \in F} f(x)-\lambda^*(x) \text{ for all closed } F \subseteq
\mathcal{X}.\]
\end{enumerate}
\end{Lemma}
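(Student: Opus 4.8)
The plan is to follow the classical proof of Varadhan's lemma, as in \cite[Lemmas 4.3.4 and 4.3.6]{Dembo1998large}, observing that the indicator $\mathbb{I}(X_n \in G)$ (resp.\ $\mathbb{I}(X_n \in F)$) plays no role beyond restricting the ambient set, and that the boundedness of $f$ makes the integrability/truncation hypotheses usually required for Varadhan's upper bound automatic. Write $M=\sup_{x\in\mathcal{X}}|f(x)|<\infty$ throughout.

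For part (a), fix an open set $G$, a point $x_0\in G$, and $\delta>0$. By continuity of $f$ there is an open neighborhood $U\subseteq G$ of $x_0$ with $f(x)>f(x_0)-\delta$ for all $x\in U$. Then $\E[\mathbb{I}(X_n\in G)e^{nf(X_n)}]\geq e^{n(f(x_0)-\delta)}\,\mathbb{P}[X_n\in U]$, so taking $n^{-1}\log(\cdot)$ and $\liminf_{n\to\infty}$ and invoking the assumed large-deviations lower bound on the open set $U$,
\[\liminf_{n\to\infty}\frac{1}{n}\log\E\big[\mathbb{I}(X_n\in G)e^{nf(X_n)}\big]\geq f(x_0)-\delta-\inf_{x\in U}\lambda^*(x)\geq f(x_0)-\delta-\lambda^*(x_0).\]
Letting $\delta\downarrow 0$ and then taking the supremum over $x_0\in G$ gives the claim.

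For part (b), fix a closed set $F$; we may assume $\inf_{x\in F}\lambda^*(x)<\infty$, since otherwise the assumed large-deviations upper bound forces $n^{-1}\log\mathbb{P}[X_n\in F]\to-\infty$ and the bound $\E[\mathbb{I}(X_n\in F)e^{nf(X_n)}]\leq e^{nM}\mathbb{P}[X_n\in F]$ makes both sides of the desired inequality equal to $-\infty$. Fix $\delta>0$ and $K>\inf_{x\in F}\lambda^*(x)$, so that the compact level set $\Phi_K=\{x:\lambda^*(x)\leq K\}$ meets $F$. For each $x\in\Phi_K\cap F$, continuity of $f$ together with regularity of $\mathcal{X}$ provides an open neighborhood $U_x$ of $x$ with $\sup_{y\in\overline{U_x}}f(y)\leq f(x)+\delta$; by compactness, finitely many $U_{x_1},\dots,U_{x_N}$ with $x_i\in\Phi_K\cap F$ cover $\Phi_K\cap F$. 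Set $U=\bigcup_{i}U_{x_i}$, so that $F\setminus U$ is closed and $\lambda^*>K$ on it. Splitting $\E[\mathbb{I}(X_n\in F)e^{nf(X_n)}]$ over $F\cap U$ and $F\setminus U$: for each $i$, the set $F\cap\overline{U_{x_i}}$ is closed and contains $x_i$, so the assumed upper bound and $f\le f(x_i)+\delta$ on $\overline{U_{x_i}}$ give
\[\limsup_{n\to\infty}\frac{1}{n}\log\E\big[\mathbb{I}(X_n\in F\cap\overline{U_{x_i}})e^{nf(X_n)}\big]\leq f(x_i)+\delta-\lambda^*(x_i)\leq\delta+\sup_{x\in F}\big(f(x)-\lambda^*(x)\big),\]
and since a $\limsup$ of a finite sum equals the maximum of the individual $\limsup$s, the same bound holds for $\E[\mathbb{I}(X_n\in F\cap U)e^{nf(X_n)}]$; for the complementary piece, $\E[\mathbb{I}(X_n\in F\setminus U)e^{nf(X_n)}]\leq e^{nM}\mathbb{P}[X_n\in F\setminus U]$ has $\limsup_{n\to\infty}n^{-1}\log(\cdot)\leq M-K$. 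Combining the two pieces yields $\limsup_{n\to\infty}n^{-1}\log\E[\mathbb{I}(X_n\in F)e^{nf(X_n)}]\leq\max\big(\delta+\sup_{x\in F}(f(x)-\lambda^*(x)),\,M-K\big)$, and letting $K\to\infty$ and then $\delta\downarrow 0$ establishes part (b).

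The only step requiring genuine care is the covering argument in part (b): one must (i) use compactness of the level set $\Phi_K$ to reduce to finitely many neighborhoods, (ii) choose these neighborhoods so that $f$ is controlled on their closures and so that $F\cap\overline{U_{x_i}}$ remains closed, so the large-deviations upper bound applies on each, and (iii) check that $F\setminus U$ is closed with $\lambda^*$ bounded below by $K$, so that this ``tail'' contributes negligibly once $K$ is large. These are precisely the manipulations of \cite[Lemma 4.3.6]{Dembo1998large}, made simpler here by the boundedness of $f$.
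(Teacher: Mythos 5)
Your proof of part (a) is correct and takes the standard Dembo--Zeitouni route. For part (b) you follow the right overall strategy (exhaust the compact set $\Phi_K\cap F$ by finitely many small neighborhoods, then split $F$ into the covered piece and the remainder), but there is a genuine gap in the key display. You claim
\[
\limsup_{n\to\infty}\tfrac{1}{n}\log\E\bigl[\mathbb{I}(X_n\in F\cap\overline{U_{x_i}})e^{nf(X_n)}\bigr]\leq f(x_i)+\delta-\lambda^*(x_i),
\]
but what the assumed upper bound applied to the closed set $C_i:=F\cap\overline{U_{x_i}}$ actually yields is
\[
\limsup_{n\to\infty}\tfrac{1}{n}\log\E\bigl[\mathbb{I}(X_n\in C_i)e^{nf(X_n)}\bigr]\leq \sup_{C_i}f-\inf_{C_i}\lambda^*.
\]
You control $\sup_{C_i}f\leq f(x_i)+\delta$, but nothing in your construction controls $\inf_{C_i}\lambda^*$ from below; it can be strictly smaller than $\lambda^*(x_i)$ since $C_i$ contains points besides $x_i$, so replacing $\inf_{C_i}\lambda^*$ by $\lambda^*(x_i)$ is a step in the wrong direction. (Noting that $C_i$ ``contains $x_i$'' only gives $\inf_{C_i}\lambda^*\leq\lambda^*(x_i)$, the useless direction.) Consequently $\sup_{C_i}f-\inf_{C_i}\lambda^*$ can exceed $\delta+\sup_{x\in F}(f(x)-\lambda^*(x))$, and the displayed inequality need not hold.

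The gap is small and can be closed in two standard ways. The Dembo--Zeitouni route (their Lemma 4.3.6) additionally invokes the lower-semicontinuity of $\lambda^*$, which is in your hypotheses but never used: shrink each $U_{x_i}$ so that also $\inf_{\overline{U_{x_i}}}\lambda^*\geq\lambda^*(x_i)-\delta$, giving $\sup_{C_i}f-\inf_{C_i}\lambda^*\leq f(x_i)+2\delta-\lambda^*(x_i)\leq 2\delta+\sup_F(f-\lambda^*)$. Alternatively, upgrade the one-sided estimate to the two-sided $\sup_{\overline{U_{x_i}}}|f-f(x_i)|\leq\delta$, still available from continuity of $f$ and regularity of $\mathcal{X}$; then for every $y\in C_i$ one has $f(x_i)-\lambda^*(y)\leq f(y)+\delta-\lambda^*(y)\leq\delta+\sup_F(f-\lambda^*)$, and taking the infimum over $y$ gives $\sup_{C_i}f-\inf_{C_i}\lambda^*\leq 2\delta+\sup_F(f-\lambda^*)$. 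Either fix completes the argument after $\delta\downarrow 0$ and $K\to\infty$; the remainder of your proof (the $F\setminus U$ piece contributing $M-K$, the log-sum $\limsup$ identity, and the trivial case $\inf_F\lambda^*=\infty$) is fine.
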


\subsection{Extension of results to $O \sim \Haar(\O(n))$}\label{appendix:On}

We explain the claim of Remark \ref{remark:On}. Observe first that if $D,Q$
are random and independent of $O,\st,\epsilon$, where (\ref{eq:Assump2D}) holds
almost surely as $n,m \to \infty$,
then the results of Theorems \ref{thm:maintheorem},
\ref{thm:MMSE}, and \ref{thm:TAP} all hold almost surely as $n,m \to \infty$
conditional on $D,Q$, and hence also
unconditionally. If $A=Q^\top DO$ where $O \sim
\Haar(\O(n))$, then we have the equality in law $O \stackrel{L}{=}PO'$
where $O' \sim \operatorname{Haar}(\mathbb{S O}(n))$ and
$P=\operatorname{diag}(1, \ldots, 1, b) \in \R^{n \times n}$
with $b \in\{+1,-1\}$ having equal
probability. If $n>m$, then $DP=D$, so $A=Q^\top DO'$ and the model is
identical to the setting of $O \sim \Haar(\SO(n))$. If $n \leq m$,
then $DP=P'D$ where $P'=\operatorname{diag}(1,\ldots,1,b,1,\ldots,1) \in \R^{m
\times m}$ has $b$ in
the $n^\text{th}$ entry. Setting $Q'=P'Q$, this implies $A=(Q')^\top DO'$.
The asymptotic statements of
Theorems \ref{thm:maintheorem}, \ref{thm:MMSE}, and \ref{thm:TAP} thus
hold almost surely conditional on $b$, and hence also unconditionally.\\\\

\printbibliography

\end{document}